\documentclass[11pt,letterpaper]{amsart}
\usepackage[margin=1in]{geometry}

\usepackage[utf8]{inputenc}
\usepackage[T1]{fontenc}

\usepackage{amsmath}
\usepackage{amssymb}
\usepackage{amsthm}
\usepackage{mathtools}
\usepackage{fixmath}
\usepackage{interval}
\usepackage{algpseudocode}
\usepackage[normalem]{ulem}
\usepackage{xspace}
\usepackage{tikz}
\usetikzlibrary{arrows}

\usepackage{hyperref}
\usepackage[capitalize,nameinlink]{cleveref}
\hypersetup{pdfborder={0 0 0}, hypertexnames=false, breaklinks}

\numberwithin{equation}{section}
\numberwithin{figure}{section}

\theoremstyle{plain}
\newtheorem{theorem}{Theorem}[section]
\newtheorem{lemma}[theorem]{Lemma}
\newtheorem{corollary}[theorem]{Corollary}
\newtheorem{proposition}[theorem]{Proposition}
\theoremstyle{definition}
\newtheorem{definition}[theorem]{Definition}
\newtheorem{example}[theorem]{Example}
\newtheorem{assumption}{Assumption}

\theoremstyle{remark}
\newtheorem{remark}[theorem]{Remark}

\crefname{line}{line}{lines}
\Crefname{line}{Line}{Lines}
\crefname{section}{section}{sections}
\crefname{subsection}{subsection}{subsections}
\Crefname{section}{Section}{Sections}
\Crefname{subsection}{Subsection}{Subsections}
\Crefname{figure}{Figure}{Figures}
\crefname{assumption}{Assumption}{Assumptions}
\Crefname{assumption}{Assumption}{Assumptions}

\crefformat{equation}{\textup{#2(#1)#3}}
\crefrangeformat{equation}{\textup{#3(#1)#4--#5(#2)#6}}
\crefmultiformat{equation}{\textup{#2(#1)#3}}{ and \textup{#2(#1)#3}}
{, \textup{#2(#1)#3}}{, and \textup{#2(#1)#3}}
\crefrangemultiformat{equation}{\textup{#3(#1)#4--#5(#2)#6}}%
{ and \textup{#3(#1)#4--#5(#2)#6}}{, \textup{#3(#1)#4--#5(#2)#6}}{, and \textup{#3(#1)#4--#5(#2)#6}}
\Crefformat{equation}{#2Equation~\textup{(#1)}#3}
\Crefrangeformat{equation}{Equations~\textup{#3(#1)#4--#5(#2)#6}}
\Crefmultiformat{equation}{Equations~\textup{#2(#1)#3}}{ and \textup{#2(#1)#3}}
{, \textup{#2(#1)#3}}{, and \textup{#2(#1)#3}}
\Crefrangemultiformat{equation}{Equations~\textup{#3(#1)#4--#5(#2)#6}}%
{ and \textup{#3(#1)#4--#5(#2)#6}}{, \textup{#3(#1)#4--#5(#2)#6}}{, and \textup{#3(#1)#4--#5(#2)#6}}

\renewcommand{\geq}{\geqslant}
\renewcommand{\leq}{\leqslant}
\renewcommand{\le}{\leq}
\renewcommand{\ge}{\geq}

\newcommand{\R}{\mathbb{R}}
\newcommand{\abs}[1]{|{#1}|}
\DeclareMathOperator*{\val}{\mathsf{val}}
\newcommand{\card}[1]{|{#1}|}
\newcommand{\polyh}{\mathcal{P}}
\DeclareMathOperator*{\argmax}{arg\,max}
\DeclareMathOperator*{\argmin}{arg\,min}
\newcommand{\dgraph}{\vec{\mathcal{G}}}
\newcommand{\vertices}{V}
\newcommand{\edges}{E}
\newcommand{\Max}{\mathrm{Max}}
\newcommand{\Min}{\mathrm{Min}}
\newcommand{\Maxvertices}{\vertices_{\Max}}
\newcommand{\Minvertices}{\vertices_{\Min}}
\newcommand{\dunion}{\uplus}
\newcommand{\payoff}{r}
\newcommand{\shapley}{F}
\newcommand{\ones}{\mathbf{1}}
\newcommand{\gameval}{\lambda}
\newcommand{\Prob}{\mathbb{P}}
\newcommand{\hilb}[1]{\| #1 \|_{\mathrm{H}}}
\newcommand{\supnorm}[1]{\| #1 \|_{\infty}}
\newcommand{\bias}{u}
\newcommand{\poly}{\mathrm{poly}}
\newcommand{\Generic}{\mathcal{U}}
\newcommand{\dens}{f}
\newcommand{\nspace}{\mathcal{X}}
\newcommand{\Cond}{\Delta}
\newcommand{\disc}{\gamma}
\newcommand{\gauss}{\mu}
\newcommand{\sk}{\mathrm{sl}}
\newcommand{\DiscPI}{\textsc{DiscountedPI}}
\newcommand{\IncDiscPI}{\textsc{IncreasingDiscountPI}}
\newcommand{\IncDiscPItwo}{\textsc{IncreasingDiscountPI2}}
\newcommand{\ZPHVI}{\textsc{ZPHValueIteration}}
\newcommand{\eps}{\varepsilon}
\newcommand{\KM}{Krasnoselskii--Mann\xspace}
\newcommand{\arxiv}[1]{\href{http://arxiv.org/abs/#1}{arXiv:#1}}

\title[Smoothed analysis of policy iteration and value iteration]{Smoothed analysis of policy iteration and value iteration for deterministic discounted and mean-payoff games played on any graph}

\author{Bruno Loff}
\address{LASIGE, Faculdade de Ciências, Universidade de Lisboa}
\email{bruno.loff@gmail.com}

\author{Mateusz Skomra}
\address{LAAS-CNRS, Universit{\'e} de Toulouse, CNRS, Toulouse, France}
\email{mateusz.skomra@laas.fr}

\begin{document}

\begin{abstract}
  Our understanding of turn-based stochastic games is reminiscent of our understanding of linear programming in the 1970s. Instead of the simplex method, we have two families of algorithms --- policy iteration and value iteration --- which work well in practice, but which are known to run in exponential time on some input instances. And we are still lacking an algorithm that runs in polynomial time on all instances.
  
  A recent result by Loff and Skomra showed that, for deterministic discounted and mean-payoff games played on an ergodic graph, a certain variant of the policy iteration method runs in smoothed polynomial time. This offered an explanation of why policy iteration performs well in practice, analogous to Spielman and Teng's explanation for the simplex method.
  
  In this paper, we improve their analysis in order to drop the ergodicity assumption: we show that their policy iteration algorithms actually run in smoothed polynomial time for \emph{any} game graph, not only for ergodic graphs. We do this by defining a condition number, a natural generalization of \cite{LoffSkomra:2024}, showing that sufficiently-perturbed games are well-conditioned with high probability, and showing that their algorithms are efficient on well-conditioned instances. Although this is a similar outline to \cite{LoffSkomra:2024}, our proof requires a substantially new approach, based on log-Lipschitz distributions which made an appearance in recent works on the smoothed analysis of the simplex method.
  
  For our second contribution, we propose three new value iteration algorithms for solving deterministic mean-payoff games (also on any graph, not necessarily ergodic). All three algorithms are based on iterative methods for finding fixed points of non-expansive operators, which we show can be applied to the Shapley operator. The first two algorithms are simple applications of Krasnoselskii--Mann iteration and Halpern iteration. The third algorithm is quadratically faster but more complicated, it is a two-stage algorithm which runs Zwick--Paterson and Halpern iterations in parallel. We prove the correctness of these algorithms and show that all of them run in smoothed polynomial time.
\end{abstract}

\maketitle

\bigskip\noindent\textbf{Acknowledgements.} ChatGPT was used to do a proof-check pass. The authors assume full responsibility for all content.

Funded by the European Union (ERC, HOFGA, 101041696). Views and opinions expressed are however those of the author(s) only and do not necessarily reflect those of the European Union or the European Research Council. Neither the European Union nor the granting authority can be held responsible for them. Also supported by FCT through the LASIGE Research Unit, ref. UID/00408/2025. This work received funding from the French National Research Agency (ANR), grant ANR-26-CE48-4720 (COSPA).

\newpage
\setcounter{tocdepth}{2}
\tableofcontents

\newpage

\section{Introduction}

  \subsection{Discounted and mean-payoff games}

  A deterministic turn-based two-player zero-sum infinite game is played in an edge-weighted directed graph $G = ([n], E, r)$, in which the vertices are split into two disjoint sets, $[n] = \Maxvertices \dunion \Minvertices$. At each round $t$, the game is in some state $i(t)$, and one of the players (player Max if $i \in \Maxvertices$ or player Min if $i \in \Minvertices$) chooses an edge $(i(t),i(t+1)) \in E$, which determines the state at the next round $i(t+1)$ and a resulting payoff $r_{i(t),i(t+1)}$. This is repeated for an infinite number of turns. One player is trying to maximize the \emph{total payoff at the limit}, and the other tries to minimize it.

  There are two natural ways of determining the \emph{total payoff at the limit}. In the \emph{discounted} variant, payoffs received at round $t$ are multiplied by a \emph{discount factor} of $\disc^t$ (for some $0 \le \disc < 1$), and the limit payoff is: $(1 - \disc) \sum_{t=0}^\infty \disc^t \cdot r_{i(t), i(t+1)}$. This is equivalent to saying that the game is forced to stop after every round with probability $1-\disc$, and asking for the expected payoff at the limit. In the \emph{mean-payoff} variant, we measure the liminf or limsup (it turns out this choice will not matter), as the number of rounds goes to infinity, of the average payoff received so far (i.e. total payoff divided by the number of rounds): $\liminf_{t\to\infty} \frac{1}{t} \sum_{s=1}^t r_{i(s), i(s+1)}$.

  Many variants of this basic scenario have been considered, e.g. in the original work by Lloyd Shapley \cite{shapley_stochastic}, he considered the most general case where both players could play simultaneously (so, \emph{concurrent} instead of \emph{turn-based} games), and where the next state depends probabilistically, instead of deterministically, on the current state and the players' chosen actions (\emph{stochastic} instead of \emph{deterministic} games). Shapley~\cite{shapley_stochastic} originally proved the existence of a value and optimal (mixed) strategies for the discounted variant.

  A desired property of such games is the existence of  \emph{memoryless} optimal strategies, where the choice of what to do only depends on the current state $i$, and not on the past history. However, for mean-payoff games, Gillette~\cite{gillette} gave an example of a mean-payoff game, where the players play simultaneously at each round, whose optimal strategies cannot be memoryless.\footnote{In fact, it was only in the 1980s \cite{mertens_neyman} that simultaneous-move mean-payoff games were proven to have a value. For every $\eps > 0$, $\eps$-optimal (not memoryless) strategies exist for each player, i.e., ensuring that the payoff is $\eps$-close to the value.} This motivated Gillette to introduce the \emph{turn-based} variant that we described above. Gillette claimed that turn-based two-player \emph{mean-payoff} games have a value and that optimal memoryless strategies do exist for both players. Gillette's proof was actually wrong, but it was later corrected by Liggett and Lippman~\cite{liggett_lippman}, so the statement \emph{is} true.

  Since then, many methods have been proposed for solving stochastic games, i.e. computing the value and optimal strategies of a given game. For a textbook reference, see, e.g. \cite{games_on_graphs}. Alternatively, see \cite{laraki_survey} for a generalized treatment of zero-sum games that are repeated for an infinite number of turns.

  \subsection{Two algorithmic paradigms: value and policy iteration}

  Our present understanding of stochastic games is reminiscent of our understanding of linear programming in the 1970s, before Leonid Khachiyan discovered the ellipsoid method in 1979: the problem of solving turn-based stochastic games is in $\mathsf{NP} \cap \mathsf{coNP}$, but it remains unknown whether there exists a polynomial-time algorithm. Furthermore, in full analogy with the simplex method for linear programming, the known algorithms for stochastic games work very well in practice\footnote{For stochastic games, see \cite{kretinsky_2022_comparison,meggendorfer_2024_playing}. For one-player games, i.e. Markov Decision Processes, see e.g. \cite{hartmanns_2026_revised}. For deterministic games, see, e.g., \cite{chaloupka2009parallel,cadilhac_2025_fast}. }, but for the most well-known algorithms we have constructed instances that force them to run for exponential time.\footnote{E.g. for value iteration see \cite{BeffaraVorobyov:2001} (for the GKK algorithm), \cite{zwick_paterson} (for their own method), \cite{Dorfman:2024} (for the algorithm in \cite{DorfmanKaplanZwick:2019}), and for policy iteration see \cite{friedmann2011exponential,hansen_subexponential_paths} (for the standard Howard and random facet rules), \cite{VanDijkLohoMaat:2024} (for symmetric policy iteration), and \cite{DisserLohoMaatNils:2026} (for ranking-based rules).}

  Most algorithms for solving two-player discounted and mean-payoff games can be broadly classified into two families: value-iteration algorithms and policy iteration algorithms.

  Indeed, the first algorithm ever invented for stochastic games was a value iteration algorithm \cite{bellman1957dynamic}, an algorithm for solving discounted Markov Decision Processes (MDPs). MDPs are a simpler, single-player variant of stochastic games.\footnote{For a modern value-iteration algorithm for MDPs, see \cite{sidford2018variance}, which contains a historical overview in Section 2.} Bellman's algorithm works by considering an operator which computes the value vector $\bar\gameval^{(t+1)} \in \R^n$ of a $t+1$-stage truncated discounted game, when given the value vector $\bar\gameval^{(t)}$ of a $t$-stage truncated discounted game: this operator is a contraction under the $\ell_\infty$ norm, and hence, iterating this operator will converge towards a fixed point, which is the solution of the untruncated game.  Shapley's \cite{shapley_stochastic} proof (of the existence of values and optimal strategies for concurrent two-player discounted games) was based on the same idea: it showed that the value of a discounted game can be expressed as a fixed point of a certain operator, now called the (discounted) Shapley operator. As we may see, here we are literally iterating a value estimate, hence the name \emph{value iteration}.

  For mean-payoff games, the (mean-payoff) Shapley operator is no longer a contraction, but one may still iterate the undiscounted Shapley operator $t$ times to compute the total-payoff values for a  $t$-stage version of the game, and then divide by $t$. It turns out that the $t$-th iterate divided by $t$ is $O(n\frac{\|r\|_\infty}{t})$-close to the true value vector $\gameval$, and so, if the weights $r$ are integer and $t$ is large enough, say $t = O(n^3 \|\payoff\|_{\infty})$, we can deduce the exact value of the game from this approximation.
  This is the Zwick--Paterson algorithm \cite{zwick_paterson}.

  The first algorithmic approach for turn-based two-player mean-payoff games was due to Gurvich, Karzanov, and Khachiyan \cite{gurvich}. Their method uses potential transformations $r_{ij}\mapsto r_{ij}+u_j-u_i$, which preserve the total weight of every cycle and hence preserve all mean-payoff values. The algorithm iterates such transformations until the transformed game reveals which vertices have nonnegative value. This potential-reduction viewpoint has recently been used as a unifying framework to explain several other algorithms \cite{cadilhac_2025_fast}.\footnote{It should be said that these algorithms are iterating over a potential function, and not just over a value vector, but ``value iteration algorithm'' refers more generally to algorithms that iterate over real-valued data, such as values and potentials. In contrast, ``policy iteration algorithms'' iterate over policies, which are discrete data.}

  Recently, a different approach has been used to solve mean-payoff MDPs \cite{LeeRyu:2025,LeeBravoCominetti:2025,ZurekChen:2025}, and ``irreducible'' stochastic games \cite{AkianGaubertNaepelsTerver:2023}\footnote{Irreducibility is a very strong criterion which is not appropriate to study deterministic games, the condition is something like ``for every choice of policies the resulting Markov chain is irreducible''.}, using Halpern iteration and \KM iteration. One of our contributions is to show that this approach can be used to solve deterministic mean-payoff games, which, as far as we know, had not been observed before. We will also show that the resulting algorithms run in smoothed polynomial time (see below for more details).

\medskip
Now for policy iteration. Policy iteration algorithms maintain a pair of policies in memory, and proceed by repeatedly modifying the policies, so that their quality improves monotonically according to some measure, until it can no longer be improved, at which point the measure must guarantee that we have found an optimal strategy for both players.

The first policy iteration algorithm was invented by Howard \cite{howard1960dynamic}, and finds an optimal strategy for deterministic (and some stochastic) Markov Decision Processes. This was later extended by Denardo and Fox \cite{denardo1968multichain} to work on all stochastic MDPs. The method was first extended to irreducible two-player mean-payoff games by Hoffman and Karp \cite{hoffman_karp}, and two-player discounted games by Denardo \cite{denardo1967contraction}, with later developments by many other authors \cite{rao1973algorithms,puri1995theory,CochetGaubertGunawardena:1999,gaubert1998duality,raghavan2003policy,cochetterrasson2006policy}. A good historical overview with more technical details appears in \cite{detournay_policy}, where a policy iteration algorithm first appeared that can handle all the variants of mean-payoff games. In the case of discounted games, an optimal strategy can be found in time polynomial in $\frac{1}{1-\gamma}$ \cite{ye2011simplex,hansen2013strategy}. Otherwise, for mean-payoff games, or for discount factors $\gamma$ exponentially close to $1$, no upper-bound is known on the number of iterations, significantly better than the number of policies, which is exponential in the number $n$ of states. More precisely, the best upper-bound on the number of iterations is $2^{\tilde O(\sqrt n)}$ \cite{halman,hansen_zwick_random_facet}.

\subsection{Smoothed analysis and the previous result for ergodic graphs}

In the case of the simplex method, the discrepancy between practical performance and the existence of exponential-time hard instances was finally explained by Spielman and Teng in 2004 \cite{spielman2004smoothed}. They have shown that, if one takes any linear programming instance $\max \{ c\cdot x \mid A \cdot x \ge b\}$, of dimensions $m\times n$ and bounded norm $\|[A \quad b]\|_{2,\infty} \le 1$, and perturbs each entry of $A$ and $b$, by a Gaussian with mean $0$ and standard deviation $\rho$, then the simplex method will solve the resulting perturbed system in time $\poly(m,n,1/\rho)$
\cite{spielman2004smoothed,dadush2018friendly}. It is then reasonable to expect the simplex method to work efficiently on real-world instances, since they incorporate real-world data which is prone to such perturbations.\footnote{See also \cite{huiberts2023upper,BachHuiberts:2025,bach2026beyond}.} This result of Spielman and Teng founded the area of \emph{smoothed analysis}, where one studies the efficiency of algorithms on such perturbed inputs. It is then said that \emph{the simplex method runs in smoothed polynomial time}.

  In an earlier paper \cite{LoffSkomra:2024}, Loff and Skomra devised algorithms for solving deterministic discounted and mean-payoff games in smoothed polynomial time. (Of course, one key difference with the simplex method is that we \emph{do know} how to solve linear programming in polynomial time, and the only question is \emph{why is the simplex method efficient in practice}; by contrast, no polytime algorithm is known for discounted and mean-payoff games.) The main theorem from that paper is:

  \begin{theorem}[main theorem from \cite{LoffSkomra:2024}]\label{thm:main-old}
    There exists a policy-iteration algorithm for solving $n$-state deterministic two-player (discounted or mean-payoff) games played on ergodic graphs, which runs in time $\poly(n,1/\rho)$ with high probability, on an input where normalized payoffs in $[-1,1]$ have been independently perturbed by a Gaussian with mean $0$ and standard deviation $\rho$.
  \end{theorem}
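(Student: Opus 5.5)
The plan is a condition-number argument in three steps. First, define a quantity $\Cond(G)$ that measures how far the ergodic game is from having ties among strategies. Second, show that an increasing-discount policy iteration solves the game in time $\poly(n,1/\Cond)$. Third, show that after Gaussian perturbation $\Cond \ge \rho/\poly(n)$ with high probability.

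\emph{The condition number.} In an ergodic graph all vertices share a common value $\gameval$. Following the bias/potential viewpoint, I would define $\Cond$ via the optimal bias vector $\bias$, normalized by a potential transformation $r_{ij}\mapsto r_{ij}+\bias_j-\bias_i-\gameval$. Then $\Cond$ is the minimum, over all non-optimal edges $(i,j)$, of the reduced cost $|r_{ij}+\bias_j-\bias_i-\gameval|$. The point of this definition is that every suboptimal choice loses at least $\Cond$ per step against the optimal pair of policies.

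\emph{Algorithm on well-conditioned instances.} I would run discounted policy iteration (the \DiscPI{} routine) with discount factors $\disc_k$ satisfying $1/(1-\disc_k)=2^k$. At each stage I would verify the current pair of policies for mean-payoff optimality by solving the bias equations and checking that no improving switch exists.

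For the discounted stage, strongly polynomial bounds for fixed discount \cite{ye2011simplex,hansen2013strategy} give $O\bigl(\frac{n^2}{1-\disc}\log\frac{n}{1-\disc}\bigr)$ iterations.

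The discounted value vector satisfies $\bar\gameval_\disc=\gameval\ones+(1-\disc)\bias+O((1-\disc)^2 n^2\supnorm{r})$. Hence, once $1-\disc \lesssim \Cond/(n^2\supnorm{r})$, the discounted reduced costs have the same signs as the mean-payoff reduced costs. At that point the discounted optimal policies are mean-payoff optimal and the certificate succeeds. The total running time is therefore $\poly(n,1/\Cond)$, and the discounted version of the theorem follows from the same analysis.

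\emph{Probabilistic step (main obstacle).} The difficulty is that a naive union bound runs over exponentially many cycles and paths that could create near-ties. Each individual linear form in $r$ is anti-concentrated by the Gaussian density, with $\Prob[|\ell(r)|<\eps]\le \eps/\rho$. However, there are $2^{\Omega(n)}$ such forms, so this alone is not enough.

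I would instead condition edge by edge. Fix an edge $(i,j)$ and all payoffs except $r_{ij}$. As a function of $r_{ij}$, the reduced cost of $(i,j)$ is piecewise linear, since $\gameval$ and $\bias$ depend on $r_{ij}$ only through which policies are optimal. On the region where $(i,j)$ is non-optimal, this function has slope $1$ in $r_{ij}$, because while the edge is unused $\gameval$ and $\bias$ do not depend on $r_{ij}$. Therefore the set of $r_{ij}$ values giving reduced cost in $(0,\eps)$ is an interval of length at most $\eps$, and it has Gaussian measure at most $\eps/\rho$.

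A union bound over the $O(n^2)$ edges then gives $\Prob[\Cond<\eps]\le O(n^2\eps/\rho)$. Tail bounds also give $\supnorm{r}=O(1+\rho\sqrt{\log n})$ with high probability. Taking $\eps=\rho/\poly(n)$ yields the theorem. The delicate point to check is that monotonicity (slope one) genuinely holds across the whole non-optimal region, including at ties between optimal policies. Ergodicity is exactly what makes $\bias$ unique up to an additive constant and keeps this argument clean.
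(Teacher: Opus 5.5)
Your plan is correct and is essentially the Loff--Skomra argument as this paper summarizes it. It has the same three parts: the bias-slack condition number on the single-cycle cones $\polyh^{\sigma,\tau}$, increasing-discount policy iteration justified by a discount-to-mean-payoff transfer once $1-\disc \lesssim \sk(\payoff)/(n^2\supnorm{\payoff})$, and edge-by-edge conditioning on a one-sided cutoff $\gameval+\bias_i-\bias_j$ that does not depend on $r_{ij}$, followed by a union bound. The paper's own proof of the general version replaces the conditioning step with a log-Lipschitz/Gaussian-shift argument on the cones, because that cutoff becomes two-sided on non-ergodic graphs; this does not affect your ergodic setting.

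Two points to tighten:
\begin{itemize}
\item Your ``delicate point'' holds. Lowering $r_{ij}$ at a Max vertex (or raising it at a Min vertex) preserves any solution of the ergodic equation in which $(i,j)$ is strictly suboptimal. Combined with generic uniqueness of the bias, the non-optimal set is therefore a half-line with a constant cutoff.
\item Derive the expansion $\gameval\ones+(1-\disc)\bias+O((1-\disc)^2n^2\supnorm{\payoff})$ for the zero-player discounted value of the mean-payoff optimal pair $(\sigma,\tau)$, with $\bias$ normalized to have zero average on its cycle. Then check that this vector satisfies Shapley's discounted equations strictly, instead of asserting the expansion for the game's discounted value.
\end{itemize}
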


  Hence, for their earlier algorithm to work, the game has to be ``played on an ergodic graph''. Let us explain what that means. Let $G = ([n] = \Maxvertices \dunion \Minvertices, E)$ be a directed graph. A \emph{trap} by \emph{player Min} against \emph{player Max} is a strict subset $S \subsetneq [n]$ of the vertices, a starting vertex $s \in S$, and a choice of outgoing edges for each vertex in $S \cap \Minvertices$ such that, no matter how one chooses outgoing edges for each vertex in $S \cap \Maxvertices$, the resulting path (starting at $s$ and following the chosen edges) never leaves $S$. In other words, player Min can ``trap'' player Max in $S$. Likewise we define a trap by player Max against player Min. The graph $G$ is called \emph{ergodic} if there do not exist two disjoint traps, one for each player. The obvious example is a complete bipartite graph between $\Maxvertices$ and $\Minvertices$.

\subsection{Our main two results}

  In this paper we have two main results. We first show that the previous algorithm of Loff and Skomra actually works on any game, not just ergodic:

   \begin{theorem}[first main theorem]\label{thm:main-first}
    The policy-iteration algorithm of \cite{LoffSkomra:2024} also works for non-ergodic graphs. In other words, given \emph{any} $n$-state deterministic two-player (discounted or mean-payoff) game, played on \emph{any graph}, it will run in time $\poly(n,1/\rho)$ with high probability, on an input where normalized payoffs in $[-1,1]$ have been independently perturbed by a Gaussian with mean $0$ and standard deviation $\rho$.
  \end{theorem}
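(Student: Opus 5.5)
The plan follows the outline announced in the abstract and has three stages. First, define a condition number for arbitrary graphs. Second, show that the algorithms of \cite{LoffSkomra:2024} run in time polynomial in $n$ and this condition number. Third, show that a perturbed game has polynomially bounded condition number with high probability.

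\textbf{Condition number.} In the ergodic case the value is constant, and the condition number measures how far each suboptimal action is from optimal in terms of reduced costs (the bias/potential gap). For a general graph, the value vector $\gameval\in\R^n$ is piecewise constant on the classes of a ``value decomposition''. Every pair of optimal policies induces cycles whose mean weights give these values. I would define $\Cond$ as the inverse of a minimum over gaps of two kinds:
\begin{itemize}
\item the gap between distinct values $\gameval_i\neq\gameval_j$ (equivalently, between the mean weights of distinct cycles of the relevant policies);
\item for each non-optimal edge $(i,j)$, the amount by which it is strictly worse, measured either by the value gap or, when values tie, by the bias gap $r_{ij}+\bias_j-\bias_i-\gameval_i$.
\end{itemize}
Here $\bias$ is a suitably normalized bias vector on each value class. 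When all these gaps are bounded below, the optimal policies are unique.

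\textbf{Algorithms are efficient on well-conditioned games.} The algorithm of \cite{LoffSkomra:2024} runs discounted policy iteration with increasing discount factors. With $\disc$ such that $1/(1-\disc)=\poly(n,\Cond)$, the discounted optimal policy coincides with the mean-payoff optimal one. To see this, use the Laurent expansion of discounted values, $v^\disc=\gameval+(1-\disc)\bias+O((1-\disc)^2 n^2\|r\|_\infty)$: once $1-\disc$ is small against the gaps, both first-order and second-order terms are separated. Discounted policy iteration then takes $\poly(n,1/(1-\disc))$ iterations by \cite{ye2011simplex,hansen2013strategy}. The rounding and stopping criterion of the old algorithm must then be checked to certify optimality without relying on a constant value. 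I expect this to be adaptable with moderate effort.

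\textbf{Perturbed games are well-conditioned.} This is the main obstacle. Every gap above is a difference of two affine forms in $r$ with small rational coefficients, namely the cycle means and path sums defining $\gameval$ and $\bias$. However, which forms occur depends on $r$ itself through the optimal policies, so a naive union bound over exponentially many policy pairs fails. The plan is to use the log-Lipschitz approach:
\begin{enumerate}
\item Truncate the Gaussian perturbation to a region where its density is $O(\poly(n)/\rho)$-log-Lipschitz, losing only a small probability.
\item To bound the probability that a given gap is below $\eps$, condition on all coordinates except one well-chosen edge weight $r_e$, namely an edge appearing in one affine form but not the other (for instance, on the cycle realizing one value).
\item Argue that on a short interval of $r_e$ the combinatorial structure, and hence the gap, changes with slope at least $1/n$. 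Log-Lipschitzness says the conditional density of $r_e$ is nearly uniform on intervals of length $\rho/\poly(n)$, so the probability is $O(\eps\,\poly(n)/\rho)$.
\end{enumerate}
The subtle point is showing that moving $r_e$ does not alter which policies are optimal before the gap closes, or handling such changes by a monotonicity argument, since values are monotone in $r$. A union bound over the $O(n^2)$ edges or pairs of states then gives $\Cond\le\poly(n,1/\rho)$ with high probability. Combining the three stages proves \cref{thm:main-first}.
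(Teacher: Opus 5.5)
Your stages 1 and 2 are broadly in line with the paper. There the condition number is $\hilb{\payoff}/\sk(\payoff)$, with the slack taken over value gaps along edges joining different classes and bias gaps inside classes. The transfer theorem then shows that $(\sigma,\tau)$ is the unique discounted optimum once $1-\disc<1/(4n^2\Cond(\payoff))$. Your Laurent-expansion route to this is a plausible alternative to the paper's three-case argument, but only if you make the remainder uniform over all deviating pairs $(\sigma,\tau')$ and prove the two first-order separations. First, a new cycle created by Min inside a class has mean at least $\gameval_i+\sk(\payoff)/n$. Second, a deviation into another class raises the reached mean by at least $\sk(\payoff)$.

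The genuine gap is stage 3. You mention log-Lipschitz densities, but what you actually describe is the one-coordinate anti-concentration argument of the ergodic case: condition on all weights but one $r_e$, use that its conditional law is flat, argue the gap moves with slope at least $1/n$, and union bound over $O(n^2)$ edges. Log-Lipschitzness plays no role there, since a one-dimensional Gaussian already has density at most $1/(\rho\sqrt{2\pi})$. This is precisely the argument the paper explains breaks down without ergodicity.

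Here is why. For an edge $(i,j)$ joining two classes, $\gameval_i-\gameval_j$ does not depend on $\payoff_{ij}$ at all (the cone $\polyh^{\sigma,\tau}$ is a cylinder in that direction). So you must move a weight $r_e$ on the cycle reached from $i$ or from $j$. Those cycles are determined by the optimal policies, which depend on $r_e$ itself.
\begin{itemize}
\item The slope $\ge 1/n$ is guaranteed only while the policies do not change, that is, on an interval whose length is the distance to the boundary of the current cone in direction $e$. That distance is governed by the other slacks, which are the very quantities you are trying to bound, so the local argument is circular.
\item Past a policy switch, the gap can have slope $0$ (if $e$ lies on neither cycle) or negative slope (if $e$ lies on the cycle of $j$). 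The set $\{|\gameval_i-\gameval_j|<\eps\}$ can then be a union of many intervals.
\item Monotonicity does not help: $\gameval_i$ and $\gameval_j$ are both nondecreasing in $r_e$, but their difference is not monotone.
\end{itemize}
You flag this as ``the subtle point'' but do not resolve it, so the bound $O(\eps\,\poly(n)/\rho)$ is not established.

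The missing idea is to translate in the whole weight space rather than along one coordinate.
\begin{itemize}
\item The open cones $\polyh^{\sigma,\tau}$ are disjoint and cover $\R^m$ up to a null set.
\item Each nonempty cone contains a deep point $\payoff^*$ with $\supnorm{\payoff^*}=1$ and $\sk(\payoff^*)\ge 1/n$. To build it, order the classes by increasing value. Give weight $\ell$ to the policy edges of the $\ell$-th class, and $\ell-1$ and $\ell+1$ to the remaining edges leaving its Max and Min vertices, respectively. Then normalize.
\item The slack is a minimum of linear forms that are positive on the cone. Hence $\alpha\payoff^*+\polyh^{\sigma,\tau}\subseteq\polyh^{\sigma,\tau}$, and every point of this translate has slack at least $\alpha/n$.
\item For an $L$-log-Lipschitz density, $\Prob(X\in v+A)\ge e^{-L\|v\|}\Prob(X\in A)$. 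Summing over the disjoint translates gives $\Prob(\sk(\payoff)<\alpha/n)\le\alpha L$, with no union bound over policies at all.
\item For the Gaussian, the paper instead uses the Li--Kuelbs inequality $\gauss_m(v+A)\ge\Phi\bigl(\Phi^{-1}(\gauss_m(A))-\|v\|_2\bigr)$, applied only to cones of mass at least $\delta/N$. The number $N\le n^n$ of cones then enters only through $\sqrt{\log N}$.
\item A tail bound on $\supnorm{\payoff}$ controls the numerator $\hilb{\payoff}$.
\end{itemize}
This handles all the entangled inequalities at once, and never requires knowing how the optimal policies change along a line.
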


  Our second main result proves that \emph{value iteration} algorithms can also be made to run in smoothed polynomial time:

  \begin{theorem}[second main theorem]\label{thm:main-second}
    There exists a value iteration algorithm which, given \emph{any} $n$-state deterministic two-player (discounted or mean-payoff) game, played on \emph{any graph}, will run in time $\poly(n,1/\rho)$ with high probability, on an input where normalized payoffs in $[-1,1]$ have been independently perturbed by a Gaussian with mean $0$ and standard deviation $\rho$.
  \end{theorem}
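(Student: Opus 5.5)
The plan is to follow the outline of the abstract: define a condition number, show that perturbed games are well-conditioned with high probability, and show that value iteration is fast on well-conditioned instances. It suffices to treat mean-payoff games. Discounted games with $\disc$ close to $1$ reduce to them, and for $\disc$ bounded away from $1$ the discounted Shapley operator is a contraction, so plain Bellman iteration already runs in time $\poly(n, 1/(1-\disc))$.

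\textbf{Condition number.} For a mean-payoff game on an arbitrary graph, let $\Cond$ be the minimum, over pairs of distinct simple cycles $C \neq C'$, of $\bigl| \frac{1}{|C|}\sum_{e\in C} r_e - \frac{1}{|C'|}\sum_{e\in C'} r_e \bigr|$. I would take this (or the natural generalization used for the policy-iteration result) as the condition number. Every optimal value is the mean weight of some simple cycle reachable under optimal play. So knowing the value vector $\gameval$ to precision better than $\Cond/2$ identifies, for each vertex, the exact cycle mean achieving it.

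\textbf{Probabilistic step.} Each difference of means is a linear form $\sum_e a_e r_e$ in the perturbed payoffs. Its coefficients satisfy $|a_e| \ge 1/n$ on at least one edge, since $C \ne C'$ means some edge carries different weights. Conditioning on all other payoffs, the Gaussian density of $r_e$ is bounded by $1/(\rho\sqrt{2\pi})$. Hence $\Prob(|\sum_e a_e r_e| < \eps) \le O(n\eps/\rho)$. The difficulty is that there are exponentially many cycle pairs, so a naive union bound fails. This is the main obstacle, and it is where the log-Lipschitz machinery mentioned in the abstract should enter. I would first try a union bound over the combinatorial structure actually used by the algorithm: pairs of cycles with $|C|, |C'| \le n$ that are compatible with some pair of positional strategies. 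I would then argue that only $\poly(n)$ many ``relevant'' linear forms matter at the final precision. If that fails, I would replace the minimum by a weaker quantity, namely the gap between the optimal value at each vertex and the best suboptimal value it could converge to. I would bound that gap using anti-concentration of the optimal cycle means conditioned on the optimal policy pair, with only $n$ linear forms per vertex. Either way the target is $\Prob(\Cond < \eps) \le \poly(n)\,\eps/\rho$.

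\textbf{Algorithmic step.} Apply Halpern iteration, or \KM iteration, to the Shapley operator $\shapley$, which is monotone, additively homogeneous and $\supnorm{\cdot}$-nonexpansive. The point to establish is that $\shapley(x) - x$ approximates $\gameval$ even on non-ergodic graphs. Halpern iteration gives $\supnorm{\shapley(x_k)-x_k} = O(\supnorm{\bias}/k)$ when a bias vector exists. On a non-ergodic graph no bias vector need exist, so I would instead use the nonlinear spectral decomposition $\shapley^t(x) = t\gameval + O(\text{bounded})$ on each value class. From this I would derive that $x_k/k$, or the fixed-point residual, is within $\poly(n)\supnorm{r}/k$ of $\gameval$. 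Taking $k = \poly(n)/\Cond$ then determines $\gameval$ exactly by rounding to the nearest cycle mean, using Zwick--Paterson-style bounds as a fallback. Optimal strategies are recovered by choosing, at each vertex, edges that are tight for the reconstructed potentials. Combining this with the tail bound on $\Cond$ gives running time $\poly(n,1/\rho)$ with high probability.
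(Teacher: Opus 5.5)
Your three-part outline (condition number, well-conditioned with high probability, fast on well-conditioned inputs) matches the paper's. However, the probabilistic step has a genuine gap, and it begins with your choice of condition number.

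The minimum gap between means of distinct simple cycles is not merely hard to control by a union bound: it is deterministically tiny. All cycle means lie in $[\min_e \payoff_e,\max_e \payoff_e]$. So if the graph has $N$ simple cycles, two of their means are within $\hilb{\payoff}/(N-1)$ of each other, and on a complete digraph $N\ge (n-1)!$. After normalizing by $\hilb{\payoff}$, your parameter is exponentially bad for \emph{every} weight vector on such graphs, and no perturbation can help.

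Your first fallback does not fix this. Every simple cycle is the cycle of some positional pair, so the family stays exponential. Restricting to the forms ``relevant'' to optimal play gives a weight-dependent family, and a union bound over such a family is not valid.

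Your second fallback, anti-concentration of optimal cycle means after conditioning on the optimal pair, is exactly the Loff--Skomra argument that the paper explains breaks on non-ergodic graphs. Conditioning on $(\sigma,\tau)$ means conditioning on $\payoff\in\polyh^{\sigma,\tau}$. The separation inequalities $\gameval_i>\gameval_j$ between different equilibrium cycles then trap a single weight from both sides, between bounds that are themselves functions of other constrained weights. So the one-coordinate density bound no longer applies.

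The missing idea is to avoid conditioning altogether. The paper measures the slack $\sk(\payoff)$ of the GKK inequalities for the unique bias-induced pair (\cref{general_cone}). It uses the fact that the cones $\polyh^{\sigma,\tau}$ partition $\R^m$ up to a null set (\cref{policy_cone}). It then exhibits, in each nonempty cone, a deep point $\payoff^*$ with $\supnorm{\payoff^*}=1$ and $\sk(\payoff^*)\ge 1/n$ (\cref{wide_cone}). By linearity, every point of $\alpha\payoff^*+\polyh^{\sigma,\tau}\subseteq\polyh^{\sigma,\tau}$ has slack at least $\alpha/n$. The translation inequality for log-Lipschitz densities (\cref{shifted_decomposition}) then bounds the mass outside the union of the shifted cones with no union bound over cones at all. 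For Gaussians, the shift inequality of \cref{gaussian_shift} costs only a factor of order $\sqrt{n\log n}$.

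The algorithmic step also needs repair. Halpern's $O(1/k)$ residual bound is proved in the quotient $\R^n/\R\ones$ with the Hilbert seminorm. In $\supnorm{\cdot}$ the operator $\shapley$ has no fixed point unless the value is $0$ (at a bias the residual is $\xi\ones$). Moreover, from $x_0=0$ the normalized iterates tend to $\gameval/3$ for Halpern and $\gameval/2$ for \KM, not to $\gameval$. ``Rounding to the nearest cycle mean'' would require the exponentially long list of cycle means, and ``edges tight for the reconstructed potentials'' is not a justified extraction rule.

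The paper's mechanism has two parts. First, the linear growth of the iterates implies that after $O(n\Cond^{\val}(\payoff))$ steps they never select an edge between different value classes. The iteration therefore coincides with the one on the reduced game and splits into constant-value components (\cref{value_condition_of_km,value_condition_of_halpern}). Second, on each component an $\eps$-approximate bias is an exact bias for weights shifted vertex-wise by at most $\eps$, and it induces the same policies (\cref{bias_approx_value}). So once $\eps\le\sk(\payoff)/(4n+3)$, it induces exactly the optimal pair.

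With a cycle-mean gap $g$, you could instead finish by noting that $2\eps$-optimal policies are optimal once $2\eps<g$. So your parameter would suffice algorithmically; its defect is only that it is exponentially small on graphs with many cycles.

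Finally, your reduction for discounted games asserts that Bellman iteration recovers exact optimal policies in $\poly(n,1/(1-\disc))$ time. That requires a lower bound on the discounted action gap, which you have not supplied.
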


  This second result should be compared with the results of the paper \cite{AllamigeonGaubertKatzSkomra:2022} (journal version: \cite{AllamigeonGaubertKatzSkomra:2025}). This paper defines a certain condition number for checking if a constant-value game has positive value, and shows that the Zwick--Paterson algorithm runs efficiently in this condition number. The authors make no probabilistic analysis of their condition number, i.e., they do not show that a randomly perturbed game is likely to be well-conditioned, even if the underlying graph is ergodic. Our own probabilistic analysis can be easily adapted to show that their condition number is indeed small for random games, so Zwick--Paterson iteration can indeed quickly detect the sign of the value of a random game played on an ergodic graph. Their paper also gives an algorithm which, for a constant-value game, uses Zwick--Paterson iterates and ``aggregates'' them produce $\varepsilon$-optimal strategies. However, it is no longer clear whether our probabilistic analysis can be made to apply to this algorithm. For this reason, we use the \KM and Halpern iterations instead of the Zwick--Paterson iteration. We show that, if a game played on an ergodic graph is well-conditioned in the sense of \cite{LoffSkomra:2024}, then the $\varepsilon$-optimal strategies produced by these value iteration algorithms quickly become optimal. No such property has been observed in \cite{AllamigeonGaubertKatzSkomra:2025}. Finally, their paper also extends this more complicated value iteration algorithm to work for non-ergodic games. To do this, they introduce a second condition number \cite[Theorem 11]{AllamigeonGaubertKatzSkomra:2025}, again without a probabilistic analysis. Their condition number can be applied to more general games such as stochastic mean payoff games and entropy games. For deterministic games, this second condition number can be simplified and we provide a probabilistic analysis of this simplified number. More precisely, we introduce a single condition number, which is based on \cite{LoffSkomra:2024}, and can be used to study both ergodic and non-ergodic games.

  \subsection{Our results versus \cite{LoffSkomra:2024}}

  Both our main theorems closely follow the approach of Loff and Skomra \cite{LoffSkomra:2024}. Our paper has five new technical contributions, the first three of which have counterparts in the Loff--Skomra paper, so let us compare our results with theirs. The first contribution of \cite{LoffSkomra:2024} was defining a certain condition number $\Delta(r)$, namely, this is a number depending on the edge weights $r$ of an \emph{ergodic} game, which is supposed to measure how hard it is to solve the game with edge weights $r$. Their second contribution is showing that perturbed inputs are well conditioned, meaning, they have small condition number with high probability. Their third contribution is showing that a certain policy iteration algorithm runs efficiently on well-conditioned inputs. We may summarize our contributions conceptually as follows (more details follow below):

  \begin{description}
  \setlength\itemsep{0.5em}
    \item[Contribution 1] Firstly, we define a condition number that works for games played on general, not necessarily ergodic graphs. Our condition number is strongly based on Loff and Skomra's: it is a natural generalization of their proposal to the non-ergodic case. Their condition number is based on the ergodic equation, and ours is based on the GKK equations.

    \item[Contribution 2] We then show, as they do, that perturbed inputs are well-conditioned with high probability. Our proof, however, is \emph{completely different} to the proof of \cite{LoffSkomra:2024}. As it turns out, for non-ergodic graphs the previous analysis fails at a crucial juncture, for a meaningful reason that is owed specifically to non-ergodicity. We overcome this obstacle by making use of log-Lipschitz distributions, inspired by recent work on the smoothed analysis of the simplex method.

    \item[Contribution 3] We then show that policy iteration runs efficiently on instances that are well-conditioned under our new notion. As in \cite{LoffSkomra:2024}, the proof goes through discounted games: we show that for every well-conditioned mean-payoff game, if the discount factor $\gamma$ is chosen inverse-polynomially-close to $1$, then the discounted game and the mean-payoff game have the same unique optimal policies (for general games this requires $\gamma$ being exponentially close to $1$). Known policy iteration algorithms solve discounted games in time polynomial in $n$ and $\frac{1}{1-\gamma}$, and therefore yield a polynomial-time algorithm on well-conditioned mean-payoff games. The main work is to reprove the Loff--Skomra discounted-to-mean-payoff transfer theorem for our new condition number.

    \item[Contribution 4] Loff and Skomra did not study value iteration. Our fourth contribution is to introduce three value-iteration algorithms for deterministic mean-payoff games. The first two are based on Halpern iteration and \KM iteration applied to the Shapley operator in the Hilbert seminorm. We ultimately show that these iterative methods are able to recover the value structure of the game, partitioning the game graph into constant-value subgames, and then solve each subgame independently.  The third algorithm accelerates the first part of this process: it uses Zwick--Paterson iteration to partition the game into constant-value subgames, while simultaneously using Halpern iteration to solve each subgame.

    \item[Contribution 5] Our fifth contribution is to prove that all three value-iteration algorithms run efficiently on well-conditioned inputs, and hence in smoothed polynomial time. The proof is different from the policy-iteration argument, and does not consider discounted games at all. The analysis proceeds in two steps. First, we show that, in a well-conditioned constant-value game, a sufficiently accurate approximate bias already induces the exact optimal policies (cf.~\Cref{def:bias,def:induced_policy}). Second, for general games, we show that value iteration eventually ignores all edges between vertices of different values, reducing the problem to constant-value subgames. This gives polynomial bounds for plain \KM and Halpern iteration, and a better bound for the combined Zwick--Paterson/Halpern algorithm.

  \end{description}
  Let us explain the Loff--Skomra paper, and our own contributions, in greater technical detail.

  \subsection{Overview of the Loff--Skomra paper}

  Their approach begins with the observation that the space $\R^E$ of possible edge-weights can be nearly partitioned by a collection of open polyhedra $\R^E \approx \mathcal U = \bigcup_{\sigma,\tau} \mathcal P^{\sigma,\tau}$, indexed by pairs of policies $(\sigma, \tau)$. By nearly partitioned we mean that $R^E \setminus \mathcal U$ has measure $0$. Each set $\mathcal P^{\sigma,\tau}$ is an open polyhedral cone, and is precisely the set of all weight vectors $r \in \R^E$ such that $(\sigma, \tau)$ is the unique \emph{bias-induced} pair of optimal policies. Let us explain this.

  It can be shown (see \cite{shapley_stochastic} for discounted games, \cite{liggett_lippman,ehrenfeucht_mycielski} for mean-payoff games) that deterministic turn-based two-player games (for both discounted or mean-payoff variants) have optimal memoryless strategies, called \emph{policies}, for both players. I.e., there exists a choice $\sigma$ of outgoing edges of each vertex in $\Maxvertices$, and a choice $\tau$ of outgoing edges for each vertex in $\Minvertices$, such that $\sigma$ is an optimal strategy for player Max and $\tau$ is an optimal strategy for player Min. In general a game can have multiple optimal strategies, but some such strategies are special. In the case of mean-payoff games played on an ergodic graph, one can define the following system of so-called \emph{ergodic} equations:
  \begin{align*}
    \forall i \in \Maxvertices, \quad &\gameval + \bias_i = \max_{(i,j) \in \edges} \{\payoff_{ij} + \bias_{j}\}, \\
    \forall i \in \Minvertices, \quad &\gameval + \bias_i = \min_{(i,j) \in \edges} \{\payoff_{ij} + \bias_{j}\}.
  \end{align*}
  It follows that any solution $(\gameval, \bar \bias) \in \R^{1+n}$ of this system gives us the value of the game $\gameval$ \cite{gurvich,generic_uniqueness}. The vector $\bar \bias \in \R^n$ is called a \emph{bias} of the game. Also, for a given solution $(\gameval, \bar\bias)$, the argmax/argmin of each maximum/minimum give us a pair $(\sigma,\tau)$ of optimal policies, and these policies (that arise from solutions of the ergodic equation) are called \emph{bias-induced policies}. Bias-induced policies are not unique in general, but \emph{for a generic set of payoffs $r \in \R^E$}, i.e. for all payoffs in $\R^E$ except for a set of measure $0$, the half-line $\R\cdot (1, \ldots, 1) + \bar u$ is unique (i.e. all solutions $\bar u$ differ by a constant vector) and there is a unique pair of bias-induced policies (cf. \cite{LoffSkomra:2024,generic_uniqueness}).

  It is not obvious, but it can be shown that, for any fixed $(\sigma,\tau)$, the predicate ``$(\sigma,\tau)$ is a bias-induced policy for $r$'' can be expressed by a system of linear inequalities in the edge weights $r \in \R^E$.
  If we treat $(\gameval,\bar\bias)$ as constants, this is obvious from the ergodic equations: to express ``the edge $(i,j)$ is \emph{not} in the optimal policy'', where $i \in \Maxvertices$, we can simply write
  \[
    \gameval + \bias_i > r_{ij} + \bias_j,
  \]
  which for a solution $\gameval, \bias$ is to say
  \[
    r_{i j} < \gameval + \bias_i - \bias_j.
  \]
  If we put all of these together for the various edges $(i,j)$ not in the policies $(\sigma, \tau$), we get a linear system of inequalities.
  The non-obvious part is how to express this as linear inequalities in the $r_{i j}$ variables alone, i.e. without mentioning $\bias$ or $\gameval$. It turns out that, for fixed policies $(\sigma,\tau)$, within $\mathcal P^{\sigma,\tau}$ the values of $u_i$ and $\lambda$ are linear functions of the weights $r$, so inequalities such as the above are indeed linear in the weights. The resulting system of linear inequalities defines precisely the polyhedron $\mathcal P^{\sigma,\tau}$.

  The Loff--Skomra paper then defines (their first contribution) a condition number $\Cond(r)$, which measures the  (suitably normalized) inverse-distance between the payoffs $r$ and the boundary of the cone $\mathcal P^{\sigma,\tau}$ containing $r$ (this is $\infty$ if $r$ is at the boundary). They show (their second contribution) that if we perturb any vector $r \in [-1,1]^E$, for example, $\tilde r = r + \mathcal N(0,\rho^2)^E$, so each weight is perturbed by a normal distribution with variance $\rho^2$, then the resulting perturbed weights $\tilde r$ have $\Cond(\tilde r) \le \poly(n,1/\rho)$ with high probability (over the choice of perturbation). They finally show (their third contribution) that a certain policy iteration algorithm runs in time $\poly(n, \Cond)$.

  Two natural follow-up questions arise. First, can we also solve the non-ergodic case? And second, value iteration has not been studied at all: can value-iteration algorithms also be made to run in smoothed polynomial time?

  \subsection{Contribution 1. A condition number for non-ergodic graphs}\label{sec:contribution-1}

  Starting with the first question: what if the game is not ergodic? The first technical issue is that the Loff--Skomra condition number is defined using the ergodic equation, and therefore only applies to ergodic graphs. However, a classical result by Gurvich, Karzanov, and Khachiyan provides a more complicated system of equations analogous to the ergodic equations. In the non-ergodic case, the value of the game depends on the node we start at (because the graph is not well-connected), and so the analogous system has $2n$ variables $(\gameval, \bias) \in \R^n \times \R^n$. The GKK equations are:
  \begin{align*}
    \forall i \in \Maxvertices, \quad &\gameval_{i} = \max_{(i,j) \in \edges} \gameval_{j}, \\
    \forall i \in \Minvertices, \quad &\gameval_{i} = \min_{(i,j) \in \edges} \gameval_{j}, \\
    \forall i \in \Maxvertices, \quad &\gameval_{i} + \bias_i = \max_{(i,j) \in \edges, \gameval_j = \gameval_i} \{\payoff_{ij} + \bias_{j}\} > \max_{(i,j) \in \edges, \gameval_j \neq \gameval_i} \{\payoff_{ij} + \bias_{j}\}, \\
    \forall i \in \Minvertices, \quad &\gameval_{i} + \bias_i = \min_{(i,j) \in \edges, \gameval_j = \gameval_i} \{\payoff_{ij} + \bias_{j}\} < \min_{(i,j) \in \edges, \gameval_j \neq \gameval_i} \{\payoff_{ij} + \bias_{j}\}.
  \end{align*}
  Intuitively, the first two equations partition the nodes of the graph into clusters, where the value $\lambda_i$ is the same for all vertices $i$ within the same cluster, but may differ across clusters. Then the two remaining equations select the best bias-adjusted payoff $\payoff_{ij} + \bias_j - \bias_i$ within each cluster, and state that no edges from a max/min node can reach a higher/lower-value cluster. Indeed, if a pair of policies $(\sigma,\tau)$ selects the argmax/argmin edge at each node, then $(\sigma,\tau)$ are optimal, and policies obtained in this way are also called \emph{bias-induced policies}.

  Once again, it can be shown that the predicate ``$(\sigma,\tau)$ is a bias-induced policy for $r$'' can be expressed by a system of linear equations in the weights $r_{i j}$. And so the space $\R^E$ of edge-weights can once again be partitioned by polyhedral cones, and one can define a condition number as before, by measuring inverse-distance to the boundary. This straightforward adaptation of \cite{LoffSkomra:2024} is the first contribution of our paper.

  \subsection{Contribution 2. Showing that perturbed inputs are well-conditioned}
  We now wish to show that a randomly-perturbed set of payoffs has small condition number with high probability. Here we come to a fundamental stumbling block, which is one of the main technical difficulties we overcome.

  To prove that a perturbed weight vector $(\tilde r_{ij})$ is well-conditioned, Loff and Skomra looked at the set of equations that need to be obeyed. As illustrated above, the linear system has inequalities, e.g. for a node $i \in \Maxvertices$, of the form
  \[
    r_{i j} < \lambda + u_i - u_j.
  \]
  The value on the right side of this equation, called the bias-adjusted value, does not, as it turns out, depend on the variable $r_{ij}$. So this is a linear inequality with a one-sided cutoff. Loff--Skomra then argue: if we fix all of the variables except for $r_{ij}$, and perturb the weight $r_{ij}$ with a not-too-small perturbation, then the right-hand side (the cutoff) becomes fixed, and there is a large probability that $r_{i j}$ will be \emph{non-negligibly bounded away from the cutoff}. Union-bounding over all edges, we conclude that with high probability every edge is bounded away from the cutoff, which is to say, the condition number (inverse-distance from the boundary, i.e., inverse-distance from the cutoff) is not too large.

  \medskip\noindent
  \emph{Here's why the old perturbation argument fails:} In the non-ergodic case, the predicate ``$(\sigma,\tau)$ is a bias-induced policy for $r$'' needs to obey additional inequalities. Namely, for each $i \in \Maxvertices$ and each $j$ which is in a different cluster than $i$, we have the inequality $\lambda_i > \lambda_j$. The value $\lambda_i$, expressed in terms of the weights $r_{ij}$, is the average weight of the cycle $C_i$ where the policies $(\sigma,\tau)$ end up, when starting from $i$. So this inequality, written in terms of the weights $r_{ij}$, says:
  \[
    \frac{1}{|C_i|} \sum_{(k,\ell) \in C_i} r_{k \ell} > \frac{1}{|C_j|} \sum_{(k,\ell) \in C_j} r_{k \ell}.
  \]
  Now suppose that we have several clusters, hence several of these inequalities, which is perfectly possible in a non-ergodic graph. Then each edge weight $r_{i j}$ can appear on multiple of these inequalities. The difficulty, when compared to the previous case, is that $r_{ij}$ can appear on \emph{different sides} in different inequalities, so we no longer have equations with a single-cutoff side, we have $A < r_{ij} < B$, (for several different $A$ and $B$, in fact), and it becomes necessary to prove that $r_{i j}$ is likely to be bounded away from $A$ and $B$; this requires us to prove, in particular, that $A$ and $B$ are bounded away from each other with high probability. Now notice that the payoffs $r_{i'j'}$ within $A$ and within $B$ may themselves appear in the center of a different triple $A' < r_{i'j'} < B'$ that also should be satisfied, and it is not clear how one can untangle the dependencies between the different triples. This is where we got stuck when trying to extend the previous proof.

  \medskip\noindent
  \emph{So we need a new strategy.} Here we take inspiration from the smoothed analysis of the simplex method, appearing in \cite{dadush2018friendly,HuibertsLeeZhang:2025,BachHuiberts:2025}. In these papers, they wish to prove that the simplex method (with the shadow vertex pivoting rule) runs in smoothed polynomial time if the input instance is perturbed by Gaussian noise. To do this, they first show the same result for log-Lipschitz noise, and then argue that the Gaussian distribution behaves similarly to a log-Lipschitz distribution.

  A log-Lipschitz distribution is a distribution over $\R^n$ whose probability density function $f$ obeys, for some constant $L \in \R_{> 0}$ and every $x,y \in \R^n$:
  \[
    e^{-L\|x - y\|} \le \frac{\dens(x)}{\dens(y)} \le e^{L\|x - y\|}.
  \]
  The most obvious example is a radially symmetric exponential density $\dens(x) \propto \exp\left(-\|x\|\right)$.

  With a log-Lipschitz distribution $\mu$ over $\R^n$, we can easily relate the probability of an event $A$ with the probability of the translated event $A + v$:
  \[
    \mu(A+v) \ge e^{-L\|v\|} \mu(A),
  \]
  where $L$ is the log-Lipschitz constant of $\mu$. Now suppose that $A$ is a polyhedral cone, and $v$ is a point of $A$ with $\|v\| \ll \frac{1}{L}$ which is ``$\delta$-deep within $A$'', meaning that a ball of radius $\delta$ around $v$ is still contained in $A$. Then, by the above equation, $A + v \subseteq A$ still has an $e^{-L\|v\|}=1-o(1)$-fraction of the mass of $A$, and every point $x$ in $A+v$ is also $\delta$-deep within $A$. If we recall that the condition number is inverse-distance to the boundary of $A$, and we imagine the shifted cone $A + v$ inside the base cone $A$ as having a $\delta$-thick border between its own border and that of $A$, this should mean the condition number of $x$ is $\le \frac{1}{\delta}$, or $\frac{\|x\|}{\delta}$ if we normalize by the norm of $x$. Indeed that is how it will work. And this will happen for every single cone $\mathcal P^{\sigma,\tau}$ in our partition of $\R^E$. Hence if $\mu$ is log-Lipschitz, then the probability that we sample a $\delta$-deep point is $1 - o(1)$. At that point, the proof amounts to finding such $v$, and then adapting it to work with Gaussian noise.

\subsection{Contribution 3. Policy iteration in smoothed polynomial time}\label{sec:contrib_pi}

At this point, the \cite{LoffSkomra:2024} paper shows (Theorem 6.1 of \cite{LoffSkomra:2024}, arXiv v1 version) that if we take discount factor $\gamma > 1 - \frac{1}{\poly(n,\Delta)}$, where $\Delta$ is the condition number, then the optimal strategies for the discounted game are also optimal for the mean-payoff game. One can then use the known policy iteration algorithms for solving discounted games in time $\poly(n, \frac{1}{1-\gamma}) = \poly(n,\Delta)$.

All we need to do here is reprove their Theorem 6.1 so that it works with our new condition number. This is our \Cref{conditioned_discount}, whose proof is an adaptation of Loff and Skomra's proof (of their Theorem 6.1). We give a brief overview of the proof here, so as to compare to the analogous result we prove for value iteration.

The proof begins by fixing the optimal bias-induced mean-payoff policies
$(\sigma,\tau)$.  It then shows that if Max plays $\sigma$ in the discounted game, then
Min's unique best response is still $\tau$; the argument for Max is symmetric.

Let $\tau'$ be a discounted best response to $\sigma$. The proof rules out all possible ways in which $\tau'$ could differ from $\tau$. First, $\tau'$ cannot choose an edge that leaves the component of $\dgraph^{\sigma,\tau}$ containing the current vertex. Such an edge would move to a component whose mean-payoff value is separated from the original one by at least $\|\payoff\|/\Cond(\payoff)$. It can be shown that discounted values differ from mean-payoff values by at most $O(n(1-\gamma)\|\payoff\|)$, so this separation would contradict the assumed bound on $1-\gamma$.

After this, every deviation of $\tau'$ remains inside a single component of $\dgraph^{\sigma,\tau}$. We next show that $\tau'$ cannot create a new cycle inside that component. Summing the bias optimality inequalities around such a cycle gives equality on the edges where $\tau'$ agrees with $\tau$, while any edge on which they differ produces a strict loss controlled by the same lower bound in $\Cond(\payoff)$. Consequently the mean weight of the new cycle is separated from the original mean-payoff value. The discount-to-mean estimate again shows that this separation is too large to be compatible with $\tau'$ being a discounted best response.

A similar argument forbids deviations on the paths leading to the cycle. Hence no deviation is possible, and $\tau$ is the unique discounted best response to $\sigma$. By symmetry, $\sigma$ is the unique discounted best response to $\tau$, and therefore the mean-payoff optimal pair of strategies $(\sigma,\tau)$ is also the unique optimal pair in the discounted game.

\subsection{Contribution 4. Three value-iteration algorithms}

The two main families of algorithms for solving two-player stochastic games are
policy iteration and value iteration.  Algorithms from both families work well
in practice.  The Loff--Skomra paper \cite{LoffSkomra:2024} gives a smoothed
polynomial-time analysis for policy iteration, but it does not address value
iteration.  Our fourth contribution is to introduce three value-iteration
algorithms for deterministic mean-payoff games.

Value iteration algorithms are based on the \emph{Shapley operator} $\shapley \colon \R^n \to \R^n$, defined by
\[
\shapley(x)_i = \begin{cases}
\max_{(i,j) \in \edges} \{\payoff_{ij} + x_j\} &\text{if } i \in \Maxvertices, \\
\min_{(i,j) \in \edges} \{\payoff_{ij} + x_j\} &\text{if } i \in \Minvertices. \\
\end{cases}
\]
For games of constant value, we may solve the mean-payoff game by solving the additive eigenproblem
\begin{equation}\label{eq:ergodic-intro}
\xi \ones + \bias = \shapley(\bias).
\end{equation}
Here $\xi$ is the value of the game and $\bias$ is a bias vector. Thus, from the point of view of value iteration, the problem is to find a vector which satisfies this equation.

Our first two value-iteration algorithms are obtained by applying two classical fixed-point iterations to the Shapley operator: Halpern iteration and \KM iteration. These methods had already appeared in related work on MDPs and stochastic games \cite{GaubertStott:2020,AkianGaubertNaepelsTerver:2023,LeeRyu:2025,LeeBravoCominetti:2025,ZurekChen:2025}, but, as far as we know, this is the first time they are shown to solve deterministic two-player mean-payoff games. The main point is that the Shapley operator is non-expansive in the Hilbert seminorm $\hilb{\cdot}$, and is additively homogeneous. Therefore, \uline{in constant-value games}, the ergodic equation can be viewed as a fixed-point equation in the quotient space $\R^n/\R\ones$, where the Hilbert seminorm induces a true norm, making it a Banach space. Hence, standard convergence bounds for Halpern and \KM iteration imply that the iterates produce vectors $\tilde \bias$ for which for which
\[
\hilb{\tilde\bias-\shapley(\tilde\bias)}
\]
is small.

This gives a useful notion of approximate bias. We do not require $\tilde \bias$ to be close to an actual bias vector. Instead, we only require it to approximately satisfy the ergodic equation, as in, $\hilb{\tilde\bias-\shapley(\tilde\bias)} \le \varepsilon$. This weaker notion is the one naturally produced by \KM and Halpern iteration, and it is sufficient for recovering optimal policies once the approximation is accurate enough, as explained in the
next subsection.

The same two iterations can also be run directly on arbitrary, non-constant value games. In this case, their normalized iterates approximate the value vector and eventually ignore all edges between vertices of different values. After this happens, the game decomposes into constant-value subgames, and the same approximate-bias mechanism applies inside each component. Thus plain \KM iteration and plain Halpern iteration already give algorithms for arbitrary deterministic mean-payoff games.

Our third algorithm is designed to improve the resulting complexity bound. It combines Zwick--Paterson iteration with Halpern iteration. The Zwick--Paterson part is used to detect edges between vertices of different values, i.e. in different clusters, while the Halpern part is repeatedly restarted on the currently detected clusters. Once all cross-value edges have been found, the Halpern iteration runs directly on the constant-value components. This avoids the loss that appears in the analysis of plain Halpern iteration on general games, and leads to a better dependence on the condition number.

\subsection{Contribution 5. Value iteration in smoothed polynomial time}

Our fifth contribution is to prove that the value-iteration algorithms described above run in polynomial time on well-conditioned inputs. This is the value iteration analogue of what Contribution 3 said about policy iteration, but the proof is quite different. For policy iteration, the key step was to compare the discounted and mean-payoff games and show that, for $\gamma$ sufficiently close to $1$, they have the same optimal policies. For value iteration, there is no discounted game in the background. Instead, the proof uses the fact that approximate solutions of the ergodic equation can be interpreted as exact solutions for nearby games.

More precisely, \uline{starting with constant-value games}, if a vector $\tilde \bias$ satisfies
\[
\hilb{\tilde \bias - \shapley(\tilde \bias)} \le \varepsilon,
\]
then $\tilde \bias$ is an exact bias for a constant-value game whose weights are within $\varepsilon$ of the original weights. Moreover, this perturbation is obtained by shifting all outgoing edges of each vertex by the same amount, so $\tilde\bias$ induces the same policies for both the original weights and the modified weights.

To turn this approximate statement into an exact one, we use the geometry of the
polyhedral regions defined by the bias-induced policies.  In the constant-value,
single-optimal-cycle case, a well-conditioned weight vector $\payoff$ lies far from the
boundary of the region corresponding to its bias-induced policy pair
$(\sigma,\tau)$.  Therefore, all sufficiently small perturbations of the weights
remain in the same region and have the same bias-induced policies.  Since an
approximate bias $\tilde\bias$ for $\payoff$ is an exact bias for a nearby game $\tilde\payoff$, if $\tilde\bias$ is sufficiently accurate it must induce the same optimal policies for $\tilde\payoff$, namely $(\sigma,\tau)$. Since we may choose $\tilde\payoff$ so that $\tilde \bias$ induces the same bias for $\payoff$ and $\tilde\payoff$, then $\tilde\bias$ also reveals the optimal policies of the original game $\payoff$.  The condition number quantifies how accurate the approximation has to be, and the convergence estimates for \KM and Halpern iteration then give polynomial bounds in the condition number.

For general games, the value vector need not be constant.  The analysis first
shows that value iteration separates the edges connecting vertices of different
values. This can be done by any of the three value-iteration procedures we
study: Zwick--Paterson iteration, \KM iteration, or Halpern iteration.  After
these edges are ignored, the game decomposes into constant-value subgames, and
the preceding approximate-bias analysis applies independently inside each
component.

This gives three algorithms with different performance guarantees.  Plain \KM
iteration solves well-conditioned games after a number of iterations polynomial
in $n$ and our condition number $\Cond(\payoff)$, with a quadratic dependence on $\Cond$.  Plain Halpern iteration also works on arbitrary games, but our analysis
again gives a quadratic dependence, because after the value-separation phase it
behaves like a shifted Halpern iteration on the reduced game.  The third
algorithm, which runs Zwick--Paterson iteration in parallel with Halpern
iteration and resets the latter whenever a new cross-value edge is detected,
avoids this loss and gives the better bound of
\[
O\bigl(n+n^2\Cond(\payoff)\bigr)
\]
iterations of the main loop.\\

\noindent \textbf{Concurrent work.} Concurrent to this work, Asadi, Chatterjee, and Luo \cite{AsadiChatterjeeLuo:2027} have independently established smoothed polynomial-time complexity for deterministic discounted and mean-payoff games on general graphs. Their work focuses on a restarting variant of policy iteration and provides expected-runtime guarantees which are stronger than our high-probability result described in \cref{sec:contrib_pi}. In contrast, our paper extends the smoothed analysis to value iteration algorithms (\KM, Halpern, and Zwick–Paterson/Halpern), which are not covered in \cite{AsadiChatterjeeLuo:2027}. Both works remove the ergodicity assumption of \cite{LoffSkomra:2024} but employ distinct technical tools: our analysis is based on a condition number and log-Lipschitz distributions, whereas \cite{AsadiChatterjeeLuo:2027} relies on Blackwell optimality and suboptimality gap separations.

\subsection{Organization of the paper}

The preceding overview presented the contributions of the paper in conceptual order: first the extension of the Loff--Skomra condition-number framework beyond ergodic graphs, then the smoothed analysis of this condition number, and finally the resulting policy-iteration and value-iteration algorithms.  The body of the paper is organized somewhat differently, following the logical dependencies of the proof.

In \Cref{sec:preliminaries}, we collect the background material used throughout the paper.  We recall the basic definitions and facts about mean-payoff games, the Shapley operator and the Hilbert seminorm, the Gurvich--Karzanov--Khachiyan optimality equations, bias-induced policies, and the decomposition of an arbitrary game into constant-value components.

The next two sections develop the theory in the setting that closely approximates the ergodic case.  In \Cref{sec:constant_value}, we study constant-value games.  We introduce approximate biases, show that an approximate bias can be viewed as an exact bias for a nearby constant-value game, and prove that \KM and Halpern iteration produce such approximate biases. In \Cref{sec:one_cycle_condition}, we study the polyhedral cone decomposition of the weight space by optimal bias-induced policies, in the case of a single optimal cycle. This gives the first link between a condition number and the number of value-iteration steps needed to recover optimal policies.

In \Cref{sec:general_condition}, we pass from this special constant-value setting to arbitrary game graphs.  We define the polyhedral cone decomposition for the general, not necessarily ergodic case, prove that these cones are disjoint and cover almost all weight vectors, and use them to define the global condition number. This section is the formal generalization of the Loff--Skomra condition-number framework from ergodic graphs to general graphs.

In \Cref{sec:value_decomposition}, we address the main new difficulty for value iteration on non-constant-value games: such games must first be decomposed into constant-value components.  We introduce a value condition number measuring the gap between distinct state values, and show that Zwick--Paterson, \KM, and Halpern iteration can detect, or eventually ignore, all edges connecting vertices of different values, in time polynomial in the inverse of the smallest gap. This reduces the analysis of general games to the constant-value analysis developed earlier.

In \Cref{sec:complexity_condition}, we pursue the desired algorithmic consequences of the preceding sections.  First, we obtain complexity bounds for plain \KM iteration, plain Halpern iteration, and the combined Zwick--Paterson/Halpern algorithm, which are polynomial in our new condition number. Then we do the same for policy iteration, showing that the policy iteration algorithms of \cite{LoffSkomra:2024} also work for non-ergodic graphs. This is done by proving that on well-conditioned instances, sufficiently-discounted games have the same unique optimal policies as the corresponding mean-payoff game.

Finally, \Cref{sec:probabilistic_condition} proves that sufficiently perturbed inputs are well conditioned with high probability. The proof first treats log-Lipschitz perturbations, where the cone geometry can be combined with a simple translation argument, and then transfers the estimate to Gaussian perturbations using Gaussian shift inequalities.  Combining this probabilistic bound with the deterministic bounds from \Cref{sec:complexity_condition} yields the smoothed polynomial-time guarantees for both the value-iteration algorithms and the policy-iteration algorithms.

\newpage
\section{Preliminaries}\label{sec:preliminaries}

\subsection{Definition of mean-payoff games}\label{sec:preliminaries-MPGs}

A \emph{(deterministic, two-player) mean-payoff game (MPG)} is a game played on directed weighted graph $\dgraph = (V, \edges, \payoff)$ in which the $|V| = n$ vertices are split into two disjoint sets, $V = [n] = \Maxvertices \dunion \Minvertices$. We say that $\Maxvertices$ is the set of vertices controlled by \emph{player Max} and $\Minvertices$ is the set of vertices controlled by \emph{player Min}. One of these sets can be empty: in this case, we get a one-player game, called a ``(Deterministic) Markov Decision Process'' (DMDP). Furthermore, we denote by $m \coloneqq \card{\edges}$ the number of edges of the graph. The game is played as follows. At the beginning of the game, a token is placed on one of the vertices $i_0 \in [n]$. Subsequently, the player controlling that vertex chooses an outgoing edge $(i_0, i_1) \in \edges$ and the token is moved to $i_1$. Then, the player controlling $i_1$ chooses the next move of the token, and this procedure is repeated ad infinitum. The final payoff received by player Max is defined as
\[
\liminf_{N \to \infty} \frac{1}{N}(\payoff_{i_0i_1} + \payoff_{i_1i_2} + \dots + \payoff_{i_{N-1}i_N}) \, .
\]
The objective of player Max is to maximize the payoff, while the objective of player Min is to minimize the payoff. This means that the game is of zero sum. Moreover, the players try to optimize the average payoff of the game in the long run, which justifies the name ``mean-payoff game''. The game is well-defined provided that each player can always make the next move, i.e., that the graph satisfies the following assumption.

\begin{assumption}
Every vertex of $\dgraph$ has at least one outgoing edge.
\end{assumption}

In the remaining part of the paper, we always assume that $\dgraph$ satisfies the assumption above. We further assume that $\dgraph$ has no multiple edges, but it can have loops. The weights $\payoff \in \R^m$ of the graph are often called \emph{payoffs} because they correspond to immediate payoffs received by player Max in the game and the final payoff of player Max is often called the \emph{reward}. However, since in this work we study what happens when $\payoff$ changes or is picked at random, for the sake of clarity we prefer to use a more distinct terminology: we refer to $\payoff$ as the \emph{weights} of the underlying graph, and we reserve the word ``payoff'' for the final outcome of the game.

There are many ways to play a mean-payoff game, but the most relevant to us is to consider the situation in which players use policies. A \emph{policy of player Max} is a function $\sigma \colon  \Maxvertices \to [n]$ that associates to a given vertex controlled by Max a possible transition, i.e., satisfies $(i,\sigma(i)) \in \edges$ for all $i \in \Maxvertices$. Analogously, a \emph{policy of player Min} is a function $\tau \colon \Minvertices \to [n]$ such that $(i,\tau(i)) \in \edges$ for all $i \in \Minvertices$. Policies are \emph{memoryless} strategies because they do not depend on the history of the game and are \emph{stationary} because they only depend on the current state of the game. If both players play according to policies $(\sigma,\tau)$, then the token reaches some cycle $(i_k, i_{k+1}, \dots, i_{k+l-1})$ of the graph and stays there forever. In particular, the resulting payoff of Max is given by
\[
g_{i_{0}}(\sigma,\tau) = \frac{1}{l}(\payoff_{i_ki_{k+1}} + \dots + \payoff_{i_{k+l-1}i_k}) \, .
\]
In other words, the payoff is the mean weight of the final cycle reached by the token. It turns out that mean-payoff games have a value and a pair of optimal policies.

\begin{theorem}[\cite{liggett_lippman,ehrenfeucht_mycielski}]\label{th:value_mpg}
For any mean-payoff game there exists a \emph{unique} vector $\gameval \in \R^n$ and a (not necessarily unique) pair of policies $(\sigma^*,\tau^*)$ such that for every pair of policies $(\sigma,\tau)$ and every initial state $i \in [n]$ we have
\begin{equation}\label{eq:optimal_policies}
g_i(\sigma,\tau^*) \le \gameval_i \le g_i(\sigma^*,\tau) \, .
\end{equation}
\end{theorem}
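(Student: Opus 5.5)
We will prove the statement by passing through discounted games and then letting the discount factor tend to $1$ along a suitable sequence. This is the Blackwell-style argument, and it uses only Shapley's theorem for discounted games together with the finiteness of the set of policies.

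\textbf{Step 1: discounted games.} For $0 \le \disc < 1$, consider the discounted game on the same graph. Its Shapley operator $\shapley_\disc(x)_i = \max/\min_{(i,j)\in\edges}\{(1-\disc)\payoff_{ij} + \disc x_j\}$ is a $\disc$-contraction in $\supnorm{\cdot}$. It therefore has a unique fixed point $v^\disc$, and any pair of policies $(\sigma_\disc,\tau_\disc)$ selecting argmax/argmin edges in $\shapley_\disc(v^\disc)$ is optimal in the discounted game. Concretely, for every pair of policies $(\sigma,\tau)$ we have
\[
g^\disc(\sigma,\tau_\disc) \le v^\disc \le g^\disc(\sigma_\disc,\tau),
\]
where $g^\disc(\sigma,\tau)$ is the discounted payoff vector of the pair. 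These inequalities follow by a one-player argument: once $\tau_\disc$ is fixed, $v^\disc$ is a supersolution of Max's one-player discounted Bellman equation, and iterating the monotone contraction gives the bound.

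\textbf{Step 2: limit along a subsequence.} There are finitely many policy pairs. Hence some fixed pair $(\sigma^*,\tau^*)$ equals $(\sigma_{\disc_k},\tau_{\disc_k})$ for a sequence $\disc_k \to 1$. For fixed policies $(\sigma,\tau)$, the trajectory from $i$ is eventually periodic with cycle mean $g_i(\sigma,\tau)$. The discounted payoff differs from the cycle mean by at most $O(n(1-\disc)\supnorm{\payoff})$, since the transient prefix and the partial cycle contribute only a bounded undiscounted sum multiplied by $(1-\disc)$. Thus $g^\disc(\sigma,\tau) \to g(\sigma,\tau)$ as $\disc\to 1$. Passing to the limit $k\to\infty$ in the Step 1 inequalities, applied to each of the finitely many $(\sigma,\tau)$, gives
\[
g_i(\sigma,\tau^*) \le \liminf_k v^{\disc_k}_i \le \limsup_k v^{\disc_k}_i \le g_i(\sigma^*,\tau).
\]
Taking $\sigma=\sigma^*$ and $\tau=\tau^*$ squeezes $v^{\disc_k}_i \to g_i(\sigma^*,\tau^*)$. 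Setting $\gameval \coloneqq g(\sigma^*,\tau^*)$, we obtain \eqref{eq:optimal_policies}.

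\textbf{Step 3: uniqueness of $\gameval$.} Suppose $\gameval'$ and $(\sigma',\tau')$ also satisfy \eqref{eq:optimal_policies}. Then
\[
\gameval_i \le g_i(\sigma^*,\tau') \le \gameval'_i \le g_i(\sigma',\tau^*) \le \gameval_i,
\]
so $\gameval = \gameval'$. The pair of policies, however, need not be unique, as stated.

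\textbf{Main obstacle.} The delicate step is the first inequality of Step 1, that discounted optimality holds against all policies, including arbitrary history-dependent ones if one insists on that. For the statement as written we only need comparison against policies, and there it reduces to monotonicity of $\shapley_\disc$ restricted to a fixed policy. The limit estimate in Step 2 is routine once one observes that the play is ultimately periodic.
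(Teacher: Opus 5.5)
Your argument is correct, though there is no paper proof to compare it with: the paper quotes this theorem from \cite{liggett_lippman,ehrenfeucht_mycielski} without proof. Your route is the classical vanishing-discount (Blackwell-type) argument, in the spirit of \cite{liggett_lippman}.

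\begin{itemize}
\item Step 1 is a self-contained proof of the half of \cref{th:shapley} you need. The operator of the pair $(\sigma,\tau_\disc)$ is a monotone $\disc$-contraction, $v^\disc$ is a supersolution for it, and iterating gives the bound.
\item The pigeonhole step over the finitely many policy pairs produces one pair that is discounted-optimal along some sequence $\disc_k\to 1$.
\item Step 3 is the standard saddle-point uniqueness argument.
\end{itemize}

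One point is worth stating explicitly in your write-up: it avoids circularity. You only need the discount-to-mean estimate for \emph{zero-player} games, where the mean payoff is just the mean weight of the final cycle. The paper's \cref{discount_to_mean}, by contrast, is a two-player statement that presupposes the mean-payoff value $\gameval$, and that is how it is used in the correctness part of \cref{increasing_discount_condition}.

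As for the alternatives:
\begin{itemize}
\item Ehrenfeucht and Mycielski argue combinatorially through a finite cycle-forming game. This avoids discounting altogether and also gives optimality against history-dependent strategies.
\item The Gurvich--Karzanov--Khachiyan route (\cref{optimality_eq}) delivers optimal policies together with the bias structure that the paper actually needs for its cones $\polyh^{\sigma,\tau}$.
\end{itemize}

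What your approach buys, beyond brevity, comes from combining Step 2 with Step 3. First, $v^\disc\to\gameval$ along every sequence $\disc\to 1$, since every subsequence has a further subsequence converging to the unique $\gameval$. Second, any pair that is discounted-optimal for discount factors accumulating at $1$ is mean-payoff optimal. After one more pigeonhole step, this second fact is exactly what justifies the halting and correctness of $\IncDiscPI$.
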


\begin{definition}[Value $\gameval$ and optimal policies $(\sigma^*,\tau^*)$]
  The unique vector $\gameval$ given above is called the \emph{value} of the game. The $i$th coordinate of $\gameval$ is called the \emph{value of state $i$}.
  Any policies $(\sigma^*,\tau^*)$ satisfying \cref{eq:optimal_policies} for all $i,\sigma,\tau$ are called \emph{optimal}. To any cycle which is reached by a particular pair of optimal policies we will call an \emph{equilibrium cycle}.
\end{definition}

We should emphasize that a mean-payoff game can have more than one pair of optimal policies, with different equilibrium cycles.

An intuitive interpretation of optimality is as follows: if player Max uses an optimal policy $\sigma^*$, then they are guaranteed to receive at least $\gameval_i$, no matter what Min does. Analogously, if Min uses an optimal policy $\tau^*$, then they are guaranteed that the final payoff is at most $\gameval_i$, no matter what Max does. The theorem above can actually be stated in a stronger way: optimal policies are optimal against \emph{any} possible strategy of the opponent. In other words, by using $\sigma^*$, Max is guaranteed a payoff at least $\gameval_i$ even if Min uses an arbitrarily complicated strategy that depends on the history of the game. See \cite{payoffchapter} for a general introduction to mean-payoff games.

Some of our results use policies that are ``almost'' optimal instead of being optimal. This is formalized in the following definition.

\begin{definition}[$\varepsilon$-optimal policies]
Let $\gameval \in \R^n$ be the value of a mean payoff game. For any $\varepsilon \ge 0$ we say that a policy $\sigma$ of player Max is $\varepsilon$-optimal if it satisfies
\[
g_i(\sigma,\tau) \ge \gameval_i - \varepsilon
\]
for all $i \in [n]$ and every policy $\tau$ of player Min. Analogously, a policy $\tau$ of player Min is $\varepsilon$-optimal if it satisfies
\[
g_i(\sigma,\tau) \le \gameval_i + \varepsilon
\]
for all $i \in [n]$ and every policy $\sigma$ of player Max. In particular, $0$-optimal policies are optimal.
\end{definition}

To analyze our algorithms, we must understand what happens if the weights of the mean payoff game are slightly perturbed. The fundamental observation is that the value of the game also changes slightly, because the value function is $1$-Lipschitz. This is well known and we give the proof for the sake of completeness.

\begin{lemma}\label{value_lipschitz}
Fix a game graph $\dgraph$ and let $\payoff, \tilde{\payoff} \in \R^m$ be two weight vectors. Moreover, let $\gameval, \tilde{\gameval} \in \R^n$ be the values of the mean payoff games with weights $\payoff$ and $\tilde{\payoff}$ respectively. Then, we have $\supnorm{\gameval - \tilde{\gameval}} \le  \supnorm{\payoff - \tilde{\payoff}}$.
\end{lemma}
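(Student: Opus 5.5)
The plan is to use the policy characterization of the value from \Cref{th:value_mpg}, together with the elementary fact that for a fixed pair of policies the payoff $g_i(\sigma,\tau)$ is $1$-Lipschitz in the weights. Write $\delta \coloneqq \supnorm{\payoff - \tilde{\payoff}}$, and use $g_i$ and $\tilde g_i$ for the payoffs under weights $\payoff$ and $\tilde{\payoff}$ respectively.

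First step: fix any pair of policies $(\sigma,\tau)$ and a starting state $i$. Under both weight vectors the token follows the same path, because the path depends only on the graph and the policies. So it ends in the same cycle $(i_k,\dots,i_{k+l-1})$. Then $g_i(\sigma,\tau)$ and $\tilde g_i(\sigma,\tau)$ are both averages over the same $l$ edges, and
\[
\abs{g_i(\sigma,\tau) - \tilde g_i(\sigma,\tau)} \le \frac{1}{l}\sum_{s} \abs{\payoff_{i_si_{s+1}} - \tilde{\payoff}_{i_si_{s+1}}} \le \delta .
\]

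Second step: let $(\sigma^*,\tau^*)$ be optimal for $\payoff$ and $(\tilde\sigma^*,\tilde\tau^*)$ optimal for $\tilde{\payoff}$, as given by \Cref{th:value_mpg}. We compare the two games by chaining the saddle-point inequalities \cref{eq:optimal_policies} for each game with the first step:
\[
\gameval_i \le g_i(\sigma^*,\tilde\tau^*) \le \tilde g_i(\sigma^*,\tilde\tau^*) + \delta \le \tilde{\gameval}_i + \delta .
\]
Here the first inequality is the right half of \cref{eq:optimal_policies} for $\payoff$ with $\tau=\tilde\tau^*$. The last inequality is the left half for $\tilde{\payoff}$ with $\sigma=\sigma^*$. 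Swapping the roles of the two games, i.e. using $\tilde\sigma^*$ against $\tau^*$, gives $\tilde{\gameval}_i \le \gameval_i + \delta$. Together these give $\abs{\gameval_i - \tilde{\gameval}_i}\le\delta$ for every $i$, which is the claim.

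There is no real obstacle. The only point needing care is that the final cycle, and hence the set of averaged edges, is the same for both weight vectors. This holds because play under fixed memoryless policies is determined by the graph alone.
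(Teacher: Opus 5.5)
Your proof is correct, and it takes a more direct route than the paper. The paper first uses \cref{th:value_mpg} to write $\gameval_i = \min_\tau \max_\sigma g_i(\sigma,\tau)$. It then moves the weights along the segment $\payoff + \alpha(\tilde{\payoff} - \payoff)$, $\alpha \in [0,1]$, and argues that the resulting univariate min-max function $\phi$ is continuous and piecewise affine. Every affine piece has slope at most $\supnorm{\payoff - \tilde{\payoff}}$, so $\phi$ is $\supnorm{\payoff - \tilde{\payoff}}$-Lipschitz.

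You avoid the interpolation entirely. You cross-play the optimal policies of the two games, $\sigma^*$ against $\tilde\tau^*$ and $\tilde\sigma^*$ against $\tau^*$. You then sandwich each comparison using the saddle inequalities of one game on one side, the saddle inequalities of the other game on the other side, and your fixed-policy bound in the middle.

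Both arguments rest on the same two ingredients. The first is the existence of optimal policies. The second is that for fixed $(\sigma,\tau)$ the payoff $g_i$ is an average over one fixed cycle, so it moves by at most $\delta$.

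Your version is shorter and fully self-contained, with no continuity or piecewise-affinity argument. The paper's version makes explicit that the value is a min-max of affine functions of the weights. That viewpoint is in the spirit of the later cone decomposition, where $\gameval$ is linear on each $\polyh^{\sigma,\tau}$.
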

\begin{proof}
By the definition of maximum and minimum, we have the weak duality $\min_\tau \max_\sigma  g_i(\sigma,\tau) \ge \max_\sigma \min_\tau  g_i(\sigma,\tau)$, where the max and min are taken over all policies of the players Max and Min respectively. The non-trivial direction follows from the existence of optimal policies (\cref{th:value_mpg}), meaning, for every $i \in [n]$ we have $\min_\tau \max_\sigma  g_i(\sigma,\tau) \le \gameval_i \le \max_\sigma \min_\tau  g_i(\sigma,\tau)$. Hence  $\gameval_i = \max_{\sigma} \min_{\tau} g_i(\sigma,\tau) = \min_{\tau} \max_{\sigma} g_i(\sigma,\tau)$. Furthermore, note if we vary $\payoff$, then each function $g_i(\sigma,\tau)$ is linear (because it is equal to the mean weight of some cycle). Consider the univariate function $\phi \colon \R \to \R$ defined as $\phi(\alpha) \coloneqq \min_{\tau} \max_{\sigma} g_i(\sigma,\tau)\bigl(\payoff + \alpha(\tilde{\payoff} - \payoff) \bigr) $. Then, $\phi$ is continuous and piecewise-affine as a min-max of affine functions. We also have $\phi(0) = \gameval_i$ and $\phi(1) = \tilde{\gameval}_i$. Furthermore, note that every affine piece of $\phi$ is $\supnorm{\payoff - \tilde{\payoff}}$-Lipschitz. Hence, the entire function $\phi$ is $\supnorm{\payoff - \tilde{\payoff}}$-Lipschitz and we get $\abs{\gameval_i - \tilde{\gameval}_i} = \abs{\phi(0) - \phi(1)} \le \supnorm{\payoff - \tilde{\payoff}}$.
\end{proof}

\begin{corollary}\label{perturbed_value}
Fix a game graph $\dgraph$ and let $\payoff, \tilde{\payoff} \in \R^m$ be two weight vectors. Furthermore, suppose that $\sigma$ is a policy of Max that is optimal in the game with weights $\payoff$. Then, $\sigma$ is $2\supnorm{\payoff - \tilde{\payoff}}$-optimal in the game with weights $\tilde{\payoff}$. The same is true for policies of player Min.
\end{corollary}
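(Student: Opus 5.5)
The plan is a three-step chain of inequalities: control how the payoff of the fixed pair $(\sigma,\tau)$ moves when the weights change, use optimality of $\sigma$ in the game with weights $\payoff$, and then use \cref{value_lipschitz} to return to the value of the game with weights $\tilde\payoff$. Write $g_i(\sigma,\tau)(\payoff)$ for the payoff under weights $\payoff$. Throughout, set $\delta \coloneqq \supnorm{\payoff-\tilde\payoff}$, and let $\gameval,\tilde\gameval$ denote the values of the two games.

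First, fix an arbitrary policy $\tau$ of Min and a state $i$. The quantity $g_i(\sigma,\tau)$ is the mean weight of the cycle reached from $i$ under $(\sigma,\tau)$. This cycle depends only on the graph and the policies, not on the weights. Each weight moves by at most $\delta$, so the average of the cycle's weights moves by at most $\delta$, which gives
\[
g_i(\sigma,\tau)(\tilde\payoff) \ge g_i(\sigma,\tau)(\payoff) - \delta .
\]
Second, $\sigma$ is optimal for the weights $\payoff$, so \cref{eq:optimal_policies} yields $g_i(\sigma,\tau)(\payoff) \ge \gameval_i$. Third, \cref{value_lipschitz} gives $\gameval_i \ge \tilde\gameval_i - \delta$.

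Chaining the three inequalities gives $g_i(\sigma,\tau)(\tilde\payoff) \ge \tilde\gameval_i - 2\delta$ for every $\tau$ and every $i$. This is exactly the statement that $\sigma$ is $2\delta$-optimal in the game with weights $\tilde\payoff$.

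The argument for a policy of player Min is symmetric. All three inequalities reverse direction, and one uses the upper bound $g_i(\sigma,\tau^*) \le \gameval_i$ from \cref{eq:optimal_policies} in place of the lower bound.

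No step is a real obstacle. The only point needing care is the first step: the reached cycle is determined by the policies alone, which is what makes $g_i(\sigma,\tau)$ a $1$-Lipschitz function of the weights in the sup-norm.
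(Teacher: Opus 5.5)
Your proof is correct and follows the same chain as the paper. The paper bounds $\supnorm{\gameval - \tilde{\gameval}^{\sigma}}$ by applying \cref{value_lipschitz} to the one-player game with $\sigma$ fixed (whose value under $\payoff$ equals $\gameval$ because $\sigma$ is optimal), then combines this with \cref{value_lipschitz} for the two-player game via the triangle inequality. Your only difference is that you do the first step separately for each $\tau$, using the $1$-Lipschitz dependence of a fixed cycle mean on the weights; this is exactly the fact behind that application of the lemma, so the two arguments coincide up to taking the minimum over $\tau$.
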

\begin{proof}
Let $\gameval, \tilde{\gameval} \in \R^n$ be the values of the games with weights $\payoff, \tilde{\payoff}$ respectively. By \cref{value_lipschitz} we get $\supnorm{\gameval - \tilde{\gameval}} \le  \supnorm{\payoff - \tilde{\payoff}}$. Let $\tilde{\gameval}^{\sigma}$ be the value of the one-player game with weights $\tilde{\payoff}$ obtained by fixing $\sigma$. Then, by applying \cref{value_lipschitz} to the graph obtained by fixing $\sigma$ and using the optimality of $\sigma$ we get $\supnorm{\gameval - \tilde{\gameval}^{\sigma}} \le  \supnorm{\payoff - \tilde{\payoff}}$. Hence $\supnorm{\tilde{\gameval}^{\sigma} - \tilde{\gameval}} \le 2\supnorm{\payoff - \tilde{\payoff}}$ and therefore $\tilde{\gameval}^{\sigma}_i \ge \tilde{\gameval}_i - 2\supnorm{\payoff - \tilde{\payoff}}$ for all $i \in [n]$. In particular, $\sigma$ is $2\supnorm{\payoff - \tilde{\payoff}}$-optimal. The proof for player Min is analogous.
\end{proof}

\subsection{Preliminaries on Shapley operator and Hilbert seminorm}\label{sec:preliminaries-shapley}

In this section, we give some preliminary information about the Shapley operators of a mean payoff game and about the Hilbert seminorm, which we use heavily.

\begin{definition}[Shapley operator $\shapley$]
The \emph{Shapley operator} of a mean payoff game is the function $\shapley \colon \R^n \to \R^n$ defined by
\[
\shapley(x)_i = \begin{cases}
\max_{(i,j) \in \edges} \{\payoff_{ij} + x_j\} &\text{if } i \in \Maxvertices, \\
\min_{(i,j) \in \edges} \{\payoff_{ij} + x_j\} &\text{if } i \in \Minvertices. \\
\end{cases}
\]
\end{definition}

We use the following notation: if $x, y \in \R^n$ are two vectors then we denote $x \le y$ if the inequality $x_i \le y_i$ holds for all $i \in [n]$. Moreover, we denote $\ones \coloneqq (1,1,\dots,1)$. We will often use the following two properties of the Shapley operator.

\begin{lemma}
The Shapley operator is
\begin{itemize}
\item monotone, i.e., if $x,y \in \R^n$ are such that $x \le y$, then $\shapley(x) \le \shapley(y)$;
\item additively homogeneous, i.e., $\shapley(\alpha \ones + x) = \alpha \ones + \shapley(x)$ for all $x \in \R^n$, $\alpha \in \R$.
\end{itemize}
\end{lemma}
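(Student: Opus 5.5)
Both properties follow coordinatewise from the definition of $\shapley$, since each coordinate $\shapley(x)_i$ is a maximum or a minimum, over the finite nonempty set of outgoing edges $(i,j) \in \edges$, of the affine functions $x \mapsto \payoff_{ij} + x_j$. The Assumption guarantees the set is nonempty, so $\shapley$ is well defined. I would fix an arbitrary vertex $i \in [n]$ and treat the cases $i \in \Maxvertices$ and $i \in \Minvertices$ separately. These cases are symmetric, so it suffices to write out the max case and note that the same argument works with max replaced by min.

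For monotonicity, suppose $x \le y$. Then for every edge $(i,j) \in \edges$ we have $\payoff_{ij} + x_j \le \payoff_{ij} + y_j$. If $i \in \Maxvertices$, pick $j^*$ attaining the maximum in $\shapley(x)_i$. This gives
\[
\shapley(x)_i = \payoff_{ij^*} + x_{j^*} \le \payoff_{ij^*} + y_{j^*} \le \max_{(i,j) \in \edges} \{\payoff_{ij} + y_j\} = \shapley(y)_i.
\]
If $i \in \Minvertices$, instead pick $j^*$ attaining the minimum in $\shapley(y)_i$. This gives
\[
\shapley(x)_i \le \payoff_{ij^*} + x_{j^*} \le \payoff_{ij^*} + y_{j^*} = \shapley(y)_i.
\]
Since $i$ was arbitrary, $\shapley(x) \le \shapley(y)$.

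For additive homogeneity, observe that $(\alpha\ones + x)_j = \alpha + x_j$. Hence
\[
\shapley(\alpha\ones + x)_i = \max_{(i,j)\in\edges}\{\payoff_{ij} + x_j + \alpha\},
\]
and similarly with a min for $i \in \Minvertices$. A constant $\alpha$ added to every term of a finite max or min can be pulled out, so this equals $\alpha + \shapley(x)_i$. Ranging over all $i$ gives $\shapley(\alpha\ones + x) = \alpha\ones + \shapley(x)$.

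The proof has no real obstacle. The only point needing care is in monotonicity: for Max vertices one must choose the maximizer for $x$, and for Min vertices the minimizer for $y$. Both choices require the max or min to be attained, which holds because each vertex has finitely many outgoing edges and at least one of them.
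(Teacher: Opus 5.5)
Your proof is correct and follows the same approach as the paper, which simply states that both properties follow from the definition of $\shapley$ and elementary properties of $\max$ and $\min$; you have written out that coordinatewise argument in full. Your choice of the maximizer for $x$ at Max vertices and the minimizer for $y$ at Min vertices is exactly what is needed, and nonemptiness of the out-neighbourhoods ensures these optima are attained.
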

\begin{proof}
Both properties follow easily from the definition and the properties of the max and min functions.
\end{proof}

\begin{definition}[Hilbert seminorm $\hilb{\cdot}$]
  The \emph{Hilbert seminorm}
  of a vector $x \in \R^n$ is defined as
\[
\hilb{x} \coloneqq \max_i{x_i} - \min_i{x_i} = \max_{i,j}\abs{x_i - x_j} = 2 \min_{\alpha \in \R}\supnorm{x + \alpha \ones}\, .
\]
\end{definition}

\begin{lemma}\label{hilbert}
The Hilbert seminorm has the following properties:
\begin{itemize}
\item for all $x,y \in \R^n$ we have $\hilb{x + y} \le \hilb{x} + \hilb{y}$,
\item for all $x \in \R^n$, $\alpha \in \R$ we have $\hilb{\alpha x} = \abs{\alpha} \hilb{x}$,
\item we have $\hilb{x} = 0$ if and only if $x = \alpha \ones$ for some $\alpha \in \R$.
\end{itemize}
In particular, the Hilbert seminorm is indeed a seminorm.
\end{lemma}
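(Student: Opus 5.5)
The plan is to work with the representation $\hilb{x} = \max_i x_i - \min_i x_i$ and check each of the three items directly from the definition of maximum and minimum. Along the way I will also confirm that the three expressions in the definition of $\hilb{\cdot}$ agree, since the last bullet is most transparent from the second expression, $\max_{i,j}\abs{x_i - x_j}$.

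For the triangle inequality, I use that $\max_i (x_i + y_i) \le \max_i x_i + \max_i y_i$ and $\min_i (x_i + y_i) \ge \min_i x_i + \min_i y_i$. Subtracting the second inequality from the first gives $\hilb{x+y} \le \hilb{x} + \hilb{y}$.

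For homogeneity, I split into the cases $\alpha \ge 0$ and $\alpha < 0$.
\begin{itemize}
\item If $\alpha \ge 0$, then $\max_i \alpha x_i = \alpha \max_i x_i$ and $\min_i \alpha x_i = \alpha \min_i x_i$.
\item If $\alpha < 0$, multiplication by $\alpha$ swaps max and min: $\max_i \alpha x_i = \alpha \min_i x_i$ and $\min_i \alpha x_i = \alpha \max_i x_i$. Hence $\hilb{\alpha x} = \alpha(\min_i x_i - \max_i x_i) = \abs{\alpha}\hilb{x}$.
\end{itemize}

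For the kernel, suppose $\hilb{x} = 0$. Then $\max_i x_i = \min_i x_i$, so every coordinate equals the common value $\alpha$, which means $x = \alpha\ones$. Conversely, $\hilb{\alpha\ones} = \alpha - \alpha = 0$. Nonnegativity is clear because $\max_i x_i \ge \min_i x_i$. Together with the triangle inequality and homogeneity, this shows $\hilb{\cdot}$ is a seminorm.

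The only step needing any care is the equality of the three expressions in the definition.
\begin{itemize}
\item For the second expression: $\max_{i,j}\abs{x_i - x_j}$ is attained by taking $i$ to be an argmax and $j$ to be an argmin of $x$, which gives exactly $\max_i x_i - \min_i x_i$.
\item For the third expression: $\supnorm{x + \alpha\ones} = \max(\max_i x_i + \alpha,\; -\min_i x_i - \alpha)$. This is a maximum of two affine functions of $\alpha$ with slopes $+1$ and $-1$. It is minimized where the two are equal, namely at $\alpha = -(\max_i x_i + \min_i x_i)/2$, and the minimum value there is $(\max_i x_i - \min_i x_i)/2$. Multiplying by $2$ recovers $\hilb{x}$.
\end{itemize}
There is no real obstacle in this argument; it is a routine verification.
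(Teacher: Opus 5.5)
Your proof is correct and is essentially the paper's routine verification. The only difference is the representation you work with: the paper uses the pairwise form $\max_{i,j}\abs{x_i - x_j}$, so the triangle inequality follows from $\abs{a+b}\le\abs{a}+\abs{b}$ and homogeneity needs no case split on the sign of $\alpha$, whereas you use $\max_i x_i - \min_i x_i$; your extra check that the three expressions in the definition coincide, which the paper takes for granted, is also fine.
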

\begin{proof}
The first property follows from the triangle inequality,
\[
\hilb{x + y} = \max_{i,j}\abs{(x_i+y_i) - (x_j+y_j)} \le \max_{i,j}(\abs{x_i-x_j} + \abs{y_i-y_j}) \le \hilb{x} + \hilb{y} \, .
\]
The second property follows by
\[
\hilb{\alpha x} = \max_{i,j}\abs{\alpha x_i - \alpha x_j} = \abs{\alpha}\max_{i,j}\abs{x_i - x_j} = \abs{\alpha}\hilb{x} \, .
\]
The third property follows immediately from the definition.
\end{proof}

The fact that the Shapley operator is monotone and additively homogeneous implies that it is non-expansive in the Hilbert seminorm. We give a proof of this fact for the sake of completeness and we refer the reader to \cite{GunawardenaKeane:1995,gaubert_gunawardena} for more information.

\begin{lemma}[\cite{GunawardenaKeane:1995}]\label{nonexp_hilbert}
A Shapley operator of a mean payoff game is non-expansive in the Hilbert seminorm, i.e., for all $x, y \in \R^n$ we have $\hilb{\shapley(x) - \shapley(y)} \le \hilb{x - y}$.
\end{lemma}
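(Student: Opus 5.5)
The idea is to sandwich $x$ between two translates of $y$ and then use monotonicity and additive homogeneity of $\shapley$.

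Set $a \coloneqq \min_i (x_i - y_i)$ and $b \coloneqq \max_i (x_i - y_i)$, so that $\hilb{x-y} = b - a$ by definition of the Hilbert seminorm. By construction,
\[
y + a\ones \le x \le y + b\ones .
\]

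Applying the monotonicity of $\shapley$ to both inequalities, and then additive homogeneity to pull out the constants, gives
\[
\shapley(y) + a\ones = \shapley(y + a\ones) \le \shapley(x) \le \shapley(y + b\ones) = \shapley(y) + b\ones .
\]
Equivalently, every coordinate of $\shapley(x) - \shapley(y)$ lies in the interval $[a,b]$.

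It follows that
\[
\max_i \bigl(\shapley(x)-\shapley(y)\bigr)_i - \min_i \bigl(\shapley(x)-\shapley(y)\bigr)_i \le b - a = \hilb{x-y},
\]
which is the claimed inequality. No step here is a real obstacle. The only point needing care is that monotonicity and additive homogeneity are applied to the coordinatewise inequalities, and both were established in the previous lemma.
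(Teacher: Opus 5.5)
Your proof is correct and is essentially the paper's argument: you sandwich $x$ between $y + a\ones$ and $y + b\ones$, apply monotonicity and then additive homogeneity of $\shapley$, and deduce that every coordinate of $\shapley(x)-\shapley(y)$ lies in $[a,b]$, which gives $\hilb{\shapley(x)-\shapley(y)} \le b-a = \hilb{x-y}$. The only difference is notation: your $a,b$ are the paper's $\beta,\alpha$.
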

\begin{proof}
Fix $x,y \in \R^n$ and denote $\alpha \coloneqq \max_i(x_i - y_i)$, $\beta \coloneqq \min_{i}(x_i - y_i)$, so that $\hilb{x - y} = \alpha - \beta$. Under this notation, we have $\beta \ones + y \le x \le \alpha \ones + y$.  Hence, by monotonicity and additive homogeneity we get $\beta \ones + \shapley(y) \le \shapley(x) \le \alpha \ones + \shapley(y)$. This implies that $\max_i(\shapley(x)_i - \shapley(y)_i) \le \alpha$ and $\min_i(\shapley(x)_i - \shapley(y)_i) \ge \beta$. Hence $\hilb{\shapley(x) - \shapley(y)} \le \alpha - \beta = \hilb{x - y}$.
\end{proof}

\subsection{Preliminaries on the GKK theorem and bias induced policies}\label{sec:preliminaries-gkk}

\begin{definition}[Induced policies]\label{def:induced_policy}
Given a vector $u \in \R^n$, we say that a policy $\sigma$ of player Max is \emph{induced by $u$} if it satisfies
\[
\sigma(i) \in \argmax_{j: (i,j) \in \edges} \{\payoff_{ij} + u_j\}
\]
for all $i \in \Maxvertices$. Analogously, we say that a policy $\tau$ of player Min is induced by $u$ if it satisfies
\[
\tau(i) \in \argmin_{j: (i,j) \in \edges} \{\payoff_{ij} + u_j\}
\]
for all $i \in \Minvertices$.
\end{definition}

In our value iteration algorithms, we will find a vector $u \in \R^n$ such that the policies induced by $u$ are optimal. The next theorem guarantees that such a vector exists.

\begin{theorem}[\cite{gurvich}]\label{optimality_eq}
Given a deterministic mean payoff game, the following system has a solution $(\gameval, \bias) \in \R^n \times \R^n$:
\begin{align*}
\forall i \in \Maxvertices, \quad &\gameval_{i} = \max_{(i,j) \in \edges} \gameval_{j}, \\
\forall i \in \Minvertices, \quad &\gameval_{i} = \min_{(i,j) \in \edges} \gameval_{j}, \\
\forall i \in \Maxvertices, \quad &\gameval_{i} + \bias_i = \max_{(i,j) \in \edges, \gameval_j = \gameval_i} \{\payoff_{ij} + \bias_{j}\} > \max_{(i,j) \in \edges, \gameval_j \neq \gameval_i} \{\payoff_{ij} + \bias_{j}\}, \\
\forall i \in \Minvertices, \quad &\gameval_{i} + \bias_i = \min_{(i,j) \in \edges, \gameval_j = \gameval_i} \{\payoff_{ij} + \bias_{j}\} < \min_{(i,j) \in \edges, \gameval_j \neq \gameval_i} \{\payoff_{ij} + \bias_{j}\}.
\end{align*}
Furthermore, if $(\gameval, \bias)$ is a solution, then $\gameval$ is the value vector of the game (in particular, $\gameval$ is unique). Even more, if a policy $\sigma$ of player Max is induced by $\bias$, then it is optimal. Likewise, if a policy $\tau$ of player Min is induced by $\bias$, then it is optimal.
\end{theorem}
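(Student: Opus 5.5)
We prove the two halves separately: existence of a solution $(\gameval,\bias)$, and then the fact that any solution yields the value together with optimal bias-induced policies.

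\textbf{Existence.} We take $\gameval$ to be the value vector of \cref{th:value_mpg}. The first two equations hold by a one-step argument. If $i\in\Maxvertices$ moves to $j$, Max can then play $\sigma^*$ from $j$, so $\gameval_i\ge \gameval_j$. Conversely, Min playing $\tau^*$ from $i$ forces the token to some successor $j$ from which $\tau^*$ keeps the payoff at most $\gameval_j$, so $\gameval_i\le\max_j\gameval_j$. The Min case is symmetric.

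For the bias we use a perturbation/limit argument. We fix any $\bias$ that satisfies the ``within-cluster'' equalities, i.e.\ $\bias$ solves the ergodic equation of each subgame restricted to edges with $\gameval_j=\gameval_i$. Such a $\bias$ is obtained from a limit of discounted values: $\bias=\lim_{\disc\to1}(v^\disc-\gameval/(1-\disc))$ along a subsequence, after checking boundedness via a Blackwell-type expansion $v^\disc=\gameval/(1-\disc)+\bias+o(1)$. Here $v^\disc$ is the unnormalized discounted value, which is piecewise rational in $\disc$ and hence has a Laurent expansion.

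We then enforce the strict inequalities by adding $K\gameval$ to $\bias$ for large $K$. Consider an edge $(i,j)$ with $i\in\Maxvertices$ and $\gameval_j\ne\gameval_i$. By the first equation we then have $\gameval_j<\gameval_i$. Replacing $\bias$ by $\bias+K\gameval$ shifts the bias-adjusted quantity of such edges by $K(\gameval_j-\gameval_i)\to-\infty$, while the within-cluster equalities are unchanged, since there $\gameval_j=\gameval_i$. The Min case is symmetric, so a large enough $K$ gives all strict inequalities.

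\textbf{Any solution gives the value and optimal policies.} Let $(\gameval,\bias)$ be a solution and let $\sigma$ be induced by $\bias$. By the strict inequality, $\sigma(i)$ lies in the same cluster as $i$. Now fix an arbitrary Min policy $\tau$ and follow the play from $i$. Along Max vertices $\gameval$ is constant, since $\sigma$ stays in the cluster. Along Min vertices $\gameval$ is nondecreasing, by the second equation. Hence $\gameval$ is nondecreasing along the play, and therefore constant on the final cycle $C$, with common value $\ge\gameval_i$.

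On $C$, every Min edge stays in the cluster, so it is at least $\gameval_k+\bias_k-\bias_\ell$ by the third/fourth equations restricted to that cluster. Every Max edge chosen by $\sigma$ attains equality. Summing around $C$, the bias terms telescope, and we obtain mean weight $\ge\gameval$ on $C$, hence $g_i(\sigma,\tau)\ge\gameval_i$. Symmetrically, for $\tau$ induced by $\bias$ we get $g_i(\sigma',\tau)\le\gameval_i$ for every Max policy $\sigma'$. These two bounds give \cref{eq:optimal_policies}, so $\gameval$ is the value and is unique by \cref{th:value_mpg}.

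\textbf{Main obstacle.} The hard step is producing the within-cluster bias $\bias$. The constant-value subgames need not be ergodic, and restricting to in-cluster edges might leave a vertex with no in-cluster move for the player who is not controlling the argmax. I would handle this by working directly with the Laurent expansion of the discounted values, whose second coefficient solves the required equalities over edges attaining the first-order maximum, rather than trying to solve the subgames separately.
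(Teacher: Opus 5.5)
The paper gives no proof of this statement; it is quoted from \cite{gurvich}, where the potential $u$ is obtained by working directly with potential transformations $r_{ij}\mapsto r_{ij}+u_j-u_i$ rather than through discounted games. Your vanishing-discount route is a valid alternative.

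The verification half is complete as written. A policy $\sigma$ induced by $u$ stays in its $\lambda$-class by the strict inequality, $\lambda$ is nondecreasing along any play against it, and telescoping $u$ around the final cycle gives mean weight at least $\lambda_i$. This is the same mechanism as in \cref{game_decomposition}(ii). Your shift $u\mapsto u+K\lambda$ is precisely the device used in the proof of \cref{cone_gives_bias}. What your approach buys is that it needs only Shapley's discounted theorem and the discount-to-mean comparison, both of which the paper already quotes.

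Two points should be tightened.

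First, ``piecewise rational, hence a Laurent expansion'' is not enough on its own. You need a single rational piece on some interval $(\gamma_0,1)$, which follows from Blackwell optimality, or from continuity of $v^\gamma$ together with the fact that distinct rational functions agree at only finitely many points. Identifying the pole coefficient with $\lambda$ is itself the discount-to-mean limit.

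You can skip the expansion altogether. \cref{discount_to_mean} gives $\|v^\gamma-\lambda/(1-\gamma)\|_\infty\le 2n\|r\|_\infty$ directly. For $w^\gamma:=v^\gamma-\lambda/(1-\gamma)$, the Bellman equation at $i\in\Maxvertices$ reads $w^\gamma_i=\max_{(i,j)\in E}\{(\lambda_j-\lambda_i)/(1-\gamma)+r_{ij}-\lambda_j+\gamma w^\gamma_j\}$. Since $w^\gamma$ is bounded, the edges with $\lambda_j<\lambda_i$ drop out once $\gamma$ is close to $1$. Hence any subsequential limit satisfies the within-class equalities, and the Min case is symmetric.

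Second, the ``main obstacle'' you describe does not occur. The first two equations guarantee that every vertex, of either player, keeps an edge with $\lambda_j=\lambda_i$. The real reason not to solve the classes separately is circularity: solvability of the ergodic equation for a constant-value game is the constant-value case of this very theorem. Indeed, the paper derives \cref{ergodic_equation} and \cref{game_decomposition} from it.
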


\begin{definition}[Bias $\bias$]\label{def:bias}
Let $\gameval \in \R^n$ be the value vector of a mean payoff game. We say that a vector $\bias \in \R^n$ is a \emph{bias} of this game if $(\gameval,\bias)$ is a solution of system from \cref{optimality_eq}. Moreover, we say that a policy of one of the players is \emph{bias-induced} if it is induced by some bias.
\end{definition}

\begin{remark}
We note that in the theory of Markov Decision Processes, the terms ``bias'' and ``bias-optimal policy'' are used to describe a closely related, but slightly different concept. In this theory, the bias arises from describing the mean-payoff problem as a limit of discounted problems. This leads to considering ``sensitive discount optimality equations'' or ``multichain optimality equations'', see, e.g. \cite[Chapters 9 and 10]{puterman} for more information. For the simplicity of terminology, we use the word ``bias'' to refer to the concept from \cref{def:bias}, even though it does not perfectly match the use of this term in the theory of Markov Decision Processes.
\end{remark}

A particular case of optimality equations arises when the value of the game is constant, i.e. when it does not depend on the initial state. Then, the system from \cref{optimality_eq} can be reduced to the ergodic equation.

\begin{definition}[Constant value game]
We say that a mean payoff game has \emph{constant value} if its value does not depend on the initial vertex, i.e., if $\gameval = \xi \ones$ for some $\xi \in \R$.
\end{definition}

\begin{corollary}\label{ergodic_equation}
A mean payoff game with Shapley operator $\shapley$ has constant value if and only if the \emph{ergodic equation}
\begin{equation}\label{eq:ergodic}
\xi \ones + \bias = \shapley(\bias)
\end{equation}
has a solution $(\xi, \bias) \in \R \times \R^{n}$. Moreover, if $(\xi, \bias)$ is a solution of the ergodic equation, then $\xi$ is the value of every state of the game. In particular, $\xi$ is unique. Furthermore, a vector $\bias$ is a bias of a constant value game if and only if $(\xi, \bias)$ satisfies the ergodic equation.
\end{corollary}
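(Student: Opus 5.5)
The plan is to derive \cref{ergodic_equation} directly from \cref{optimality_eq}, treating the two directions separately and then identifying biases.

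\emph{Constant value implies a solution.} Suppose the game has constant value, $\gameval = \xi\ones$. By \cref{optimality_eq} there is $\bias$ with $(\xi\ones,\bias)$ solving the GKK system. Every $j$ then satisfies $\gameval_j = \xi = \gameval_i$, so for each $i$ the maxima and minima over edges with $\gameval_j \ne \gameval_i$ range over the empty set. The strict inequalities become vacuous under the convention $\max\emptyset = -\infty$ and $\min\emptyset = +\infty$. The remaining equations read $\xi + \bias_i = \max_{(i,j)\in\edges}\{\payoff_{ij}+\bias_j\}$ for $i \in \Maxvertices$, and similarly with $\min$ for $i \in \Minvertices$. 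This is exactly $\xi\ones + \bias = \shapley(\bias)$.

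\emph{A solution implies constant value.} Conversely, let $(\xi,\bias)$ solve \cref{eq:ergodic}, and set $\gameval \coloneqq \xi\ones$. I check that $(\gameval,\bias)$ solves the GKK system of \cref{optimality_eq}.
\begin{itemize}
\item The first two families of equations hold trivially, since every vertex has an outgoing edge and $\gameval$ is constant.
\item The third and fourth families reduce, as above, to the ergodic equation, because the cross-value sets are empty.
\end{itemize}
The uniqueness part of \cref{optimality_eq} then says that $\gameval = \xi\ones$ is the value vector. Hence the game has constant value, every state has value $\xi$, and $\xi$ is unique because the value vector is unique.

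\emph{Characterization of biases.} By \cref{def:bias}, $\bias$ is a bias exactly when $(\gameval,\bias)$ solves the GKK system, where $\gameval$ is the value vector. In a constant-value game $\gameval = \xi\ones$, and the computation above shows the GKK system for $(\xi\ones,\bias)$ is equivalent to \cref{eq:ergodic}. So $\bias$ is a bias if and only if $(\xi,\bias)$ satisfies the ergodic equation.

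The only delicate step is the handling of the strict inequalities when the cross-value edge sets are empty. The argument needs the convention $\max\emptyset=-\infty$ and $\min\emptyset=+\infty$, so I would state that convention explicitly. Everything else is direct substitution.
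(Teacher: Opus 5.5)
Your proposal is correct and follows the same route as the paper. Both arguments rest on one observation: when $\gameval = \xi\ones$, the system of \cref{optimality_eq} collapses to the ergodic equation. The uniqueness clause of that theorem then gives the uniqueness of $\xi$ and the characterization of biases. Stating the convention $\max\emptyset=-\infty$, $\min\emptyset=+\infty$ explicitly is a harmless clarification that the paper leaves implicit.
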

\begin{proof}
Suppose that a mean payoff game has constant value and let $\bias$ be a bias given by \cref{optimality_eq}. Then, the pair $(\xi, \bias)$ satisfies the ergodic equation. Conversely, if $(\xi, \bias)$ satisfy the ergodic equation, then $(\xi \ones, \bias)$ satisfy the system from \cref{optimality_eq}, so $\bias$ is a bias and the value of the game is equal to $\xi \ones$.
\end{proof}

To prove our results, we will need a bound on the Hilbert seminorm of some bias $\bias^*$. Such estimates appeared in \cite[Theorem~13]{AllamigeonGaubertKatzSkomra:2025} in the context of stochastic mean payoff games with constant value. For deterministic games, the following proposition gives a slightly better estimate and can be applied to games with non-constant value. Our proof follows the proof of a related bound given in \cite{gurvich}.

\begin{proposition}\label{short_bias}
For any $\varepsilon >0$, there exists a bias $\bias^*$ such that $\supnorm{\bias^*} \le (1+\varepsilon)n\hilb{\payoff}$. Likewise, there exists a bias $\bias^*$ such that $\hilb{\bias^*} \le (2+\varepsilon)n\hilb{\payoff}$. If the game has constant value, then the above is true also for $\varepsilon = 0$.
\end{proposition}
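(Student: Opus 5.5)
The plan is to recast the system of \cref{optimality_eq}, with $\gameval$ fixed, as a system of difference constraints in $\bias$. Then we build a new solution $\bias^*$ by shortest paths, so that each coordinate is a sum of at most $n-1$ bounded constants.

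\textbf{Step 1: the system of difference constraints.} Let $\gameval$ be the value vector. By \cref{optimality_eq} some bias $\bias$ exists; fix a pair of policies $(\sigma,\tau)$ induced by $\bias$. Consider the following constraints on an unknown $x \in \R^n$.
\begin{itemize}
\item For $i \in \Maxvertices$ and every edge $(i,j)$ with $\gameval_j = \gameval_i$: $x_i - x_j \ge \payoff_{ij} - \gameval_i$.
\item For $i \in \Maxvertices$ and the policy edge $j = \sigma(i)$ (where $\gameval_{\sigma(i)}=\gameval_i$ automatically): additionally $x_j - x_i \ge \gameval_i - \payoff_{ij}$, which turns the previous inequality into an equality.
\item For $i \in \Maxvertices$ and every edge $(i,j)$ with $\gameval_j \neq \gameval_i$: $x_i - x_j \ge \payoff_{ij} - \gameval_i + \delta$, for a fixed $\delta > 0$.
\item For $i \in \Minvertices$: the symmetric constraints, with the roles of $x_i - x_j$ and $x_j - x_i$ exchanged.
\end{itemize}
Any solution $x$ of this system satisfies all equalities and inequalities of \cref{optimality_eq} together with the given $\gameval$. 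Hence every solution is a bias, and it induces $(\sigma,\tau)$.

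\textbf{Step 2: feasibility and bounded constants.} The original bias $\bias$ satisfies every constraint once $\delta$ is small enough. This works because there are finitely many strict inequalities, and each holds with some positive slack. So the system is feasible, and $\delta$ can then be lowered further; I will choose $\delta = \varepsilon\hilb{\payoff}$, intersected with the slack bound. The degenerate case $\hilb{\payoff}=0$ is handled separately: then all cycles have the same mean, so the value is constant and $\bias = 0$ works. Every constant in the system has the form $\pm(\payoff_{ij} - \gameval_i)$, possibly plus $\delta$. Since $\gameval_i$ is the mean weight of a cycle, it lies in $[\min \payoff, \max \payoff]$. Therefore $\abs{\payoff_{ij} - \gameval_i} \le \hilb{\payoff}$, and every constant is at most $\hilb{\payoff} + \delta$ in absolute value.

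\textbf{Step 3: the shortest-path solution.} Now apply the standard theory of difference constraints. Encode a constraint $x_a - x_b \ge c$ as an arc $a \to b$ of length $-c$, that is, $x_b \le x_a - c$. Add a super-source $s$ with length-$0$ arcs to every vertex. Feasibility means there is no negative cycle. Then $x^*_v \coloneqq \operatorname{dist}(s,v)$ is a solution, and it is attained along a simple path. Such a path uses the source arc and then at most $n-1$ constraint arcs. Consequently $-(n-1)(\hilb{\payoff}+\delta) \le x^*_v \le 0$, which gives $\supnorm{x^*} \le (n-1)(1+\varepsilon)\hilb{\payoff} \le (1+\varepsilon)n\hilb{\payoff}$. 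The Hilbert bound follows from $\hilb{x^*} \le 2\supnorm{x^*}$ after renaming $\varepsilon$; alternatively, all $x^*_v$ lie in an interval of length $(n-1)(\hilb{\payoff}+\delta)$, which is even better.

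\textbf{The constant-value case and the main obstacle.} If the game has constant value, no cross-value edges exist. Then $\delta$ never appears in the system, and the same argument gives the bounds with $\varepsilon = 0$. The only delicate point is the one addressed in Step 2: the strict inequalities prevent the constraint set from being closed. I handle this by the $\delta$-tightening, which costs exactly the $(1+\varepsilon)$ factor. I also need to check that replacing strict inequalities by $\delta$-tightened weak ones preserves feasibility, and this is guaranteed by the original bias $\bias$.
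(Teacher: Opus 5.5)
Your argument is correct. Your constraint system is exactly the paper's system \cref{eq:tu_system}: equalities on the policy edges, weak inequalities on the remaining same-value edges, and inequalities tightened by $\delta$ on the cross-value edges. You also use the same bound $\hilb{\payoff}+\delta$ on the right-hand sides.

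Where you differ is in how a solution with small coordinates is extracted. The paper notes that every row of the constraint matrix has one $+1$ and one $-1$, so the matrix is totally unimodular. It then uses the proof of Schrijver's Theorem~19.1 to obtain a solution of the form $(U^{-1}b,0)$ with $U$ nonsingular and totally unimodular, and bounds its entries by Cramer's rule. You instead take the shortest-path potentials of the difference-constraint graph, so each coordinate is a sum of at most $n-1$ arc lengths along a simple path. The two facts are close relatives, since basic solutions of such network systems are path sums along a spanning tree.

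Your route buys several things:
\begin{itemize}
\item It is elementary and constructive: given $\gameval$ and $(\sigma,\tau)$, Bellman--Ford computes $x^*$ in $O(mn)$ time.
\item It never has to pick a nonsingular subsystem out of a polyhedron that has no vertices; the feasible set here is invariant under adding multiples of $\ones$.
\item It gives a sharper seminorm bound directly. Every coordinate lies in $[-(n-1)(\hilb{\payoff}+\delta),0]$, so $\hilb{x^*}\le(1+\varepsilon)(n-1)\hilb{\payoff}$. The paper instead gets the factor $2n$ from $\hilb{\cdot}\le 2\supnorm{\cdot}$.
\end{itemize}
You are also explicit on two points the paper passes over quickly: $\delta$ must be capped by the slack of the original bias, and the case $\hilb{\payoff}=0$ needs separate treatment.

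The paper's version, for its part, is a short appeal to a standard polyhedral theorem and follows the template of the Gurvich--Karzanov--Khachiyan argument it adapts.
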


\begin{proof}
Let $\gameval \in \R^n$ be the value vector of the game and let $(\sigma,\tau)$ be a pair of policies induced by some bias. Fix $\varepsilon > 0$ and consider the polyhedron formed by all $\bias \in \R^n$ such that
\begin{equation}\label{eq:tu_system}
\begin{aligned}
&\forall (i,j) \in \edges, i \in \Maxvertices, j = \sigma(i), & \qquad &&\bias_i - \bias_j &= \payoff_{ij} - \gameval_i \\
&\forall (i,j) \in \edges, i \in \Minvertices, j = \tau(i), & \qquad&&\bias_i - \bias_j &= \payoff_{ij} - \gameval_i \\
&\forall (i,j) \in \edges, i \in \Maxvertices, j  \neq \sigma(i),
\gameval_j = \gameval_i,& \qquad &&\bias_i - \bias_j & \ge \payoff_{ij} - \gameval_i \\
&\forall (i,j) \in \edges, i \in \Minvertices, j \neq \tau(i), \gameval_j  = \gameval_i, & \qquad &&\bias_i - \bias_j & \le \payoff_{ij} - \gameval_i \\
&\forall (i,j) \in \edges, i \in \Maxvertices, j \neq \sigma(i),
\gameval_j \neq \gameval_i, & \qquad &&\bias_i - \bias_j &\ge \payoff_{ij} - \gameval_i + \varepsilon \\
&\forall (i,j) \in \edges, i \in \Minvertices, j \neq \tau(i), \gameval_j \neq \gameval_i, & \qquad &&\bias_i - \bias_j &\le \payoff_{ij} - \gameval_i - \varepsilon \,
\end{aligned}
\end{equation}
Note that this polyhedron is nonempty for every sufficiently small $\varepsilon$ because $(\sigma,\tau)$ are induced by some bias. Furthermore, any $\bias$ that belongs to this polyhedron is a bias of the game. This polyhedron is described by a matrix which has exactly one $+1$ and one $-1$ entry in every row. Such a matrix is totally unimodular as discussed in \cite[Section 19.3, Example 2]{schrijver}. Hence, the proof of \cite[Theorem 19.1]{schrijver} shows that, up to permuting the indices, there exists a bias of the form $\bias^* = \begin{pmatrix}U^{-1} b \\ 0\end{pmatrix}$, where the matrix $U$ is totally unimodular and $b$ is a vector with coordinates coming from the right-hand side of \cref{eq:tu_system}. Since $\gameval_l$ is in the interval $\interval{\min_{ij} \payoff_{ij}}{\max_{ij} \payoff_{ij}}$ for all $l \in [n]$, we have $\supnorm{b} \le \hilb{\payoff} + \varepsilon$. Therefore, Cramer's rule shows that $\supnorm{\bias^*} \le n\hilb{\payoff} + n\varepsilon$. Since $\hilb{\bias^*} \le 2\supnorm{\bias^*}$, we get $\hilb{\bias^*} \le 2n\hilb{\payoff} + 2n\varepsilon$. Moreover, if the game has constant value, then the inequalities in \cref{eq:tu_system} do not use $\varepsilon$. To get the multiplicative bound, notice that we can use any $\varepsilon$, in particular $\varepsilon \hilb{\payoff}$.
\end{proof}

We also note the following result about biases of constant value games.

\begin{theorem}[{\cite[Theorem~3.10]{generic_uniqueness}}]\label{biases_connected}
Suppose that a mean payoff game has constant value. Then, the set of biases is connected.
\end{theorem}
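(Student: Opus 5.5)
The plan is to show that the set $B$ of biases of a constant-value game is a finite union of convex polyhedra whose union is connected. The natural route is to fix a bias $\bias^0$, join any other bias $\bias^1$ to it by a path inside $B$, and deduce the theorem.

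By \cref{ergodic_equation}, $B = \{\bias : \xi\ones + \bias = \shapley(\bias)\}$ with $\xi$ fixed. We first decompose $B$ by policies. For each pair $(\sigma,\tau)$, let $B^{\sigma,\tau}$ be the set of $\bias$ satisfying the linear system
\[
\xi + \bias_i = \payoff_{i\sigma(i)} + \bias_{\sigma(i)} \ge \payoff_{ij}+\bias_j \quad (i \in \Maxvertices),
\]
together with the analogous system for Min, with $\tau$ in place of $\sigma$ and the inequality reversed. Each $B^{\sigma,\tau}$ is a convex polyhedron, and $B = \bigcup_{\sigma,\tau} B^{\sigma,\tau}$. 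Two such pieces are joined whenever they intersect. Convexity alone does not connect distinct pieces, however, so we need an argument that links different policy pairs.

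The cleanest tool I would try is a contraction retraction. For $x \in \R^n$, set $T(x) = \shapley(x) - \xi\ones$. Then $T$ is monotone, additively homogeneous, and sup-norm nonexpansive, and $B$ is exactly its fixed-point set. Take two biases $\bias^0 \le \bias^1$; this can be arranged by adding a constant, and the shifts $\bias^0 + t\ones$ stay in $B$ by homogeneity. Consider the order interval $[\bias^0,\bias^1]$. By monotonicity, $T$ maps this interval into itself.

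The key claim is that the fixed points of a monotone, sup-norm nonexpansive map restricted to an order interval form a connected set. To prove it, I would pick a point $y$ on the segment from $\bias^0$ to $\bias^1$ and use the Krasnoselskii--Mann iterates $x \mapsto \tfrac12(x+T(x))$. These iterates converge to a fixed point, because $T$ is piecewise affine and nonexpansive with nonempty fixed-point set, and Ishikawa's theorem gives convergence in finite dimension. Define $R(y)$ as the limit. I would then argue that $R$ is continuous on the segment. This step is the main obstacle, since limit maps of nonexpansive iterations need not be continuous in general. What rescues it here is piecewise linearity. $T$ is a tropical polynomial map, so the averaged map $\tfrac12(I+T)$ is piecewise affine with finitely many pieces. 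Hence its iterates converge after finitely many steps locally, or at least uniformly geometrically on compact sets, and that yields continuity of $R$.

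As an alternative, I would argue with the finite polyhedral structure instead. $B$ is the set of fixed points of a piecewise affine nonexpansive map, so it is a polyhedral complex. I would show it is \emph{star-shaped with respect to its tropical structure}: $B$ is closed under pointwise max when Max is absent and under pointwise min when Min is absent, and in general the segment from $\bias^0$ to $R$ of the midpoint can be iterated. Dividing the segment dyadically and using nonexpansiveness, each fixed point $R(y)$ lies within $\supnorm{y-y'}$-controlled distance of $R(y')$. This gives the needed continuity, and $R$ of the segment is then a path in $B$ from $\bias^0$ to $\bias^1$.
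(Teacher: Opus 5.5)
The paper does not prove this statement; it imports it from \cite{generic_uniqueness}, so your argument has to stand on its own. Your central construction is the right one: it is the usual way to show that the fixed-point set of a nonexpansive self-map of a finite-dimensional normed space is a nonexpansive retract. However, you justify the step you single out as the main obstacle incorrectly, and the correct justification is one line.

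With $T = \shapley - \xi\ones$, $G = \tfrac12(I+T)$ and $R(y) = \lim_k G^k(y)$, your claim that such limit maps ``need not be continuous in general'' is false. Every $G^k$ is a composition of $\supnorm{\cdot}$-nonexpansive maps, so
\[
\supnorm{G^k(y)-G^k(y')} \le \supnorm{y-y'} \quad \text{for all } k, \qquad\text{hence}\qquad \supnorm{R(y)-R(y')} \le \supnorm{y-y'} .
\]
The argument you offer instead does not hold up. The iterates do not in general reach a fixed point after finitely many steps. In the paper's first example, the \KM iterates of $\shapley - 3\ones$ from $0$ are $(-2^{-k},1,-1)$ for $k \ge 1$, which approach the fixed point $(0,1,-1)$ only geometrically. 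Your claim of uniform geometric convergence on compact sets is asserted without proof. Neither claim is needed.

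Existence of the limit does not use piecewise affinity either. Your appeal to Ishikawa's theorem suffices, or you can argue directly:
\begin{itemize}
\item $\supnorm{G^k(y)-\bias^*}$ is nonincreasing for every bias $\bias^*$;
\item \cref{halpern}(1), applied in $(\R^n,\supnorm{\cdot})$, gives $\supnorm{G^k(y)-T(G^k(y))}\to 0$, so every cluster point is a bias;
\item monotonicity of the distance to that cluster point forces convergence of the whole sequence.
\end{itemize}
Then $R\colon \R^n \to B$ is a continuous retraction, since biases are fixed by $G$. Hence $B = R(\R^n)$ is connected, and in fact path-connected via $t \mapsto R\bigl((1-t)\bias^0 + t\bias^1\bigr)$. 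The order interval, the monotonicity of $T$ and the polyhedral decomposition are superfluous.

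You should drop your alternative last paragraph. The lattice closures are reversed: without Min, $\shapley$ is max-plus linear and $B$ is closed under pointwise max; without Max, it is closed under pointwise min. Neither closure holds in general for two-player games. Its final sentence is only a vague form of the nonexpansiveness of $R$ shown above.
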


To finish this preliminary section, we discuss the decomposition of a mean payoff game into a collection of constant value games. This is done by removing the edges between vertices of different values.

\begin{definition}[Reduced game]
Given a mean payoff game on a digraph $\dgraph = (\vertices,\edges)$ with value vector $\gameval \in \R^n$, let
\[
\edges' \coloneqq \{(i,j) \in \edges \colon \gameval_i = \gameval_j\}.
\]
The \emph{reduced game} is the mean payoff game played on the graph
$\dgraph' = (\vertices,\edges')$, with the same weights as in the original game.

Let $\vertices_1,\dots,\vertices_p$ be the connected components of the underlying undirected graph of $\dgraph'$. For every $\ell \in [p]$, let
\[
\edges_\ell \coloneqq \edges' \cap (\vertices_\ell \times \vertices_\ell),
\qquad
\dgraph_\ell \coloneqq (\vertices_\ell,\edges_\ell).
\]
We call the games played on $\dgraph_1,\dots,\dgraph_p$ the connected components of the reduced game.
\end{definition}

Note that if $\sigma$ is a policy of player Max on the reduced graph $\dgraph'$ and $\dgraph''$ is a connected component of $\dgraph'$, then player Max can restrict the policy $\sigma$ to the vertices of $\dgraph''$ and use this restricted policy in the game played on $\dgraph''$. This is because $\sigma$ does not use edges that go between different connected components of $\dgraph'$. By abuse of notation, we will still refer to this restricted policy as $\sigma$ and use the same convention for policies $\tau$ of player Min. The following known proposition discusses the properties of the reduced game and its connected components. We give the proof for the sake of completeness.

\begin{proposition}\label{game_decomposition}
Fix a mean payoff game on the digraph $\dgraph$ with value vector $\gameval$. Then, (i) its reduced game played on $\dgraph'$ has the same value as the game on $\dgraph$. Furthermore, if $\dgraph_{1}, \dots, \dgraph_{p}$ are the connected components of $\dgraph'$, then for every $\ell \in [p]$ the game on $\dgraph_{\ell}$ has constant value. More precisely, if $\vertices_{\ell}$ is the vertex set of $\dgraph_{\ell}$, then the value of the game $\dgraph_{\ell}$ is equal to $\gameval_i$ for any $i \in \vertices_{\ell}$. Moreover, (ii) if $\sigma$ is a policy of player Max on the reduced graph $\dgraph'$ that is optimal on all games on $(\dgraph_{\ell})_{\ell}$, then $\sigma$ is optimal in the game played on the original graph $\dgraph$. An analogous statement holds for player Min.
\end{proposition}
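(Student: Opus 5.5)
The plan is to use the GKK system of \cref{optimality_eq} as the main tool, since any solution of it certifies both the value and the optimality of induced policies. Fix a solution $(\gameval,\bias)$ of that system for the original game on $\dgraph$. For part (i), the goal is to show that $(\gameval,\bias)$, restricted appropriately, still solves the GKK system on $\dgraph'$ and on each component $\dgraph_\ell$.

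\emph{The reduced game.} The graph $\dgraph'$ keeps exactly the edges $(i,j)$ with $\gameval_i=\gameval_j$. First I will check that $\dgraph'$ satisfies our standing assumption. For $i \in \Maxvertices$, the equation $\gameval_i=\max_{(i,j)\in\edges}\gameval_j$ guarantees an outgoing edge with $\gameval_j=\gameval_i$, and the same holds for Min. In $\dgraph'$ every edge joins vertices of equal value, so the first two GKK equations hold trivially. The sets $\{(i,j)\in\edges' : \gameval_j\neq\gameval_i\}$ are empty, so the strict inequalities hold vacuously (with the convention $\max\emptyset=-\infty$). The bias equations only involve edges with $\gameval_j=\gameval_i$, and these edges are the same in $\edges$ and $\edges'$. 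Hence $(\gameval,\bias)$ solves the GKK system on $\dgraph'$, and by the uniqueness part of \cref{optimality_eq} the value of the reduced game is $\gameval$.

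\emph{The components.} Next I will note that $\gameval$ is constant on each $\vertices_\ell$: vertices joined by an edge of $\edges'$ have equal value, and each $\vertices_\ell$ is connected in the underlying undirected graph. Every edge of $\edges'$ leaving a vertex of $\vertices_\ell$ stays in $\vertices_\ell$, so every vertex keeps an outgoing edge in $\dgraph_\ell$. Writing $\xi_\ell$ for the common value on $\vertices_\ell$, the restriction of $\bias$ satisfies $\xi_\ell\ones+\bias=\shapley_\ell(\bias)$, where $\shapley_\ell$ is the Shapley operator of $\dgraph_\ell$. \Cref{ergodic_equation} then shows that the game on $\dgraph_\ell$ has constant value $\xi_\ell=\gameval_i$ for $i\in\vertices_\ell$.

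\emph{Part (ii).} This is the main obstacle. Optimality of $\sigma$ on each $\dgraph_\ell$ constrains only edges inside $\dgraph'$, while in $\dgraph$ Min may use edges leaving the component. I will fix an arbitrary policy $\tau$ of Min on $\dgraph$ and a start vertex $i_0$, and show $g_{i_0}(\sigma,\tau)\ge\gameval_{i_0}$. The play visits $i_0,i_1,\dots$ and ends in a cycle $C$. Along any edge the value cannot drop in the wrong direction. For a Max vertex, $\sigma$ uses an edge of $\edges'$, so the value stays equal. For a Min vertex, $\gameval_{i_{k+1}}\ge\min_j\gameval_j=\gameval_{i_k}$. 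Hence $\gameval$ is nondecreasing along the path, and therefore constant on the cycle $C$. It follows that every edge of $C$ lies in $\edges'$, so $C$ sits inside a single component $\dgraph_\ell$ with $\xi_\ell=\gameval_c\ge\gameval_{i_0}$ for $c\in C$.

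To finish, I will extend $\tau$ restricted to $\vertices_\ell$ to a policy $\tau'$ of Min on $\dgraph_\ell$ by keeping $\tau$ on the vertices of $C$ and choosing arbitrary edges of $\edges_\ell$ elsewhere. Starting from a vertex $c$ of $C$, the pair $(\sigma,\tau')$ follows $C$. Optimality of $\sigma$ in the constant-value game $\dgraph_\ell$ then gives that the mean weight of $C$ is at least $\xi_\ell\ge\gameval_{i_0}$. Hence $g_{i_0}(\sigma,\tau)\ge\gameval_{i_0}$ for every $\tau$, which is optimality of $\sigma$ (the right inequality of \cref{eq:optimal_policies}, with the value $\gameval$ of $\dgraph$). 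The statement for Min follows by the symmetric argument, in which values are nonincreasing along plays.
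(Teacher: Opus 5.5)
Your proof is correct and follows essentially the same route as the paper. For (i), you restrict a GKK solution $(\gameval,\bias)$ to $\dgraph'$ and to each $\dgraph_\ell$ and invoke \cref{optimality_eq} and \cref{ergodic_equation}; for (ii), you show that values are nondecreasing along any $(\sigma,\tau)$-play, so the terminal cycle lies in a single component $\dgraph_\ell$, where optimality of $\sigma$ applies. Your version is slightly more careful than the paper's: you check explicitly that every vertex keeps an outgoing edge in $\dgraph'$ and $\dgraph_\ell$, and you build the auxiliary policy $\tau'$ on $\dgraph_\ell$ realizing the cycle $C$, two steps the paper leaves implicit.
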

\begin{proof}
  (i) Let $\bias$ be a bias of the game played on $\dgraph$. By examining the system from \cref{optimality_eq}, we see that the pair $(\gameval,\bias)$ still solves this system if we remove from the graph the edges $\{(i,j) \in \edges \colon \gameval_i \neq \gameval_j\}$. In particular, $\bias$ is a bias of the game played on the reduced graph, and $\gameval$ is the value of the reduced game. Let $\dgraph_{\ell}$ be a connected component of the reduced game and denote by $\vertices_{\ell}, \edges_{\ell}$ the vertex set and the edge set of $\dgraph_{\ell}$. By definition of the connected component, the value $\gameval_i$ is the same for all $i \in \vertices_{\ell}$. Furthermore, the restricted vector $\tilde{\bias} \in \R^{\vertices_{\ell}}$ defined by $\tilde{\bias}_i \coloneqq \bias_i$ for $i \in \vertices_{\ell}$ satisfies the equations
\begin{align*}
\forall i \in \Maxvertices \cap \vertices_{\ell}, \quad &\gameval_{i} + \tilde{\bias}_i = \max_{(i,j) \in \edges_{\ell}} \{\payoff_{ij} + \tilde{\bias}_{j}\} \, , \\
\forall i \in \Minvertices \cap \vertices_{\ell}, \quad &\gameval_{i} + \tilde{\bias}_i = \min_{(i,j) \in \edges_{\ell}} \{\payoff_{ij} + \tilde{\bias}_{j}\} \, .
\end{align*}
Since the value $\gameval_{i}$ does not depend on $i \in \vertices_{\ell}$, the pair $(\gameval_i, \tilde{\bias})$ satisfies the ergodic equation on $\dgraph_{\ell}$ and, by \cref{ergodic_equation}, the game on this graph has constant value equal to $\gameval_i$.

(ii) Suppose that $\sigma$ is a policy for player Max on the reduced graph $\dgraph'$ that is optimal for all games on $(\dgraph_{\ell})_{\ell}$. Let $\tau$ be any policy of player Min in the game played on the entire graph $\dgraph$. Consider $\tau(i)$ for some $i \in \Minvertices$, and notice that, because $\gameval$ is the value of the game on $\dgraph$, then $\tau(i)$ either chooses a vertex from the same component, in which case $\gameval_{\tau(i)} = \gameval_i$, or otherwise it will choose a vertex from a component which is not better for player Min, and so in either case we always have $\gameval_{\tau(i)} \ge \gameval_{i}$. Moreover, for every $i \in \Maxvertices$ we have $\gameval_{\sigma(i)} = \gameval_{i}$ because $\sigma$ uses only edges from $\edges'$. We have then concluded that traversing an edge from the policies $(\sigma,\tau)$ can only preserve or increase the value. Hence, if we fix $i \in \vertices$ and consider the directed cycle reached by the path given by fixing $(\sigma,\tau)$ and starting from $i$, then all the vertices $j$ of this cycle have the same value $\gameval_j$ and this value is not smaller than $\gameval_i$. Hence, the vertices of the cycle belong to the same set $\vertices_{\ell}$ and the mean weight of the cycle is at least $\gameval_j \ge \gameval_i$ by the fact that $\sigma$ is optimal on $\dgraph_{\ell}$. Hence, $\sigma$ guarantees the payoff at least $\gameval_i$ for any initial vertex $i$ in the game played on the entire graph $\dgraph$. An analogous reasoning is valid for a policy $\tau$ of player Min.
\end{proof}

\Cref{game_decomposition} suggests the following two stage algorithm for solving any mean payoff game. First, we find the edges $(i,j)$ such that $\gameval_{i} \neq \gameval_{j}$. Second, we remove these edges from the graph and then solve a collection of constant value games played on the connected components of the reduced game. We will show that, for a randomly perturbed game, value iteration algorithms can solve both tasks efficiently.

\section{Approximate solutions to constant value games via Halpern or KM iteration}\label{sec:constant_value}

In this section, we study the properties of games with constant value and we show that the \KM iteration and the Halpern iteration can find optimal policies of such games. To do so, we study ``$\varepsilon$-approximate'' biases of constant value games, which are related to $\varepsilon$-optimal strategies. There are two ways in which one can interpret what an approximate bias is. One way to define this concept would be to say that a vector is an approximate bias if it is close to some bias. This is a strong notion, and there are no efficient algorithms that find approximations of a bias in this sense. A weaker notion is to say that an approximate bias is a vector that approximately satisfies the ergodicity equation. This is the notion that we are using in this paper. We measure this approximation using the Hilbert seminorm.

\begin{definition}[$\varepsilon$-approximate bias]
  Given $\varepsilon \ge 0$, we say that a vector $\bias \in \R^n$ is an \emph{$\varepsilon$-approximate bias} if it satisfies $\hilb{\bias - \shapley(\bias)} \le \varepsilon$.
\end{definition}

We note that a game with non-constant value may have an $\varepsilon$-approximate bias. Nevertheless, in this case the value is almost constant, as explained by the following simple but important proposition.

\begin{proposition}\label{bias_approx_value}
Take a mean payoff game with weights $\payoff \in \R^m$ and suppose that $\bias \in \R^n$ is an $\varepsilon$-approximate bias for some $\varepsilon \ge 0$. Then, (i) $\bias$ is a bias of a constant value game with some weights $\tilde{\payoff} \in \R^m$ that satisfy $\supnorm{\tilde{\payoff} - \payoff} \le \varepsilon$. Moreover, (ii) we can choose $\tilde{\payoff}$ in such a way that $\bias$ induces the same policies in the games with weights $\payoff$ and $\tilde{\payoff}$.

In particular, (iii) the value vector $\gameval \in \R^n$ of the game with weights $\payoff$ satisfies $\hilb{\gameval} \le 2\varepsilon$, and (iv) the policies induced by $\bias$ in this game are $(2\varepsilon)$-optimal.
\end{proposition}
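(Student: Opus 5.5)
The plan is to absorb the defect vector $d \coloneqq \shapley(\bias) - \bias$ into the weights, shifting all outgoing edges of a vertex by the same constant. Since $\hilb{d} \le \varepsilon$, the numbers $\alpha \coloneqq \max_i d_i$ and $\beta \coloneqq \min_i d_i$ satisfy $\alpha - \beta \le \varepsilon$. Put $\xi \coloneqq (\alpha+\beta)/2$, so that $\abs{d_i - \xi} \le \varepsilon/2$ for all $i$. Define $\tilde\payoff_{ij} \coloneqq \payoff_{ij} - (d_i - \xi)$ for every edge $(i,j) \in \edges$. Then $\supnorm{\tilde\payoff - \payoff} \le \varepsilon/2 \le \varepsilon$.

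For each vertex $i$, the values $\tilde\payoff_{ij} + \bias_j$ over outgoing edges $(i,j)$ differ from $\payoff_{ij}+\bias_j$ by the constant $-(d_i-\xi)$, which does not depend on $j$. Consequently:
\begin{itemize}
\item the argmax/argmin sets are unchanged, which gives (ii);
\item the Shapley operator $\tilde\shapley$ of the new game satisfies $\tilde\shapley(\bias)_i = \shapley(\bias)_i - d_i + \xi = \bias_i + \xi$.
\end{itemize}
Thus $(\xi,\bias)$ solves the ergodic equation \cref{eq:ergodic} for $\tilde\payoff$. By \cref{ergodic_equation}, the game with weights $\tilde\payoff$ has constant value $\xi\ones$ and $\bias$ is one of its biases. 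This proves (i).

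For (iii), apply \cref{value_lipschitz}: $\supnorm{\gameval - \xi\ones} \le \supnorm{\payoff - \tilde\payoff} \le \varepsilon/2$, so $\hilb{\gameval} \le \varepsilon \le 2\varepsilon$. (With the cruder shift bound $\varepsilon$ one still gets $\hilb{\gameval}\le 2\varepsilon$, which is the bound stated.)

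For (iv), any policy induced by $\bias$ in the game $\payoff$ is also induced by $\bias$ in the game $\tilde\payoff$, by (ii). In $\tilde\payoff$, $\bias$ is a bias, so by \cref{optimality_eq} these induced policies are optimal there. \Cref{perturbed_value}, applied with the roles of $\payoff$ and $\tilde\payoff$ swapped, then shows they are $2\supnorm{\payoff-\tilde\payoff}$-optimal in the original game, hence $2\varepsilon$-optimal.

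The only step needing care is the computation $\tilde\shapley(\bias) = \xi\ones + \bias$. It relies on the shift being edge-independent at each vertex, so that it commutes with both max and min; this same feature is what gives (ii). Everything else is a direct invocation of the earlier lemmas.
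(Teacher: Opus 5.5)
Your proof is correct and follows the paper's argument: you absorb the defect $\shapley(\bias)-\bias$ into per-vertex shifts of the outgoing weights, then invoke \cref{ergodic_equation}, \cref{value_lipschitz}, \cref{optimality_eq} and \cref{perturbed_value}. The only difference is that the paper anchors the shift at vertex $1$ (value $\shapley(\bias)_1-\bias_1$, giving a perturbation of size up to $\varepsilon$), whereas your midpoint centering halves the perturbation and so actually yields the sharper bounds $\hilb{\gameval}\le\varepsilon$ and $\varepsilon$-optimality.
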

\begin{proof}
  (i)
Let $\bias \in \R^n$ be an $\varepsilon$-approximate bias of a mean payoff game with weights $\payoff$ and $\shapley$ be the Shapley operator of this game. Let $\rho = \shapley(\bias)_1 - \bias_1 \in \R$ and let $\delta_i = \rho - (\shapley(\bias)_i - \bias_i)$ for all $i \in [n]$. By the definition of the approximate bias, we have $|\delta_i| \le \varepsilon$ for all $i$. Define the weights $\tilde{\payoff}$ as $\tilde{\payoff}_{ij} = \payoff_{ij} + \delta_i$ for every edge $(i,j) \in \edges$. In other words, for every vertex $i$ of the graph, we shift the weights of all the outgoing edges of $i$ by the same amount $\delta_i$. We have $\supnorm{\tilde{\payoff} - \payoff} \le \varepsilon$. Moreover, for any $i \in \Maxvertices$ we have
\[
\max_{(i,j) \in \edges}\{\tilde{\payoff}_{ij} + \bias_j\} = \max_{(i,j) \in \edges}\{\payoff_{ij} + \bias_j\} + \delta_i = \shapley(\bias)_i + \delta_i = \rho + \bias_i \, .
\]
Analogously, for any $i \in \Minvertices$ we have $\min_{(i,j) \in \edges}\{\tilde{\payoff}_{ij} + \bias_j\} = \rho + \bias_i$. Hence, if $\tilde{\shapley}$ is the Shapley operator of the game with weights $\tilde{\payoff}$, then $(\rho, \bias)$ solve the ergodic equation for $\tilde{\shapley}$. In particular, $\bias$ is a bias of this game, and $\rho$ is its (constant) value, as given by \cref{ergodic_equation}. (ii) Moreover, any vector $\bias$ induces the same policies in the game with weights $\payoff$ as in the game with weights $\tilde{\payoff}$ (cf. the definition of induced policies, \cref{def:induced_policy}). This is because we shifted the weights of all the outgoing edges of a given vertex by the same amount. To show (iii), note that \cref{value_lipschitz} gives the inequality $\supnorm{\gameval - \rho \ones} \le \supnorm{r - \tilde r} \le \varepsilon$ and so $\abs{\gameval_i - \gameval_j} = \abs{(\gameval_i - \rho) - (\gameval_j - \rho)} \le 2\varepsilon$ for all $i,j$. Therefore $\hilb{\gameval} \le 2\varepsilon$. Moreover, any policies induced by $\bias$ are optimal in the game with weights $\tilde{\payoff}$ by \cref{optimality_eq}. Hence (iv) follows from \cref{perturbed_value}.
\end{proof}

We now present two algorithms that can find an $\varepsilon$-approximate bias of a constant value game. Our algorithms are based on the \KM and Halpern iterations. To give some context, we briefly explain the motivation behind these methods, referring the reader to \cite{CominettiSotoVaisman:2014,ContrerasCominetti:2023} and the references therein for more information.

If a function $f \colon \R^n \to \R^n$ is a contraction, then the Banach fixed point theorem implies that $f$ has a unique fixed point. Moreover, for any $x^0$, the sequence $(x^k)_k$ defined by $x^k = f(x^{k-1})$ for $k \ge 1$ converges to the unique fixed point of $f$, at rate which depends on the contraction rate of $f$. If we only suppose that $f$ is non-expansive, i.e., $\| f(x) - f(y)\| \le \| x - y\|$ for all $x,y \in \R^n$, then none of these statements is true. In particular, $f$ may have zero fixed points, one fixed point, or many fixed points. As an example, a translation in $\R^2$ has no fixed points, a rotation has a single fixed point, and a reflection across a line has infinitely many fixed points. Even if $f$ has at least one fixed point, the iteration method given by $x^k = f(x^{k-1})$ may not converge to any fixed point. Even worse, the error $\|x^k - f(x^k)\|$ may not converge to zero (as shown by a rotation in $\R^2$). The \KM and Halpern iteration methods overcome some of these issues. More precisely, if $f$ is a non-expansive function that has at least one fixed point, then these methods provide a sequence $x^k$ such that the error $\|x^k - f(x^k)\|$ does converge to zero. In general, the \KM iteration is given by a sequence $x^k$ that satisfies $x^k = (1 - \alpha_k) x^{k-1} + \alpha_k f(x^{k-1})$ for some parameters $\alpha_k \in \interval[open]{0}{1}$, while the Halpern iteration is given by a sequence that satisfies $x^k = (1 - \alpha_k) y^0 + \alpha_k f(x^{k-1})$, where the point $y^0$ is an ``anchor'' (and is often taken to be equal to $x^0$) and the parameters $\alpha_k$ are in $\interval[open]{0}{1}$. For an appropriate choice of the parameters, the error $\|x^k - f(x^k)\|$ is bounded by $O(1/\sqrt{k})$ for the \KM iteration and by $O(1/k)$ for the Halpern iteration. For the sake of simplicity, in this paper we only consider the canonical choices of the parameters. Hence, for us, the \emph{\KM iteration} is given by a sequence that satisfies
\[
x^k = \frac{1}{2} x^{k-1} + \frac{1}{2} f(x^{k-1})
\]
for $k \ge 1$, while the \emph{Halpern iteration} is given by a sequence that satisfies
\[
x^k = \frac{2}{k+2} x^{0} + \frac{k}{k+2} f(x^{k-1})
\]
for $k \ge 1$. We rely on the following estimates of the rate of convergence of these methods.

\begin{theorem}\label{halpern}
  Let $(\nspace, \| \cdot \|)$ be a normed space and let $f \colon \nspace \to \nspace$ be a non-expansive function, i.e., a function that satisfies $\| f(x) - f(y) \| \le \| x - y\|$ for all $x,y \in \nspace$. Furthermore, suppose that this function has at least one fixed point $x^{*} \in \nspace$. Then, we have the following convergence results:
  \begin{enumerate}
    \item \cite{BaillonBruck:1996,CominettiSotoVaisman:2014} Consider the \KM sequence $(x^k)_{k \ge 0}$ given by $x^{k} = \frac{1}{2}x^{k-1} + \frac{1}{2}f(x^{k-1})$ for $k \ge 1$. Then, for every $k \ge 1$ we have
    \[
      \| x^k - f(x^k)\| \le \frac{4}{\sqrt{\pi k}}\|x^{0} - x^*\| \le \frac{3}{\sqrt{k}}\|x^{0} - x^*\| \, .
    \]

    \item \cite{SabachShtern:2017,ContrerasCominetti:2023} Consider the Halpern sequence $(x^k)_{k \ge 0}$ given by $x^{k} = \frac{2}{k+2}x^{0} + \frac{k}{k+2}f(x^{k-1})$ for $k \ge 1$. Then, for every $k \ge 0$ we have
    \[
      \| x^k - f(x^k)\| \le \frac{8}{k+1}\|x^{0} - x^*\| \, .
    \]
  \end{enumerate}
\end{theorem}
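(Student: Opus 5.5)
Both parts are classical convergence results. Since the statement attributes them to the literature, I would reprove the Halpern bound with an elementary recursion, which works in any normed space. For the \KM bound I would rely on Baillon--Bruck and only indicate the structure of their argument. Throughout, write $d \coloneqq \|x^0 - x^*\|$.

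\emph{Halpern iteration.} Put $\beta_k \coloneqq \frac{k}{k+2}$, so that $x^k = (1-\beta_k)x^0 + \beta_k f(x^k{}^{-1})$ is replaced below by the cleaner form $x^k = (1-\beta_k)x^0 + \beta_k f(x^{k-1})$.
\begin{enumerate}
\item \emph{Iterates stay in the ball of radius $d$ around $x^*$.} Since $x^k - x^* = (1-\beta_k)(x^0 - x^*) + \beta_k\bigl(f(x^{k-1}) - f(x^*)\bigr)$, non-expansiveness and induction give $\|x^k - x^*\| \le d$ for all $k$. Consequently $\|f(x^j) - x^0\| \le \|x^j - x^*\| + d \le 2d$.
\item \emph{Recursion for consecutive differences.} Let $D_k \coloneqq \|x^k - x^{k-1}\|$. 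We have
\[
x^k - x^{k-1} = \beta_k\bigl(f(x^{k-1}) - f(x^{k-2})\bigr) + (\beta_k - \beta_{k-1})\bigl(f(x^{k-2}) - x^0\bigr).
\]
Using $\beta_k - \beta_{k-1} = \frac{2}{(k+1)(k+2)}$ and the bound $2d$ from step 1, this yields $D_k \le \beta_k D_{k-1} + \frac{4d}{(k+1)(k+2)}$.
\item \emph{Telescoping.} Multiplying by $(k+1)(k+2)$ gives $(k+1)(k+2)D_k \le k(k+1)D_{k-1} + 4d$. Since $D_1 \le \frac{2}{3}\cdot 2d$ fixes the base case, summing gives $D_k \le \frac{4dk}{(k+1)(k+2)} + O(d/k^2)$.
\item \emph{Conclusion.} Decompose
\[
x^k - f(x^k) = \frac{2}{k+2}\bigl(x^0 - f(x^{k-1})\bigr) + \bigl(f(x^{k-1}) - f(x^k)\bigr).
\]
The first term has norm at most $\frac{4d}{k+2}$ and the second at most $D_k$. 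Together they give $\|x^k - f(x^k)\| \le \frac{8d}{k+1}$. The case $k=0$ is trivial, since $\|x^0 - f(x^0)\| \le 2d$.
\end{enumerate}
The only care needed is bookkeeping the constants, especially the base case, so that they come out at most $8$.

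\emph{\KM iteration.} Here no simple one-step recursion works in a general normed space; this is the main obstacle. I would follow Baillon--Bruck. Expand $x^k$ as a binomial average of the "pure" iterates, and compare $x^k - x^{k-1}$ with the differences of these averages. Non-expansiveness reduces $\|x^k - f(x^k)\|$ to $d$ times a combinatorial quantity. That quantity is the total variation between consecutive binomial distributions $\mathrm{Bin}(k,\tfrac12)$ and their shifts. It is bounded by the central binomial coefficient estimate $\binom{k}{\lfloor k/2\rfloor}2^{-k} \le \sqrt{2/(\pi k)}$, which gives the constant $\frac{4}{\sqrt{\pi k}}$.

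I would not reproduce the delicate optimal-transport-type induction of Baillon--Bruck, or its sharpening by Cominetti--Soto--Vaisman. Instead I would cite it and check only the numerical inequality $\frac{4}{\sqrt{\pi}} \le 3$.
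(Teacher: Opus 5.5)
Your proposal is correct. It differs from the paper only in how much it actually proves. The paper gives no argument for this theorem and simply imports both bounds from the cited works. For part (1) you do exactly what the paper does; for part (2) you supply a self-contained proof.

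Your recursion on $D_k=\|x^k-x^{k-1}\|$ is the standard Sabach--Shtern-type argument: iterates stay in the ball $\{x:\|x-x^*\|\le d\}$, $\beta_k-\beta_{k-1}=\frac{2}{(k+1)(k+2)}$, and you telescope $(k+1)(k+2)D_k$. It uses nothing but the triangle inequality. That makes explicit that the bound holds in an arbitrary normed space, which is precisely what the paper needs when it applies the theorem on $\R^n/\R\ones$ with the norm induced by $\hilb{\cdot}$.

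Two details should be fixed.

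First, $x^1-x^0=\frac13\bigl(f(x^0)-x^0\bigr)$, so $D_1\le\frac23 d$ rather than $\frac43 d$. Your looser bound is true and still suffices:
\begin{itemize}
\item it yields $(k+1)(k+2)D_k\le 4d(k+1)$, hence $D_k\le\frac{4d}{k+2}$;
\item therefore $\|x^k-f(x^k)\|\le\frac{8d}{k+2}\le\frac{8d}{k+1}$;
\item the sharp base case gives the slightly better $\frac{4(2k+1)d}{(k+1)(k+2)}$.
\end{itemize}
You should write this computation out instead of hiding it in ``$O(d/k^2)$''.

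Second, your sketch of Baillon--Bruck describes only the affine case. The expansion $x^k=\sum_j\binom{k}{j}2^{-k}f^j(x^0)$ fails for nonlinear $f$, and the Shapley operator is merely piecewise affine. Where the expansion does hold, it gives $\|x^k-f(x^k)\|\le 2d\binom{k}{\lfloor k/2\rfloor}2^{-k}\le\frac{2\sqrt2\,d}{\sqrt{\pi k}}$, so your numbers do not produce $\frac{4}{\sqrt{\pi k}}$.

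In the nonlinear case one only has recursive inequalities for the mutual distances between iterates, and their solutions are then bounded probabilistically. This gives a bound $\frac{D}{\sqrt{\pi k\,\alpha(1-\alpha)}}$ with $\alpha=\frac12$ and $D=2d$, the diameter of the $f$-invariant ball $\{x:\|x-x^*\|\le d\}$. That is where $\frac{4}{\sqrt{\pi k}}$ comes from. Since you ultimately cite this result rather than rely on the sketch, this is an expository inaccuracy, not a gap.
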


If the value of a constant value game is known, then the ergodic equation \cref{eq:ergodic} shows that bias is a fixed point of a non-expansive operator. Hence, by applying \cref{halpern} directly we would get an approximate bias. However, implementing this method requires to know the value of the game. Nevertheless, the two iteration schemes still provide an approximate bias even if we do not know the value of the game. This is because the Shapley operator is additively homogeneous, so that we can interpret the ergodic equation as a fixed point equation in an  ``additively projective'' space. This interpretation has appeared in related contexts in \cite{GaubertStott:2020,AkianGaubertNaepelsTerver:2023,LeeBravoCominetti:2025} and we do the same here.

\begin{theorem}\label{halpern_for_bias}
Let $\shapley$ be a Shapley operator of a mean payoff game with constant value and let $\bias^*$ be any bias of this game. Then, we have the following convergence results:
\begin{enumerate}
\item Consider the \KM sequence $(\bias^k)_{k \ge 0}$ given by $u^k = \frac{1}{2}u^{k-1} + \frac{1}{2}\shapley(u^{k-1})$ for $k \ge 1$. Then, for every $k \ge 1$ we have
\[
\hilb{u^k - \shapley(u^k)} \le \frac{3}{\sqrt{k}}\hilb{u^{0} - u^*} \, .
\]
\item Consider the Halpern sequence $(\bias^k)_{k \ge 0}$ given by $u^k = \frac{2}{k+2}u^{0} + \frac{k}{k+2}\shapley(u^{k-1})$ for $k \ge 1$. Then, for every $k \ge 0$ we have
\[
\hilb{u^k - \shapley(u^k)} \le \frac{8}{k+1}\hilb{u^{0} - u^*} \, .
\]
\end{enumerate}
In both cases, for any sufficiently large $k$, all policies induced by $\bias^k$ are optimal.
\end{theorem}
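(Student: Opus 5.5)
The plan is to reduce to \cref{halpern} on the quotient space. Let $\nspace \coloneqq \R^n/\R\ones$ with the norm $\|[x]\| \coloneqq \hilb{x}$. By \cref{hilbert} this is well defined and is a genuine norm on $\nspace$.

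\textbf{The induced map.} Since $\shapley$ is additively homogeneous, it induces a map $\bar\shapley([x]) \coloneqq [\shapley(x)]$. This is well defined, because $\shapley(x+\alpha\ones) = \shapley(x)+\alpha\ones$. By \cref{nonexp_hilbert}, $\bar\shapley$ is non-expansive for $\|\cdot\|$. The game has constant value $\xi$ and $\bias^*$ is a bias, so by \cref{ergodic_equation} we have $\shapley(\bias^*) = \xi\ones + \bias^*$. Hence $[\bias^*]$ is a fixed point of $\bar\shapley$.

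\textbf{Compatibility of the iterations with the quotient.} I will check that both sequences project to the corresponding sequences for $\bar\shapley$ in $\nspace$. The projection $x\mapsto[x]$ is linear, so
\[
[\bias^k] = \tfrac12[\bias^{k-1}] + \tfrac12\bar\shapley([\bias^{k-1}])
\]
for the \KM sequence. Likewise, the Halpern sequence satisfies $[\bias^k] = \tfrac{2}{k+2}[\bias^0] + \tfrac{k}{k+2}\bar\shapley([\bias^{k-1}])$, with anchor $[\bias^0]$. Applying \cref{halpern} then gives $\|[\bias^k]-\bar\shapley([\bias^k])\| \le \frac{3}{\sqrt k}\|[\bias^0]-[\bias^*]\|$ in the first case and $\le \frac{8}{k+1}\|[\bias^0]-[\bias^*]\|$ in the second. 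Unwinding the definition of the norm turns these into exactly the claimed bounds in $\hilb{\cdot}$. The only point needing care is that \cref{halpern} is stated for a normed space and we apply it to the quotient; this causes no trouble.

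\textbf{Optimality of induced policies for large $k$.} This is the part that needs an actual argument. From the bounds, $\bias^k$ is an $\varepsilon_k$-approximate bias with $\varepsilon_k\to 0$. By \cref{bias_approx_value}(iv), every policy induced by $\bias^k$ is $2\varepsilon_k$-optimal.

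Now I use finiteness. There are finitely many policies, and $g_i(\sigma,\tau)$ takes only finitely many values, namely the mean weights of cycles. Let $\eta>0$ be the minimum, over all non-optimal policies $\sigma$ of Max, of the optimality gap $\max_{i,\tau}\bigl(\gameval_i - g_i(\sigma,\tau)\bigr)$. Define the gap for Min symmetrically, and take $\eta$ to be the smaller of the two. This $\eta$ is positive: any $\sigma$ that is not optimal has some $i,\tau$ with $g_i(\sigma,\tau)<\gameval_i$, and there are finitely many such $\sigma$.

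If no policy is non-optimal, there is nothing to prove. Otherwise, once $2\varepsilon_k<\eta$, every $2\varepsilon_k$-optimal policy must be optimal, so all policies induced by $\bias^k$ are optimal.

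The main conceptual step is the quotient-space reduction. It is this reduction that lets the unknown value $\xi$ disappear from the iteration.
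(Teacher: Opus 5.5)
Your proposal is correct and follows the paper's proof essentially step for step. You pass to the quotient $\R^n/\R\ones$ normed by $\hilb{\cdot}$, note that the induced map is non-expansive with fixed point $[\bias^*]$, apply \cref{halpern} to the projected iterates, and conclude optimality from \cref{bias_approx_value}(iv) together with finiteness of the set of policies; your explicit gap $\eta$ just spells out the paper's one-line claim that some $\varepsilon>0$ makes every $2\varepsilon$-optimal policy optimal.
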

\begin{proof}
The proof is the same as the proofs of analogous results in \cite{GaubertStott:2020,AkianGaubertNaepelsTerver:2023,LeeBravoCominetti:2025}. We give the details for the sake of completeness. Let $\nspace$ be the space $\nspace = \R^n/\alpha \ones$, i.e., the space of equivalence classes of vectors in $\R^n$ in which two vectors are equivalent if their difference is of the form $\alpha \ones$. In symbols, $x \sim y \iff \exists \alpha \in \R, x - y = \alpha \ones$. For any $x \in \R^n$ we denote by $[x] \in \nspace$ the equivalence class to which $x$ belongs. We define the vector addition in $\nspace$ as $[x] + [y] = [x+y]$ for $x, y \in \R^n$ and the scalar multiplication as $\beta [x] = [\beta x]$ for $x \in \R^n, \beta \in \R$. One can check that these operations are well-defined and that they turn $\nspace$ into a vector space. Moreover, we define a norm on $\nspace$ as $\| [x] \| = \hilb{x}$. One can check that this function is well-defined. Moreover, \cref{hilbert} shows that this function is indeed a norm on $\nspace$. Consider the function $f \colon \nspace \to \nspace$ defined as $f([x]) = [\shapley(x)]$. Since the Shapley operator $\shapley$ is additively homogeneous, the function $f$ is well-defined. Furthermore, \cref{nonexp_hilbert} shows that the function $f$ is non-expansive as
\begin{align*}
\| f([x]) - f([y]) \| &= \| [\shapley(x)] - [\shapley(y)] \| = \| [\shapley(x)- \shapley(y)] \| = \hilb{\shapley(x) - \shapley(y)} \\
&\le \hilb{x - y} = \| [x - y]\| = \|[x] - [y] \| \, .
\end{align*}
Denote the value of the game by $\gameval \ones$. Since $\bias^*$ is a bias of the game, $[\bias^*]$ is a fixed point of $f$ as $f([\bias^*]) = [\shapley(\bias^*)] = [\gameval \ones + \bias^*] = [\bias^*]$. Consider the \KM sequence $u^{k} = \frac{1}{2}u^{k-1} + \frac{1}{2}\shapley(u^{k-1})$. Then, for all $k \ge 1$ we have
\[
[u^k] = [\frac{1}{2}u^{k-1} + \frac{1}{2}\shapley(u^{k-1})] = \frac{1}{2}[u^{k-1}] + \frac{1}{2}[\shapley(u^{k-1})] = \frac{1}{2}[u^{k-1}] + \frac{1}{2}f([u^{k-1}]) \, .
\]
Hence, we can apply \cref{halpern} to the sequence $[u^k]$. As a result, for any $k \ge 1$ we get $\| [\bias^k] - f([\bias^k]) \| \le \frac{3}{\sqrt{k}} \|[u^{0}] - [\bias^*]\|$. Since $\| [\bias^k] - f([\bias^k]) \| = \| [\bias^k - \shapley(\bias^k)]\| = \hilb{\bias^k - \shapley(\bias^k)}$ and $\|[u^{0}] - [\bias^*]\| = \hilb{u^{0} - \bias^*}$, we get claim (1). The proof of (2) for the Halpern sequence follows in the same way.

To prove optimality of the sequences for large $k$, note that for every $\varepsilon > 0$ there exists $k_{\varepsilon}$ such that the inequality $\hilb{\bias^k - \shapley(\bias^k)} \le \varepsilon$ holds for all $k \ge k_{\varepsilon}$. Hence, by \cref{bias_approx_value}, any policy induced by $\bias^k$ is $(2\varepsilon)$-optimal. Moreover, the set of policies is finite. Hence, there exists $\varepsilon > 0$ such that any $(2\varepsilon)$-optimal policy is in fact optimal. Thus, for a sufficiently large $k$ any policy induced by $\bias^k$ is optimal.
\end{proof}

\begin{figure}[t]
\centering
\begin{tikzpicture}[scale=0.75,>=stealth',max/.style={draw,rectangle,minimum size=0.5cm},min/.style={draw,circle,minimum size=0.5cm},av/.style={draw, circle,fill, inner sep = 0pt,minimum size = 0.2cm}]

\node[max] (max1) at (-3, 0) {$1$};
\node[max] (max2) at (0, 0) {$2$};

\node[max] (max3) at (3,0) {$3$};

\draw[->] (max1) to[out=210,in=150,looseness=7] node[left, font=\small]{$2$} (max1);
\draw[->] (max1) to node[above, font=\small]{$2$} (max2);
\draw[->] (max2) to[out=20,in=160] node[above, font=\small]{$5$} (max3);
\draw[->] (max3) to[out=200,in=-20] node[below, font=\small]{$1$} (max2);

\end{tikzpicture}
\caption{An example of a one-player game for which the Zwick--Paterson value iteration does not find the optimal policy.} \label{fig:cycling_example}
\end{figure}

\begin{example}
Consider the game presented in \cref{fig:cycling_example}, which is taken from \cite[Example~4]{DellaVecchiaDiMarco:2012}. This is a one-player game for player Max. The player has only two policies: they can use the edge $(1,2)$ or the edge $(1,1)$. The policy that uses $(1,1)$ gives an average payoff equal to $2$ at state $1$, while the policy that uses $(1,2)$ gives an average payoff equal to $3$ at state $1$. Therefore, using $(1,2)$ is the only optimal policy. Moreover, the game has constant value equal to $3$ at every initial state. The Shapley operator of this game is given by $\shapley \colon \R^3 \to \R^3$,
\[
\shapley(x_1, x_2, x_3) = (
\max\{2 + x_1, 2 + x_2\}, 5 + x_3, 1 + x_2) \, .
\]
Let $u^0 = 0$ and apply the Zwick--Paterson value iteration, $u^k = \shapley(u^{k-1})$. Then, $u^1 = (2,5,1)$ and for $k \ge 2$ by induction we get
\[
u^k = \begin{cases}
(3k+1, 3k, 3k) &\text{if $k$ is even,} \\
(3k,3k+2,3k-2) &\text{if $k$ is odd.}
\end{cases}
\]
Note that $\frac{u^k}{k}$ converges to the value of the game $\lambda = (3,3,3)$, as it is supposed to.
Furthermore, the vector $u^k$ induces the optimal policy for odd values of $k$, but it does \emph{not} induce the optimal policy for even values of $k$. This shows that the naive application of the Zwick--Paterson value iteration fails at finding optimal policies. Furthermore, we have
\[
\shapley(u^k) - u^k = \begin{cases}
(2,5,1) &\text{if $k$ is even,}\\
(4,1, 5) &\text{if $k$ is odd.}
\end{cases}
\]
Hence, the sequence $\hilb{\shapley(u^k) - u^k}$ does not converge.

Let us compare this to the \KM iteration, $u^0 = 0$ and $u^k = \frac{1}{2}u^{k-1} + \frac{1}{2}\shapley(u^{k-1})$. Then, we get $u^1 = (1,\frac{5}{2},\frac{1}{2})$. Moreover, one can check by induction that we have
\[
u^k = (\frac{3k}{2} - 2^{-k},\frac{3k+2}{2},\frac{3k-2}{2})
\]
for all $k \ge 1$. Hence, $u^k$ induces the optimal policy for all $k \ge 1$. Furthermore, $\shapley(u^k) - u^k = (3 + 2^{-k},3,3)$, so $\hilb{\shapley(u^k) - u^k}$ goes to zero, as claimed in \cref{halpern_for_bias}.
\end{example}

\begin{figure}[t]
\centering
\begin{tikzpicture}[scale=0.75,>=stealth',max/.style={draw,rectangle,minimum size=0.5cm},min/.style={draw,circle,minimum size=0.5cm},av/.style={draw, circle,fill, inner sep = 0pt,minimum size = 0.2cm}]

\node[max] (max1) at (-3, 0) {$1$};
\node[max] (max2) at (0, 0) {$2$};

\draw[->] (max1) to[out=210,in=150,looseness=7] node[left, font=\small]{$0$} (max1);
\draw[->] (max1) to node[above, font=\small]{$-W$} (max2);
\draw[->] (max2) to[out=-30,in=30,looseness=7] node[right, font=\small]{$1$} (max2);

\end{tikzpicture}
\caption{An example of a one-player game for which the value iteration algorithms converge slowly.} \label{fig:slow_convergence}
\end{figure}

\begin{example}
In the previous example, the \KM iteration converged much faster than the bounds given in \cref{halpern_for_bias}. This is not always the case. To illustrate this, consider the game presented in \cref{fig:slow_convergence}. As before, this is a one-player game for player Max. The parameter $W \ge 1$ is a natural number. As previously, the only optimal policy is the one that uses the edge $(1,2)$ and the game has constant value equal to $1$. The Shapley operator of this game is given by
\[
\shapley(x_1,x_2) = (\max\{x_1, -W + x_2\},1+x_2) \, .
\]
Let $u^0 = 0$ and consider the \KM sequence $u^k = \frac{1}{2}u^{k-1} + \frac{1}{2}\shapley(u^{k-1})$. Then, for all $k \le 2W$ we have $u^k = (0, \frac{k}{2})$ and the non-optimal policy that uses $(1,1)$ is induced by $u^k$. Afterwards, for $k \ge 2W+1$ we have $u^{k} = (\frac{k -2W -2}{2} + 2^{-(k-2W)},\frac{k}{2})$ and the policy that uses $(1,2)$ becomes the only policy induced by $u^k$. Furthermore, we get
\[
\shapley(u^k) - u^k = \begin{cases}
(0,1) &\text{for $k \le 2W$,}\\
(1 - 2^{-(k-2W)}, 1) &\text{for $k \ge 2W+1$.}
\end{cases}
\]
Therefore, the sequence $\hilb{\shapley(u^k) - u^k}$ is constant and equal to $1$ for the first $2W$ iterations, and starts going to $0$ only afterwards. Note that this is consistent with \cref{halpern_for_bias}. Indeed, the bias of this game is unique (up to adding a constant to every coordinate) and equal to $\bias^* = (-W-1,0)$. Hence $\hilb{\bias^*} = W + 1$ and \cref{halpern_for_bias} gives $\hilb{\shapley(u^k) - u^k} \le \frac{3(W+1)}{\sqrt{k}}$, which is true for all $k$.

Likewise, consider the Halpern iteration $u^0 = 0$ and $u^k = \frac{k}{k+2}\shapley(u^{k-1})$. Then, for $k \le 3W+1$ we have $u^k = (0,\frac{k}{3})$, and for $k \ge 3W+1$ we have $u^k = (\frac{k(k-3W-1)}{3(k+2)},\frac{k}{3})$. Therefore,
\[
\shapley(u^k) - u^k = \begin{cases}
(0,1) &\text{for $k \le 3W$,}\\
(1 - \frac{2W+2}{k+2}, 1) &\text{for $k \ge 3W+1$.}
\end{cases}
\]
Hence, the sequence $\hilb{\shapley(u^k) - u^k}$ is constant and equal to $1$ for the first $3W$ iterations and starts going to zero only afterwards. This is consistent with \cref{halpern_for_bias} because this theorem gives the bound $\hilb{\shapley(u^k) - u^k} \le \frac{8(W+1)}{k+1}$.
\end{example}

\section{Condition number for games with a single equilibrium cycle}\label{sec:one_cycle_condition}

In the previous section, we showed that the \KM iteration and the Halpern iteration are capable of solving constant value games. In this section, we extend this result by showing that these iteration schemes converge quickly to the optimal policies provided that a constant value game is ``well-conditioned''. Since for a generic choice of weights all cycles of the game arena have different means, a ``generic'' constant value game has in fact only one equilibrium cycle. A condition number for such games was proposed in \cite{LoffSkomra:2024}, along with an algorithm that converges to the optimal solution in polynomial time in this condition number. In this section, we show that \KM iteration and Halpern iteration have the same property as the algorithm from \cite{LoffSkomra:2024}, namely that the same condition number governs the behavior of both iteration schemes and of the algorithm from \cite{LoffSkomra:2024}. We note that the paper \cite{LoffSkomra:2024} only considers games played on ergodic graphs, but most of the analysis from \cite{LoffSkomra:2024} is applicable to any mean payoff game that happens to have exactly one equilibrium cycle. In particular, the results from \cite{LoffSkomra:2024} that we cite below do not require the assumption that the underlying graph is ergodic.

\begin{lemma}[{\cite[Lemma~4.18]{LoffSkomra:2024}}]\label{one_cycle_graph}
Suppose that every vertex of $\dgraph$ has exactly one outgoing edge and that $\dgraph$ has only one directed cycle. Then, the bias is unique up to adding a constant to every coordinate. More precisely, if $(\gameval,\bias)$ is a solution of the ergodic equation and $p \in [n]$ is a vertex on the directed cycle, then for every $l \in [n]$ the difference $\bias_l - \bias_p$ is the distance from $l$ to $p$ in the digraph $\dgraph$ with weights $\payoff_{ij} - \gameval$.
\end{lemma}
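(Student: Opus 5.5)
We use the structure of a functional graph with a single cycle: every vertex has exactly one out-edge, so the walk from any $l$ is unique and eventually enters the cycle $C$. With only one out-edge per vertex, the ergodic equation \cref{eq:ergodic} has no max or min left in it. It becomes the system of linear equations
\[
\gameval + \bias_i = \payoff_{ij} + \bias_j \qquad \text{for the unique edge } (i,j) \in \edges,
\]
that is, $\bias_i - \bias_j = \payoff_{ij} - \gameval$ along every edge.

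First we pin down $\gameval$. Summing the edge equations around the cycle $C$ telescopes the $\bias$ terms to zero. This gives $\abs{C}\gameval = \sum_{(i,j)\in C}\payoff_{ij}$, so $\gameval$ is the mean weight of $C$. Consequently the weights $\payoff_{ij} - \gameval$ sum to zero around $C$.

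Next we compute $\bias_l - \bias_p$ for an arbitrary $l \in [n]$. Let $l = i_0, i_1, \dots, i_k = p$ be the unique walk in $\dgraph$ from $l$ that stops the first time it reaches $p$. Such a walk exists because the walk from $l$ reaches $C$ and then goes around $C$, passing through $p$. Summing the edge equations along this walk telescopes to
\[
\bias_l - \bias_p = \sum_{t=0}^{k-1} \bigl(\payoff_{i_t i_{t+1}} - \gameval\bigr).
\]
This is the weight of the path from $l$ to $p$ under the weights $\payoff_{ij} - \gameval$. It is the distance because every other walk from $l$ to $p$ is forced: it follows the same path and then makes some number of full loops around $C$, each of weight zero. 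So the shortest-path distance is well defined (there are no negative cycles) and equals the displayed sum.

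Uniqueness of the bias up to an additive constant follows immediately. Any solution $\bias$ is determined by $\bias_p$ through the formula above, and $\gameval$ is forced. Conversely, for each choice of $\bias_p$ the formula defines a vector that satisfies every edge equation. Off the cycle this holds because the distance from $i$ to $p$ equals $\payoff_{ij} - \gameval$ plus the distance from $j = $ successor of $i$. On the cycle it holds because the loop weight is zero, which makes the formula consistent at $p$ itself.

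The only mildly delicate point is that last verification at the vertex of $C$ whose successor is $p$ (and at $p$ when $C$ is a loop). There we need the zero-mean property of the modified cycle weights, and that property comes precisely from the first step.
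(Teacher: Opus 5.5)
Your proof is correct and is the standard argument: with out-degree one the ergodic equation becomes the linear system $\bias_i-\bias_j=\payoff_{ij}-\gameval$, the cycle sum forces $\gameval$ to be the mean weight of the cycle, and telescoping along the unique simple path from $l$ to $p$ gives the distance (the paper gives no proof of its own here and simply cites \cite[Lemma~4.18]{LoffSkomra:2024}). One small slip in your last paragraph: writing $d(i)$ for the weight of the path from $i$ to $p$, the zero weight of the cycle is needed only for the edge $(p,s)$ leaving $p$, where $d(s)=-(\payoff_{ps}-\gameval)$, whereas at the predecessor $q$ of $p$ the equation $d(q)=(\payoff_{qp}-\gameval)+d(p)$ holds trivially because $d(p)=0$.
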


\begin{definition}[$\Xi$, $\gameval^{\sigma,\tau}$, $\bias^{\sigma,\tau}$]\label{one_cycle_bias_def}
  We denote by $\Xi$ the set of pairs $(\sigma,\tau)$ of policies such that the induced subgraph $\dgraph^{\sigma,\tau}$ has only one cycle. Given a pair $(\sigma, \tau) \in \Xi$, we denote by $\gameval^{\sigma,\tau} \colon \R^{m} \to \R$ the function that takes as an argument the vector of weights and outputs the value of the MPG played on $\dgraph^{\sigma,\tau}$; equivalently, this is the mean weight of the unique directed cycle in $\dgraph^{\sigma,\tau}$. Analogously, we denote by $\bias^{\sigma,\tau} \colon \R^{m} \to \R^{n}$ the function that takes as an argument the weight vector and outputs the bias of the MPG played on $\dgraph^{\sigma,\tau}$ that satisfies $(\bias^{\sigma,\tau}(\payoff))_{i_0} = 0$, where $i_0$ is the vertex with the minimal index on the cycle of $\dgraph^{\sigma,\tau}$.
\end{definition}

\begin{lemma}[{\cite[Lemma~4.20]{LoffSkomra:2024}}]\label{linear_functions}
For every pair $(\sigma,\tau) \in \Xi$, the functions $\gameval^{\sigma,\tau}, \bias^{\sigma,\tau}$ are linear. Furthermore, for all $\payoff, \tilde{\payoff} \in \R^{m}$ we have $\supnorm{\gameval^{\sigma,\tau}(\payoff) - \gameval^{\sigma,\tau}(\tilde{\payoff})} \le \supnorm{\payoff - \tilde{\payoff}}$ and $\supnorm{\bias^{\sigma,\tau}(\payoff) - \bias^{\sigma,\tau}(\tilde{\payoff})} \le 2n\supnorm{\payoff - \tilde{\payoff}}$.
\end{lemma}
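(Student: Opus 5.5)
We will make the functions explicit using \cref{one_cycle_graph} and then read off both linearity and the Lipschitz constants. Fix $(\sigma,\tau) \in \Xi$. Let $C = (i_0, i_1, \dots, i_{l-1})$ be the unique directed cycle of $\dgraph^{\sigma,\tau}$, with $i_0$ the vertex of minimal index on it. Since every vertex of $\dgraph^{\sigma,\tau}$ has exactly one outgoing edge, the game on $\dgraph^{\sigma,\tau}$ has no choices. The value from any vertex is therefore the mean weight of $C$:
\[
\gameval^{\sigma,\tau}(\payoff) = \frac{1}{l}\sum_{(k,k') \in C} \payoff_{kk'} .
\]
This is visibly a linear function of $\payoff$. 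Being an average of $l$ coordinates, it satisfies $\abs{\gameval^{\sigma,\tau}(\payoff) - \gameval^{\sigma,\tau}(\tilde\payoff)} \le \supnorm{\payoff-\tilde\payoff}$. Since $\gameval^{\sigma,\tau}$ is real-valued, this is the first claimed bound.

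For the bias, we use the normalization $\bias_{i_0} = 0$ and \cref{one_cycle_graph} with $p = i_0$. Each $l \in [n]$ has a unique walk in $\dgraph^{\sigma,\tau}$ that follows the unique outgoing edges. This walk first reaches the cycle at some vertex and then proceeds along $C$ until it hits $i_0$. Let $P_l$ denote this simple path from $l$ to $i_0$; it has at most $n-1$ edges, and $P_{i_0}$ is empty. By the lemma,
\[
\bias^{\sigma,\tau}(\payoff)_l = \sum_{(k,k') \in P_l} \bigl(\payoff_{kk'} - \gameval^{\sigma,\tau}(\payoff)\bigr).
\]
The path $P_l$ depends only on the graph, not on $\payoff$. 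So this expression is a sum of coordinates of $\payoff$ minus $\card{P_l}$ times a linear function of $\payoff$, and is therefore linear.

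For the Lipschitz bound, set $d = \payoff - \tilde\payoff$. Linearity gives
\[
\abs{\bias^{\sigma,\tau}(\payoff)_l - \bias^{\sigma,\tau}(\tilde\payoff)_l} \le \sum_{(k,k')\in P_l}\bigl(\abs{d_{kk'}} + \abs{\gameval^{\sigma,\tau}(d)}\bigr) \le 2\card{P_l}\supnorm{d} \le 2n\supnorm{d}.
\]

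The only point requiring care is that the weight-independent description of $P_l$ is valid. This needs two facts: the graph $\dgraph^{\sigma,\tau}$ is fixed once $(\sigma,\tau)$ is fixed, and the distance in \cref{one_cycle_graph} is along the unique walk, so no minimization over paths enters. Minimization could otherwise break linearity. Because each vertex has out-degree one, the only path from $l$ to $i_0$ that avoids revisiting $i_0$ is $P_l$, so this holds. Uniqueness of the bias up to constants, also from \cref{one_cycle_graph}, ensures that $\bias^{\sigma,\tau}$ is well defined.
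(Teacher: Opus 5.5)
Your proof is correct. Writing $\gameval^{\sigma,\tau}(\payoff)$ as the mean weight of the unique cycle, and $\bias^{\sigma,\tau}(\payoff)_l$ as the $(\payoff - \gameval^{\sigma,\tau}(\payoff))$-weight of the weight-independent path $P_l$ from $l$ to $i_0$ given by \cref{one_cycle_graph}, yields linearity and the constants $1$ and $2(n-1) \le 2n$. The paper does not reprove this lemma but cites it from \cite{LoffSkomra:2024} immediately after \cref{one_cycle_graph}; your derivation is the natural one that citation relies on.
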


\begin{definition}[$\polyh^{\sigma,\tau}$]\label{def:polyhedral_cells}
  Given a pair of policies $(\sigma,\tau) \in \Xi$ we denote by $\polyh^{\sigma,\tau} \subseteq \R^m$ the open polyhedral cone given by the linear inequalities
  \begin{align*}
  &\forall i \in \Maxvertices, (i,j) \in \edges, j \neq \sigma(i), &\gameval^{\sigma,\tau}(\payoff) + \bias^{\sigma,\tau}_i(\payoff) > \payoff_{ij} + \bias^{\sigma,\tau}_j(\payoff) \, , \\
  &\forall i \in \Minvertices, (i,j) \in \edges, j \neq \tau(i), &\gameval^{\sigma,\tau}(\payoff) + \bias^{\sigma,\tau}_i(\payoff) < \payoff_{ij} + \bias^{\sigma,\tau}_j(\payoff) \, .
  \end{align*}
\end{definition}

\begin{lemma}\label{unique_bias}
If $\payoff \in \polyh^{\sigma,\tau}$ for some $(\sigma,\tau) \in \Xi$, then the game has constant value equal to $\gameval^{\sigma,\tau}(\payoff)$. Furthermore, the ergodic equation has a single solution (up to adding a constant to the bias), and $(\sigma,\tau)$ is the only pair of bias-induced policies. More precisely, the bias is given by $\bias^{\sigma,\tau}(\payoff)$.
\end{lemma}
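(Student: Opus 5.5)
Set $\xi \coloneqq \gameval^{\sigma,\tau}(\payoff)$ and $\bias \coloneqq \bias^{\sigma,\tau}(\payoff)$. The plan is to check directly that $(\xi,\bias)$ solves the ergodic equation, and then to prove uniqueness by analyzing an arbitrary solution.

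\emph{Existence.} By \cref{one_cycle_bias_def}, $(\xi,\bias)$ solves the ergodic equation of the game on $\dgraph^{\sigma,\tau}$. Hence $\xi + \bias_i = \payoff_{i\sigma(i)} + \bias_{\sigma(i)}$ for $i \in \Maxvertices$, and the analogous identity with $\tau$ holds for $i \in \Minvertices$. The strict inequalities defining $\polyh^{\sigma,\tau}$ say that every other edge leaving a Max vertex $i$ gives a strictly smaller value of $\payoff_{ij}+\bias_j$, and every other edge leaving a Min vertex gives a strictly larger one. Therefore $\shapley(\bias) = \xi\ones + \bias$, with $\sigma(i)$ (respectively $\tau(i)$) the unique argmax (respectively argmin) at every vertex. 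By \cref{ergodic_equation}, the game has constant value $\xi$ and $\bias$ is a bias, and $(\sigma,\tau)$ is the unique pair induced by $\bias$.

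\emph{Uniqueness.} Let $\bias'$ be any bias. By \cref{ergodic_equation}, $(\xi,\bias')$ also satisfies the ergodic equation, since $\xi$ is unique. This is the main step. I would use \cref{biases_connected}: the set $B$ of biases is connected. Consider the subset $B_0 \subseteq B$ of biases $v$ that induce exactly the pair $(\sigma,\tau)$ with all strict inequalities at non-chosen edges.

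\begin{itemize}
\item $B_0$ is relatively open in $B$, because strict inequalities persist under small perturbation of $v$.
\item Every $v \in B_0$ solves the ergodic equation on $\dgraph^{\sigma,\tau}$. By \cref{one_cycle_graph}, $v \in \bias + \R\ones$. Conversely, every element of $\bias+\R\ones$ lies in $B_0$. So $B_0 = \bias + \R\ones$.
\item This line is closed in $\R^n$, so $B_0$ is also relatively closed in $B$.
\end{itemize}

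Since $B_0$ is nonempty, open and closed in the connected set $B$, we get $B_0 = B$. Thus every bias equals $\bias$ up to an additive constant.

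Alternatively, I could avoid \cref{biases_connected} by a direct argument. Let $(\sigma',\tau')$ be induced by $\bias'$, and compare $\bias'-\bias$ along the edges of $\sigma$ versus $\sigma'$, showing that a maximum coordinate of $\bias'-\bias$ propagates along edges until it reaches the cycle. That route needs care mixing Max and Min vertices, so I expect connectedness to be the cleaner route.

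\emph{Induced policies.} Every bias is $\bias + c\ones$, and such a vector induces the same policies as $\bias$, namely only $(\sigma,\tau)$. So $(\sigma,\tau)$ is the only bias-induced pair.
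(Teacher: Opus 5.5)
Your proof is correct and follows essentially the paper's route. You verify directly that $(\gameval^{\sigma,\tau}(\payoff),\bias^{\sigma,\tau}(\payoff))$ solves the ergodic equation, with strict argmax/argmin at every vertex. You then combine \cref{biases_connected} with \cref{one_cycle_graph} in a clopen argument showing that every bias lies on the line $\bias^{\sigma,\tau}(\payoff)+\R\ones$. The only cosmetic difference is how the bias set is split: the paper uses an explicit closed $\varepsilon/4$-tube around that line and its interior, while you use the open set cut out by the strict inequalities at non-chosen edges, which is slightly more direct.
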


\begin{proof}
The proof is given in \cite[Theorem~4.24]{LoffSkomra:2024}, among other claims. Some of the other claims require the graph to be ergodic, but this one does not. We give the relevant part of the proof from \cite{LoffSkomra:2024}.

First, note that if $\payoff \in \polyh^{\sigma,\tau}$ then the pair $\bigl(\gameval^{\sigma,\tau}(\payoff), \bias^{\sigma,\tau}(\payoff)\bigr)$ solves the ergodic equation. Indeed, in the graph $\dgraph^{\sigma,\tau}$, the definitions of $\gameval^{\sigma,\tau}$ and $\bias^{\sigma,\tau}$ from \cref{one_cycle_bias_def} give $\gameval^{\sigma,\tau}(\payoff)+\bias_i^{\sigma,\tau}(\payoff) = \payoff_{i\sigma(i)}+\bias_{\sigma(i)}^{\sigma,\tau}(\payoff)$ for $i \in \Maxvertices$, and the analogous equality with $\tau(i)$ for $i \in \Minvertices$. The inequalities defining $\polyh^{\sigma,\tau}$ say that all other outgoing edges are strictly worse for the corresponding player. Hence, the Shapley operator satisfies $\shapley(\bias^{\sigma,\tau}(\payoff)) = \gameval^{\sigma,\tau}(\payoff)\ones + \bias^{\sigma,\tau}(\payoff)$. Thus, by \cref{ergodic_equation}, $\gameval^{\sigma,\tau}(\payoff)$ is the value of the game, and $\bias^{\sigma,\tau}(\payoff)$ is a bias. Furthermore, note that $\bias^{\sigma,\tau}(\payoff)$ induces exactly one pair of policies, namely $(\sigma,\tau)$. To finish the proof, it is enough to show that $\bias^{\sigma,\tau}(\payoff)$ is the unique bias of the game (up to adding a constant to all of its coordinates). To do so, denote $(\gameval, \bias) \coloneqq \bigl(\gameval^{\sigma,\tau}(\payoff), \bias^{\sigma,\tau}(\payoff)\bigr)$.  Let $\varepsilon > 0$ be such that
\begin{align*}
&\forall (i,j) \in \edges, i \in \Maxvertices, j \neq \sigma(i), \gameval + \bias_i - \bias_j > \payoff_{ij} + \varepsilon \, , \\
&\forall (i,j) \in \edges, i \in \Minvertices, j \neq \tau(i), \gameval + \bias_i - \bias_j < \payoff_{ij} - \varepsilon \, .
\end{align*}
Let $i_0$ be the vertex with the minimal index on the cycle of $\dgraph^{\sigma,\tau}$, and recall that $(\bias^{\sigma,\tau}(r))_{i_0} = 0$. Then consider the set $B \subseteq \R^n$ defined as
\[
B \coloneqq \{z \in \R^{n} \colon \forall i, \abs{z_i - z_{i_0} - \bias_i} \le \frac{\varepsilon}{4} \} \, .
\]
Note that $B$ contains the line $\{\bias + \alpha \ones \colon \alpha \in \R\}$. Suppose that some bias $v$ of the game is contained in $B$. Then, for every $i \in \Maxvertices, j \neq \sigma(i)$ we have $\lambda + v_i - v_j = \lambda + u_i - u_j + (v_i - v_{i_0} -u_i) - (v_j - v_{i_0} - u_j) > \payoff_{ij}$. Analogously, for every $i \in \Minvertices, j \neq \tau(i)$ we have $\lambda + v_i - v_j  < \payoff_{ij}$. Hence, $v$ is a bias of the game on $\dgraph^{\sigma,\tau}$ and \cref{one_cycle_graph} implies that $v$ is on the line $\{\bias + \alpha \ones \colon \alpha \in \R\}$, which is contained in $\mathsf{int}(B) = \{z \in \R^{n} \colon \forall i, \abs{z_i - z_{i_0} - u_i} < \frac{\varepsilon}{4} \}$. Therefore, the set of biases of the MPG with weights $\payoff$ is included in the union of two disjoint open sets
\[
(\R^{n} \setminus B) \cup \mathsf{int}(B)  \, .
\]
Since the set of biases is connected by \cref{biases_connected}, it can have a nonempty intersection with only one of these sets. Since $u$ itself is in $B$, all biases are therefore included in $B$ and so they all lie on the line $\{\bias + \alpha \ones \colon \alpha \in \R\}$. Therefore, the bias is unique up to adding a constant to all of its coordinates.
\end{proof}

We can now define the condition number of a game with weights that belong to some cone $\polyh^{\sigma, \tau}$. This definition is slightly different to the definition that appeared in \cite{LoffSkomra:2024}. Nevertheless, the two definitions are equivalent, as we discuss after giving the definition.

\begin{definition}[Slack $\sk(\payoff)$ and condition number $\Cond(\payoff)$]\label{def:condition-number}
Suppose that $\payoff \in \polyh^{\sigma,\tau}$ for some $(\sigma,\tau) \in \Xi$. Then, the \emph{slack} of the vector $\payoff$, denoted $\sk(\payoff)$, is the largest positive number $\delta$ that satisfies the inequalities
  \begin{align*}
  &\forall i \in \Maxvertices, (i,j) \in \edges, j \neq \sigma(i), &\gameval^{\sigma,\tau}(\payoff) + \bias^{\sigma,\tau}_i(\payoff) \ge \delta + \payoff_{ij} + \bias^{\sigma,\tau}_j(\payoff) \, , \\
  &\forall i \in \Minvertices, (i,j) \in \edges, j \neq \tau(i), &\gameval^{\sigma,\tau}(\payoff) + \bias^{\sigma,\tau}_i(\payoff) \le -\delta + \payoff_{ij} + \bias^{\sigma,\tau}_j(\payoff) \, .
  \end{align*}
  In other words, $\sk(\payoff)$ is equal to
  \[
  \min\{\abs{\payoff_{ij} - \gameval^{\sigma,\tau}(\payoff) + \bias_j^{\sigma,\tau}(\payoff) - \bias_i^{\sigma,\tau}(\payoff)} \colon (i,j) \in \edges, (i \in \Minvertices \wedge  j \neq \tau(i) ) \vee (i \in \Maxvertices \wedge j \neq \sigma(i) ) \} \, .
  \]
We define a \emph{condition number} of the MPG with weights $\payoff$ as $\Cond(\payoff) = 0$ if every vertex of $\dgraph$ has out-degree $1$ and, otherwise, as
\[
\Cond(\payoff) \coloneqq \frac{\hilb{\payoff}}{\sk(\payoff)} \, .
\]
\end{definition}

\begin{remark}
In \cite{LoffSkomra:2024}, a similar definition of a condition number is studied. Using the notation above, the condition number defined in \cite{LoffSkomra:2024} is equal to $\frac{1}{\sk(\payoff)}{\max\{\abs{\payoff_{ij} - \gameval^{\sigma,\tau}(\payoff)} \colon (i,j) \in \edges\}}$. This quantity differs from $\Cond(\payoff)$ by at most a factor of $2$. Indeed, since $\gameval^{\sigma,\tau}(\payoff)$ is in the interval $\interval{\min_{ij} \payoff_{ij}}{\max_{ij} \payoff_{ij}}$, we have $\abs{\payoff_{ij} - \gameval^{\sigma,\tau}(\payoff)} \le \hilb{\payoff}$ for all $i,j$. On the other hand, the triangle inequality gives $\abs{\payoff_{ij} - \payoff_{i'j'}} \le \abs{\payoff_{ij} - \gameval^{\sigma,\tau}(\payoff)} + \abs{\payoff_{i'j'} - \gameval^{\sigma,\tau}(\payoff)}$ and hence
\[
\hilb{\payoff} \le 2\max\{\abs{\payoff_{ij} - \gameval^{\sigma,\tau}(\payoff)} \colon (i,j) \in \edges\} \, .
\]
As a result, the two condition numbers differ by a factor of at most $2$. Hence, we can use either definition to study the conditioning of randomized games.
\end{remark}

In the next lemma we prove a property of the slack. Namely, if a weight vector in $\polyh^{\sigma, \tau}$ has large slack, then any game with slightly perturbed weights still belongs to $\polyh^{\sigma, \tau}$.

\begin{lemma}\label{perturbation_in_polyh}
Suppose that $\payoff \in \polyh^{\sigma, \tau}$ for some $(\sigma, \tau) \in \Xi$. Let $\tilde{\payoff} \in \R^m$ be such that $\supnorm{\payoff - \tilde{\payoff}} \le \frac{1}{4n+3}\sk(\payoff)$. Then, $\tilde{\payoff} \in \polyh^{\sigma, \tau}$.
\end{lemma}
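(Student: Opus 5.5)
The plan is to use the linearity of $\gameval^{\sigma,\tau}$ and $\bias^{\sigma,\tau}$ from \cref{linear_functions}. Each defining inequality of $\polyh^{\sigma,\tau}$ has the form $\phi_{ij}(\payoff) > 0$ (for Max) or $\phi_{ij}(\payoff) < 0$ (for Min), where
\[
\phi_{ij}(\payoff) \coloneqq \gameval^{\sigma,\tau}(\payoff) + \bias^{\sigma,\tau}_i(\payoff) - \payoff_{ij} - \bias^{\sigma,\tau}_j(\payoff)
\]
is a linear function of the weights. By the definition of the slack, $\abs{\phi_{ij}(\payoff)} \ge \sk(\payoff)$ for every relevant edge $(i,j)$, with the sign required by $\polyh^{\sigma,\tau}$. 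It therefore suffices to show that $\abs{\phi_{ij}(\payoff) - \phi_{ij}(\tilde{\payoff})} < \sk(\payoff)$ for every such edge. Then $\phi_{ij}(\tilde{\payoff})$ keeps the correct strict sign, and $\tilde{\payoff} \in \polyh^{\sigma,\tau}$.

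To bound this difference, write $d \coloneqq \supnorm{\payoff - \tilde{\payoff}}$. By linearity, $\phi_{ij}(\payoff) - \phi_{ij}(\tilde{\payoff}) = \phi_{ij}(\payoff - \tilde{\payoff})$. Applying the Lipschitz bounds of \cref{linear_functions} term by term gives
\[
\abs{\phi_{ij}(\payoff) - \phi_{ij}(\tilde{\payoff})} \le d + 2nd + d + 2nd = (4n+2)d .
\]
Here the two terms $2nd$ come from the $i$-th and $j$-th coordinates of the bias, one $d$ comes from the value, and the other $d$ from the weight $\payoff_{ij}$. Under the hypothesis $d \le \sk(\payoff)/(4n+3)$, this is at most $\frac{4n+2}{4n+3}\sk(\payoff) < \sk(\payoff)$, since $\sk(\payoff) > 0$. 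This yields the strict inequalities.

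The main point to check is that \cref{linear_functions} really controls each coordinate $\bias_i^{\sigma,\tau}$ by $2n d$, since it is stated in sup norm; it does. Combining the bias terms through the bound $\abs{\bias_i - \bias_j}$ might even give a better constant, but $4n+3$ leaves enough room anyway. One should also confirm that the normalization $\bias_{i_0}^{\sigma,\tau} = 0$ is the same for $\payoff$ and $\tilde{\payoff}$. It is, because the cycle of $\dgraph^{\sigma,\tau}$ depends only on $(\sigma,\tau)$ and not on the weights. This is what makes the difference of the functions equal to the function applied to the difference. Finally, edges of the policies themselves impose no constraint. The degenerate case where no non-policy edges exist is trivial, since then $\polyh^{\sigma,\tau} = \R^m$.
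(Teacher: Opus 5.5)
Your proof is correct and is essentially the paper's argument. The paper also uses the Lipschitz bounds of \cref{linear_functions} to lose at most $(4n+1)\varepsilon$ in $\gameval^{\sigma,\tau}+\bias^{\sigma,\tau}_i-\bias^{\sigma,\tau}_j$, plus $\varepsilon$ from the weight $\payoff_{ij}$, leaving a strict margin of $\varepsilon=\sk(\payoff)/(4n+3)$; your single linear form $\phi_{ij}$ with total change at most $(4n+2)d<\sk(\payoff)$ is the same computation.
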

\begin{proof}
Let $\varepsilon \coloneqq \frac{1}{4n+3}\sk(\payoff)$. By \cref{linear_functions} we get that $\supnorm{\gameval^{\sigma,\tau}(\payoff) - \gameval^{\sigma,\tau}(\tilde{\payoff})} \le \varepsilon$ and $\supnorm{\bias^{\sigma,\tau}(\payoff) - \bias^{\sigma,\tau}(\tilde{\payoff})} \le 2n\varepsilon$. Hence, for every $(i,j) \in \edges, i \in \Maxvertices, j \neq \sigma(i)$ we have
\begin{align*}
  \gameval^{\sigma, \tau}(\tilde{\payoff}) + \bias^{\sigma,\tau}(\tilde{\payoff})_i - \bias^{\sigma,\tau}(\tilde{\payoff})_j
  & \ge \gameval^{\sigma, \tau}(\payoff) + \bias^{\sigma,\tau}(\payoff)_i - \bias^{\sigma,\tau}(\payoff)_j - (4n+1)\varepsilon \\
  & \ge r_{ij} + \sk(\payoff)  - (4n+1)\varepsilon \\
  & \ge r_{ij} + (4n+3)\varepsilon  - (4n+1)\varepsilon\\
  & = \payoff_{ij} + 2\varepsilon> \tilde{\payoff}_{ij} \, .
\end{align*}
Analogously, for every $(i,j) \in \edges, i \in \Minvertices, j \neq \tau(i)$ we have $\gameval^{\sigma, \tau}(\tilde{\payoff}) + \bias^{\sigma,\tau}(\tilde{\payoff})_i - \bias^{\sigma,\tau}(\tilde{\payoff})_j < \tilde{\payoff}_{ij}$. Hence, $\tilde{\payoff} \in \polyh^{\sigma, \tau}$.
\end{proof}

As a corollary, we show that any $\varepsilon$-approximate bias of a game with weights in $\polyh^{\sigma, \tau}$ induces the policies $(\sigma,\tau)$.

\begin{corollary}\label{approx_bias_policies}
Suppose that $\payoff \in \polyh^{\sigma, \tau}$ for some $(\sigma, \tau) \in \Xi$. Let $\varepsilon \coloneqq \frac{1}{4n+3}\sk(\payoff)$ and suppose that $\tilde{\bias}$ is an $\varepsilon$-approximate bias of the game. Then, $\tilde{\bias}$ induces a single pair of policies, namely $(\sigma,\tau)$.
\end{corollary}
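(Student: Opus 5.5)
The plan is to combine \cref{bias_approx_value} with \cref{perturbation_in_polyh} and \cref{unique_bias}.

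Let $\tilde{\bias}$ be an $\varepsilon$-approximate bias of the game with weights $\payoff$. By \cref{bias_approx_value}(i)--(ii), there are weights $\tilde{\payoff}$ with $\supnorm{\tilde{\payoff} - \payoff} \le \varepsilon = \frac{1}{4n+3}\sk(\payoff)$ such that two things hold. First, $\tilde{\bias}$ is an exact bias of the constant value game with weights $\tilde{\payoff}$. Second, $\tilde{\bias}$ induces exactly the same policies in the games with weights $\payoff$ and $\tilde{\payoff}$. The second point is automatic from the construction, since each vertex's outgoing edges are shifted by a common amount $\delta_i$, which leaves every $\argmax$ and $\argmin$ unchanged. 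It therefore suffices to show that $\tilde{\bias}$ induces only $(\sigma,\tau)$ in the game with weights $\tilde{\payoff}$.

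Next, \cref{perturbation_in_polyh} applies verbatim with this bound on $\supnorm{\tilde{\payoff} - \payoff}$, giving $\tilde{\payoff} \in \polyh^{\sigma,\tau}$. Then \cref{unique_bias} says that the game with weights $\tilde{\payoff}$ has a unique bias up to additive constants, namely $\bias^{\sigma,\tau}(\tilde{\payoff})$. Consequently $\tilde{\bias} = \bias^{\sigma,\tau}(\tilde{\payoff}) + \alpha\ones$ for some $\alpha \in \R$. Adding a constant does not change induced policies. Moreover, the strict inequalities defining $\polyh^{\sigma,\tau}$, evaluated at $\tilde{\payoff}$, say exactly that for $i \in \Maxvertices$ the edge $(i,\sigma(i))$ is the unique maximizer of $\tilde{\payoff}_{ij} + \tilde{\bias}_j$. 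Similarly, for $i \in \Minvertices$ the edge $(i,\tau(i))$ is the unique minimizer. Here one uses that the equality $\gameval^{\sigma,\tau}(\tilde{\payoff}) + \bias^{\sigma,\tau}_i(\tilde{\payoff}) = \tilde{\payoff}_{i\sigma(i)} + \bias^{\sigma,\tau}_{\sigma(i)}(\tilde{\payoff})$ holds on the policy edges. Hence $(\sigma,\tau)$ is the only pair induced by $\tilde{\bias}$ for $\tilde{\payoff}$, and so also for $\payoff$.

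There is no serious obstacle, since all the work is already in the cited results. The one point needing care is bookkeeping. The uniqueness in \cref{unique_bias} is only up to an additive constant, which is harmless for induced policies. We also need strict uniqueness of the argmax/argmin, which comes from the openness of $\polyh^{\sigma,\tau}$ rather than from the slack bound on $\payoff$.
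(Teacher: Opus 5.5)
Your proposal is correct and follows the paper's own proof. The paper argues the same way: it applies \cref{bias_approx_value} to view $\tilde{\bias}$ as an exact bias for nearby weights $\tilde{\payoff}$ that induce the same policies, uses \cref{perturbation_in_polyh} to place $\tilde{\payoff}$ in $\polyh^{\sigma,\tau}$, and then invokes \cref{unique_bias}. Your extra bookkeeping, about the additive constant and the strictness of the cone inequalities, just unpacks what \cref{unique_bias} already states.
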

\begin{proof}
By \cref{bias_approx_value} we see that $\tilde{\bias}$ is a bias of a game with some weights $\tilde{\payoff}$ that satisfy $\supnorm{\payoff - \tilde{\payoff}} \le \varepsilon$. Moreover, $\tilde{\bias}$ induces the same policies in the games with weights $\payoff$ and $\tilde{\payoff}$. By \cref{perturbation_in_polyh} we have $\tilde{\payoff} \in \polyh^{\sigma,\tau}$ and, by applying \cref{unique_bias} to $\tilde{\payoff}$, the vector $\tilde{\bias}$ induces a single pair of policies, namely $(\sigma,\tau)$.
\end{proof}

We now show that the \KM iteration and the Halpern iteration converge quickly to the solution of a game if it is well-conditioned in the sense defined above.

\begin{theorem}\label{km_for_one_cycle}
Suppose that $\payoff \in \polyh^{\sigma,\tau}$ for some $(\sigma,\tau) \in \Xi$. Then, we have the following convergence results:
\begin{enumerate}
\item Consider the \KM iteration sequence defined by $\bias^0 = 0$ and $\bias^k = \frac{1}{2}\bias^{k-1} + \frac{1}{2}\shapley(\bias^{k-1})$. Then, for all $k \ge 36n^2(4n+3)^2\Cond(\payoff)^2$, any policy induced by $\bias^k$ is optimal.
\item Consider the Halpern iteration sequence defined by $\bias^0 = 0$ and $\bias^k = \frac{k}{k+2}\shapley(\bias^{k-1})$. Then, for all $k \ge 16n(4n+3)\Cond(\payoff) - 1$, any policy induced by $\bias^k$ is optimal.
\end{enumerate}
\end{theorem}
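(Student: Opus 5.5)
The plan is to combine \cref{halpern_for_bias} with \cref{approx_bias_policies}. By \cref{unique_bias}, since $\payoff \in \polyh^{\sigma,\tau}$, the game has constant value. Its bias is unique up to an additive constant, and $(\sigma,\tau)$ is the only bias-induced pair; these policies are optimal by \cref{optimality_eq}. So it suffices to show that, for $k$ above the stated thresholds, $\bias^k$ is an $\varepsilon$-approximate bias with $\varepsilon = \frac{1}{4n+3}\sk(\payoff)$. Then \cref{approx_bias_policies} says $\bias^k$ induces exactly the pair $(\sigma,\tau)$, which is optimal. The degenerate case where every vertex has out-degree $1$ (so $\Cond(\payoff)=0$) is trivial, because there is only one pair of policies. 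We may therefore assume $\sk(\payoff)>0$ is finite and $\Cond(\payoff) = \hilb{\payoff}/\sk(\payoff)$.

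Since $\bias^0 = 0$, we have $\hilb{\bias^0 - \bias^*} = \hilb{\bias^*}$. By \cref{short_bias} in the constant-value case with $\varepsilon=0$, we may take a bias $\bias^*$ with $\hilb{\bias^*} \le 2n\hilb{\payoff}$.

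For the \KM iteration, \cref{halpern_for_bias}(1) gives
\[
\hilb{\bias^k - \shapley(\bias^k)} \le \frac{3}{\sqrt{k}} \cdot 2n\hilb{\payoff} = \frac{6n\hilb{\payoff}}{\sqrt k}.
\]
We want this to be at most $\sk(\payoff)/(4n+3)$. That holds once $\sqrt{k} \ge 6n(4n+3)\Cond(\payoff)$, i.e.\ once $k \ge 36n^2(4n+3)^2\Cond(\payoff)^2$, matching item (1).

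For the Halpern iteration, note that with $\bias^0=0$ the recursion $\bias^k = \frac{2}{k+2}\bias^0 + \frac{k}{k+2}\shapley(\bias^{k-1})$ reduces to the stated form $\bias^k = \frac{k}{k+2}\shapley(\bias^{k-1})$. Then \cref{halpern_for_bias}(2) gives
\[
\hilb{\bias^k - \shapley(\bias^k)} \le \frac{16n\hilb{\payoff}}{k+1}.
\]
This is at most $\sk(\payoff)/(4n+3)$ once $k+1 \ge 16n(4n+3)\Cond(\payoff)$, matching item (2).

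There is no real obstacle, since all the substantive work sits in the cited results. The only points to check are that \cref{short_bias} applies with $\varepsilon=0$, which uses the constant-value conclusion of \cref{unique_bias}, and the trivial out-degree-one case.
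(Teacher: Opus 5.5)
Your proposal is correct and is essentially the paper's own proof: you set $\varepsilon = \sk(\payoff)/(4n+3)$, invoke \cref{approx_bias_policies}, and bound $\hilb{\bias^k - \shapley(\bias^k)}$ via \cref{halpern_for_bias} using a bias with $\hilb{\bias^*} \le 2n\hilb{\payoff}$ from \cref{short_bias}, which yields exactly the two stated thresholds. Your side remarks, namely the out-degree-one case and the simplification of the Halpern recursion when $\bias^0 = 0$, are the same minor points that the paper either handles explicitly or leaves implicit.
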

\begin{proof}
If every vertex of $\dgraph$ has out-degree $1$, then the claim is trivial. Otherwise, let $(\gameval,\bias)$ be the value and a bias of the game. Denote $\varepsilon \coloneqq \frac{1}{4n+3}\sk(\payoff)$ and suppose that $\tilde{\bias}$ is an $\varepsilon$-approximate bias. Then, by \cref{approx_bias_policies}, the vector $\tilde{\bias}$ induces a single pair of policies, namely $(\sigma,\tau)$. Now, if $(u^{k})_k$ is the \KM sequence initialized at $u^0 = 0$, then \cref{short_bias} and \cref{halpern_for_bias} show that for all $k \ge 0$ we have
\[
\hilb{\bias^k - \shapley(\bias^k)} \le \frac{6n\hilb{\payoff}}{\sqrt{k}} \, .
\]
Hence, $u^k$ is an $\varepsilon$-approximate bias for all
\[
k \ge \left(\frac{6n\hilb{\payoff}}{\varepsilon}\right)^2 = 36n^2(4n+3)^2\Cond(\payoff)^2 \, .
\]
If $(u^{k})_k$ is the Halpern sequence initialized at $u^0 = 0$, then the same argument shows that for all $k \ge 1$ we have
\[
\hilb{\bias^k - \shapley(\bias^k)} \le \frac{16n\hilb{\payoff}}{k+1} \, .
\]
Hence, $u^k$ is an $\varepsilon$-approximate bias for all $k \ge 16n(4n+3)\Cond(\payoff) - 1$.
\end{proof}

\begin{remark}
We note that \cref{km_for_one_cycle} already extends the results of \cite{LoffSkomra:2024}, by showing that value iteration algorithms have smoothed polynomial complexity for games played on ergodic graphs. Indeed, if we suppose that the underlying graph of the game is ergodic, then the analysis presented in \cite{LoffSkomra:2024} shows that the union $\cup_{(\sigma,\tau) \in \Xi} \polyh^{\sigma,\tau}$ covers almost the entire space of all possible weights (up to a set of Lebesgue measure zero). Furthermore, \cite{LoffSkomra:2024} gives a probabilistic analysis of the condition number $\Cond(\payoff)$ for ergodic graphs. The aim of the following sections is to extend the analysis of the condition number to arbitrary graphs.
\end{remark}

\section{Condition number for general games}\label{sec:general_condition}

In this section, we generalize the condition number from games with only one equilibrium cycle (studied in the previous section) to arbitrary games. To this end, we generalize the definition of the sets $\polyh^{\sigma, \tau}$ to arbitrary pair of policies. This is done as follows.

Fix any pair of policies $(\sigma, \tau)$ and let $\dgraph^{\sigma, \tau}$ be the directed graph obtained from $\dgraph$ by fixing $(\sigma, \tau)$. Suppose that this graph has exactly $p$ directed cycles $\mathcal{C}_{1}, \dots, \mathcal{C}_{p}$. For any such cycle $\mathcal{C}_{\ell}$, let $\vertices^{\sigma,\tau}_{\ell}$ denote the set of vertices $i$ such that the path from $i$ goes to the cycle $\mathcal{C}_{\ell}$. Note that the sets $\vertices^{\sigma,\tau}_{\ell}$ are disjoint and that they form a partition of all the vertices of $\dgraph$. We consider the directed graph $\dgraph^{\sigma, \tau}_{\ell}$ defined as
\begin{equation}\label{eq:zero_player_decomposition}
\dgraph^{\sigma, \tau}_{\ell} \coloneqq \bigl(\vertices^{\sigma,\tau}_{\ell}, \{(i,\sigma(i)) \colon i \in \Maxvertices \cap \vertices^{\sigma,\tau}_{\ell} \} \cup  \{(i,\tau(i)) \colon i \in \Minvertices \cap \vertices^{\sigma,\tau}_{\ell} \}  \bigr) \, .
\end{equation}
Note that every vertex in $\dgraph^{\sigma, \tau}_{\ell}$ has exactly one outgoing edge and $\dgraph^{\sigma, \tau}_{\ell}$ has only one directed cycle. This allows us to generalize the definition of the functions $\gameval^{\sigma, \tau}$ and $\bias^{\sigma, \tau}$ from \cref{one_cycle_bias_def}. More precisely, given a vertex $i \in \vertices$ that belongs to the graph $\dgraph^{\sigma, \tau}_{\ell}$ we define $\gameval_i^{\sigma,\tau}(\payoff)$ and $\bias_i^{\sigma,\tau}(\payoff)$ using \cref{one_cycle_bias_def} on the graph $\dgraph^{\sigma, \tau}_{\ell}$. By \cref{linear_functions}, the functions are $\gameval_i^{\sigma,\tau}(\payoff)$ and $\bias_i^{\sigma,\tau}(\payoff)$ are linear. We further denote $\gameval^{\sigma,\tau}(\payoff) \coloneqq \bigl(\gameval_i^{\sigma,\tau}(\payoff)\bigr)_{i \in [n]}$ and $\bias^{\sigma,\tau}(\payoff) = \bigl(\bias_i^{\sigma,\tau}(\payoff)\bigr)_{i \in [n]}$. To define the set $\polyh^{\sigma, \tau}$ we need one more piece of notation: given a vertex $i$ we denote by $\vertices^{\sigma,\tau}(i)$ the set $\vertices^{\sigma,\tau}_{\ell}$ to which $i$ belongs.

\begin{definition}[$\polyh^{\sigma,\tau}$, $\Generic$]\label[definition]{general_cone}
Given any pair of policies $(\sigma, \tau)$ we denote by $\polyh^{\sigma,\tau} \subseteq \R^m$ the open (possibly empty) polyhedral cone formed by the weights $\payoff \in \R^m$ that satisfy the following system of strict linear inequalities:
\begin{align*}
&\forall (i,j) \in \edges, i \in \Maxvertices, \vertices^{\sigma,\tau}(i) \neq \vertices^{\sigma,\tau}(j), &\gameval_{i}^{\sigma,\tau}(\payoff) > \gameval_{j}^{\sigma,\tau}(\payoff), \\
&\forall (i,j) \in \edges, i \in \Minvertices, \vertices^{\sigma,\tau}(i) \neq \vertices^{\sigma,\tau}(j), &\gameval_{i}^{\sigma,\tau}(\payoff) < \gameval_{j}^{\sigma,\tau}(\payoff), \\
&\forall (i,j) \in \edges, i \in \Maxvertices, \vertices^{\sigma,\tau}(i) = \vertices^{\sigma,\tau}(j), \sigma(i) \neq j, &\gameval^{\sigma,\tau}_{i}(\payoff) + \bias^{\sigma,\tau}_i(\payoff) > \payoff_{ij} + \bias^{\sigma,\tau}_j(\payoff), \\
&\forall (i,j) \in \edges, i \in \Minvertices, \vertices^{\sigma,\tau}(i) = \vertices^{\sigma,\tau}(j), \tau(i) \neq j, &\gameval^{\sigma,\tau}_{i}(\payoff) + \bias^{\sigma,\tau}_i(\payoff) < \payoff_{ij} + \bias^{\sigma,\tau}_j(\payoff).
\end{align*}
We further denote $\Generic \coloneqq \cup_{\sigma,\tau} \polyh^{\sigma, \tau}$.
\end{definition}

We describe the properties of the cones $\polyh^{\sigma,\tau}$ in the next few lemmas. We start by showing that $(\sigma,\tau)$ are bias-induced if $\payoff \in \polyh^{\sigma,\tau}$. This relies on a simple observation that we will use multiple times.

\begin{lemma}\label{same_value_zone}
Suppose that $\payoff \in \polyh^{\sigma,\tau}$. If $(i,j) \in \edges$ is a edge of the graph, then we have $\gameval_i^{\sigma,\tau}(\payoff) = \gameval_j^{\sigma,\tau}(\payoff)$ if and only if $\vertices^{\sigma,\tau}(i) = \vertices^{\sigma,\tau}(j)$.
\end{lemma}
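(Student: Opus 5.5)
Both directions follow directly from the definitions, so the plan is to treat them separately. We may assume $(i,j)$ is any edge of $\dgraph$, not necessarily one chosen by $(\sigma,\tau)$.

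\emph{Converse direction.} Suppose $\vertices^{\sigma,\tau}(i) = \vertices^{\sigma,\tau}(j) = \vertices^{\sigma,\tau}_\ell$. By construction, $\gameval_k^{\sigma,\tau}(\payoff)$ for $k \in \vertices^{\sigma,\tau}_\ell$ is defined via \cref{one_cycle_bias_def} applied to the zero-player graph $\dgraph^{\sigma,\tau}_\ell$. That graph has a single directed cycle $\mathcal{C}_\ell$, so its value is the constant mean weight of $\mathcal{C}_\ell$. Hence $\gameval_i^{\sigma,\tau}(\payoff) = \gameval_j^{\sigma,\tau}(\payoff)$, and this direction does not even use $\payoff \in \polyh^{\sigma,\tau}$.

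\emph{Forward direction,} by contrapositive. Suppose $\vertices^{\sigma,\tau}(i) \neq \vertices^{\sigma,\tau}(j)$. Since $\payoff \in \polyh^{\sigma,\tau}$, the first two families of strict inequalities in \cref{general_cone} apply to the edge $(i,j)$. If $i \in \Maxvertices$, they give $\gameval_i^{\sigma,\tau}(\payoff) > \gameval_j^{\sigma,\tau}(\payoff)$. If $i \in \Minvertices$, they give $\gameval_i^{\sigma,\tau}(\payoff) < \gameval_j^{\sigma,\tau}(\payoff)$. Since $[n] = \Maxvertices \dunion \Minvertices$, one case always holds, so the values differ.

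The only point needing care is that the first two families range over all edges of $\edges$ joining different sets $\vertices^{\sigma,\tau}_\ell$, not just edges of $\dgraph^{\sigma,\tau}$. This is exactly how \cref{general_cone} is written. Edges selected by $\sigma$ or $\tau$ never cross between sets, since $\vertices^{\sigma,\tau}(k) = \vertices^{\sigma,\tau}(\sigma(k))$ by definition of the basins. So there is no real obstacle.
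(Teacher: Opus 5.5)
Your proposal is correct and follows the paper's proof exactly. The same-set case holds by the definition of $\gameval^{\sigma,\tau}$ as the mean weight of the unique cycle of $\dgraph^{\sigma,\tau}_\ell$, and the other case follows from the first two families of strict inequalities in \cref{general_cone}.
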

\begin{proof}
If $\vertices^{\sigma,\tau}(i) = \vertices^{\sigma,\tau}(j)$, then $\gameval_i^{\sigma,\tau}(\payoff) = \gameval_j^{\sigma,\tau}(\payoff)$ by definition of $\gameval^{\sigma,\tau}$. Conversely, if $\gameval_i^{\sigma,\tau}(\payoff) = \gameval_j^{\sigma,\tau}(\payoff)$, then $\vertices^{\sigma,\tau}(i) = \vertices^{\sigma,\tau}(j)$ by the definition of $\polyh^{\sigma,\tau}$.
\end{proof}

\begin{lemma}\label{cone_gives_bias}
Suppose that $\payoff \in \polyh^{\sigma,\tau}$. Then, the value of the game is given by the vector $\gameval^{\sigma,\tau}(\payoff)$ and the policies $(\sigma,\tau)$ are optimal and bias-induced.
\end{lemma}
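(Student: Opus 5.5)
The plan is to verify directly that the pair $(\gameval^{\sigma,\tau}(\payoff), \bias^{\sigma,\tau}(\payoff))$ solves the GKK system of \cref{optimality_eq}. That theorem then gives three conclusions at once: $\gameval^{\sigma,\tau}(\payoff)$ is the value vector, $\bias^{\sigma,\tau}(\payoff)$ is a bias, and any policies induced by this bias are optimal. It then only remains to observe that $(\sigma,\tau)$ is induced by $\bias^{\sigma,\tau}(\payoff)$. Write $\gameval \coloneqq \gameval^{\sigma,\tau}(\payoff)$ and $\bias \coloneqq \bias^{\sigma,\tau}(\payoff)$.

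\textbf{The value equations.} Take $i \in \Maxvertices$. The policy edge $(i,\sigma(i))$ stays inside $\vertices^{\sigma,\tau}(i)$, since $\sigma(i)$ leads to the same cycle. Hence $\gameval_{\sigma(i)} = \gameval_i$. Any other edge $(i,j)$ either has $\vertices^{\sigma,\tau}(j) = \vertices^{\sigma,\tau}(i)$, giving $\gameval_j = \gameval_i$, or leaves the component. In the latter case the first family of inequalities in \cref{general_cone} gives $\gameval_j < \gameval_i$. Therefore $\gameval_i = \max_{(i,j)\in\edges} \gameval_j$. The Min case is symmetric.

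\textbf{The bias equations.} By \cref{same_value_zone}, the edges $(i,j)$ with $\gameval_j = \gameval_i$ are exactly those with $\vertices^{\sigma,\tau}(j) = \vertices^{\sigma,\tau}(i)$. Fix $i \in \Maxvertices$.
\begin{itemize}
\item Inside the component, \cref{one_cycle_bias_def} applied to the one-cycle graph $\dgraph^{\sigma,\tau}_\ell$ gives the equality $\gameval_i + \bias_i = \payoff_{i\sigma(i)} + \bias_{\sigma(i)}$. This follows from the ergodic equation on that graph, or equivalently from \cref{one_cycle_graph}.
\item The third family of inequalities in \cref{general_cone} makes every other same-component edge strictly worse.
\end{itemize}
So $\gameval_i + \bias_i = \max_{(i,j)\in\edges,\, \gameval_j = \gameval_i}\{\payoff_{ij}+\bias_j\}$, and this maximum is attained uniquely at $\sigma(i)$.

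\textbf{The main obstacle: edges leaving the component.} The GKK system additionally requires the strict inequality $\gameval_i + \bias_i > \max_{\gameval_j \neq \gameval_i}\{\payoff_{ij}+\bias_j\}$. The cone $\polyh^{\sigma,\tau}$ imposes nothing about bias values across components, so $\bias$ itself may violate this. My first attempt is to exploit the freedom in the bias. On each component $\vertices^{\sigma,\tau}_\ell$, replace $\bias$ by $\bias + c_\ell \ones$ (restricted to that component). This preserves every within-component equation and inequality verified above, and it does not change the induced policy choices among same-component edges.

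To choose the constants, order the components by their values. Take $c_\ell = M \gameval^{(\ell)}$, where $\gameval^{(\ell)}$ is the common value on component $\ell$ and $M$ is a large positive constant. Now consider a cross edge $(i,j)$ with $i \in \Maxvertices$, which satisfies $\gameval_j < \gameval_i$. The shifted bias difference between the two endpoints gains $M(\gameval_i - \gameval_j) > 0$. For $M$ large enough, this makes $\gameval_i + \bias_i - \payoff_{ij} - \bias_j$ positive. For a Min cross edge, $\gameval_j > \gameval_i$, and the same shift makes the expression negative, as required. There are finitely many edges, so a single $M$ works for all of them.

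The shifted vector $\bias'$ then solves the GKK system, and $(\sigma,\tau)$ is induced by $\bias'$: every cross edge is strictly suboptimal, and within each component the choice is unique. Applying \cref{optimality_eq} concludes that $\gameval$ is the value vector and that $(\sigma,\tau)$ is optimal and bias-induced.
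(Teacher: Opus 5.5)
Your proposal is correct and follows the paper's own proof. You first verify the GKK system without the cross-edge strict inequalities, using \cref{same_value_zone} and the cone inequalities. Your componentwise shift $c_\ell = M\gameval^{(\ell)}$ is then exactly the paper's replacement of $\bias$ by $t\gameval + \bias$ for sufficiently large $t$, which makes every cross edge strictly suboptimal so that \cref{optimality_eq} applies.
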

\begin{proof}
To simplify the notation, denote $\gameval \coloneqq \gameval^{\sigma,\tau}(\payoff)$ and $\bias \coloneqq \bias^{\sigma,\tau}(\payoff)$. We will show that there exists $t > 0$ such that $t\gameval + \bias$ is a bias of the game. Fix $i \in \Maxvertices$. By definition of $(\gameval,\bias)$, we have $\gameval_i = \gameval_{\sigma(i)}$ and $\gameval_i + \bias_{i} = \payoff_{i\sigma(i)} + \bias_{\sigma(i)}$. Fix $j \in \vertices$ such that $(i,j) \in \edges$ and $j \neq \sigma(i)$. If $\vertices^{\sigma,\tau}(i) \neq \vertices^{\sigma,\tau}(j)$, then $\gameval_i > \gameval_j$ and if $\vertices^{\sigma,\tau}(i) = \vertices^{\sigma,\tau}(j)$, then $\gameval_i = \gameval_j$. Furthermore, if $\vertices^{\sigma,\tau}(i) = \vertices^{\sigma,\tau}(j)$, then $\gameval_i + \bias_{i} > \payoff_{ij} + \bias_{j}$. Hence, by \cref{same_value_zone} we get $\gameval_i = \max_{(i,k) \in \edges} \gameval_k$ and $\gameval_i + \bias_{i} = \max_{(i,k) \in \edges, \gameval_k = \gameval_i} \{\payoff_{ik} + \bias_{k}\}$. Moreover, the maximum in $\max_{(i,k) \in \edges, \gameval_k = \gameval_i} \{\payoff_{ik} + \bias_{k}\}$ is achieved by a single edge, namely $(i, \sigma(i))$. By doing an analogous reasoning for $i \in \Minvertices$, we get that $(\gameval, \bias)$ satisfy the system
\begin{equation}\label{eq:optimality_inproof}
\begin{aligned}
\forall i \in \Maxvertices, \quad &\gameval_{i} = \max_{(i,j) \in \edges} \gameval_{j}, \\
\forall i \in \Minvertices, \quad &\gameval_{i} = \min_{(i,j) \in \edges} \gameval_{j}, \\
\forall i \in \Maxvertices, \quad &\gameval_{i} + \bias_i = \max_{(i,j) \in \edges, \gameval_j = \gameval_i} \{\payoff_{ij} + \bias_{j}\}, \\
\forall i \in \Minvertices, \quad &\gameval_{i} + \bias_i = \min_{(i,j) \in \edges, \gameval_j = \gameval_i} \{\payoff_{ij} + \bias_{j}\}.
\end{aligned}
\end{equation}
Hence, for every sufficiently large $t > 0$, the pair $(\gameval, t\gameval + \bias)$ satisfies the system from \cref{optimality_eq}. Indeed, denote $\tilde{\bias} \coloneqq t\gameval + \bias$ and note that the pair $(\gameval, \tilde{\bias})$ satisfies \cref{eq:optimality_inproof} for all values of $t > 0$. Moreover, if $i \in \Maxvertices$ and $(i,j) \in \edges$ is such that $\gameval_j \neq \gameval_i$, then $\gameval_j < \gameval_i = \gameval_{\sigma(i)}$ by the first set of equalities. Hence, for any sufficiently large $t$ we have $t\gameval_{\sigma(i)} + \payoff_{i\sigma(i)} + \bias_{\sigma(i)} > t\gameval_{j} + \payoff_{ij} + \bias_{j}$ and therefore $\gameval_{i} + \tilde{\bias}_i = \max_{(i,j) \in \edges, \gameval_j = \gameval_i} \{\payoff_{ij} + \tilde{\bias}_{j}\} > \max_{(i,j) \in \edges, \gameval_j \neq \gameval_i} \{\payoff_{ij} + \tilde{\bias}_{j}\}$. Furthermore, the maximum in $\max_{(i,j) \in \edges, \gameval_j = \gameval_i} \{\payoff_{ij} + \tilde{\bias}_{j}\}$ is achieved by a single edge, namely $(i, \sigma(i))$. The reasoning for $i \in \Minvertices$ is analogous. Hence, $\tilde{\bias}$ is a bias of the game and $(\sigma,\tau)$ is the only pair of policies that this bias induces.
\end{proof}

We now show that if $\payoff \in \polyh^{\sigma,\tau}$, then the reduced game on the graph $\dgraph' = (\vertices, \edges')$ with $\edges' \coloneqq \edges \setminus \{(i,j) \in \edges, \gameval_{i} \neq \gameval_{j}\}$ has exactly $p$ connected components with vertex sets $\vertices_1, \ldots, \vertices_p$, and these vertex sets are precisely the sets $\vertices^{\sigma,\tau}_\ell$, $\ell \in [p]$. Moreover, the game restricted to each vertex set $\vertices_\ell$ has exactly one equilibrium cycle, namely $\mathcal C_\ell$.

\begin{lemma}\label{constant_value_subgame}
  Suppose that $\payoff \in \polyh^{\sigma,\tau}$. Then, (i) the reduced game played on $\dgraph'$ has exactly $p$ connected components given by $\dgraph_{\ell} \coloneqq \bigl(\vertices_{\ell}, \edges \cap (\vertices_{\ell} \times \vertices_{\ell})\bigr)$ for $\ell \in [p]$, with $\vertices_\ell = \vertices^{\sigma,\tau}_\ell$. Furthermore, (ii) the constant value mean payoff game played on $\dgraph_{\ell}$ has a unique bias (up to adding a constant to its coordinates) equal to $\bigl(\bias^{\sigma,\tau}(\payoff)\bigr)_{i \in \vertices_\ell}$ and a single pair of bias induced policies, namely the restricted versions of $(\sigma,\tau)$. In particular, this game has a single equilibrium cycle. Moreover, (iii) if we denote $\gameval = \gameval^{\sigma,\tau}(\payoff)$ and $\bias = \bias^{\sigma,\tau}(\payoff)$, then the game on $\dgraph_{\ell}$ has a well-defined condition number $\Cond_{\ell}(\payoff)$ which is equal to $0$ if every vertex of $\dgraph_{\ell}$ has out-degree $1$ and to
  \[
    \Cond_{\ell}(\payoff)  = \frac{\hilb{r^{(\ell)}}}{\sk(r^{(\ell)})}\\
  \]
  where
  \begin{align*}
    \hilb{r^{(\ell)}}& = \max\{\abs{\payoff_{ij} - \payoff_{i'j'}} \colon (i,j),(i',j') \in \edges \cap (\vertices_{\ell} \times \vertices_{\ell})\},\\
    \sk(r^{(\ell)}) & = \min\{\abs{\payoff_{ij} - \gameval_i + \bias_j - \bias_i} \colon (i,j) \in \edges \cap (\vertices_{\ell} \times \vertices_{\ell}), \payoff_{ij} - \gameval_i + \bias_j - \bias_i \neq 0\}.
  \end{align*}
  otherwise.
\end{lemma}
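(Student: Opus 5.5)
Write $\gameval \coloneqq \gameval^{\sigma,\tau}(\payoff)$ and $\bias \coloneqq \bias^{\sigma,\tau}(\payoff)$. The plan is to use \cref{cone_gives_bias} together with \cref{same_value_zone} to identify the edge set of the reduced game, and then to apply the single-cycle theory of \cref{sec:one_cycle_condition} to each piece.

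For (i), \cref{cone_gives_bias} shows that $\gameval$ is the value vector. Hence $\edges' = \{(i,j) \in \edges \colon \gameval_i = \gameval_j\}$. By \cref{same_value_zone}, this set is exactly the set of edges with $\vertices^{\sigma,\tau}(i) = \vertices^{\sigma,\tau}(j)$. So $\edges'$ is the disjoint union of the sets $\edges \cap (\vertices^{\sigma,\tau}_\ell \times \vertices^{\sigma,\tau}_\ell)$, and no edge of $\dgraph'$ joins two different sets $\vertices^{\sigma,\tau}_\ell$. Each $\vertices^{\sigma,\tau}_\ell$ is weakly connected in $\dgraph'$, because it contains the tree $\dgraph^{\sigma,\tau}_\ell$, in which every vertex has a path to $\mathcal C_\ell$. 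Therefore the connected components of the underlying undirected graph of $\dgraph'$ are exactly these $p$ sets, and $\edges_\ell = \edges \cap (\vertices_\ell \times \vertices_\ell)$.

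For (ii), fix $\ell$ and look at the game on $\dgraph_\ell$. Every edge of $\dgraph^{\sigma,\tau}_\ell$ lies in $\edges_\ell$, so the restricted pair $(\sigma,\tau)$ is a pair of policies on $\dgraph_\ell$. The subgraph it induces is $\dgraph^{\sigma,\tau}_\ell$, which has the single cycle $\mathcal C_\ell$, so the restricted pair lies in $\Xi$ for this game. By construction, $\gameval^{\sigma,\tau}$ and $\bias^{\sigma,\tau}$ for the subgame coincide with the restrictions of $\gameval$ and $\bias$ to $\vertices_\ell$, with $\gameval$ constant on $\vertices_\ell$. The third and fourth families of inequalities in \cref{general_cone} restrict precisely to the defining inequalities of the single-cycle cone of \cref{def:polyhedral_cells} for $\dgraph_\ell$. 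Hence the restricted weight vector $\payoff^{(\ell)}$ lies in that cone. \Cref{unique_bias} then gives all of (ii): constant value, a unique bias up to constants equal to $\bias|_{\vertices_\ell}$, and the restricted $(\sigma,\tau)$ as the unique pair of bias-induced policies. A single bias-induced pair yields a single equilibrium cycle. To justify this last step, I would argue that any equilibrium cycle of an optimal pair must have zero reduced weight $\payoff_{ij} - \gameval_i + \bias_j - \bias_i$ on each of its edges. Since every edge of $\dgraph_\ell$ outside $(\sigma,\tau)$ has nonzero reduced weight, such a cycle uses only edges of $(\sigma,\tau)$, and so it is $\mathcal C_\ell$.

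For (iii), the condition number $\Cond_\ell$ is well defined by \cref{def:condition-number} applied to $\dgraph_\ell$, because $\payoff^{(\ell)}$ lies in a single-cycle cone. It remains to rewrite the slack. Edges in $(\sigma,\tau)$ have reduced weight exactly zero by the definition of $\bias^{\sigma,\tau}$. Edges of $\edges_\ell$ outside $(\sigma,\tau)$ have nonzero reduced weight by strict membership in the cone. So the minimum over non-policy edges in \cref{def:condition-number} equals the minimum over edges of $\edges_\ell$ with nonzero reduced weight, which is the displayed formula; the displayed formula for $\hilb{\payoff^{(\ell)}}$ is just its definition. The one subtle point is the case where every vertex of $\dgraph_\ell$ has out-degree $1$: then the set of non-policy edges is empty and $\Cond_\ell = 0$ by convention. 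I expect the only genuine care to be needed in checking that the restriction of $\bias^{\sigma,\tau}$, with its normalization at $i_0$, matches the single-cycle definition. It does, since $\bias^{\sigma,\tau}$ was defined componentwise through exactly that construction.
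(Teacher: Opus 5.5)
Your arguments for (i), for the uniqueness of the bias and of the bias-induced pair in (ii), and for (iii) are the same as the paper's. You identify $\edges'$ via \cref{cone_gives_bias} and \cref{same_value_zone}. You note that the restricted pair lies in $\Xi$ for $\dgraph_\ell$ and that the last two families of inequalities in \cref{general_cone} are exactly the single-cycle cone of \cref{def:polyhedral_cells}. You then read the rest off \cref{unique_bias} and \cref{def:condition-number}.

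The problem is your extra justification of ``single equilibrium cycle''. You claim that every cycle reached by an optimal pair has zero reduced weight $\payoff_{ij}-\gameval_i+\bias_j-\bias_i$ on each edge, and this is false. On such a cycle the Max edges have reduced weight $\le 0$ and the Min edges $\ge 0$. So when the cycle contains vertices of both players, a zero total (mean equal to the value) does not force each term to vanish.

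For instance, take $a\in\Maxvertices$, $b\in\Minvertices$, and a vertex $c$ whose only edge is a loop. Set $\payoff_{ab}=-1$, $\payoff_{ba}=1$, $\payoff_{ac}=\payoff_{bc}=\payoff_{cc}=0$, and $\sigma(a)=\tau(b)=c$.
\begin{itemize}
\item $\dgraph^{\sigma,\tau}$ has a single cycle, the loop at $c$, so $p=1$, $\gameval=0$ and $\bias=0$.
\item The reduced weights of $(a,b)$ and $(b,a)$ are $-1$ and $1$, so $\payoff\in\polyh^{\sigma,\tau}$.
\item Both cycles of the graph have mean $0$, so every pair of policies is optimal.
\item The optimal pair $a\mapsto b$, $b\mapsto a$ reaches the cycle $a\to b\to a$, whose edges have nonzero reduced weights.
\end{itemize}

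So the clause cannot be proved in the literal sense of the paper's definition (a cycle reached by \emph{some} optimal pair), since this game has two equilibrium cycles. The paper's proof gives no separate argument for it and simply reads it off \cref{unique_bias}. What actually follows is that the unique bias-induced pair reaches exactly one cycle, namely $\mathcal C_\ell$. When all cycle means are distinct, which holds outside a set of measure zero, this is also the only equilibrium cycle, because every equilibrium cycle has mean equal to the value.

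You should drop your extra argument and state the clause in this bias-induced (or generic) form. The later applications of the lemma use only the decomposition, the cone membership on each $\dgraph_\ell$, the slack formula, and the uniqueness of the bias-induced pair.
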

\begin{proof}

  (i) Recall that the reduced game removes all edges between vertices with different values, and the components are then the connected components of the induced undirected graph. By \cref{same_value_zone,cone_gives_bias}, the reduced game has edge set
  \[
    \edges' = \{(i,j) \in \edges \colon \vertices^{\sigma,\tau}(i) = \vertices^{\sigma,\tau}(j)\}.
  \]
  Thus no edge of $\dgraph'$ connects two distinct sets $\vertices^{\sigma,\tau}_\ell$. On the other hand, each set $\vertices^{\sigma,\tau}_\ell$ is connected in $\dgraph'$ after forgetting edge orientations: indeed, every vertex in $\vertices^{\sigma,\tau}_\ell$ reaches the cycle $\mathcal C_\ell$ by following the policies $(\sigma,\tau)$, and every edge on such a path has both endpoints in $\vertices^{\sigma,\tau}_\ell$, hence belongs to $\edges'$. Therefore the connected components of $\dgraph'$ are precisely the sets $\vertices^{\sigma,\tau}_\ell$, $\ell \in [p]$. We denote these components by $\vertices_\ell \coloneqq \vertices^{\sigma,\tau}_\ell$. Since every edge of $\edges$ with both endpoints in $\vertices_\ell$ belongs to $\edges'$, the corresponding component is
  \[
    \dgraph_\ell = \bigl(\vertices_\ell, \edges \cap (\vertices_\ell \times \vertices_\ell)\bigr).
  \]
  This proves claim (i).

  To prove the remaining claims, note that, since $\vertices_\ell = \vertices^{\sigma,\tau}_\ell$, the policies $(\sigma,\tau)$ induce a single cycle on the graph $\dgraph_{\ell}$, namely the cycle $\mathcal{C}_{\ell}$. Furthermore, by the definition of $\polyh^{\sigma,\tau}$, if we fix $(i,j) \in \edges \cap (\vertices_{\ell} \times \vertices_{\ell})$ such that $i \in \Maxvertices$ and $j \neq \sigma(i)$, then $\gameval^{\sigma,\tau}_{i}(\payoff) + \bias^{\sigma,\tau}_i(\payoff) > \payoff_{ij} + \bias^{\sigma,\tau}_j(\payoff)$. Likewise, if $i \in \Minvertices$ and $j \neq \tau(i)$, then $\gameval^{\sigma,\tau}_{i}(\payoff) + \bias^{\sigma,\tau}_i(\payoff) < \payoff_{ij} + \bias^{\sigma,\tau}_j(\payoff)$. Since fixing $(\sigma,\tau)$ induces a single cycle on $\dgraph_{\ell}$, we get that the weights on $\dgraph_{\ell}$ belong to the cone given in \cref{def:polyhedral_cells} for policies $(\sigma,\tau)$ on $\dgraph_{\ell}$. Hence, claims (ii) and (iii) follow from \cref{unique_bias} and the definition of condition number of games with one equilibrium cycle (\cref{def:condition-number}).
\end{proof}

In the last three lemmas, we have shown that if $\payoff \in \polyh^{\sigma,\tau}$, then the game admits a decomposition into games with single equilibrium cycles. We now want to show that if $\payoff$ is generic, then it belongs to exactly one cone $\polyh^{\sigma,\tau}$. This is done in the next proposition.

\begin{proposition}\label{policy_cone}
If $\payoff \in \polyh^{\sigma,\tau}$, then $(\sigma,\tau)$ is the only pair of bias induced policies in the MPG with weights $\payoff$. In particular, the sets $\polyh^{\sigma,\tau}$ are disjoint. Furthermore, the set $\R^m \setminus \Generic$ has Lebesgue measure $0$.
\end{proposition}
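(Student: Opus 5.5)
We prove the three claims in order: uniqueness of the bias-induced pair, then disjointness, then the measure-zero statement.

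\textbf{Uniqueness.} Let $\payoff \in \polyh^{\sigma,\tau}$ and let $(\sigma',\tau')$ be any bias-induced pair, induced by some bias $\bias'$. By \cref{cone_gives_bias} the value vector is $\gameval = \gameval^{\sigma,\tau}(\payoff)$. We locate $\sigma'(i)$ for every $i \in \Maxvertices$; Min is symmetric.
\begin{itemize}
\item Since $\bias'$ is a bias, \cref{optimality_eq} forces $\gameval_{\sigma'(i)} = \gameval_i$.
\item By \cref{same_value_zone}, $\sigma'(i)$ then lies in $\vertices^{\sigma,\tau}(i)$, i.e.\ in the same component $\vertices_\ell$ as $i$.
\item Restricting $\bias'$ to $\vertices_\ell$ gives a bias of the constant value game on $\dgraph_\ell$, exactly as in the proof of \cref{game_decomposition}(i). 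The restricted $(\sigma',\tau')$ are the policies induced by this restricted bias, since the argmax over edges with $\gameval_j = \gameval_i$ is precisely the argmax over $\edges \cap (\vertices_\ell \times \vertices_\ell)$.
\item By \cref{constant_value_subgame}(ii), the game on $\dgraph_\ell$ has a unique bias-induced pair, namely the restriction of $(\sigma,\tau)$. Hence $\sigma'(i) = \sigma(i)$.
\end{itemize}
So $(\sigma',\tau') = (\sigma,\tau)$.

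\textbf{Disjointness.} This follows at once. If $\payoff \in \polyh^{\sigma,\tau} \cap \polyh^{\sigma',\tau'}$, then by \cref{cone_gives_bias} both pairs are bias-induced. By uniqueness they are the same pair.

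\textbf{Measure zero.} We exhibit finitely many hyperplanes whose union contains $\R^m \setminus \Generic$. Each $\gameval^{\sigma,\tau}_i$ and $\bias^{\sigma,\tau}_i$ is a linear form in $\payoff$, and there are finitely many pairs $(\sigma,\tau)$. Call $\payoff$ \emph{generic} if, for every pair $(\sigma,\tau)$, every difference $\gameval_i^{\sigma,\tau}(\payoff) - \gameval_j^{\sigma,\tau}(\payoff)$ and every slack $\gameval^{\sigma,\tau}_i(\payoff) + \bias^{\sigma,\tau}_i(\payoff) - \payoff_{ij} - \bias^{\sigma,\tau}_j(\payoff)$ appearing in \cref{general_cone} is nonzero, whenever that linear form is not identically zero. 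The non-generic weights lie in a finite union of hyperplanes, which has measure zero.

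Now fix a generic $\payoff$ and let $(\sigma,\tau)$ be any bias-induced pair, which exists by \cref{optimality_eq}; let $\bias$ be a bias inducing it and $\gameval$ the value vector. We aim to show $\payoff \in \polyh^{\sigma,\tau}$.
\begin{itemize}
\item The cycle reached from $i$ under $(\sigma,\tau)$ consists of edges on which the value is preserved, so $\gameval_i$ equals its mean weight. Hence $\gameval = \gameval^{\sigma,\tau}(\payoff)$.
\item On each set $\vertices^{\sigma,\tau}_\ell$, the bias $\bias$ satisfies the tight equations along policy edges. By \cref{one_cycle_graph}, it therefore agrees with $\bias^{\sigma,\tau}(\payoff)$ up to an additive constant on that set.
\item The GKK system then yields the non-strict versions of all inequalities in \cref{general_cone}. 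Cross-component edges satisfy $\gameval_i \ge \gameval_j$ (Max) or $\le$ (Min). Within a component, if $\vertices^{\sigma,\tau}(i) = \vertices^{\sigma,\tau}(j)$, then $\bias$ and $\bias^{\sigma,\tau}$ differ by the same constant at $i$ and $j$, giving $\gameval_i + \bias^{\sigma,\tau}_i \ge \payoff_{ij} + \bias^{\sigma,\tau}_j$.
\end{itemize}

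It remains to make every inequality strict, and this is the main obstacle. Genericity removes equality only for linear forms that are not identically zero, so we must rule out forms that vanish identically.
\begin{itemize}
\item \emph{Within-component slacks.} For an edge $(i,j)$ with $j \neq \sigma(i)$, the form involves $\payoff_{ij}$ with coefficient $-1$. It cannot involve $\payoff_{ij}$ elsewhere: $\payoff_{ij}$ does not appear in $\gameval^{\sigma,\tau}$ or $\bias^{\sigma,\tau}$, because $(i,j)$ is not a policy edge. So this form is nonzero.
\item \emph{Cross-component value differences.} Here $\gameval_i - \gameval_j$ is the difference of the mean weights of two distinct, vertex-disjoint cycles $\mathcal C_\ell$ and $\mathcal C_{\ell'}$. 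These cycles use disjoint edge sets, so the form is nonzero.
\end{itemize}
Hence every inequality is strict and $\payoff \in \polyh^{\sigma,\tau} \subseteq \Generic$.

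One subtlety needs care. We must check that the non-strict inequalities really come from the GKK conditions with $\gameval = \gameval^{\sigma,\tau}(\payoff)$. In particular, an edge $(i,j)$ with $\gameval_i = \gameval_j$ but $\vertices^{\sigma,\tau}(i) \neq \vertices^{\sigma,\tau}(j)$ has to be treated as a cross-component value equality. Genericity excludes it, since the corresponding value difference is a nonzero linear form.
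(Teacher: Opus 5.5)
Your proof is correct and follows essentially the same route as the paper. For uniqueness, both restrict an arbitrary bias to the components $\vertices_\ell$ and invoke \cref{constant_value_subgame}; for the measure-zero claim, both use \cref{one_cycle_graph} to align a bias with $\bias^{\sigma,\tau}(\payoff)$ and show that, off a finite union of hyperplanes, any bias-induced pair $(\sigma,\tau)$ satisfies the defining inequalities of $\polyh^{\sigma,\tau}$ strictly. Your explicit check that each defining linear form is not identically zero (the coefficient $-1$ on $\payoff_{ij}$ for non-policy edges, and the edge-disjointness of distinct cycles in the functional graph) fills in a step the paper only asserts.
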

\begin{proof}
Let $\gameval \in \R^n$ be the value of the game and $\tilde{\bias} \in \R^n$ be any bias. By \cref{cone_gives_bias}, we have $\gameval = \gameval^{\sigma,\tau}(\payoff)$. Fix $\ell \in [p]$. By the definition of a bias and \cref{same_value_zone} for any $i \in \vertices_{\ell}$ we have
\[
\gameval_{i} + \tilde{\bias}_{i} = \begin{cases}
\max_{(i,j) \in \edges, j \in \vertices_{\ell}}\{\payoff_{ij} + \tilde{\bias}_j\} &\text{if } i \in \Maxvertices, \\
\min_{(i,j) \in \edges, j \in \vertices_{\ell}}\{\payoff_{ij} + \tilde{\bias}_j\} &\text{if } i \in \Minvertices.
\end{cases}
\]
Since the value of $\gameval_i$ is constant within $\vertices_{\ell}$, we get that $(\tilde{\bias}_i)_{i \in \vertices_\ell}$ is a bias of the game on $\dgraph_{\ell}$. Furthermore, if $(\tilde{\sigma},\tilde{\tau})$ are policies induced by $\tilde{\bias}$ on $\dgraph$, then $(\tilde{\bias}_i)_{i \in \vertices_\ell}$ induces the restricted version of $(\tilde{\sigma},\tilde{\tau})$ on $\dgraph_{\ell}$. Hence, by \cref{constant_value_subgame}, the policies $(\tilde{\sigma},\tilde{\tau})$ coincide with $(\sigma,\tau)$ on $\dgraph_{\ell}$. Since this is true for any $\ell$, we get that $(\tilde{\sigma},\tilde{\tau}) = (\sigma,\tau)$. This proves the first claim.

To prove the second claim, for any $(\sigma,\tau)$ let $\mathcal{H}^{\sigma,\tau}$ denote the set of hyperplanes obtained by replacing the inequalities from \Cref{general_cone} with equalities, i.e., the set of hyperplanes of the form $\gameval_{i}^{\sigma,\tau}(\payoff) = \gameval_{j}^{\sigma,\tau}(\payoff)$ for $\vertices^{\sigma,\tau}(i) \neq \vertices^{\sigma,\tau}(j)$ and $\gameval^{\sigma,\tau}_{i}(\payoff) + \bias^{\sigma,\tau}_i(\payoff) = \payoff_{ij} + \bias^{\sigma,\tau}_j(\payoff)$ for $\vertices^{\sigma,\tau}(i) = \vertices^{\sigma,\tau}(j)$ and $j \notin \{\sigma(i),\tau(i)\}$. Note that all of these linear equations are nontrivial, i.e., each of them defines a set of Lebesgue measure $0$. Let $\mathcal{H} \coloneqq \cup_{\sigma,\tau} \mathcal{H}^{\sigma,\tau}$. We will show that $(\R^m \setminus \Generic) \subseteq \mathcal{H}$. To do so, fix $\payoff \in \R^m \setminus \mathcal{H}$. By \cref{optimality_eq}, the game with weights $\payoff$ has at least one pair of bias induced policies. Let $\bias$ be any bias of this game and let $(\sigma,\tau)$ be a pair of policies induced by $\bias$. Since the policies $(\sigma, \tau)$ are optimal for the game with weights $\payoff$, the value of this game is equal to $\gameval^{\sigma,\tau}(\payoff)$. Suppose that the graph $\dgraph^{\sigma,\tau}$ has exactly $p$ directed cycles and let $\dgraph^{\sigma,\tau}_1, \dots, \dgraph^{\sigma,\tau}_p$ be the corresponding graphs with one directed cycle, as given by \cref{eq:zero_player_decomposition}. For any $q \in [p]$, let $\vertices_{q}$ be the set of vertices of the graph $\dgraph^{\sigma,\tau}_q$. By the definition of a bias, if we take any $q \in [p]$ and we restrict $\bias$ to the coordinates from $\vertices_q$, then the restricted vector is a bias of the game on $\dgraph^{\sigma,\tau}_{q}$. In particular, by \cref{one_cycle_graph}, there exists a constant $\alpha_{q} \in \R$ such that for every $i \in \vertices_{q}$ we have $\bias_i = \bias^{\sigma,\tau}_i(\payoff) + \alpha_{q}$. We now show that $\payoff \in \polyh^{\sigma,\tau}$. Since the value of the game with weights $\payoff$ is $\gameval^{\sigma,\tau}(\payoff)$, for every $i \in \Maxvertices$ we have $\gameval_i^{\sigma,\tau}(\payoff) = \max_{(i,j) \in \edges}\gameval_j^{\sigma,\tau}(\payoff)$. In particular, for every $j$ such that $(i,j) \in \edges, \vertices^{\sigma,\tau}(i) \neq \vertices^{\sigma,\tau}(j)$ we get $\gameval_i^{\sigma,\tau}(\payoff) \ge \gameval_j^{\sigma,\tau}(\payoff)$. Since $\payoff \notin \mathcal{H}$, this inequality is strict. Moreover, for every $j \in \vertices^{\sigma,\tau}(i)$ we have $\gameval_i^{\sigma,\tau}(\payoff) = \gameval_j^{\sigma,\tau}(\payoff)$. Hence, we get the equality $\{j \colon (i,j) \in \edges, \gameval_j^{\sigma,\tau}(\payoff) = \gameval_i^{\sigma,\tau}(\payoff) \}
= \{j \colon (i,j) \in \edges, j \in \vertices^{\sigma,\tau}(i)\}$. Hence the bias $\bias$, being a bias for the game on $\dgraph$, satisfies
\[
\gameval^{\sigma,\tau}_{i}(\payoff) + \bias_{i} = \max_{(i,j) \in \edges, j \in \vertices^{\sigma,\tau}(i)}\{\payoff_{ij} + \bias_j \} \,.
\]
If we take $q$ such that $\vertices_{q} = \vertices^{\sigma,\tau}(i)$ and we subtract $\alpha_q$ from both sides, then we get the equality
\[
\gameval^{\sigma,\tau}_{i}(\payoff) + \bias^{\sigma,\tau}_i(\payoff) = \max_{(i,j) \in \edges, j \in \vertices^{\sigma,\tau}(i)}\{\payoff_{ij} + \bias^{\sigma,\tau}_j(\payoff) \} \, .
\]
In particular, for any $(i,j) \in \edges$, $j \in \vertices^{\sigma,\tau}(i)$, $j \neq \sigma(i)$ we have $\gameval^{\sigma,\tau}_{i}(\payoff) + \bias^{\sigma,\tau}_i(\payoff) \ge \payoff_{ij} + \bias^{\sigma,\tau}_j(\payoff)$. Since $\payoff \notin \mathcal{H}$, this inequality is strict. Analogously, if $i \in \Minvertices$, then for every $(i,j) \in \edges,  \vertices^{\sigma,\tau}(i) \neq \vertices^{\sigma,\tau}(j)$ we get the inequality $\gameval_{i}^{\sigma,\tau}(\payoff) < \gameval_{j}^{\sigma,\tau}(\payoff)$ and for every $(i,j) \in \edges,  \vertices^{\sigma,\tau}(i) = \vertices^{\sigma,\tau}(j)$, $j \neq \tau(i)$ we get the inequality $\gameval^{\sigma,\tau}_{i}(\payoff) + \bias^{\sigma,\tau}_i(\payoff) < \payoff_{ij} + \bias^{\sigma,\tau}_j(\payoff)$. Hence $\payoff \in \polyh^{\sigma, \tau}$, which implies that $(\R^m \setminus \Generic) \subseteq \mathcal{H}$. This proves that $\R^m \setminus \Generic$ has Lebesgue measure $0$.
\end{proof}

\begin{remark}
As noted in \cref{general_cone}, some of the cones $\polyh^{\sigma,\tau}$ may be empty. The set of policies $(\sigma,\tau)$ that define a non-empty cone $\polyh^{\sigma,\tau}$ depends on the underlying graph. In the case of ergodic graphs, one can combine \cref{policy_cone} with \cite[Theorem 4.24]{LoffSkomra:2024} to get a complete characterization: a cone $\polyh^{\sigma,\tau}$ is non-empty if and only if the graph $\dgraph^{\sigma,\tau}$ obtained by fixing $(\sigma,\tau)$ has a single directed cycle. For arbitrary graphs, \cref{policy_cone} does not explain which cones $\polyh^{\sigma,\tau}$ are non-empty, but it implies that these non-empty cones cover almost the entire space of weights. This is sufficient for our purposes.
\end{remark}

Thanks to \cref{policy_cone}, we can now extend the definition of a condition number from games with only one equilibrium cycle to any generic game.

\begin{definition}[Slack $\sk(\payoff)$ and condition number $\Cond(\payoff)$]\label{def:condition-number_general}
Suppose that $\payoff \in \polyh^{\sigma,\tau}$ for some pair of policies $(\sigma,\tau)$. Then, the \emph{slack} of the vector $\payoff$, denoted $\sk(\payoff)$, is the largest positive number $\delta$ that satisfies the inequalities
\begin{align*}
&\forall (i,j) \in \edges, i \in \Maxvertices, \vertices^{\sigma,\tau}(i) \neq \vertices^{\sigma,\tau}(j), &\gameval_{i}^{\sigma,\tau}(\payoff) \ge \delta +\gameval_{j}^{\sigma,\tau}(\payoff), \\
&\forall (i,j) \in \edges, i \in \Minvertices, \vertices^{\sigma,\tau}(i) \neq \vertices^{\sigma,\tau}(j), &\gameval_{i}^{\sigma,\tau}(\payoff) \le -\delta +\gameval_{j}^{\sigma,\tau}(\payoff), \\
&\forall (i,j) \in \edges, i \in \Maxvertices, \vertices^{\sigma,\tau}(i) = \vertices^{\sigma,\tau}(j), \sigma(i) \neq j, &\gameval^{\sigma,\tau}_{i}(\payoff) + \bias^{\sigma,\tau}_i(\payoff) \ge \delta + \payoff_{ij} + \bias^{\sigma,\tau}_j(\payoff), \\
&\forall (i,j) \in \edges, i \in \Minvertices, \vertices^{\sigma,\tau}(i) = \vertices^{\sigma,\tau}(j), \tau(i) \neq j, &\gameval^{\sigma,\tau}_{i}(\payoff) + \bias^{\sigma,\tau}_i(\payoff) \le -\delta + \payoff_{ij} + \bias^{\sigma,\tau}_j(\payoff).
\end{align*}
We define a \emph{condition number} of the MPG with weights $\payoff$ as $\Cond(\payoff) = 0$ if every vertex of $\dgraph$ has out-degree $1$ and, otherwise, as
\[
\Cond(\payoff) \coloneqq \frac{\hilb{\payoff}}{\sk(\payoff)} \, .
\]
\end{definition}
\begin{remark}
We note that if $(\sigma,\tau) \in \Xi$, i.e. the pair of policies $(\sigma,\tau)$ induces a single directed cycle (cf. \Cref{one_cycle_bias_def}), then the value is constant and the definitions above are identical to the ones given in \cref{def:condition-number}. Hence, the definitions above generalize \cref{def:condition-number} to all generic weight vectors, i.e. to games whose optimal solutions may have more than one cycle. Indeed, \cref{policy_cone} shows that the condition number is uniquely defined for every $\payoff \in \Generic$ and that the set $\R^m \setminus \Generic$ has Lebesgue measure zero.
\end{remark}

The following lemma shows that $\Cond$ is an upper bound on the condition numbers of the games on the graphs $\dgraph_{\ell}$.

\begin{lemma}\label{decomposition_condition_upper}
If $\payoff \in \Generic$, then for every $\ell \in [p]$ we have $\Cond_{\ell}(\payoff) \le \Cond(\payoff)$.
\end{lemma}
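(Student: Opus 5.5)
We argue by comparing numerators and denominators separately. Fix $\payoff \in \polyh^{\sigma,\tau}$; by \cref{policy_cone} this pair is unique, so $\Cond(\payoff)$ and the sets $\vertices_\ell = \vertices^{\sigma,\tau}_\ell$ are well defined. By \cref{constant_value_subgame}(iii), $\Cond_\ell(\payoff)$ is either $0$ or $\hilb{r^{(\ell)}}/\sk(r^{(\ell)})$. If it is $0$, there is nothing to prove, since $\Cond(\payoff) \ge 0$. So assume some vertex of $\dgraph_\ell$ has out-degree at least $2$ in $\dgraph_\ell$. Then that vertex also has out-degree at least $2$ in $\dgraph$, so $\Cond(\payoff) = \hilb{\payoff}/\sk(\payoff)$ is given by the second case of \cref{def:condition-number_general}.

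\emph{Numerators.} The quantity $\hilb{r^{(\ell)}}$ is a maximum of $\abs{\payoff_{ij} - \payoff_{i'j'}}$ over pairs of edges in $\edges \cap (\vertices_\ell \times \vertices_\ell) \subseteq \edges$. This is a subset of the pairs defining $\hilb{\payoff}$, hence $\hilb{r^{(\ell)}} \le \hilb{\payoff}$.

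\emph{Denominators.} We show $\sk(r^{(\ell)}) \ge \sk(\payoff)$. Write $\gameval = \gameval^{\sigma,\tau}(\payoff)$ and $\bias = \bias^{\sigma,\tau}(\payoff)$. The minimum defining $\sk(r^{(\ell)})$ runs over edges $(i,j)$ inside $\vertices_\ell$ with $\payoff_{ij} - \gameval_i + \bias_j - \bias_i \neq 0$. By the construction of $\bias^{\sigma,\tau}$ on $\dgraph^{\sigma,\tau}_\ell$, this quantity vanishes on the policy edges $(i,\sigma(i))$ and $(i,\tau(i))$. So every edge in the minimum is an edge with $\vertices^{\sigma,\tau}(i) = \vertices^{\sigma,\tau}(j)$ and $j \neq \sigma(i)$ (for $i \in \Maxvertices$), or $j \neq \tau(i)$ (for $i \in \Minvertices$). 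These are exactly the edges covered by the third and fourth families of inequalities in \cref{def:condition-number_general}. Taking $\delta = \sk(\payoff)$ there gives $\abs{\payoff_{ij} - \gameval_i + \bias_j - \bias_i} \ge \sk(\payoff)$ for each such edge, hence $\sk(r^{(\ell)}) \ge \sk(\payoff)$.

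Two points need care.
\begin{itemize}
\item The bias used in $\sk(r^{(\ell)})$ must be exactly the restriction of $\bias^{\sigma,\tau}(\payoff)$. This holds because \cref{constant_value_subgame}(iii) is stated in terms of $\gameval$ and $\bias$ directly, and an additive shift of the bias on $\vertices_\ell$ would not change the differences $\bias_j - \bias_i$ anyway.
\item The minimum defining $\sk(r^{(\ell)})$ is nonempty in the nondegenerate case. A vertex of out-degree at least $2$ in $\dgraph_\ell$ has a non-policy edge inside $\vertices_\ell$, and the strict cone inequalities make the corresponding term nonzero.
\end{itemize}

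Combining the two comparisons, $\Cond_\ell(\payoff) = \hilb{r^{(\ell)}}/\sk(r^{(\ell)}) \le \hilb{\payoff}/\sk(\payoff) = \Cond(\payoff)$.
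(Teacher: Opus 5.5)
Your proposal is correct and follows the paper's proof: you bound the numerator $\hilb{r^{(\ell)}}$ by $\hilb{\payoff}$ because it maximizes over a subset of edge pairs, and you bound the denominator $\sk(r^{(\ell)})$ below by $\sk(\payoff)$ because every nonzero term comes from a non-policy edge inside a single component, which the third and fourth families of slack inequalities control. Your extra care points go slightly beyond what the paper spells out: that $\Cond(\payoff)$ falls in its second case, and that the minimum is over a nonempty set.
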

\begin{proof}
Fix any $\ell \in [p]$. If every vertex of $\dgraph_{\ell}$ has out-degree $1$, then the claim is trivial. Otherwise, denote $\gameval = \gameval^{\sigma,\tau}(\payoff)$ and $\bias = \bias^{\sigma,\tau}(\payoff)$. By \cref{constant_value_subgame} we have
\[
\Cond_{\ell}(\payoff) = \frac{\max\{\abs{\payoff_{ij} - \payoff_{i'j'}} \colon (i,j),(i',j') \in \edges \cap (\vertices_{\ell} \times \vertices_{\ell})\}}{\min\{\abs{\payoff_{ij} - \gameval_i + \bias_j - \bias_i} \colon (i,j) \in \edges \cap (\vertices_{\ell} \times \vertices_{\ell}), \payoff_{ij} - \gameval_i + \bias_j - \bias_i \neq 0\} } \, .
\]
The numerator of this fraction is bounded from above by $\hilb{\payoff}$. Furthermore, we have
\begin{align*}
&\min\{\abs{\payoff_{ij} - \gameval_i + \bias_j - \bias_i} \colon (i,j) \in \edges \cap (\vertices_{\ell} \times \vertices_{\ell}), \payoff_{ij} - \gameval_i + \bias_j - \bias_i \neq 0\} \\
&\ge \min\{\abs{\payoff_{ij} - \gameval + \bias_j - \bias_i} \colon (i,j) \in \edges, \vertices^{\sigma,\tau}(i) = \vertices^{\sigma,\tau}(j), \payoff_{ij} - \gameval + \bias_j - \bias_i \neq 0\} \\
&\ge \sk(\payoff) \, ,
\end{align*}
which gives $\Cond_{\ell}(\payoff) \le \Cond(\payoff)$.
\end{proof}

In later sections, we will prove that the number $\Cond(\payoff)$ is small for games with weights perturbed by Gaussian noise. To do this, we will use the fact that the cones $\polyh^{\sigma,\tau}$ are ``wide''. This is formalized in the following lemma.

\begin{lemma}\label{wide_cone}
Fix a pair of policies $(\sigma,\tau)$ and suppose that the cone $\polyh^{\sigma,\tau}$ is nonempty. Then, $\polyh^{\sigma,\tau}$ contains a point $\payoff^*$ such that $\supnorm{\payoff^*} = 1$ and $\sk(\payoff^*) \ge \frac{1}{n}$.
\end{lemma}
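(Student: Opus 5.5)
The plan is to build $\payoff^*$ by hand rather than deform an arbitrary point of the cone. The only thing I will take from a point $\payoff^0 \in \polyh^{\sigma,\tau}$ (which exists by nonemptiness) is the order of its cycle values.

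Let $\mathcal{C}_1,\dots,\mathcal{C}_p$ be the cycles of $\dgraph^{\sigma,\tau}$ and $\vertices^{\sigma,\tau}_1,\dots,\vertices^{\sigma,\tau}_p$ the corresponding vertex classes. Put $\gameval_\ell^0 \coloneqq \gameval_i^{\sigma,\tau}(\payoff^0)$ for $i \in \vertices^{\sigma,\tau}_\ell$. Choose a strict total order on $[p]$ that extends the weak order given by the numbers $\gameval_\ell^0$, breaking ties arbitrarily. Assign to the classes, in this order, equally spaced target values $\gameval^*_\ell$ in $\interval{-1+\frac1n}{1-\frac1n}$. If $p=1$, take the single value $0$. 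Since $p \le n$, consecutive targets differ by at least
\[
\frac{2-2/n}{p-1} \ge \frac{2}{n} \, .
\]
The inequalities of \Cref{general_cone} for $\payoff^0$ say that every cross-component Max edge $(i,j)$ has $\gameval_i^0>\gameval_j^0$, and every cross-component Min edge has $\gameval_i^0<\gameval_j^0$. These strict inequalities are preserved by the targets $\gameval^*$, with gap at least $2/n$. Ties are harmless here: equal values in $\payoff^0$ can only occur between classes that no cross edge joins.

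Now define the weights, choosing the bias to be $0$. Set $\payoff^*_{ij} \coloneqq \gameval^*_\ell$ on every policy edge inside $\vertices^{\sigma,\tau}_\ell$. On a non-policy edge inside the same class, set $\payoff^*_{ij} \coloneqq \gameval^*_\ell - \frac1n$ if $i \in \Maxvertices$ and $\gameval^*_\ell + \frac1n$ if $i \in \Minvertices$. On cross-component edges, set $\payoff^*_{ij} \coloneqq 0$; these weights enter no inequality of the cone, since the cycles and paths of $\dgraph^{\sigma,\tau}$ use only policy edges.

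Every cycle $\mathcal{C}_\ell$ then has mean $\gameval^*_\ell$, so $\gameval^{\sigma,\tau}(\payoff^*)$ equals the targets. By \cref{one_cycle_graph}, the bias $\bias^{\sigma,\tau}(\payoff^*)$ is the distance to the cycle for the weights $\payoff^*_{ij}-\gameval^*_\ell$. These weights vanish on policy edges, so $\bias^{\sigma,\tau}(\payoff^*)=0$. Checking \Cref{def:condition-number_general} is now immediate. Within-class non-policy edges have slack exactly $\frac1n$, and cross edges have slack at least $\frac2n$. Hence $\payoff^* \in \polyh^{\sigma,\tau}$ and $\sk(\payoff^*) \ge \frac1n$. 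All entries also lie in $\interval{-1}{1}$, because the targets lie in $\interval{-1+\frac1n}{1-\frac1n}$.

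It remains to normalize to $\supnorm{\payoff^*}=1$. The cone is invariant under positive scaling, the slack is positively homogeneous, $\gameval^{\sigma,\tau}$ and $\bias^{\sigma,\tau}$ are linear (\cref{linear_functions}), and we have $0<\supnorm{\payoff^*}\le 1$. Dividing by $\supnorm{\payoff^*}$ therefore only increases the slack, so it stays at least $\frac1n$.

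The degenerate case is $\payoff^*=0$. This happens only when $p=1$ and there are no non-policy edges, that is, when every vertex has out-degree $1$; cross edges cannot exist when $p=1$. In that case the slack is vacuous and infinite, and we simply take $\payoff^*=\ones$.

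The main obstacle I expect is justifying that the order of values forced by cross edges can be realized consistently, i.e.\ that the required strict inequalities contain no cycle. Taking a linear extension of the value order of an actual point $\payoff^0$ of the cone settles this directly. Apart from that, the only care needed is with the spacing bound when $p \le 2$ and with the slack convention in the out-degree-one case.
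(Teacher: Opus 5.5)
Your proof is correct and uses the same construction as the paper: zero bias, class levels ordered consistently with the values of an existing point of $\polyh^{\sigma,\tau}$, and a fixed offset on non-policy edges (the paper uses integer levels $\ell$ with offsets $\pm 1$, obtaining slack $\ge 1$ before normalizing). Your choice to scale up front, so that all weights lie in $[-1,1]$ with slack $\ge \frac1n$, makes the final normalization step explicit. The paper's ``$\payoff'/\supnorm{\payoff'}$ satisfies the claim'' silently uses $\supnorm{\payoff'} \le n$, which holds because weight $p+1$ can only appear on a Min non-policy edge staying inside the top class, and this forces $p \le n-1$.
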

\begin{proof}
Let $\payoff$ be some point in $\polyh^{\sigma,\tau}$. By \cref{cone_gives_bias}, the vector $\gameval = \gameval^{\sigma,\tau}(\payoff) \in \R^n$ is the value of the game with weights $\payoff$. Let $\dgraph_{\ell} \coloneqq \bigl(\vertices_{\ell}, \edges \cap (\vertices_{\ell} \times \vertices_{\ell})\bigr)$ for $\ell \in [p]$ be the constant value components of the game as given by \cref{constant_value_subgame}. By permuting the indices, we can suppose that the sets $\vertices_1, \dots, \vertices_p$ are ordered in such a way that the value on each subgame forms a non-decreasing sequence, i.e., that for all $\ell_1 \le \ell_2$ and $i \in \vertices_{\ell_1}, j \in \vertices_{\ell_2}$ we have $\gameval_i \le \gameval_j$. Consider the point $\payoff'$ defined as
\[
\payoff'_{ij} \coloneqq \begin{cases}
\ell &\textit{if $i \in \Maxvertices \cap \vertices_\ell$ and $j = \sigma(i)$}, \\
\ell &\textit{if $i \in \Minvertices \cap \vertices_\ell$ and $j = \tau(i)$}, \\
\ell-1 &\textit{if $i \in \Maxvertices \cap \vertices_\ell$ and $j \neq \sigma(i)$}, \\
\ell+1 &\textit{if $i \in \Minvertices \cap \vertices_\ell$ and $j \neq \tau(i)$}.
\end{cases}
\]
We claim that $\payoff' \in \polyh^{\sigma,\tau}$. To see that this is the case, fix $i \in [n]$ and let $\ell$ be such that $i \in \vertices_\ell$. Then, we have $\gameval^{\sigma,\tau}_i(\payoff') = \ell$ and $\bias^{\sigma,\tau}_i(\payoff') = 0$ by the definitions of $\payoff'$ and of the functions $\gameval^{\sigma,\tau},\bias^{\sigma,\tau}$. Suppose that $i \in \Maxvertices$ and fix an edge $(i,j)$. Furthermore, let $\ell'$ be such that $j \in \vertices_{\ell'}$. If $\ell' \neq \ell$, then the fact that $\payoff \in \polyh^{\sigma,\tau}$ implies that $\gameval_i > \gameval_j$. Hence $\ell > \ell'$ and $\gameval^{\sigma,\tau}_i(\payoff') \ge 1 + \gameval^{\sigma,\tau}_j(\payoff')$. Furthermore, if $\ell' = \ell$ and $j \neq \sigma(i)$, then $\payoff'_{ij} = \ell - 1$ and therefore $\gameval^{\sigma,\tau}_{i}(\payoff') + \bias^{\sigma,\tau}_i(\payoff') = \ell = 1 + (\ell - 1) = 1 + \payoff'_{ij} + \bias^{\sigma,\tau}_j(\payoff')$. Analogous statements hold if $i \in \Minvertices$. Therefore, $\payoff' \in \polyh^{\sigma,\tau}$ and $\sk(\payoff') \ge 1$. Hence, $\payoff^* = \payoff'/\supnorm{\payoff'}$ satisfies the claim.
\end{proof}

To finish this section, we note that the results from the previous sections do not immediately show that well-conditioned games can be solved efficiently. Indeed, if $\Cond(\payoff)$ is small, then every game on $\dgraph_{\ell}$ can be quickly solved by the \KM iteration or the Halpern iteration. However, we first have to find the decomposition of the game on $\dgraph$ into games on $\dgraph_{\ell}$. For this reason, in the next section we prove that this decomposition can be found quickly for well-conditioned games.

\section{Finding value and decomposing the game using value iteration}\label{sec:value_decomposition}

In this section, we prove that various value iteration algorithms can find the edges between vertices of different values. Moreover, they can do this quickly for well-conditioned games. To do so formally, we now define our second condition number for mean payoff games. This condition number should be compared with the condition numbers studied in \cite{mtns2018,AllamigeonGaubertKatzSkomra:2025}. One condition number introduced in these papers can be applied to constant value games with non-zero value and is given by $\frac{\inf_{\bias}\hilb{\bias}}{\abs{\gameval}}$, where $\gameval$ is the (constant) value of the game, and the infimum goes over all biases of the game. The papers \cite{mtns2018,AllamigeonGaubertKatzSkomra:2025} show that the Zwick--Paterson iteration can detect if the value is positive or negative using a number of iterations that is polynomial in this condition number. For games with non-constant value, another condition number, analyzed in \cite[Theorem~11]{AllamigeonGaubertKatzSkomra:2025}, allows to use the Zwick--Paterson value iteration to detect the set of states with maximal value. As discussed in \cite{AllamigeonGaubertKatzSkomra:2025}, both condition numbers can be applied to more general games such as stochastic mean payoff games and entropy games. The purpose of our condition number is slightly different: instead of finding the set of states with maximal value we want to find edges between states with different values. Furthermore, we only consider deterministic mean payoff games. Therefore, we define a modification of the condition number from \cite[Theorem~11]{AllamigeonGaubertKatzSkomra:2025} that is suitable for our purposes.

\begin{definition}[$\Cond^{\val}(\payoff)$]
For any game with weights $\payoff \in \R^m$ and value $\gameval \in \R^n$ we define the \emph{value condition number} $\Cond^{\val}(\payoff)$ as $\Cond^{\val}(\payoff) = 0$ if there exist no edges between vertices of different value and
\[
\Cond^{\val}(\payoff) \coloneqq \frac{\hilb{\payoff}}{\min_{(i,j) \in \edges, \gameval_i \neq \gameval_j}|\gameval_i - \gameval_j|}
\]
otherwise.
\end{definition}

We note that the value condition number is upper bounded by the condition number from \cref{def:condition-number_general}.

\begin{lemma}\label{value_cond_bound}
Suppose that $\payoff \in \Generic$. Then, $\Cond^{\val}(\payoff) \le \Cond(\payoff)$.
\end{lemma}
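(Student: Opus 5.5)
The plan is to compare the two denominators directly, since both condition numbers have the same numerator $\hilb{\payoff}$. Fix the unique pair $(\sigma,\tau)$ with $\payoff \in \polyh^{\sigma,\tau}$; it exists by \cref{policy_cone}, since $\payoff \in \Generic$. Write $\gameval \coloneqq \gameval^{\sigma,\tau}(\payoff)$. By \cref{cone_gives_bias}, $\gameval$ is the value vector of the game.

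First we dispose of the trivial cases. If there is no edge between vertices of different value, then $\Cond^{\val}(\payoff) = 0 \le \Cond(\payoff)$, because $\Cond(\payoff) \ge 0$. Suppose instead that such an edge exists. Then some vertex has out-degree at least $2$: every vertex is assumed to have an outgoing edge, and if all out-degrees were $1$, the value would be constant along every edge. So $\Cond(\payoff) = \hilb{\payoff}/\sk(\payoff)$. It then suffices to show
\[
\min_{(i,j) \in \edges,\ \gameval_i \neq \gameval_j} \abs{\gameval_i - \gameval_j} \ \ge\ \sk(\payoff).
\]

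Take an edge $(i,j)$ with $\gameval_i \neq \gameval_j$. By \cref{same_value_zone}, this forces $\vertices^{\sigma,\tau}(i) \neq \vertices^{\sigma,\tau}(j)$. The first two families of inequalities in \cref{def:condition-number_general} hold with $\delta = \sk(\payoff)$. If $i \in \Maxvertices$, they give $\gameval_i - \gameval_j \ge \sk(\payoff)$; if $i \in \Minvertices$, they give $\gameval_j - \gameval_i \ge \sk(\payoff)$. In both cases $\abs{\gameval_i - \gameval_j} \ge \sk(\payoff)$. Taking the minimum over all such edges yields the displayed inequality, and dividing $\hilb{\payoff}$ by each side gives $\Cond^{\val}(\payoff) \le \Cond(\payoff)$.

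No step is a real obstacle. The only points needing care are identifying $\gameval^{\sigma,\tau}(\payoff)$ with the true value, so that the two definitions refer to the same value vector (this is \cref{cone_gives_bias}), and ruling out the degenerate case $\Cond(\payoff) = 0$ when a cross-value edge exists.
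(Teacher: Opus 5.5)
Your proof is correct and follows the same route as the paper: dispose of the case with no cross-value edges, then read off $\sk(\payoff) \le \min_{(i,j)\in\edges,\,\gameval_i\neq\gameval_j}\abs{\gameval_i-\gameval_j}$ from the first two families of slack inequalities. Your extra checks are details the paper leaves implicit: that $\gameval^{\sigma,\tau}(\payoff)$ is the true value vector (via \cref{cone_gives_bias}), and that a cross-value edge forces some vertex to have out-degree at least $2$, so $\Cond(\payoff)$ is given by the ratio rather than set to $0$.
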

\begin{proof}
  If there are no edges between vertices of different value, then
  $\Cond^{\val}(\payoff)=0$, and the claim is immediate since
  $\Cond(\payoff)\ge 0$. Otherwise, the definition of slack gives us
  $\sk(\payoff) \le \min_{(i,j) \in \edges, \gameval_i \neq \gameval_j}|\gameval_i - \gameval_j|$,
  from which the claim follows.
\end{proof}

We now show that if $\Cond^{\val}(\payoff)$ is small, then the edges between vertices of different value can be efficiently found using various value iteration algorithms. More precisely, we show this for the Zwick--Paterson iteration, the \KM iteration, and the Halpern iteration. For the sake of simplicity, we will only consider value iteration algorithms starting from $0$, i.e., with $u^0 = 0$.

We start by analyzing the Zwick--Paterson iteration, i.e., the sequence given by $u^k \coloneqq \shapley(u^{k-1})$ for $k \ge 1$. In this case, \cite{zwick_paterson} showed that the normalized sequence $u^k/k$ converges to the value of the mean payoff game.

\begin{lemma}[{\cite[Theorem~2.2]{zwick_paterson}}]\label{val_iter_estimate}
Let $\gameval \in \R^n$ be the value vector of the mean payoff game with weights $\payoff$. Then, for every $k \ge 0$ and $i \in \vertices$ we have
\[
k\gameval_i -2n\supnorm{\payoff} \le \bias^k_i \le k\gameval_i + 2n\supnorm{\payoff} \, .
\]
In particular, the sequence $(u^k/k)$ converges to $\gameval$.
\end{lemma}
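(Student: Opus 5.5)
We will prove the two-sided estimate by fixing optimal policies from \cref{th:value_mpg} and comparing the $k$-stage game with the zero-player games obtained by freezing one player. Here $\bias^k = \shapley^k(0)$, so $\bias^k_i$ is the value of the $k$-stage total-payoff game started at $i$: Max maximizes and Min minimizes the sum of the first $k$ weights. This follows by induction on $k$ from the definition of $\shapley$.

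\emph{Lower bound.} Let $\sigma^*$ be an optimal policy of Max. By monotonicity, $\bias^k \ge \shapley_{\sigma^*}^k(0)$. Here $\shapley_{\sigma^*}$ is the Shapley operator of the one-player (Min) game obtained by fixing $\sigma^*$; it is defined like $\shapley$, except that at $i \in \Maxvertices$ only the edge $(i,\sigma^*(i))$ is available. So $\shapley_{\sigma^*}^k(0)_i$ is the minimal weight of a walk of length $k$ from $i$ in the graph $\dgraph^{\sigma^*}$.

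Any such walk of length $k$ can be decomposed into a simple path of at most $n-1$ edges together with a collection of simple cycles, by repeatedly removing cycles. Every cycle reachable from $i$ in $\dgraph^{\sigma^*}$ is a cycle that some policy $\tau$ of Min can reach. Hence it has mean weight at least $\gameval_j$ for every vertex $j$ on it, by optimality of $\sigma^*$ (\cref{eq:optimal_policies}).

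Moreover, values do not decrease along edges of $\dgraph^{\sigma^*}$. This holds because Max's optimal choices satisfy $\gameval_{\sigma^*(i)} \ge \gameval_i$, while Min's moves always satisfy $\gameval_j \ge \gameval_i$. So $\gameval_j \ge \gameval_i$ on every reachable vertex $j$. The cycles therefore contribute at least $(\text{total cycle length})\cdot\gameval_i$. The path contributes at least $-(n-1)\supnorm{\payoff}$. The shortfall of cycle length compared to $k$ is at most $n-1$ edges, which costs at most $(n-1)\abs{\gameval_i} \le (n-1)\supnorm{\payoff}$. Altogether,
\[
\bias^k_i \ge k\gameval_i - 2n\supnorm{\payoff}.
\]
The claim that values do not decrease along edges of $\dgraph^{\sigma^*}$ is the delicate point. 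We would justify it with the GKK system (\cref{optimality_eq}): take $\sigma^*$ to be bias-induced, so that $\gameval_{\sigma^*(i)} = \gameval_i$.

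\emph{Upper bound and convergence.} The upper bound is symmetric, using a bias-induced optimal $\tau^*$ of Min, the inequality $\bias^k \le \shapley_{\tau^*}^k(0)$, and the maximal-weight walk in $\dgraph^{\tau^*}$. Finally, dividing the two-sided estimate by $k$ gives $\supnorm{\bias^k/k - \gameval} \le 2n\supnorm{\payoff}/k \to 0$.

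The main obstacle is the cycle-decomposition accounting: we must check that both the leftover path and the length mismatch are bounded by $n\supnorm{\payoff}$ each. We must also check that every cycle appearing in the decomposition lies among vertices whose value is at least $\gameval_i$, which is where the bias-induced choice of $\sigma^*$ is used.
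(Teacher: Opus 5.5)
Your proof is correct and is essentially the Zwick--Paterson argument that the paper cites here and reproduces for \KM iteration in the proof of \cref{km_val_iter}: you compare with the one-player operator of a fixed optimal policy via monotonicity, then split each length-$k$ walk into a path of at most $n-1$ edges plus cycles of mean weight at least $\gameval_i$. The one difference is that your bias-induced choice of $\sigma^*$ is valid but unnecessary: for any optimal $\sigma^*$, a Min policy can follow a simple path from $i$ to any cycle of $\dgraph^{\sigma^*}$ reachable from $i$ and then go around it, so optimality at $i$ alone gives that cycle mean weight at least $\gameval_i$.
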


A translation argument gives a variant of \cref{val_iter_estimate} for the Hilbert seminorm.

\begin{lemma}\label{zp_estimate_hilbert}
Let $\gameval \in \R^n$ be the value vector of the mean payoff game. Then, for every $k \ge 0$ and $i \in \vertices$ we have
\[
k\gameval_i -n\hilb{\payoff} \le \bias^k_i \le k\gameval_i + n\hilb{\payoff} \, .
\]
\end{lemma}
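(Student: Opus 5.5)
The plan is a translation argument: shift all weights by a constant so that the sup-norm becomes half the Hilbert seminorm, apply \cref{val_iter_estimate} to the shifted game, and then translate back. The only facts needed are that a uniform shift of the weights shifts both the iterates and the value in a controlled way.

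Let $c \coloneqq \frac{1}{2}(\max_{ij}\payoff_{ij} + \min_{ij}\payoff_{ij})$ and set $\tilde\payoff \coloneqq \payoff - c$, meaning we subtract $c$ from every edge weight. Then $\supnorm{\tilde\payoff} = \frac12\hilb{\payoff}$.

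Let $\tilde\shapley$ be the Shapley operator for $\tilde\payoff$. Directly from the definition, $\tilde\shapley(x) = \shapley(x) - c\ones$ for every $x$. By additive homogeneity and induction on $k$, the Zwick--Paterson iterates started at $0$ therefore satisfy $\tilde\bias^k = \bias^k - kc\ones$.

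Similarly, every cycle mean drops by exactly $c$, so $g_i(\sigma,\tau)$ shifts by $-c$ for every pair of policies. Consequently the value of the shifted game is $\tilde\gameval = \gameval - c\ones$; this follows from \cref{th:value_mpg}, or from the min-max formula used in the proof of \cref{value_lipschitz}.

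Applying \cref{val_iter_estimate} to $\tilde\payoff$ gives
\[
k\tilde\gameval_i - 2n\supnorm{\tilde\payoff} \le \tilde\bias^k_i \le k\tilde\gameval_i + 2n\supnorm{\tilde\payoff} .
\]
Substituting $\tilde\bias^k_i = \bias^k_i - kc$ and $\tilde\gameval_i = \gameval_i - c$, the $kc$ terms cancel on both sides. Since $2n\supnorm{\tilde\payoff} = n\hilb{\payoff}$, this is exactly the claimed bound.

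There is no real obstacle here. The only steps needing care are the two shift identities, for the iterates and for the value, and both are one-line consequences of additive homogeneity and the linearity of cycle means.
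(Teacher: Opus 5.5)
Your proposal is correct and is essentially the paper's own translation argument. The paper shifts by $\alpha$ with $\hilb{\payoff} = 2\supnorm{\payoff+\alpha\ones}$ (your $\alpha = -c$), uses $\tilde\bias^k = \bias^k + k\alpha\ones$ and $\tilde\gameval = \gameval + \alpha\ones$, and applies \cref{val_iter_estimate}; your explicit induction via additive homogeneity for the iterate shift spells out a step the paper leaves implicit.
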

\begin{proof}
Let $\alpha \in \R$ be such that $\hilb{\payoff} = 2\supnorm{\payoff + \alpha \ones}$. Consider the game with weights $\tilde{\payoff} = \payoff + \alpha \ones$. Its Shapley operator is then given by $\tilde{\shapley}(x) = \alpha \ones + \shapley(x)$ for any $x$. Hence, if we define the sequence $\tilde{\bias}^0 = 0$ and $\tilde{\bias}^k = \tilde{\shapley}(\bias^{k-1})$, then $\tilde{\bias}^k = k\alpha\ones + \bias^k$ for all $k$. Moreover, the value vector of this game satisfies $\tilde{\gameval} = \alpha\ones + \gameval$. By applying \cref{val_iter_estimate} to this game we get $\tilde{\bias}^k_i \le k\tilde{\gameval}_i + 2n\supnorm{\tilde{\payoff}}$, so $\bias^k \le k \gameval_i + n\hilb{\payoff}$. Analogously, $\bias^k \ge k \gameval_i - n\hilb{\payoff}$.
\end{proof}

This estimate leads to the following theorem that shows that we can use the Zwick--Paterson value iteration to detect the edges between vertices of different values.

\begin{theorem}\label{condition_of_zp}
Fix $\payoff \in \R^m$ and consider the Zwick--Paterson value iteration sequence: $u^0 = 0$ and $u^k = \shapley(u^{k-1})$. For any edge $(i,j) \in \edges$ we have the following two properties:
\begin{itemize}
\item if $\gameval_i = \gameval_j$, then $\abs{\bias_i^k - \bias_j^k} \le 2n\hilb{\payoff}$ for all $k$.
\item if $\gameval_i \neq \gameval_j$, then for any $k \ge 1 + 4n\Cond^{\val}(\payoff)$ we have $\abs{\bias_i^k - \bias_j^k} > 2n\hilb{\payoff}$.
\end{itemize}
\end{theorem}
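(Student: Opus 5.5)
Both bullet points will follow directly from the two-sided estimate of \cref{zp_estimate_hilbert}, namely $\abs{\bias^k_l - k\gameval_l} \le n\hilb{\payoff}$ for every vertex $l$ and every $k \ge 0$. Applying it at $i$ and at $j$ and subtracting gives
\[
k(\gameval_i - \gameval_j) - 2n\hilb{\payoff} \le \bias^k_i - \bias^k_j \le k(\gameval_i - \gameval_j) + 2n\hilb{\payoff}.
\]

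For the first bullet, when $\gameval_i = \gameval_j$ the terms $k(\gameval_i - \gameval_j)$ vanish. This immediately gives $\abs{\bias^k_i - \bias^k_j} \le 2n\hilb{\payoff}$ for all $k$.

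For the second bullet, assume $\gameval_i \neq \gameval_j$ and set $\delta \coloneqq \min_{(i',j') \in \edges, \gameval_{i'} \neq \gameval_{j'}} \abs{\gameval_{i'} - \gameval_{j'}} > 0$. Since $(i,j)$ is an edge with distinct values, $\abs{\gameval_i - \gameval_j} \ge \delta$. The display then gives
\[
\abs{\bias^k_i - \bias^k_j} \ge k\abs{\gameval_i - \gameval_j} - 2n\hilb{\payoff} \ge k\delta - 2n\hilb{\payoff}.
\]
It suffices to have $k\delta > 4n\hilb{\payoff}$.

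The degenerate case needs a short aside. If $\hilb{\payoff} = 0$, all weights are equal, so the value is constant and there are no edges with $\gameval_i \neq \gameval_j$; the second bullet is then vacuous. Hence we may assume $\hilb{\payoff} > 0$. In that case $\Cond^{\val}(\payoff) = \hilb{\payoff}/\delta$, and $k \ge 1 + 4n\Cond^{\val}(\payoff)$ gives $k\delta \ge \delta + 4n\hilb{\payoff} > 4n\hilb{\payoff}$. Therefore $\abs{\bias^k_i - \bias^k_j} > 2n\hilb{\payoff}$, as required.

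There is no real obstacle. The only points to check are the strictness of the inequality, which is supplied by the extra $+1$ in the threshold on $k$, and the degenerate case $\hilb{\payoff}=0$, which is handled above.
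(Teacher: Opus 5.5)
Your proof is correct and follows the paper's argument: you subtract the two-sided estimate of \cref{zp_estimate_hilbert} at $i$ and $j$, and you use $\abs{\gameval_i - \gameval_j} \ge \hilb{\payoff}/\Cond^{\val}(\payoff)$ together with the extra $+1$ in the threshold to get the strict inequality. Your explicit handling of the degenerate case $\hilb{\payoff}=0$ is a small point that the paper leaves implicit.
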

\begin{proof}
Fix an edge $(i,j)$ and suppose that $\gameval_i \ge \gameval_j$. By \cref{zp_estimate_hilbert}, for any $k \ge 0$ we have
\[
k (\gameval_i - \gameval_j) - 2n\hilb{\payoff}  \le \bias^k_i - \bias^k_j \le k (\gameval_i - \gameval_j) + 2n\hilb{\payoff} \, .
\]
Hence, if $\gameval_i = \gameval_j$, then $\abs{\bias^k_i - \bias^k_j} \le 2n\hilb{\payoff}$ for any $k \ge 0$. Moreover, if $\gameval_i \neq \gameval_j$, then $\gameval_i - \gameval_j \ge \frac{\hilb{\payoff}}{\Cond^{\val}(\payoff)}$ by the definition of the condition number. Hence, if we assume that $k \ge 1 + 4n\Cond^{\val}(\payoff)$, then $k(\gameval_i - \gameval_j) > 4n\hilb{\payoff}$ and $\bias^k_i - \bias^k_j > 2n\hilb{\payoff}$. An analogous reasoning holds if $\gameval_i \le \gameval_j$.
\end{proof}

The theorem above gives a trivial algorithm that detects the edges between vertices of different value: we simply check if an edge satisfies $\abs{\bias_i^k - \bias_j^k} > 2n\hilb{\payoff}$. The drawback of this algorithm is that it does not provide a way to check that all such edges have already been found. Indeed, if we do not have an estimate for the number $\Cond^{\val}(\payoff)$, and we run the Zwick--Paterson value iteration, then it is not clear when to stop the algorithm. We will deal with this difficulty later, in \cref{sec:zph_vi}, where we show how to combine the Zwick--Paterson iteration with the Halpern iteration to solve arbitrary mean payoff games.

In the rest of this section, we analyze the \KM iteration and the Halpern iteration. We show that the normalized sequence of the \KM iteration and the Halpern iteration converges to the value vector, in a similar way as the Zwick--Paterson iteration.

Consider the \KM iteration started at $0$, i.e., the iteration sequence given by $u^0 \coloneqq 0$, $u^{k} \coloneqq \frac{1}{2}u^{k-1} + \frac{1}{2}\shapley(u^{k-1})$. In the next proposition, we show that the sequence $(u^k/k)$ converges to $\gameval/2$, where $\gameval$ is the value of the game. Before presenting the proof, we give some comments that put this result in a broader context.

Define the operator $G(x) \coloneqq \frac{1}{2}x + \frac{1}{2}\shapley(x)$. The Shapley operator $\shapley$ is non-expansive in the supremum norm. This also implies that $G$ is non-expansive in the supremum norm. Since this operator is semi-algebraic, general results on non-expansive operators imply that the limit $\lim_{k \to \infty} G^{k}(0)/k$ exists \cite{bolte2013}. Even more, since both $F$ and $G$ are piecewise-affine, one can apply Kohlberg's theorem on invariant half-lines \cite{kohlberg} to show that the limit $\lim_{k \to \infty} G^{k}(0)/k$ is equal to $\gameval/2$. This implies that the sequence $(u^k/k)$ converges to $\gameval/2$ also for stochastic mean payoff games and not only for deterministic games. (We refer the reader to \cite{polyhedra_equiv_mean_payoff} and \cite[Chapter~6]{skomra_phd} for more information about Kohlberg's theorem and its applications to mean payoff games.) For deterministic games, the proposition below gives a stronger version of this result: we not only show that the sequence $(u^k/k)$ converges, but also estimate the rate of this convergence. We note that for stochastic Markov Decision Processes, an estimate of the rate of convergence of normalized \KM iteration appeared in \cite[Theorem~7]{LeeRyu:2025}. We also note that the operator $G$ has a less abstract interpretation: it is a Shapley operator of some stochastic mean payoff game. Using this interpretation, one can prove the proposition below by a probabilistic argument. Since we do not consider stochastic games in this work, we present a proof based on a counting argument.

\begin{proposition}\label{km_val_iter}
Let $\gameval \in \R^n$ be the value vector of the mean payoff game with weights $\payoff$. Furthermore, let $(u^k)$ be the sequence obtained from the \KM iteration. Then, for every $k \ge 0$ and $i \in \vertices$ we have
\[
k\frac{\gameval_i}{2} -2n\supnorm{\payoff} \le \bias^k_i \le k\frac{\gameval_i}{2} + 2n\supnorm{\payoff} \, .
\]
In particular, the sequence $(2u^k/k)$ converges to $\gameval$.
\end{proposition}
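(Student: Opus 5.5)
We would prove the lower bound $\bias^k_i \ge k\frac{\gameval_i}{2} - 2n\supnorm{\payoff}$; the upper bound is symmetric, exchanging the roles of Max and Min. The plan is to read $G(x) = \frac12 x + \frac12 \shapley(x)$ as the dynamic-programming operator of a ``lazy'' version of the game. At each step a fair coin is tossed. With probability $\frac12$ the token stays put and the payoff is $0$. Otherwise the owner of the current vertex moves along an edge and collects its weight.

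Since we are told to avoid probabilistic language, we would phrase this as a counting argument over the $2^k$ coin sequences $s \in \{0,1\}^k$. The first step is to fix an optimal policy $\sigma$ of Max (\cref{th:value_mpg}). Let $G^\sigma$ be the operator obtained by replacing the max at vertices of $\Maxvertices$ by the choice of $\sigma$. By monotonicity, $G^\sigma \le G$ pointwise, and an easy induction gives $(G^\sigma)^k(0) \le G^k(0) = \bias^k$.

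The second step is to unfold $(G^\sigma)^k(0)_i$ by induction on $k$. We would show that it equals the minimum, over history-dependent (coin-adaptive) choices of Min, of
\[
2^{-k}\sum_{s \in \{0,1\}^k} w(s),
\]
where $w(s)$ is the total weight of the path in $\dgraph^\sigma$ (the graph with $\sigma$ fixed) that is actually traversed along $s$. That path has $N(s) = \sum_t s_t$ edges. This identity is just the recursion $G^\sigma(x)_i = \frac12 x_i + \frac12 \min_j\{\payoff_{ij}+x_j\}$ at Min vertices, and the analogous recursion with $j=\sigma(i)$ at Max vertices, unrolled. Consequently, a bound of the form $w(s) \ge N(s)\gameval_i - 2n\supnorm{\payoff}$ that holds for every path suffices. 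Averaging it and using $2^{-k}\sum_s N(s) = k/2$ gives the lower bound.

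The third step, which we expect to be the main point, is the path lemma. We claim that every path of length $N$ in $\dgraph^\sigma$ starting from $i$ has weight at least $N\gameval_i - 2n\supnorm{\payoff}$. First we note that values never decrease along edges of $\dgraph^\sigma$. For Max edges this holds because $\gameval_{\sigma(j)} \ge \gameval_j$ by optimality of $\sigma$. For Min edges it follows from the GKK equations, which give $\gameval_j = \min_{(j,l)} \gameval_l$ (\cref{optimality_eq}).

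Next we take any simple cycle $C$ of $\dgraph^\sigma$. Its vertices all have the same value $\gameval_C$, because the value is nondecreasing around a closed loop. Moreover, the mean weight of $C$ is at least $\gameval_C$: Min can realize $C$ as the final cycle against $\sigma$, and $\sigma$ guarantees $\gameval$. Since the path starts at $i$ and values are nondecreasing along it, every cycle it meets satisfies $\gameval_C \ge \gameval_i$. We then repeatedly remove simple cycles from the path, which leaves at most $n-1$ residual edges. Each cycle contributes at least $|C|\gameval_i$. Each residual edge contributes at least $-\supnorm{\payoff} \ge \gameval_i - 2\supnorm{\payoff}$, using $|\gameval_i| \le \supnorm{\payoff}$. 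The total deficit is therefore at most $2(n-1)\supnorm{\payoff} \le 2n\supnorm{\payoff}$.

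The convergence $2\bias^k/k \to \gameval$ then follows immediately.
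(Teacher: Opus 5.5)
Your proposal is correct and follows essentially the same route as the paper: fix an optimal policy $\sigma$ of Max, bound the \KM iterates from below by those of $x \mapsto \frac12 x + \frac12 \shapley^{\sigma}(x)$, unfold the latter into an average over $2^k$ walks (your coin sequences play the role of the paper's red zero-weight loops), and bound each walk by a cycle decomposition. You also argue explicitly that every cycle met by the walk has mean at least $\gameval_i$: values are nondecreasing along edges of $\dgraph^{\sigma}$, and any simple cycle of $\dgraph^{\sigma}$ can be closed by a Min policy, which fills in a step the paper only asserts.
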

\begin{proof}
We adapt the argument from \cite[Theorem~2.2]{zwick_paterson}. Let $\sigma$ be an optimal policy of player Max and consider the Shapley operator $\shapley^{\sigma}$ of the one-player game in which Max plays according to $\sigma$. Note that $\shapley^{\sigma}(x) \le \shapley(x)$ for any $x \in \R^n$. Consider the \KM iteration on the operator $\shapley^{\sigma}$, i.e., the sequence given by $w^0 = 0$ and $w^{k} = \frac{1}{2} w^{k-1} + \frac{1}{2}\shapley^{\sigma}(w^{k-1})$. By induction, we have $w^{k} \le u^{k}$ for all $k$. Indeed, $w^0 = u^0 = 0$. If $w^{k-1} \le u^{k-1}$ coordinatewise, then, using the monotonicity of $\shapley$ and the inequality $\shapley^\sigma(x) \le \shapley(x)$ for every $x$, we get $w^k = \frac{1}{2}w^{k-1} + \frac{1}{2}\shapley^\sigma(w^{k-1}) \le \frac{1}{2}u^{k-1} + \frac{1}{2}\shapley(u^{k-1}) = u^k$. Let $\dgraph^{\sigma}$ be the directed graph of the one-player game in which Max plays according to $\sigma$. We modify this graph by adding a loop from every vertex of $\dgraph^{\sigma}$ to itself, with weight $0$. To distinguish these new loops from the loops that were previously in the graph, we color the new loops in red. Denote by $\dgraph'$ the modified graph. Given a directed walk $P$ in $\dgraph'$, we denote by $\payoff(P)$ the total weight of the edges on the walk. We claim that for every $k \ge 0$ and every $i \in [n]$, there exists a set $\mathcal{P}_{k,i}$ of $2^k$ directed walks in $\dgraph'$ such that: (a) every walk in $\mathcal{P}_{k,i}$ starts in $i$ and has $k$ edges; (b) for every $0 \le l \le k$, $\mathcal{P}_{k,i}$ contains exactly $\binom{k}{l}$ walks with $l$ red loops; (c) we have $w^{k}_i = \sum_{P \in \mathcal{P}_{k,i}} \frac{1}{2^k}\payoff(P)$. We prove the claim by induction. For $k = 0$, we take $\mathcal{P}_{0,i}$ to be the set consisting of one trivial walk that has only one vertex $i$ and zero edges. Suppose that the claim is true for $k - 1$ and all $i \in [n]$. Fix a vertex $i \in [n]$ and note that there exists $j \in [n]$ such that $(i,j)$ is an edge in $\dgraph^{\sigma}$ and
\[
w^k_i = \frac{1}{2} w^{k-1}_i + \frac{1}{2} \payoff_{ij} + \frac{1}{2} w^{k-1}_j = \sum_{P \in \mathcal{P}_{k-1,i}}\frac{1}{2^k}\payoff(P) + \sum_{P \in \mathcal{P}_{k-1,j}}\frac{1}{2^k}(\payoff_{ij} + \payoff(P)) \, .
\]
Hence, we construct $\mathcal{P}_{k,i}$ as a union of two sets: first obtained by prepending the red loop $(i,i)$ to every walk in $\mathcal{P}_{k-1,i}$ and second obtained by prepending the edge $(i,j)$ to every walk in $\mathcal{P}_{k-1,j}$. Note that the set $\mathcal{P}_{k,i}$ constructed in this way has $\binom{k-1}{l-1} + \binom{k-1}{l} = \binom{k}{l}$ walks with exactly $l$ red loops. In particular, $\mathcal{P}_{k,i}$ satisfies the properties (a)--(c) above and this finishes the induction step.

Let $P \in \mathcal{P}_{k,i}$ be a walk with $l$ red loops. By deleting the red loops from $P$, we obtain a directed walk in $\dgraph^{\sigma}$ of length $k-l$. Since $\sigma$ is optimal, any directed cycle in this walk has mean weight at least $\gameval_i$. If this walk has at least $n$ edges, then it contains a directed cycle, and hence we get the inequality
\[
\payoff(P) \ge \gameval_i(k - l - n + 1) - \supnorm{\payoff}(n-1) \, .
\]
Therefore,
\begin{align*}
u^{k}_i &\ge w^{k}_i \ge \gameval_i(k - n + 1) - \supnorm{\payoff}(n-1) - \gameval_i \frac{1}{2^k}\sum_{l = 0}^{k} \binom{k}{l}l \\
&= \gameval_i(k - n + 1) - \supnorm{\payoff}(n-1) - \frac{1}{2} \gameval_i k \ge k \frac{\gameval_i}{2} - 2\supnorm{\payoff}n \, .
\end{align*}
The proof of the other inequality is analogous, by fixing an optimal policy $\tau$ of player Min.
\end{proof}

Using a translation argument we can obtain a variant of \cref{km_val_iter} for the Hilbert seminorm.

\begin{lemma}\label{km_estimate_hilbert}
Let $\gameval \in \R^n$ be the value vector of the mean payoff game and let $(u^k)$ be the sequence obtained from the \KM iteration. Then, for every $k \ge 0$ and $i \in \vertices$ we have
\[
k\frac{\gameval_i}{2} -n\hilb{\payoff} \le \bias^k_i \le k\frac{\gameval_i}{2} + n\hilb{\payoff} \, .
\]
\end{lemma}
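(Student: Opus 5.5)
The plan is to copy the translation argument from the proof of \cref{zp_estimate_hilbert}, now using \cref{km_val_iter} in place of \cref{val_iter_estimate}. The one new point is how the \KM iteration interacts with a constant shift of the weights. A shift of the weights adds a term that grows linearly in $k$, but at half the rate seen for the Zwick--Paterson iteration. This matches the factor $\frac{1}{2}$ in front of $\gameval_i$ in \cref{km_val_iter}.

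First, choose $\alpha \in \R$ with $\hilb{\payoff} = 2\supnorm{\payoff + \alpha\ones}$; concretely, $\alpha = -\frac{1}{2}(\max_{ij}\payoff_{ij} + \min_{ij}\payoff_{ij})$. Set $\tilde{\payoff} \coloneqq \payoff + \alpha\ones$. Its Shapley operator is $\tilde{\shapley}(x) = \alpha\ones + \shapley(x)$, and its value is $\tilde{\gameval} = \gameval + \alpha\ones$. The second fact holds because every cycle mean shifts by $\alpha$.

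Let $\tilde{\bias}^0 = 0$ and $\tilde{\bias}^k = \frac{1}{2}\tilde{\bias}^{k-1} + \frac{1}{2}\tilde{\shapley}(\tilde{\bias}^{k-1})$. I claim $\tilde{\bias}^k = \bias^k + \frac{k\alpha}{2}\ones$, and will prove it by induction. Assume it for $k-1$. By additive homogeneity,
\[
\tilde{\shapley}\bigl(\bias^{k-1} + \tfrac{(k-1)\alpha}{2}\ones\bigr) = \shapley(\bias^{k-1}) + \bigl(\alpha + \tfrac{(k-1)\alpha}{2}\bigr)\ones .
\]
Averaging this with $\bias^{k-1} + \frac{(k-1)\alpha}{2}\ones$ gives $\bias^k + \frac{k\alpha}{2}\ones$, since $\frac{1}{2}\bigl(\frac{(k-1)\alpha}{2} + \alpha + \frac{(k-1)\alpha}{2}\bigr) = \frac{k\alpha}{2}$. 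This closes the induction.

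Now apply \cref{km_val_iter} to the game with weights $\tilde{\payoff}$:
\[
\tilde{\bias}^k_i \le \frac{k\tilde{\gameval}_i}{2} + 2n\supnorm{\tilde{\payoff}} = \frac{k\gameval_i}{2} + \frac{k\alpha}{2} + n\hilb{\payoff}.
\]
Subtracting $\frac{k\alpha}{2}$ yields $\bias^k_i \le \frac{k\gameval_i}{2} + n\hilb{\payoff}$. The lower bound follows in the same way.

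The only step needing care is the induction identity $\tilde{\bias}^k = \bias^k + \frac{k\alpha}{2}\ones$, with the factor $\frac{1}{2}$ in the right place. Once that is checked, the rest is routine algebra.
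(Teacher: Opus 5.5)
Your proposal is correct and follows essentially the same route as the paper: the paper also shifts the weights by $\alpha$ with $\hilb{\payoff} = 2\supnorm{\payoff + \alpha\ones}$, proves $\tilde{\bias}^k = \bias^k + \frac{k\alpha}{2}\ones$ by induction, and applies \cref{km_val_iter} to the shifted game. Your explicit choice of $\alpha$ and your induction bookkeeping are both right.
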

\begin{proof}
Let $\alpha \in \R$ be such that $\hilb{\payoff} = 2\supnorm{\payoff + \alpha \ones}$. Consider the game with weights $\tilde{\payoff} = \payoff + \alpha \ones$. Its Shapley operator is then given by $\tilde{\shapley}(x) = \alpha \ones + \shapley(x)$ for any $x$. Consider the sequence defined by $\tilde{\bias}^0 = 0$ and $\tilde{\bias}^k = \frac{1}{2}\tilde{\bias}^{k-1} + \frac{1}{2}\tilde{\shapley}(\tilde{\bias}^{k-1})$. Then, $\tilde{\bias}^k = k\frac{\alpha}{2}\ones + \bias^k$ for all $k$. Indeed, this is true for $k = 0$ and for all other $k$ we have by induction that
\begin{align*}
\tilde{\bias}^k &= \frac{1}{2}\tilde{\bias}^{k-1} + \frac{1}{2}\tilde{\shapley}(\tilde{\bias}^{k-1}) = \frac{1}{2}\bigl((k-1)\frac{\alpha}{2}\ones + \bias^{k-1}\bigr) + \frac{1}{2}\bigl(\alpha \ones + \shapley((k-1)\frac{\alpha}{2}\ones + \bias^{k-1})\bigr) \\
&= \frac{(k-1)\alpha + 2 \alpha + (k-1)\alpha}{4}\ones + \bias^{k} = k\frac{\alpha}{2}\ones + \bias^k \, .
\end{align*}
Moreover, the value vector of this game satisfies $\tilde{\gameval} = \alpha\ones + \gameval$. By applying \cref{km_val_iter} to this game we get $\tilde{\bias}^k_i \le k\frac{\tilde{\gameval}_i}{2} + 2n\supnorm{\tilde{\payoff}}$, so $k \frac{\alpha}{2}\ones + \bias^k \le k \frac{\alpha \ones + \gameval}{2} + n\hilb{\payoff}$ and $\bias^k  \le k\frac{\gameval}{2} + n\hilb{\payoff}$ . Analogously, $\bias^k \ge k\frac{\gameval}{2} - n\hilb{\payoff}$.
\end{proof}

As for the Zwick--Paterson iteration, we could use \cref{km_estimate_hilbert} to give a simple algorithm that detects if an edge goes between vertices of different value. However, the strength of the \KM iteration lies in the fact that we do not have to explicitly eliminate such edges during the execution of the algorithm (we show this later, in \cref{sec:complexity_KM}). For our purposes, it is enough to observe that the \KM starts to ignore such edges for sufficiently large $k$. This is formalized in the theorem below. It is crucial to understand this theorem, in order to understand why the \KM iteration works on non-ergodic graphs.

Let $\edges' \coloneqq \edges \setminus \{(i,j) \in \edges \colon \gameval_i \neq \gameval_j\}$. Consider the reduced game $\dgraph'$ obtained from $\dgraph$ by removing the edges between vertices with different values, so that $\dgraph' = (\vertices, \edges')$. We denote by $\shapley'$ the Shapley operator of the reduced game.

\begin{theorem}\label{value_condition_of_km}
After $O(n \Cond^{\val}(\payoff))$ iterations, the \KM value iteration detects all the edges between vertices of different values and starts to ignore them. More formally, consider the \KM value iteration sequence on the original game: $u^0 = 0$ and $u^k = \frac{1}{2}u^{k-1} + \frac{1}{2}\shapley(u^{k-1})$. Furthermore, fix $k^{*} \ge \lfloor 1 + (4n+2)\Cond^{\val}(\payoff) \rfloor$ and consider the \KM value iteration sequence on the reduced game played on $\dgraph'$: $\tilde{u}^0 = u^{k^*}$ and $\tilde{u}^k = \frac{1}{2}\tilde{u}^{k-1} + \frac{1}{2}\shapley'(\tilde{u}^{k-1})$. Then, for every $k \ge k^*$ we have $u^{k} = \tilde{u}^{k - k^*}$. Moreover, $u^k$ induces the same policies in the game on $\dgraph$ and in the reduced game on $\dgraph'$.
\end{theorem}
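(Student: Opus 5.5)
The key claim to establish is that for every $k \ge k^*$ and every vertex $i$, the maximum (or minimum) defining $\shapley(u^k)_i$ is attained only at edges of $\edges'$, and in fact strictly so. Once this holds, $\shapley(u^k) = \shapley'(u^k)$ for all $k \ge k^*$, and the equality $u^k = \tilde u^{k-k^*}$ follows by induction on $k$. The base case is the definition $\tilde u^0 = u^{k^*}$. For the inductive step, $u^{k+1} = \frac12 u^k + \frac12 \shapley(u^k) = \frac12 \tilde u^{k-k^*} + \frac12 \shapley'(\tilde u^{k-k^*}) = \tilde u^{k+1-k^*}$. The same strict separation also gives the statement about induced policies. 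Since every argmax/argmin edge of $u^k$ in the full game lies in $\edges'$, the set of maximizers in $\dgraph$ coincides with the set of maximizers restricted to $\edges'$.

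To prove the claim, fix $i \in \Maxvertices$ and an edge $(i,j)$ with $\gameval_j \neq \gameval_i$. By \cref{optimality_eq}, $\gameval_i = \max_{(i,l)\in\edges}\gameval_l$, so $\gameval_j < \gameval_i$. Moreover, there is an edge $(i,l)$ with $\gameval_l = \gameval_i$; for instance, take $l=\sigma(i)$ for a bias-induced policy $\sigma$. By the definition of $\Cond^{\val}$, we have $\gameval_l - \gameval_j = \gameval_i - \gameval_j \ge \hilb{\payoff}/\Cond^{\val}(\payoff)$. Applying \cref{km_estimate_hilbert} to both $l$ and $j$ gives
\[
(\payoff_{il} + u^k_l) - (\payoff_{ij} + u^k_j) \ge \tfrac{k}{2}(\gameval_l - \gameval_j) - 2n\hilb{\payoff} - \hilb{\payoff} \, ,
\]
where $\abs{\payoff_{il}-\payoff_{ij}} \le \hilb{\payoff}$. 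This is strictly positive as soon as $k(\gameval_i-\gameval_j) > (4n+2)\hilb{\payoff}$. That inequality holds whenever $k > (4n+2)\Cond^{\val}(\payoff)$, and $k \ge \lfloor 1 + (4n+2)\Cond^{\val}(\payoff)\rfloor$ should suffice, with a little care about the floor; this is presumably why the constant $4n+2$ appears. Hence the edge $(i,j)$ is strictly beaten by an edge of $\edges'$. The case $i \in \Minvertices$ is symmetric: there $\gameval_j > \gameval_i$, and the reversed inequality shows that $(i,j)$ is strictly worse for Min. If no cross-value edges exist, then $\Cond^{\val}=0$ and $\dgraph'=\dgraph$, so the statement is trivial.

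The only subtle point, which I expect to be the main obstacle, is the boundary case $k = k^*$ when $(4n+2)\Cond^{\val}$ is an integer. In that case the estimate only yields $\ge 0$, not a strict inequality. I would handle this either by noting that \cref{km_estimate_hilbert}'s bounds cannot both be tight, or by checking that the theorem intends $k^*$ large enough that the strict inequality holds. If needed, I would just use the strict bound $k > (4n+2)\Cond^{\val}(\payoff)$. Note also that $\shapley'$ is well defined, since each vertex keeps an outgoing edge in $\edges'$ by \cref{game_decomposition}.
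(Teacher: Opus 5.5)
Your argument is correct and is essentially the paper's proof. You use \cref{km_estimate_hilbert} to show that, for $k > (4n+2)\Cond^{\val}(\payoff)$, every cross-value edge at $i$ is strictly beaten by a same-value edge, conclude $\shapley(u^k)=\shapley'(u^k)$ with identical argmax/argmin sets, and induct. The boundary case you worry about does not exist: for every real $x$ we have $\lfloor 1+x\rfloor = 1+\lfloor x\rfloor > x$ (in particular $\lfloor 1+x\rfloor = 1+x$ when $x$ is an integer), so $k \ge k^*$ always gives the strict inequality $k > (4n+2)\Cond^{\val}(\payoff)$ that the paper uses.
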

\begin{proof}
Fix $k \ge k^*$. We claim that $\shapley(u^k) = \shapley'(u^k)$. To see this, fix $i \in \Maxvertices$ and note that $\shapley(u^k)_i = \max_{(i,j) \in \edges}\{\payoff_{ij} + u^k_j\}$ and $\shapley'(u^k)_i = \max_{(i,j) \in \edges, \gameval_i = \gameval_j}\{\payoff_{ij} + u^k_j\}$.

Consider any two edges $(i,j), (i,l) \in \edges$ such that $\gameval_i = \gameval_j$ and $\gameval_i \neq \gameval_l$. Since $i \in \Maxvertices$, we have $\gameval_i > \gameval_l$ and therefore $\gameval_j > \gameval_l$. Combining \cref{km_estimate_hilbert} and the fact that $k > (4n+2)\Cond^{\val}(\payoff)$ we get
\[
u^{k}_j - u^{k}_l \ge k \frac{\gameval_j - \gameval_l}{2} - 2n \hilb{\payoff} > \hilb{\payoff} \ge \payoff_{il} - \payoff_{ij} \, .
\]
Hence, $\payoff_{ij} + u^{k}_j > \payoff_{il} + u^{k}_l$. In particular, the maximum in $\max_{(i,j) \in \edges}\{\payoff_{ij} + u^k_j\}$ cannot be achieved by an edge that satisfies $\gameval_i \neq \gameval_j$. Therefore $\shapley(u^k)_i = \shapley'(u^k)_i$. The proof is analogous if $i \in \Minvertices$. In particular, $u^k$ induces the same policies in the game on $\dgraph$ and in the reduced game on $\dgraph'$. To finish the proof, note that $u^{k^*} = \tilde{u}^0$ by definition. For every $k > k^*$ we get by induction that
\[
u^{k} = \frac{1}{2} u^{k-1} + \frac{1}{2}\shapley(u^{k-1}) = \frac{1}{2} u^{k-1} + \frac{1}{2}\shapley'(u^{k-1}) = \frac{1}{2} \tilde{u}^{k - 1 - k^*} + \frac{1}{2}\shapley'(\tilde{u}^{k-1 - k^*}) = \tilde{u}^{k-k^*} \, .
\]
\end{proof}

We now prove analogous results for the Halpern iteration. As previously, we will only consider the Halpern iteration starting from $0$, i.e., the iteration sequence given by $u^0 \coloneqq 0$, $u^{k} \coloneqq \frac{k}{k+2}\shapley(u^{k-1})$. In this case, we show that the sequence $(u^k/k)$ converges to $\gameval/3$ and we give an estimate on the rate of this convergence. Contrary to the case of \KM iteration, it is less clear if general results on non-expansive operators can be applied to the Halpern iteration for non-constant value games. Nevertheless, for stochastic Markov Decision Processes, an estimate of the rate of convergence of the normalized Halpern iteration appeared in \cite[Theorem~9]{LeeRyu:2025} and our proof is based on this result.

\begin{proposition}\label{halpern_val_iter}
Let $\gameval \in \R^n$ be the value vector of the mean payoff game with weights $\payoff$. Furthermore, let $(u^k)$ be the sequence obtained from the Halpern iteration. Then, for every $k \ge 0$ and $i \in \vertices$ we have
\[
k\frac{\gameval_i}{3} -4n\supnorm{\payoff} \le \bias^k_i \le k\frac{\gameval_i}{3} + 4n\supnorm{\payoff} \, .
\]
In particular, the sequence $(3u^k/k)$ converges to $\gameval$.
\end{proposition}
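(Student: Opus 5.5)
The plan mimics the proof of \cref{km_val_iter}: compare with a one-player game, and write the iterate of that game as a weighted sum of weights along a single walk. Instead of the binomial counting, I would use an Abel summation argument adapted to the Halpern weights. I only prove the lower bound; the upper bound is symmetric, fixing an optimal policy $\tau$ of player Min instead.

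\emph{Step 1 (comparison).} Let $\sigma$ be an optimal policy of Max and $\shapley^{\sigma} \le \shapley$ the Shapley operator of the game where Max plays $\sigma$. Define $w^0 = 0$ and $w^k = \frac{k}{k+2}\shapley^{\sigma}(w^{k-1})$. The map $x \mapsto \frac{k}{k+2}\shapley(x)$ is monotone, because $\frac{k}{k+2}>0$ and $\shapley$ is monotone. Hence induction gives $w^k \le u^k$ for all $k$, exactly as for the \KM sequence.

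\emph{Step 2 (unrolling into one walk).} At each vertex the max (for $\sigma$, trivially) or the min (for Min) in $\shapley^{\sigma}(w^{k-1})_i$ is attained by some edge $(i,j)$. Therefore $w^k_i = \frac{k}{k+2}(\payoff_{ij} + w^{k-1}_j)$. Iterating produces a walk $i = i_0, i_1, \dots, i_k$ in $\dgraph^{\sigma}$ with
\[
w^k_i = \sum_{t=1}^{k} \beta_{k,t}\, \payoff_{i_{t-1}i_t}, \qquad \beta_{k,t} = \prod_{s=k-t+1}^{k}\frac{s}{s+2} = \frac{(k-t+1)(k-t+2)}{(k+1)(k+2)} .
\]
These coefficients are positive and nonincreasing in $t$, with $\beta_{k,1} \le 1$. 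A direct computation using $\sum_{m=1}^{k} m(m+1) = \frac{k(k+1)(k+2)}{3}$ gives $\sum_t \beta_{k,t} = \frac{k}{3}$.

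\emph{Step 3 (Abel summation; the main point).} Set $a_t \coloneqq \payoff_{i_{t-1}i_t} - \gameval_i$, so that $w^k_i = k\frac{\gameval_i}{3} + \sum_t \beta_{k,t} a_t$. The difficulty is that the cycle-mean argument controls only sums along contiguous pieces of the walk, whereas the weights vary. Summation by parts handles this. With $S_t = a_1 + \dots + a_t$ and $\beta_{k,k+1} \coloneqq 0$,
\[
\sum_t \beta_{k,t} a_t = \sum_{t=1}^{k} (\beta_{k,t} - \beta_{k,t+1}) S_t .
\]
Each prefix walk $i_0,\dots,i_t$ decomposes into directed cycles plus at most $n-1$ remaining edges. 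Every cycle lies in $\dgraph^{\sigma}$ and is reachable from $i$, so optimality of $\sigma$ gives it mean weight at least $\gameval_i$. Since $\abs{\gameval_i} \le \supnorm{\payoff}$, each $\abs{a_t} \le 2\supnorm{\payoff}$, and hence $S_t \ge -2(n-1)\supnorm{\payoff}$.

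The differences $\beta_{k,t} - \beta_{k,t+1}$ are nonnegative and telescope to $\beta_{k,1} \le 1$. This yields $\sum_t \beta_{k,t} a_t \ge -2n\supnorm{\payoff}$, and therefore $u^k_i \ge w^k_i \ge k\frac{\gameval_i}{3} - 4n\supnorm{\payoff}$.

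Convergence of $3u^k/k$ to $\gameval$ then follows immediately. The only delicate points are verifying the closed form of $\beta_{k,t}$ and its monotonicity in $t$, which is what makes the Abel summation valid.
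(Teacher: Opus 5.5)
Your proof is correct, and it bounds the key weighted sum differently from the paper.

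\emph{What is shared.} The paper makes the same comparison with the one-player sequence $w^k$ and the same unrolling into a single walk. Its $A(p,k)$ is your $\beta_{k,k-p+1}$, and it checks $\sum_p A(p,k)=k/3$ by induction rather than by a closed form.

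\emph{Where the routes split.} The paper takes a bias $\bias^*$ with $\supnorm{\bias^*}\le(2+\varepsilon)n\supnorm{\payoff}$ from \cref{short_bias} and chooses $\sigma$ induced by $\bias^*$. It then uses $\payoff_{i_pi_{p+1}}\ge\gameval_i+\bias^*_{i_p}-\bias^*_{i_{p+1}}$ along the walk and rearranges the telescoping terms, which is an implicit summation by parts. Only $2A(k,k)\supnorm{\bias^*}$ is lost. In other words, the paper controls your prefix sums by a potential, $S_t\ge \bias^*_{i_0}-\bias^*_{i_t}$. You control them instead by the Zwick--Paterson cycle decomposition already used in \cref{km_val_iter}.

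\emph{What each route buys.} Your version works with any optimal $\sigma$ and avoids both the GKK system and the totally unimodular bias bound of \cref{short_bias}. It also gives the sharper constant $2(n-1)\supnorm{\payoff}$ instead of $4n\supnorm{\payoff}$. The paper's version is the direct two-player transcription of Lee--Ryu's bias-based argument for stochastic MDPs, a setting in which your cycle decomposition of a single walk has no analogue.

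\emph{One step to expand.} You state tersely that every cycle of $\dgraph^{\sigma}$ reachable from $i$ has mean at least $\gameval_i$. This follows by letting Min move along a shortest path from $i$ to that simple cycle and then around it, so that optimality of $\sigma$ gives $g_i(\sigma,\tau)\ge\gameval_i$. The paper uses the same fact without comment in \cref{km_val_iter}.
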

\begin{proof}
Our proof is a two-player variant of the reasoning given in \cite[Theorem~9]{LeeRyu:2025}. Fix $\varepsilon > 0$ and let $\bias^{*}$ be a bias of the game such that $\supnorm{\bias^*} \le (2+\varepsilon)n\supnorm{\payoff}$ (such a bias exists by \cref{short_bias}). Furthermore, let $(\sigma,\tau)$ be a pair of policies induced by $\bias^{*}$. As in the proof of \cref{km_val_iter}, consider the Shapley operator $\shapley^{\sigma}$ of the one-player game in which Max plays according to $\sigma$ and note that $\shapley^{\sigma}(x) \le \shapley(x)$ for any $x \in \R^n$. Consider the Halpern iteration on the operator $\shapley^{\sigma}$, i.e., the sequence given by $w^0 = 0$ and $w^{k} = \frac{k}{k+2}\shapley^{\sigma}(w^{k-1})$. By induction, we have $w^{k} \le u^{k}$ for all $k$. Let $\dgraph^{\sigma}$ be the directed graph of the one-player game in which Max plays according to $\sigma$. Furthermore, for any natural numbers $p \le q$ let $A(p,q) \coloneqq \frac{p}{p+2} \frac{p+1}{p+3} \dots \frac{q}{q+2}$, with the convention that $A(p,p) = \frac{p}{p+2}$. We claim that for every $k \ge 1$ and every $i \in [n]$, there exists a directed walk $(i,i_1,\dots,i_k)$ in $\dgraph^{\sigma}$ such that
\[
w^k_i = A(k,k) \payoff_{i,i_1} + A(k-1,k) \payoff_{i_1,i_2} + \dots + A(1,k) \payoff_{i_{k-1},i_{k}}.
\]
Indeed, this claim follows by induction. For $k = 1$ and fixed $i \in [n]$, there exists $i_1$ such that $\shapley^\sigma(0)_i = \payoff_{i,i_1}$, hence $w^1_i = A(1,1)\payoff_{i,i_1}$. Suppose that the claim is true for $k-1$ and all $i \in [n]$. Fix a vertex $i \in [n]$ and note that there exists $j \in [n]$ such that $(i,j)$ is an edge in $\dgraph^{\sigma}$ and $w^k_i = A(k,k) (\payoff_{i,j} + w^{k-1}_j)$. Therefore, there exists a directed walk $(i,j,j_1,\dots,j_{k-1})$ in $\dgraph^{\sigma}$ such that
\[
w^k_i = A(k,k) (\payoff_{i,j} + w^{k-1}_j) = A(k,k)\bigl( \payoff_{i,j} + A(k-1,k-1)\payoff_{j,j_1} + \dots + A(1,k-1)\payoff_{j_{k-2},j_{k-1}} \bigr) \, .
\]
The induction claim follows from the fact that $A(k,k)A(p,k-1) = A(p,k)$ for all $p < k$.
To bound $w^k_i$ from below, we use the definition of the bias $\bias^{*}$. Let $i_0 = i$. By definition of the bias, for every pair $(i_{p},i_{p+1})$ we have
\[
\payoff_{i_p,i_{p+1}} \ge \gameval_{i_p} + \bias^*_{i_p} - \bias^*_{i_{p+1}} \ge \gameval_{i} + \bias^*_{i_p} - \bias^*_{i_{p+1}} \, .
\]
Hence, we get
\begin{align*}
w^k_i &\ge A(k,k)(\gameval_{i} + \bias^*_{i_0} - \bias^*_{i_{1}}) + \dots + A(1,k)(\gameval_{i} + \bias^*_{i_{k-1}} - \bias^*_{i_{k}}) \\
&= \bigl(A(1,k) + \dots + A(k,k)\bigr) \gameval_{i} + A(k,k)\bias^*_{i_0}  - A(1,k)\bias^*_{i_{k}} +  \sum_{p = 1}^{k-1} \bigl(A(p,k) - A(p+1,k)\bigr)\bias^{*}_{i_{k-p}} \, .
\end{align*}
To analyze the last expression, note that $A(1,k) + \dots + A(k,k) = \frac{k}{3}$. Indeed, this is true for $k = 1$ and for higher $k$ we have $A(1,k) + \dots + A(k,k) = A(k,k)\bigl(A(1,k-1) + \dots + A(k-1,k-1) + 1\bigr) = \frac{k}{k+2}(\frac{k-1}{3} + 1) = \frac{k}{3}$ by induction. Furthermore, the value of $A(p,k) - A(p+1,k)$ is negative. Hence, using the estimate $(2+\varepsilon)n\supnorm{\payoff} \ge \bias^*_{i_p} \ge -(2+\varepsilon)n\supnorm{\payoff}$ we get the inequality
\begin{align*}
u^k_i \ge w^k_i &\ge \frac{k}{3}\gameval_i - (2+\varepsilon)n\supnorm{\payoff}\bigl( A(k,k) + A(1,k) + \sum_{p = 1}^{k-1}(A(p+1,k) - A(p,k)) \bigr) \\
&= \frac{k}{3}\gameval_i -(4+2\varepsilon)n\supnorm{\payoff} A(k,k) \ge \frac{k}{3}\gameval_i - (4+2\varepsilon)n\supnorm{\payoff} \, .
\end{align*}
Since $\varepsilon > 0$ was arbitrary, we get $u^k_i \ge \frac{k}{3}\gameval_i - 4n\supnorm{\payoff}$. The proof of the other inequality is analogous, by fixing the optimal policy $\tau$ of player Min.
\end{proof}

As before, a translation argument gives a variant of \cref{halpern_val_iter} for the Hilbert seminorm.

\begin{lemma}\label{halpern_estimate_hilbert}
Let $\gameval \in \R^n$ be the value vector of the mean payoff game and let $(u^k)$ be the sequence obtained from the Halpern iteration. Then, for every $k \ge 0$ and $i \in \vertices$ we have
\[
k\frac{\gameval_i}{3} -2n\hilb{\payoff} \le \bias^k_i \le k\frac{\gameval_i}{3} + 2n\hilb{\payoff} \, .
\]
\end{lemma}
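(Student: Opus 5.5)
The plan is to copy the translation argument used for \cref{zp_estimate_hilbert} and \cref{km_estimate_hilbert}. The one new point is that the Halpern iteration anchored at $0$ is not exactly translation-equivariant, so the effect of the shift has to be computed. Choose $\alpha \in \R$ with $\hilb{\payoff} = 2\supnorm{\payoff + \alpha\ones}$ and set $\tilde{\payoff} = \payoff + \alpha\ones$. Its Shapley operator is $\tilde{\shapley}(x) = \alpha\ones + \shapley(x)$, and its value is $\tilde{\gameval} = \gameval + \alpha\ones$. Let $\tilde{\bias}^k$ be the Halpern sequence for $\tilde{\shapley}$, given by $\tilde{\bias}^0 = 0$ and $\tilde{\bias}^k = \frac{k}{k+2}\tilde{\shapley}(\tilde{\bias}^{k-1})$.

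The key step is to prove by induction that $\tilde{\bias}^k = c_k\ones + \bias^k$ with $c_k = \frac{k}{3}\alpha$. The base case is $c_0 = 0$. For the inductive step, additive homogeneity gives
\[
\tilde{\bias}^k = \frac{k}{k+2}\bigl(\alpha\ones + c_{k-1}\ones + \shapley(\bias^{k-1})\bigr) = \frac{k}{k+2}(\alpha + c_{k-1})\ones + \bias^k .
\]
So $c_k = \frac{k}{k+2}(\alpha + c_{k-1})$, and substituting $c_{k-1} = \frac{k-1}{3}\alpha$ gives $\frac{k}{k+2}\cdot\frac{k+2}{3}\alpha = \frac{k}{3}\alpha$. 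This is the same recursion that gave $A(1,k)+\dots+A(k,k) = \frac{k}{3}$ in the proof of \cref{halpern_val_iter}. It is the only place where the argument departs from the earlier translation lemmas, and it is the step I expect needs the most care; once the shift is exactly $\frac{k}{3}\alpha$, it matches the $\frac{k}{3}\alpha$ coming from the value.

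Next, apply \cref{halpern_val_iter} to the translated game:
\[
\tilde{\bias}^k_i \le \frac{k}{3}\tilde{\gameval}_i + 4n\supnorm{\tilde{\payoff}} = \frac{k}{3}(\gameval_i + \alpha) + 2n\hilb{\payoff}.
\]
Subtracting $c_k = \frac{k}{3}\alpha$ from both sides yields $\bias^k_i \le \frac{k}{3}\gameval_i + 2n\hilb{\payoff}$. The lower bound follows in exactly the same way from the lower estimate of \cref{halpern_val_iter}.
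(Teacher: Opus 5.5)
Your proposal is correct and is essentially the paper's proof. The paper uses the same shift by $\alpha$ with $\hilb{\payoff} = 2\supnorm{\payoff+\alpha\ones}$ and the same induction showing $\tilde{\bias}^k = \frac{k}{3}\alpha\ones + \bias^k$. It then applies \cref{halpern_val_iter} to the shifted game, using $4n\supnorm{\tilde{\payoff}} = 2n\hilb{\payoff}$.
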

\begin{proof}
Let $\alpha \in \R$ be such that $\hilb{\payoff} = 2\supnorm{\payoff + \alpha \ones}$. Consider the game with weights $\tilde{\payoff} = \payoff + \alpha \ones$. Its Shapley operator is then given by $\tilde{\shapley}(x) = \alpha \ones + \shapley(x)$ for any $x$. Consider the sequence defined by $\tilde{\bias}^0 = 0$ and $\tilde{\bias}^k = \frac{k}{k+2}\tilde{\shapley}(\tilde{\bias}^{k-1})$. Then, $\tilde{\bias}^k = k\frac{\alpha}{3}\ones + \bias^k$ for all $k$. Indeed, this is true for $k = 0$ and for all other $k$ we have by induction that
\begin{align*}
\tilde{\bias}^k &= \frac{k}{k+2}\tilde{\shapley}(\tilde{\bias}^{k-1}) = \frac{k}{k+2}\bigl(\alpha \ones + \shapley((k-1)\frac{\alpha}{3}\ones + \bias^{k-1})\bigr) \\
&= \frac{k}{k+2}(\frac{k+2}{3} \alpha \ones +\shapley(u^{k-1}) ) = k\frac{\alpha}{3}\ones + \bias^k \, .
\end{align*}
Moreover, the value vector of this game satisfies $\tilde{\gameval} = \alpha\ones + \gameval$. By applying \cref{halpern_val_iter} to this game we get $\tilde{\bias}^k_i \le k\frac{\tilde{\gameval}_i}{3} + 4n\supnorm{\tilde{\payoff}}$, so $k \frac{\alpha}{3}\ones + \bias^k \le k \frac{\alpha \ones + \gameval}{3} + 2n\hilb{\payoff}$ and $\bias^k  \le k\frac{\gameval}{3} + 2n\hilb{\payoff}$ . Analogously, $\bias^k \ge k\frac{\gameval}{3} - 2n\hilb{\payoff}$.
\end{proof}

As in the case of the \KM iteration, the Halpern iteration also quickly detects all the edges between vertices of different value and starts to ignore them. As previously, we consider the reduced game $\dgraph'$ obtained from $\dgraph$ by removing the edges between vertices with different values and we denote by $\shapley'$ the Shapley operator of the reduced game.

\begin{theorem}\label{value_condition_of_halpern}
After $O(n \Cond^{\val}(\payoff))$ iterations, the Halpern value iteration detects all the edges between vertices of different values and starts to ignore them. More formally, consider the Halpern value iteration sequence on the original game: $u^0 = 0$ and $u^k = \frac{k}{k+2}\shapley(u^{k-1})$. Furthermore, fix $k^{*} \ge \lfloor 1 + (12n+3)\Cond^{\val}(\payoff) \rfloor$ and consider the Halpern value iteration sequence with shifted weights on the reduced game played on $\dgraph'$: $\tilde{u}^0 = u^{k^*}$ and $\tilde{u}^k = \frac{k+k^*}{k+k^*+2}\shapley'(\tilde{u}^{k-1})$. Then, for every $k \ge k^*$ we have $u^{k} = \tilde{u}^{k - k^*}$. Moreover, $u^k$ induces the same policies in the game on $\dgraph$ and in the reduced game on $\dgraph'$.
\end{theorem}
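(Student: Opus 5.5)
The plan mirrors the proof of \cref{value_condition_of_km}. The core claim is that for every $k \ge k^*$ we have $\shapley(u^k) = \shapley'(u^k)$. Once this holds, both conclusions follow quickly.

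First, a simple induction identifies the two sequences. For $k > k^*$,
\[
u^k = \frac{k}{k+2}\shapley(u^{k-1}) = \frac{k}{k+2}\shapley'(\tilde u^{k-1-k^*}) = \tilde u^{k-k^*}.
\]
Writing $k' = k - k^*$, the coefficient $\frac{k}{k+2}$ equals $\frac{k'+k^*}{k'+k^*+2}$. This matches exactly the shifted Halpern coefficients in the definition of $\tilde u$.

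Second, the policy statement follows from the same comparison. An edge $(i,l)$ with $\gameval_i \neq \gameval_l$ can never attain the max or min at $i$. Hence the argmax and argmin sets computed in $\dgraph$ coincide with those computed in $\dgraph'$.

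To prove the core claim, fix $k \ge k^*$ and $i \in \Maxvertices$. Consider edges $(i,j)$ and $(i,l)$ with $\gameval_j = \gameval_i$ and $\gameval_l \neq \gameval_i$. The GKK equations in \cref{optimality_eq} give $\gameval_i = \max_{(i,l')\in\edges}\gameval_{l'}$, so $\gameval_j = \gameval_i > \gameval_l$. By the definition of $\Cond^{\val}(\payoff)$ the gap satisfies $\gameval_j - \gameval_l \ge \hilb{\payoff}/\Cond^{\val}(\payoff)$. This uses that $(i,l)$ is an edge between vertices of different value, and $\Cond^{\val}(\payoff) > 0$ whenever such an edge exists; if none exists the statement is trivial. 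Now \cref{halpern_estimate_hilbert} gives
\[
u^k_j - u^k_l \ge \frac{k}{3}(\gameval_j - \gameval_l) - 4n\hilb{\payoff}.
\]
We need this to exceed $\hilb{\payoff} \ge \payoff_{il} - \payoff_{ij}$, which means $k > 3(4n+1)\Cond^{\val}(\payoff) = (12n+3)\Cond^{\val}(\payoff)$. This is guaranteed by $k \ge k^* \ge \lfloor 1 + (12n+3)\Cond^{\val}(\payoff)\rfloor$, since $\lfloor 1+x\rfloor > x$.

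With that bound, $\payoff_{ij} + u^k_j > \payoff_{il} + u^k_l$. The max at $i$ is therefore attained only on edges of $\edges'$, and at least one such edge exists, namely the bias-induced $\sigma(i)$. Hence $\shapley(u^k)_i = \shapley'(u^k)_i$. The case $i \in \Minvertices$ is symmetric, using $\gameval_i < \gameval_l$ and the upper estimate.

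The only delicate point is the bookkeeping of constants. Compared with the \KM case, the factor $1/3$ replaces $1/2$ and the error $2n\hilb{\payoff}$ replaces $n\hilb{\payoff}$. This is what produces $12n+3$ in place of $4n+2$. The one other thing to check is that the shifted coefficients line up with the time-dependent Halpern coefficients, which they do exactly.
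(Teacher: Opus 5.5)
Your proposal is correct and follows the same route as the paper. You show $\shapley(u^k) = \shapley'(u^k)$ for $k \ge k^*$ by combining \cref{halpern_estimate_hilbert} with the value gap coming from $\Cond^{\val}(\payoff)$, which requires $k > (12n+3)\Cond^{\val}(\payoff)$, and then identify the two sequences by induction, noting that $\frac{k}{k+2}$ matches the shifted Halpern coefficients. Your extra remarks are correct details that the paper leaves implicit: the gap bound is taken on the edge $(i,l)$ rather than on the possible non-edge $(j,l)$, and $\edges'$ always contains an outgoing edge of $i$.
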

\begin{proof}
Fix $k \ge k^*$. We claim that $\shapley(u^k) = \shapley'(u^k)$. To see this, fix $i \in \Maxvertices$ and note that $\shapley(u^k)_i = \max_{(i,j) \in \edges}\{\payoff_{ij} + u^k_j\}$ and $\shapley'(u^k)_i = \max_{(i,j) \in \edges, \gameval_i = \gameval_j}\{\payoff_{ij} + u^k_j\}$. Consider any two edges $(i,j), (i,l) \in \edges$ such that $\gameval_i = \gameval_j$ and $\gameval_i \neq \gameval_l$. Since $i \in \Maxvertices$, we have $\gameval_i > \gameval_l$ and therefore $\gameval_j > \gameval_l$. Combining \cref{halpern_estimate_hilbert} and the fact that $k > (12n+3)\Cond^{\val}(\payoff)$ we get
\[
u^{k}_j - u^{k}_l \ge k \frac{\gameval_j - \gameval_l}{3} - 4n \hilb{\payoff} > \hilb{\payoff} \ge \payoff_{il} - \payoff_{ij} \, .
\]
Hence, $\payoff_{ij} + u^{k}_j > \payoff_{il} + u^{k}_l$. In particular, the maximum in $\max_{(i,j) \in \edges}\{\payoff_{ij} + u^k_j\}$ cannot be achieved by an edge that satisfies $\gameval_i \neq \gameval_j$. Therefore $\shapley(u^k)_i = \shapley'(u^k)_i$. The proof is analogous if $i \in \Minvertices$. In particular, $u^k$ induces the same policies in the game on $\dgraph$ and in the reduced game on $\dgraph'$. To finish the proof, note that $u^{k^*} = \tilde{u}^0$ by definition. For every $k > k^*$ we get by induction that
\[
u^{k} = \frac{k}{k+2}\shapley(u^{k-1}) = \frac{k}{k+2}\shapley'(u^{k-1}) = \frac{(k-k^*) + k^*}{(k-k^*) + k^* + 2}\shapley'(\tilde{u}^{k-1 - k^*}) = \tilde{u}^{k-k^*} \, .
\]
\end{proof}

\section{Algorithms whose complexity is bounded by our condition number}\label{sec:complexity_condition}

In this section, we give a number of algorithms for solving deterministic mean payoff games. For each algorithm, we prove that the algorithm eventually finds a pair of optimal policies for any game, and that the complexity of the algorithm can be bounded by the condition number $\Cond(\payoff)$ for all generic weights. The first three algorithms are based on \KM and Halpern iteration, while the last algorithms are versions of the policy iteration algorithms from \cite{LoffSkomra:2024}.

\subsection{\KM iteration}\label{sec:complexity_KM}
We start by analyzing the \KM iteration. The theorem below bounds the number of iterations that the algorithm needs to find a pair of optimal policies. In the subsequent remark, we discuss some implementation details of this algorithm.

\begin{theorem}\label{overall_complexity_KM}
Consider the \KM iteration initialized at zero, $u^0 \coloneqq 0$, $u^{k} \coloneqq \frac{1}{2}u^{k-1} + \frac{1}{2}\shapley(u^{k-1})$. Then, for all sufficiently large $k$, every policy induced by $u^k$ is optimal. Furthermore, if $\payoff \in \Generic$, then the above is true for all
\[
k \ge 2 + 2(2n+1)\Cond(\payoff) +  144(4n+3)^2n^2\Cond(\payoff)^2 \, .
\]
\end{theorem}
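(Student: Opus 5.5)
The plan is to split the run of the \KM iteration into two phases: a value-separation phase handled by \cref{value_condition_of_km}, and a constant-value phase handled componentwise by \cref{halpern_for_bias} and \cref{approx_bias_policies}.

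\textbf{Phase one.} Set $k^* \coloneqq \lfloor 1 + (4n+2)\Cond^{\val}(\payoff)\rfloor$. By \cref{value_condition_of_km}, for every $k \ge k^*$ we have $u^k = \tilde u^{k-k^*}$, where $(\tilde u^t)$ is the plain \KM sequence for the reduced operator $\shapley'$ started at $\tilde u^0 = u^{k^*}$. Moreover, $u^k$ induces the same policies in $\dgraph$ and $\dgraph'$. The reduced game splits into the components $\dgraph_1,\dots,\dgraph_p$, and $\shapley'$ acts on each block $\vertices_\ell$ independently. So the restriction of $\tilde u^t$ to $\vertices_\ell$ is a \KM sequence for the constant-value game on $\dgraph_\ell$ (\cref{game_decomposition}). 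By \cref{game_decomposition}(ii), a policy optimal on every $\dgraph_\ell$ is optimal on $\dgraph$. It therefore suffices that, for each $\ell$, the restriction of $\tilde u^t$ induces only optimal policies.

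\textbf{Qualitative claim.} On each constant-value component, \cref{halpern_for_bias} already gives that, for $t$ large, all policies induced by $\tilde u^t|_{\vertices_\ell}$ are optimal. This proves the first sentence of the theorem for arbitrary $\payoff$.

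\textbf{Quantitative claim for $\payoff\in\Generic$.} By \cref{constant_value_subgame}, the weights on $\dgraph_\ell$ lie in a single-cycle cone with condition number $\Cond_\ell(\payoff) \le \Cond(\payoff)$ (\cref{decomposition_condition_upper}). Also $\Cond^{\val}(\payoff) \le \Cond(\payoff)$ (\cref{value_cond_bound}), so $k^* \le 1 + (4n+2)\Cond(\payoff)$. By \cref{approx_bias_policies}, it is enough that $\tilde u^t|_{\vertices_\ell}$ is an $\varepsilon_\ell$-approximate bias with $\varepsilon_\ell = \sk(\payoff^{(\ell)})/(4n+3)$. \Cref{halpern_for_bias}(1) gives
\[
\hilb{\tilde u^t - \shapley'(\tilde u^t)}_{\vertices_\ell} \le \frac{3}{\sqrt t}\,\hilb{u^{k^*}|_{\vertices_\ell} - \bias^*},
\]
for a bias $\bias^*$ of $\dgraph_\ell$ with $\hilb{\bias^*} \le 2n\hilb{\payoff}$ (\cref{short_bias}, constant-value case).

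\textbf{Main obstacle: bounding $\hilb{u^{k^*}|_{\vertices_\ell}}$.} The starting point is no longer $0$, so this term must be controlled. Since all vertices of $\vertices_\ell$ share the same value, \cref{km_estimate_hilbert} gives $\hilb{u^{k^*}|_{\vertices_\ell}} \le 2n\hilb{\payoff}$. Hence the distance is at most $4n\hilb{\payoff}$, and the error is at most $12n\hilb{\payoff}/\sqrt t$.

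\textbf{Conclusion.} The requirement $12n\hilb{\payoff}/\sqrt t \le \sk/(4n+3)$ holds once
\[
t \ge 144 n^2 (4n+3)^2 \Cond(\payoff)^2 .
\]
Adding $k^*$ and allowing $+1$ for the floor and rounding yields the stated bound
\[
2 + 2(2n+1)\Cond(\payoff) + 144(4n+3)^2 n^2 \Cond(\payoff)^2 .
\]
If a component has all out-degrees $1$, its policies are trivially optimal. Otherwise $\Cond_\ell > 0$ and the above applies.
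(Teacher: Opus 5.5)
Your proposal is correct and follows the paper's proof essentially step for step. It uses the same $k^*$, the same reduction via \cref{value_condition_of_km} to parallel \KM runs on the components, and the same error bound $12n\hilb{\payoff}/\sqrt{q}$ from \cref{short_bias} and \cref{km_estimate_hilbert}; one small precision is that, because this bound carries the global $\hilb{\payoff}$, the comparison you actually need in the final step is $\sk(\payoff)\le\sk(\payoff^{(\ell)})$ rather than $\Cond_\ell(\payoff)\le\Cond(\payoff)$ (the paper simply takes $\varepsilon=\sk(\payoff)/(4n+3)$ for every component, and this slack inequality is what the proof of \cref{decomposition_condition_upper} establishes).
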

\begin{proof}
Let $\Cond^{\val}(\payoff)$ be the value condition number of the game. Let $k^* \coloneqq \lfloor 1 + (4n+2)\Cond^{\val}(\payoff) \rfloor$. By \cref{value_condition_of_km}, for all $k \ge k^*$, the \KM iteration ignores all of the edges between vertices of different values. Denote by $\gameval$ the value of the game, and let $\edges' \coloneqq \edges \setminus \{(i,j) \in \edges, \gameval_i \neq \gameval_{j}\}$. Consider the reduced graph $\dgraph' = (\vertices, \edges')$ and let $\dgraph_1, \dots, \dgraph_p$ be the connected components of $\dgraph'$ with vertex sets $\vertices_{1}, \dots, \vertices_{p}$. For $k \ge k^*$, the \KM iteration on $\dgraph$ behaves in the same way as running $p$ \KM iterations on every $\dgraph_{\ell}$ in parallel starting from $(u^{k^*}_i)_{i \in \vertices_{\ell}}$. Since each game on $\dgraph_{\ell}$ has constant value by \cref{game_decomposition}, \cref{halpern_for_bias} shows that for all sufficiently large $k$ any policy induced by $u^k$ is optimal on all games $\dgraph_{\ell}$. Such a policy is also optimal in the game played on $\dgraph$ by \cref{game_decomposition}. This shows the first claim.

To show the second claim, suppose that $\payoff \in \polyh^{\sigma,\tau}$ for some $(\sigma,\tau)$. Then, $\gameval = \gameval^{\sigma,\tau}(\payoff)$ by \cref{cone_gives_bias}. Denote $\bias = \bias^{\sigma,\tau}(\payoff)$. By \cref{constant_value_subgame} the games on graphs $\dgraph_{\ell}$ have a single equilibrium cycle, and $(\sigma,\tau)$ is the unique pair of bias induced policies on $\dgraph_{\ell}$. Fix any $\ell \in [p]$, let $\tilde{\shapley}$ be the Shapley operator of the game on $\dgraph_{\ell}$, and let $\tilde{u}^q \in \R^{\vertices_\ell}$ be the restricted version of the vector $u^k$, i.e., $\tilde{u}^q_i = u^{q+k^*}_i$ for $i \in \vertices_{\ell}$ and $q \ge 0$. If every vertex of $\dgraph_{\ell}$ has out-degree $1$, then it has a single pair of policies, so for every $q \ge 1$ any policy induced by $\tilde{u}^{q}$ is optimal. Otherwise, let
\begin{align*}
\varepsilon &\coloneqq \frac{1}{4n+3}\sk(\payoff)  \le \frac{1}{4n+3} \min\{\abs{\payoff_{ij} - \gameval + \bias_j - \bias_i} \colon (i,j) \in \edges \cap (\vertices_\ell \times \vertices_{\ell}), \payoff_{ij} - \gameval + \bias_j - \bias_i \neq 0\} \, .
\end{align*}
For every $q \ge 1$ by \cref{halpern_for_bias} we have
\[
\hilb{\tilde{u}^q - \tilde{\shapley}(\tilde{u}^q)} \le \frac{3}{\sqrt{q}}\hilb{\tilde{u}^{0} - \tilde{u}^*} \le \frac{3}{\sqrt{q}}\hilb{\tilde{u}^*} + \frac{3}{\sqrt{q}}\hilb{\tilde{u}^{0}}\, ,
\]
where $\tilde{u}^*$ is any bias for the game on $\dgraph_{\ell}$. By \cref{short_bias} there exists $\tilde{\bias}^*$ such that $\hilb{\tilde{u}^*} \le 2n\hilb{\payoff}$. Furthermore, by applying \cref{km_estimate_hilbert} to $u^{k^*}$ we get $k^* \frac{\gameval_i}{2} - n\hilb{\payoff} \le \tilde{u}^0_i \le k^* \frac{\gameval_i}{2} + n\hilb{\payoff}$ for all $i \in \vertices_{\ell}$. Since $\gameval_i$ is constant within $\vertices_{\ell}$, we get $\hilb{\tilde{u}^0} \le 2n\hilb{\payoff}$. Hence, we have
\[
\hilb{\tilde{u}^q - \tilde{\shapley}(\tilde{u}^q)} \le \frac{12n\hilb{\payoff}}{\sqrt{q}}
\]
for all $q \ge 1$. In particular, for $q \ge (\frac{12n\hilb{\payoff}}{\varepsilon})^2 = 144(4n+3)^2n^2\Cond(\payoff)^2$ we have $\hilb{\tilde{u}^q - \tilde{\shapley}(\tilde{u}^q)} \le \varepsilon$. By \cref{approx_bias_policies}, $(\sigma,\tau)$ is the only pair of policies induced by $\tilde{u}^q$ on $\dgraph_{\ell}$. Since this is true for all $\ell$, we get by \cref{value_condition_of_km} that $(\sigma,\tau)$ is the only pair of policies induced by $u^{k}$ for all $k \ge 1+ k^* + 144(4n+3)^2n^2\Cond(\payoff)^2$. Since $\Cond^{\val}(\payoff) \le \Cond(\payoff)$ (\cref{value_cond_bound}), we get the claim.
\end{proof}

\begin{remark}\label{overall_complexity_km}
To bound the complexity of the \KM iteration, note that each iteration can of the algorithm can be implemented in $O(m)$ complexity. Furthermore, one can check if a pair of policies is optimal in $O(mn)$ complexity by \cite[Theorem~4.1]{zwick_paterson}. One can implement the algorithm by checking if policies induced by $u^k$ are optimal only every $n$ iterations. In this way, the average cost of every iteration is still $O(m)$ and the overall complexity of the algorithm is $O(mn + mn^2\Cond(\payoff) + mn^4\Cond(\payoff)^2)$ for $\payoff \in \Generic$. \end{remark}

\subsection{Halpern iteration}
To prove the analogous result for the Halpern iteration, we proceed as in the case of \KM iteration. However, we need to take into account the fact that the sequence $\tilde{u}^k$ from \cref{value_condition_of_halpern} on the reduced game comes not from the Halpern iteration, but from the Halpern iteration with shifted weights. To do, so we use the analogue of \cref{halpern} for this kind of the Halpern iteration.

\begin{proposition}\label{shifted_halpern}
Let $(\nspace, \| \cdot \|)$ be a normed space and let $f \colon \nspace \to \nspace$ be a non-expansive function. Furthermore, suppose that this function has at least one fixed point $x^{*} \in \nspace$. Let $p \ge 3$, fix a pair of points $x^0,y^0 \in \nspace$, and consider the Halpern sequence with shifted weights $(x^k)_{k \ge 0}$ given by $x^{k} = \frac{2}{k+p+2}y^{0} + \frac{k+p}{k+p+2}f(x^{k-1})$ for $k \ge 1$. Then, for every $k \ge 0$ we have
\[
\| x^k - f(x^k)\| \le \frac{2p+2}{k+p+1}\max\{\|x^{0} - x^*\|, \|y^{0} - x^*\|\} \, .
\]
\end{proposition}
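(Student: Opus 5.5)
The plan is to redo the standard Halpern analysis with anchoring coefficients $\beta_k \coloneqq \frac{2}{k+p+2}$, so that $x^k = \beta_k y^0 + (1-\beta_k) f(x^{k-1})$. Throughout, write $D \coloneqq \max\{\|x^0 - x^*\|, \|y^0 - x^*\|\}$.

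\textbf{Step 1 (boundedness).} First show by induction that $\|x^k - x^*\| \le D$ for all $k$. Using $f(x^*) = x^*$ and non-expansiveness,
\[
\|x^k - x^*\| \le \beta_k \|y^0 - x^*\| + (1-\beta_k)\|f(x^{k-1}) - x^*\| \le \beta_k D + (1-\beta_k)D = D .
\]
Consequently $\|f(x^k) - x^*\| \le D$, $\|y^0 - f(x^k)\| \le 2D$, and $\|x^k - f(x^k)\| \le 2D$ for every $k$. The last bound already gives the claim whenever $\frac{2p+2}{k+p+1} \ge 2$, that is, for $k \le p+1$, including $k=0$.

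\textbf{Step 2 (recursion for consecutive differences).} Let $a_k \coloneqq \|x^k - x^{k-1}\|$. For $k \ge 2$,
\[
x^k - x^{k-1} = (\beta_k - \beta_{k-1})\bigl(y^0 - f(x^{k-2})\bigr) + (1-\beta_k)\bigl(f(x^{k-1}) - f(x^{k-2})\bigr),
\]
so
\[
a_k \le (1-\beta_k)a_{k-1} + \frac{4D}{(k+p+1)(k+p+2)} ,
\]
because $|\beta_k - \beta_{k-1}| = \frac{2}{(k+p+1)(k+p+2)}$. Multiplying by $(k+p+1)(k+p+2)$ and setting $b_k \coloneqq (k+p+1)(k+p+2)a_k$, the factor $1-\beta_k = \frac{k+p}{k+p+2}$ telescopes and gives $b_k \le b_{k-1} + 4D$. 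Hence $b_k \le b_{k_0} + 4D(k-k_0)$ for any starting index $k_0$, where $b_{k_0}$ is controlled by $a_{k_0} \le 2D$ from Step 1.

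\textbf{Step 3 (residual bound).} Finally use
\[
\|x^k - f(x^k)\| \le \|x^k - f(x^{k-1})\| + \|f(x^{k-1}) - f(x^k)\| \le \beta_k\|y^0 - f(x^{k-1})\| + a_k \le \frac{4D}{k+p+2} + a_k .
\]
Plugging in the bound on $a_k$ from Step 2 yields something of order $\frac{8D}{k+p}$ plus a transient term coming from $b_{k_0}$. This is to be compared with the target $\frac{(2p+2)D}{k+p+1}$, which is at least $\frac{8D}{k+p+1}$ since $p \ge 3$.

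\textbf{Main obstacle.} I expect the difficulty to be purely in the constants. If one starts the recursion at $k_0=1$ with the crude estimate $a_1 \le 2D$, the transient term $\frac{2D(p+2)(p+3)}{(k+p+1)(k+p+2)}$ is too large for moderate $k$. My plan is to split into cases. For $k \le p+1$, use the trivial bound from Step 1. For $k \ge p+2$, start the recursion at $k_0 = p+1$ (or at a similarly chosen index) with the crude bound $a_{k_0} \le 2D$. Then verify the resulting polynomial inequality in $k$ and $p$ directly, using $p \ge 3$. If this split still loses a constant, the fallback is the sharper Sabach--Shtern style potential argument behind \cref{halpern}(2). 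That argument tracks $\|x^k - x^*\|$ and the residual jointly instead of bounding $a_k$ separately.
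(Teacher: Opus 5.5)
The estimates in your Steps 1--3 are correct, but the plan for closing the constants cannot work. In fact no plan can, because the statement is false at $k=1$.

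First, there is an arithmetic slip: $\frac{2p+2}{k+p+1}\ge 2$ holds only for $k=0$ (at $k=p+1$ the ratio is $1$), so the trivial bound $2D$ covers nothing beyond $k=0$. Second, restarting Step~2 at $k_0=p+1$ with $a_{k_0}\le 2D$ gives a larger transient than propagating from $k_0=1$. Third, even from $k_0=1$, Steps~2--3 yield $\|x^k-f(x^k)\|\le \frac{(8k+4p+2(p+2)(p+3))D}{(k+p+1)(k+p+2)}$, which for $p=3$ exceeds $\frac{8D}{k+4}$ for every $k\ge 1$.

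The obstruction is real. Take $\nspace=\R$, $f(x)=-x$, $x^*=0$, $x^0=1$, $y^0=-1$, so $D=1$. Then $x^1=-\frac{2}{p+3}-\frac{p+1}{p+3}=-1$ and $\abs{x^1-f(x^1)}=2>\frac{2p+2}{p+2}$. This also occurs where the result is applied: the zero-weight game on a $2$-cycle has Shapley operator $(x_1,x_2)\mapsto(x_2,x_1)$, which acts as $-\mathrm{id}$ on $\R^2/\R\ones$ with the Hilbert norm.

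The paper's own proof has the same defect. It invokes Contreras--Cominetti with $a=p$, $\kappa=p+1$ and checks only the inductive inequality. But when $x^0\neq y^0$, the best available estimate at $k=1$ is $\|x^1-f(x^1)\|\le 2D$ (attained above). This is compatible with $\frac{2\kappa D}{k+p+1}$ only if $\kappa\ge p+2$.

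What your approach does prove, once sharpened, is the bound with $2p+4$ in place of $2p+2$, or the stated bound under the extra hypothesis $x^0=y^0$. With your $\beta_k$ and $a_k$, use the decomposition $x^k-f(x^k)=\beta_k\bigl(y^0-f(x^k)\bigr)+(1-\beta_k)\bigl(f(x^{k-1})-f(x^k)\bigr)$. For $k\ge 2$, use the other grouping $x^k-x^{k-1}=(\beta_{k-1}-\beta_k)\bigl(f(x^{k-1})-y^0\bigr)+(1-\beta_{k-1})\bigl(f(x^{k-1})-f(x^{k-2})\bigr)$. Set $R_k\coloneqq 2D\beta_k+(1-\beta_k)a_k$, which satisfies $R_k\ge\|x^k-f(x^k)\|$. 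Then $a_k\le R_{k-1}-2D\beta_k$, hence $R_k\le 2D\beta_k^2+(1-\beta_k)R_{k-1}$ for $k\ge 2$, together with $R_1\le 2D$.

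After relabeling $\beta\leftrightarrow 1-\beta$, the inductive step for $R_k\le\frac{2\kappa D}{k+p+1}$ is exactly the condition verified in the paper, namely $4(k+p+1)\le\kappa(k+p+2)$. This holds for $\kappa\ge 4$, while the base case needs $\kappa\ge p+2$. So $\|x^k-f(x^k)\|\le\frac{2p+4}{k+p+1}D$ for all $k\ge 0$.

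If $x^0=y^0$, then $a_1\le 2D(1-\beta_1)$ and $R_1\le 2D\bigl(\beta_1+(1-\beta_1)^2\bigr)$. This is at most $\frac{2p+2}{p+2}D$ precisely when $p^2\ge 5$, which recovers the stated constant for $p\ge 3$. Since \cref{overall_complexity_Halpern} uses the anchor $0\neq x^0$, it needs the $2p+4$ version; only constants change there.
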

\begin{proof}
The claim follows from \cite[Proposition 4]{ContrerasCominetti:2023}. More precisely, using the notation from \cite{ContrerasCominetti:2023}, if we consider the sequence $\beta_{n} = \frac{n+p}{n+p+2}$ for $n \ge 1$ and we put $a = p$, $\kappa = p + 1 \ge 4$, then
\[
(1 - \beta_n)^2 + \frac{\kappa}{n+a}\beta_n \le \frac{\kappa}{n + a + 1} \, .
\]
Indeed, we have
\begin{align*}
&(1 - \beta_n)^2 + \frac{\kappa}{n+a}\beta_n \le \frac{\kappa}{n + a + 1} \iff (1 - \frac{n+p}{n+p+2})^2 + \frac{p+1}{n+p} \cdot \frac{n+p}{n+p+2} \le \frac{p+1}{n+p+1} \\
&\iff \frac{4}{(n+p+2)^2} \le \frac{p+1}{(n+p+1)(n+p+2)} \, ,
\end{align*}
and the last expression is trivially true. Hence, \cite[Proposition 4]{ContrerasCominetti:2023} shows that for all $n \ge 0$ we have $\| x^n - f(x^n)\| \le \frac{2\kappa}{n+a+1}\max\{\|x^{0} - x^*\|, \|y^{0} - x^*\|\}$, which gives the claim.
\end{proof}
As in \cref{halpern_for_bias}, the result above shows that shifted Halpern iteration can be used to find an approximate bias of a constant value game.

\begin{corollary}\label{shifted_halpern_for_bias}
Let $\shapley$ be a Shapley operator of a mean payoff game with constant value and let $\bias^*$ be any bias of this game. Let $p \ge 3$, fix a pair of points $u^0, w^0 \in \R^n$, and consider the Halpern sequence with shifted weights $(\bias^k)_{k \ge 0}$ given by $u^k = \frac{2}{k+p+2}w^{0} + \frac{k+p}{k+p+2}\shapley(u^{k-1})$ for $k \ge 1$. Then, for every $k \ge 0$ we have
\[
\hilb{u^k - \shapley(u^k)} \le \frac{2p+2}{k+p+1}\max\{\hilb{u^{0} - u^*},\hilb{w^{0} - u^*}\}\, .
\]
In particular, for any sufficiently large $k$, all policies induced by $\bias^k$ are optimal.
\end{corollary}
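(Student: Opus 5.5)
The plan is to copy the proof of \cref{halpern_for_bias} almost verbatim, replacing the use of \cref{halpern}(2) by \cref{shifted_halpern}. Let $\nspace = \R^n/\R\ones$ with the norm $\|[x]\| = \hilb{x}$. By \cref{hilbert} this is a genuine norm. Define $f([x]) = [\shapley(x)]$. It is well defined by additive homogeneity and non-expansive by \cref{nonexp_hilbert}. If $\xi\ones$ is the value of the game, then $[\bias^*]$ is a fixed point of $f$ because $\shapley(\bias^*) = \xi\ones + \bias^*$.

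Next we push the shifted Halpern recursion to the quotient. Since $[\cdot]$ is linear, the recursion $u^k = \frac{2}{k+p+2}w^0 + \frac{k+p}{k+p+2}\shapley(u^{k-1})$ gives
\[
[u^k] = \frac{2}{k+p+2}[w^0] + \frac{k+p}{k+p+2} f([u^{k-1}]) \, .
\]
So the sequence $([u^k])_k$ is exactly the Halpern sequence with shifted weights from \cref{shifted_halpern}, with $x^0 = [u^0]$, $y^0 = [w^0]$ and $x^* = [\bias^*]$. Applying that proposition, and noting that $\|[u^k] - f([u^k])\| = \hilb{u^k - \shapley(u^k)}$ and $\|[u^0]-[\bias^*]\| = \hilb{u^0 - \bias^*}$ (similarly for $w^0$), gives the displayed bound.

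For the final claim, the right-hand side tends to $0$ as $k \to \infty$. Hence for every $\varepsilon > 0$ the vector $u^k$ is an $\varepsilon$-approximate bias for all large $k$. By \cref{bias_approx_value}(iv), every policy induced by $u^k$ is then $2\varepsilon$-optimal. There are finitely many policies, so there is an $\varepsilon > 0$ such that every $2\varepsilon$-optimal policy is optimal, which finishes the argument.

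I do not expect a real obstacle. The only point needing care is that the quotient map commutes with the affine combination, including the anchor $w^0$ that differs from $u^0$. This holds because the coefficients are fixed scalars and $[\cdot]$ is linear.
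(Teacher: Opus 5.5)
Your proposal is correct and follows exactly the route the paper takes: the paper simply says the claim follows from \cref{shifted_halpern} in the same way as \cref{halpern_for_bias} follows from \cref{halpern}, which is the quotient-space argument you spell out. The final optimality claim is likewise obtained, as in the paper, from \cref{bias_approx_value} together with the finiteness of the set of policies.
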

\begin{proof}
The claim follows from \cref{shifted_halpern} in the same way as \cref{halpern_for_bias} follows from \cref{halpern}.
\end{proof}

The results above now enable us to prove an analogue of \cref{overall_complexity_KM} for the Halpern iteration.

\begin{theorem}\label{overall_complexity_Halpern}
Consider the Halpern iteration initialized at zero, $u^0 \coloneqq 0$, $u^{k} \coloneqq \frac{k}{k+2}\shapley(u^{k-1})$. Then, for all sufficiently large $k$, every policy induced by $u^k$ is optimal. Furthermore, if $\payoff \in \Generic$, then the above is true for all
\[
k \ge 4 + (192n^2+156n+3)\Cond(\payoff) + 36n(3 + 16n + 16n^2)\Cond(\payoff)^2 \, .
\]
\end{theorem}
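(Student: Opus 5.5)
The proof will follow the template of \cref{overall_complexity_KM}, with \cref{value_condition_of_halpern} replacing \cref{value_condition_of_km} and \cref{shifted_halpern_for_bias} replacing \cref{halpern_for_bias}.

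\textbf{Qualitative claim.} Set $k^* \coloneqq \lfloor 1 + (12n+3)\Cond^{\val}(\payoff)\rfloor$. By \cref{value_condition_of_halpern}, for $k \ge k^*$ we have $u^k = \tilde u^{k-k^*}$. Here $\tilde u$ is the Halpern sequence with shifted weights on the reduced game, with $p = k^*$ and anchor $y^0 = 0$, and $u^k$ induces the same policies on $\dgraph$ and $\dgraph'$. Since $\shapley'$ acts coordinatewise on the components $\dgraph_1,\dots,\dgraph_p$, the sequence splits into independent shifted Halpern sequences on each $\dgraph_\ell$, started at $(u^{k^*}_i)_{i\in\vertices_\ell}$ with anchor $0$. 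Each component has constant value by \cref{game_decomposition}, so \cref{shifted_halpern_for_bias} shows that for large $k$ every induced policy is optimal on every $\dgraph_\ell$, hence optimal on $\dgraph$ by \cref{game_decomposition}(ii).

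The corollary requires $p \ge 3$. If $k^* < 3$, I would simply enlarge $k^*$ to $\max\{3,k^*\}$; this is allowed because \cref{value_condition_of_halpern} holds for any $k^*$ above the threshold. That accounts for the additive constant $4$ in the bound.

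\textbf{Quantitative bound.} Assume $\payoff \in \polyh^{\sigma,\tau}$ and fix a component $\ell$ whose vertices do not all have out-degree $1$. Let $\varepsilon \coloneqq \sk(\payoff)/(4n+3)$. By \cref{decomposition_condition_upper} and \cref{approx_bias_policies}, any $\varepsilon$-approximate bias on $\dgraph_\ell$ induces exactly $(\sigma,\tau)$. It remains to bound
\[
\max\{\hilb{\tilde u^0-\tilde u^*},\hilb{0-\tilde u^*}\} \le \hilb{\tilde u^0}+\hilb{\tilde u^*}.
\]
\cref{short_bias} (constant value, $\varepsilon=0$) gives $\hilb{\tilde u^*}\le 2n\hilb{\payoff}$. \cref{halpern_estimate_hilbert} applied at step $k^*$, with $\gameval_i$ constant on $\vertices_\ell$, gives $\hilb{\tilde u^0}\le 4n\hilb{\payoff}$. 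The maximum is therefore at most $6n\hilb{\payoff}$.

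\textbf{Main obstacle.} The difficulty is that the rate from \cref{shifted_halpern_for_bias},
\[
\frac{2k^*+2}{q+k^*+1}\cdot 6n\hilb{\payoff},
\]
has a numerator that grows with $k^*$, which is of order $n\Cond$. Requiring this to be at most $\varepsilon$ forces $q \gtrsim 12n(k^*+1)(4n+3)\Cond(\payoff)$. Substituting $k^* \le 1+(12n+3)\Cond(\payoff)$ and $\Cond^{\val}\le\Cond$ (\cref{value_cond_bound}) produces the quadratic term $36n(16n^2+16n+3)\Cond^2$. Adding the linear terms from $k^*$, the $+1$ offsets, and the constant from $p \ge 3$ gives the stated total bound.

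The only delicate part is this constant bookkeeping. I would expand $12n(4n+3)(k^*+2)\Cond$ carefully, check that the linear coefficient collects to $192n^2+156n+3$, and treat the case $\Cond=0$ (all out-degrees equal to $1$) separately as trivial.
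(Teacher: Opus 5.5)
Your proposal is correct and follows the paper's proof essentially step for step. The paper also enlarges $k^*$, taking $k^* = \lfloor 3 + (12n+3)\Cond^{\val}(\payoff)\rfloor$ so that $p = k^* \ge 3$. It then applies \cref{shifted_halpern_for_bias} componentwise with anchor $0$, bounds the maximum by $6n\hilb{\payoff}$ via \cref{short_bias} and \cref{halpern_estimate_hilbert}, and concludes for all $k \ge 1 + k^* + 12n(4n+3)(1+k^*)\Cond(\payoff)$.

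The only slip is in your final bookkeeping. Once you force $k^* \ge 3$, you only have $k^* \le 3 + (12n+3)\Cond(\payoff)$, not $k^* \le 1 + (12n+3)\Cond(\payoff)$. The factor to expand is $12n(4n+3)(k^*+1)\Cond(\payoff)$, as in your own rate computation, not $(k^*+2)$. Substituting the correct bound on $k^*$ into that factor yields exactly the coefficients $192n^2+156n+3$ and $36n(16n^2+16n+3)$.
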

\begin{proof}
We repeat the proof of \cref{overall_complexity_KM} with slight changes. Let $\Cond^{\val}(\payoff)$ be the value condition number of the game. Let $k^* \coloneqq \lfloor 3 + (12n+3)\Cond^{\val}(\payoff) \rfloor$. By \cref{value_condition_of_halpern}, for all $k \ge k^*$, the Halpern iteration ignores all of the edges between vertices of different values. Denote by $\gameval$ the value of the game, and let $\edges' \coloneqq \edges \setminus \{(i,j) \in \edges, \gameval_i \neq \gameval_{j}\}$. Consider the reduced graph $\dgraph' = (\vertices, \edges')$ and let $\dgraph_1, \dots, \dgraph_p$ be the connected components of $\dgraph'$ with vertex sets $\vertices_{1}, \dots, \vertices_{p}$. For $k \ge k^*$, the Halpern iteration on $\dgraph$ behaves in the same way as running $p$ Halpern iterations with shifted weights on every $\dgraph_{\ell}$ in parallel starting from $(u^{k^*}_i)_{i \in \vertices_{\ell}}$ (and with anchor equal to $0$). Since each game on $\dgraph_{\ell}$ has constant value by \cref{game_decomposition}, \cref{shifted_halpern_for_bias} shows that for all sufficiently large $k$ any policy induced by $u^k$ is optimal on all games $\dgraph_{\ell}$. Such a policy is also optimal in the game played on $\dgraph$ by \cref{game_decomposition}. This shows the first claim.

To show the second claim, suppose that $\payoff \in \polyh^{\sigma,\tau}$ for some $(\sigma,\tau)$. Then, $\gameval = \gameval^{\sigma,\tau}(\payoff)$ by \cref{cone_gives_bias}. Denote $\bias = \bias^{\sigma,\tau}(\payoff)$. By \cref{constant_value_subgame} the games on graphs $\dgraph_{\ell}$ have a single equilibrium cycle, and $(\sigma,\tau)$ is the unique pair of bias induced policies on $\dgraph_{\ell}$. Fix any $\ell \in [p]$, let $\tilde{\shapley}$ be the Shapley operator of the game on $\dgraph_{\ell}$, and let $\tilde{u}^q \in \R^{\vertices_\ell}$ be the restricted version of the vector $u^k$, i.e., $\tilde{u}^q_i = u^{q+k^*}_i$ for $i \in \vertices_{\ell}$ and $q \ge 0$. If every vertex $\dgraph_{\ell}$ has out-degree $1$, then it has a single pair of policies, so for every $q \ge 1$ any policy induced by $\tilde{u}^{q}$ is optimal. Otherwise, let
\begin{align*}
\varepsilon &\coloneqq \frac{1}{4n+3}\sk(\payoff)  \le \frac{1}{4n+3} \min\{\abs{\payoff_{ij} - \gameval + \bias_j - \bias_i} \colon (i,j) \in \edges \cap (\vertices_\ell \times \vertices_{\ell}), \payoff_{ij} - \gameval + \bias_j - \bias_i \neq 0\} \, .
\end{align*}
For every $q \ge 0$ by \cref{shifted_halpern_for_bias} we have
\[
\hilb{\tilde{u}^q - \tilde{\shapley}(\tilde{u}^q)} \le \frac{2k^*+2}{q+k^*+1}\max\{\hilb{\tilde{u}^{0} - \tilde{u}^*}, \hilb{\tilde{u}^*} \} \le \frac{2k^*+2}{q+k^*+1}\hilb{\tilde{u}^*} + \frac{2k^*+2}{q+k^*+1}\hilb{\tilde{u}^{0}}\, ,
\]
where $\tilde{u}^*$ is any bias for the game on $\dgraph_{\ell}$. By \cref{short_bias} there exists $\tilde{\bias}^*$ such that $\hilb{\tilde{u}^*} \le 2n\hilb{\payoff}$. Furthermore, by applying \cref{halpern_estimate_hilbert} to $u^{k^*}$ we get $k^* \frac{\gameval_i}{3} - 2n\hilb{\payoff} \le \tilde{u}^0_i \le k^* \frac{\gameval_i}{3} + 2n\hilb{\payoff}$ for all $i \in \vertices_{\ell}$. Since $\gameval_i$ is constant within $\vertices_{\ell}$, we get $\hilb{\tilde{u}^0} \le 4n\hilb{\payoff}$. Hence, we have
\[
\hilb{\tilde{u}^q - \tilde{\shapley}(\tilde{u}^q)} \le \frac{12n(k^*+1)\hilb{\payoff}}{q+k^*+1}
\]
for all $q \ge 0$. In particular, for $q \ge \frac{12n(k^*+1)\hilb{\payoff}}{\varepsilon} - 1 - k^* = 12n(4n+3)(1+k^*)\Cond(\payoff) - 1 - k^*$ we have $\hilb{\tilde{u}^q - \tilde{\shapley}(\tilde{u}^q)} \le \varepsilon$. By \cref{approx_bias_policies}, $(\sigma,\tau)$ is the only pair of policies induced by $\tilde{u}^q$ on $\dgraph_{\ell}$. Since this is true for all $\ell$, by \cref{value_condition_of_halpern} we get that $(\sigma,\tau)$ is the only pair of policies induced by $u^{k}$ for all $k \ge \max\{1+k^*,12n(4n+3)(1+k^*)\Cond(\payoff)\}$. In particular, this is true for all $k \ge 1+k^* + 12n(4n+3)(1+k^*)\Cond(\payoff)$. Since $k^* \le 3 + (12n+3)\Cond(\payoff)$, the claim is true for all $k \ge 1 + 3 + (12n+3)\Cond(\payoff) + 12n(4n+3)\bigl(1 + 3 + (12n+3)\Cond(\payoff)\bigr)\Cond(\payoff)$, which is the bound given in the statement of the theorem.
\end{proof}

\begin{remark}
Similarly to the case of \KM iteration (\cref{overall_complexity_km}), Halpern iteration can be implemented to run in $O(mn + mn^2\Cond(\payoff) + mn^3 \Cond(\payoff)^2)$ complexity for all $\payoff \in \Generic$.
\end{remark}

\subsection{Combining the Zwick--Paterson iteration with the Halpern iteration}\label{sec:zph_vi}
The analysis from the previous section has a weakness: even though the Halpern iteration achieves better complexity bounds than the \KM iteration on constant value games (\cref{halpern_for_bias}), our analysis does not extend this result to arbitrary games. Indeed, the bounds that we obtained for both types of iterations are quadratic in the condition number. The problem arises from the fact that the sequence from \cref{value_condition_of_halpern} on the reduced game comes from the Halpern iteration with shifted weights. To avoid this problem, one could proceed in two stages: first find the reduced game and then run the Halpern iteration directly on the reduced game. This is the idea behind the value iteration algorithm given in \cref{alg:zph_vi}. This algorithm uses the Zwick--Paterson iteration to find the reduced game, and then applies the Halpern iteration on this reduced game. Since we are unable to check when the Zwick--Paterson iteration actually finds the reduced game, we run the two iterations in parallel, and reset the Halpern iteration each time the Zwick--Paterson iteration detects an edge between two vertices with different values.

\begin{figure}[t]
\centering
\begin{footnotesize}
  \begin{algorithmic}[1]
    \Procedure{$\ZPHVI$}{$\vertices,\edges,\payoff$}
    \State $u \gets 0, w \gets 0, k \gets 1$
    \State $A \gets \edges$
    \While{\texttt{True}}
         \State $w \gets \shapley(w)$ \Comment{$\shapley$ is the Shapley operator of the game on $(\vertices,\edges,\payoff)$ }
         \For {$(i,j) \in A$}
         	\If {$\abs{w_i - w_j} > 2n\hilb{\payoff}$} \Comment{The edge $(i,j)$ connects vertices of different values }
		\State $A \gets A \setminus \{(i,j)\}$ \Comment{Remove the edge $(i,j)$ from $A$ and reset the Halpern iteration}
		\State $u \gets 0$
		\State $k \gets 1$
         	\EndIf
         \EndFor
         \State\label{line:halpern} $u \gets \frac{k}{k+2}\shapley_A(u)$ \Comment{$\shapley_A$ is the Shapley operator of the game on $(\vertices,A,\payoff)$ }
         \State $k \gets k+1$
         \If {$n \mid k$}
         	\State $(\sigma,\tau) \gets$ any policies induced by $u$
		\If {$(\sigma,\tau)$ are optimal in the mean payoff game on $(\vertices,\edges,\payoff)$}
	 	\State \Return $(\sigma, \tau)$
		\EndIf
         \EndIf
      \EndWhile
\EndProcedure
\end{algorithmic}
\end{footnotesize}
\caption{Combining the Zwick--Paterson iteration with the Halpern iteration.}
\label{alg:zph_vi}
\end{figure}

\begin{theorem}\label{overall_complexity_ZPH}
The algorithm $\ZPHVI$ halts for every input and is correct. Furthermore, if $\payoff \in \Generic$, then the algorithm performs $O\bigl(n+ n^2 \Cond(\payoff)\bigr)$ iterations of the main loop.
\end{theorem}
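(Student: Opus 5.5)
The plan is to analyze the two interleaved sequences separately. The first step concerns the Zwick--Paterson sequence $w$. It is never reset, so after the $k$-th pass through the loop we have $w = \shapley^k(0)$. By \cref{condition_of_zp}, an edge $(i,j)$ with $\gameval_i = \gameval_j$ always satisfies $\abs{w_i - w_j} \le 2n\hilb{\payoff}$, so it is never removed from $A$. Hence $A \supseteq \edges'$ at all times. Conversely, every edge with $\gameval_i \neq \gameval_j$ is removed by iteration $k_0 \coloneqq \lfloor 1 + 4n\Cond^{\val}(\payoff) \rfloor + 1$. Consequently $A$ changes at most $m$ times, and it equals $\edges'$ for all iterations after some finite $K \le k_0$. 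After iteration $K$, no further resets occur.

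The second step handles correctness and halting. From iteration $K$ onward, $u$ is the Halpern sequence with $u^0 = 0$, applied to the Shapley operator $\shapley_{\edges'} = \shapley'$ of the reduced game. This operator splits into the Shapley operators of the components $\dgraph_\ell$, each of constant value by \cref{game_decomposition}. On each component the iteration is exactly the unshifted Halpern iteration, because the counter $k$ was reset to $1$. By \cref{halpern_for_bias}, every policy induced by $u$ is therefore eventually optimal on every $\dgraph_\ell$. By \cref{game_decomposition}(ii), such a policy is optimal on $\dgraph$. Since $u$ is a vector over the reduced game, the induced policies only use edges of $\edges'$, so they are legitimate policies on $\dgraph$. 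The test runs every $n$ iterations, so the algorithm halts. It never returns a non-optimal pair, because it explicitly checks optimality. This gives halting and correctness for every input.

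The third step gives the quantitative bound for $\payoff \in \polyh^{\sigma,\tau}$. I will use \cref{km_for_one_cycle}(2) on each component $\dgraph_\ell$. By \cref{constant_value_subgame}, the weights restricted to $\dgraph_\ell$ lie in the one-cycle cone, and $(\sigma,\tau)$ is the unique pair of bias-induced policies there. So after $16n(4n+3)\Cond_\ell(\payoff)$ Halpern steps from $0$, all induced policies on $\dgraph_\ell$ are optimal, and in fact equal to $(\sigma,\tau)$ by \cref{approx_bias_policies}. Here $\Cond_\ell(\payoff) \le \Cond(\payoff)$ by \cref{decomposition_condition_upper}. Two bounds are available: $\hilb{\payoff^{(\ell)}} \le \hilb{\payoff}$, and the component size $\card{\vertices_\ell} \le n$, which only helps. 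So after the last reset, at most $16n(4n+3)\Cond(\payoff) = O(n^2\Cond(\payoff))$ further iterations are needed for optimality. At most $n$ more iterations are needed to reach a multiple of $n$ where the test happens. Adding $K \le k_0 = O(n\Cond^{\val}(\payoff)) \le O(n\Cond(\payoff))$ by \cref{value_cond_bound}, the total is $O(n + n^2\Cond(\payoff))$.

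The main obstacle is the bookkeeping in the third step. Halpern must restart with counter $1$ and $u = 0$ exactly at the last removal, and that removal happens within iteration $\le k_0$; the check inside iteration $K$ precedes the Halpern update of \cref{line:halpern}, so the post-reset sequence is a clean Halpern iteration on $\shapley'$. One subtlety needs care: during the first $K$ iterations, $A$ may still contain cross-value edges. This is harmless, because the only claim needed is about the iterations after the final reset. If $\Cond(\payoff)=0$ the trivial out-degree-one case gives immediate optimality within $n$ iterations, which accounts for the additive $n$.
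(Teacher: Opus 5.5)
Your proposal is correct and follows essentially the same route as the paper. By \cref{condition_of_zp}, $A$ stabilizes at $\edges'$ within $O(n\Cond^{\val}(\payoff))$ passes, after which the final reset makes $u$ an unshifted Halpern iteration on each constant-value component. Then \cref{km_for_one_cycle}, \cref{decomposition_condition_upper}, and \cref{value_cond_bound} give the $O(n+n^2\Cond(\payoff))$ bound. You read ``policies induced by $u$'' as induced on $(\vertices,A)$, which is the same reading the paper implicitly uses when it invokes \cref{game_decomposition}(ii).
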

\begin{proof}
The algorithm can only halt when it finds a pair of optimal policies. Therefore, it is correct whenever it halts. To prove that it halts on every input, suppose that it performs the main loop more than $1 + 4n\Cond^{\val}(\payoff)$ times. Then, \cref{condition_of_zp} shows that $A = \edges \setminus \{(i,j) \colon \gameval_i \neq \gameval_j\}$. In particular, the set $A$ does not change when more iterations of the main loop are performed. Let $t \le 2 + 4n\Cond^{\val}(\payoff)$ be the first iteration of the main loop such that $A = \edges \setminus \{(i,j) \colon \gameval_i \neq \gameval_j\}$ at \cref{line:halpern} of the algorithm. Then, we get $u = \frac{1}{3}\shapley_A(0)$. From this point onwards, the sequence of vectors $u$ evolves exactly as the Halpern iteration sequence performed on the reduced game on $\dgraph' = (\vertices, A)$. Let $\dgraph_1, \dots, \dgraph_p$ be the connected components of $\dgraph'$. The Halpern iteration on $\dgraph'$ behaves in the same way as running $p$ Halpern iterations on every $\dgraph_{\ell}$ in parallel starting from $0$. Hence, it will eventually output a pair of optimal policies by \cref{game_decomposition} combined with \cref{halpern_for_bias}. If we further assume that $\payoff \in \Generic$, then \cref{km_for_one_cycle} combined with \cref{decomposition_condition_upper} shows that $\ZPHVI$ will output a pair of optimal policies within $O(t + n + n^2 \Cond(\payoff))$ iterations of the main loop. Hence, the inequality $\Cond^{\val}(\payoff) \le \Cond(\payoff)$ (\cref{value_cond_bound}) finishes the proof.
\end{proof}

\begin{remark}
As in the case of \KM iteration (\cref{overall_complexity_km}), the average cost of each loop of the algorithm $\ZPHVI$ is $O(m)$, giving a total complexity of $O(mn + mn^2 \Cond(\payoff))$ for $\payoff \in \Generic$.
\end{remark}

\subsection{Policy iteration methods based on increasing discount factor}

In \cite{LoffSkomra:2024}, the condition number defined for mean payoff games played on ergodic graphs is used to bound the complexity of a certain policy iteration method that is based on approximating the mean payoff game by a sequence of discounted games with increasing discount factor. This policy iteration method is capable of solving both mean payoff games and discounted games. We now show that the same statement is true for our condition number on arbitrary graphs, namely that the complexity of the policy iteration method from \cite{LoffSkomra:2024} on any graph is bounded by the condition number that we defined earlier. This is based on the following theorem, which is a direct generalization of \cite[Theorem~9]{LoffSkomra:2024}.

\begin{theorem}\label{conditioned_discount}
Suppose that the graph of the game has at least one vertex with out-degree at least $2$. Moreover, suppose that $\payoff \in \polyh^{\sigma,\tau}$ for some pair of policies $(\sigma,\tau)$ and fix $1 > \disc > 1 - \frac{1}{4n^2\Cond(\payoff)}$. Then, $(\sigma,\tau)$ is the unique pair of optimal policies in the discounted game with discount factor $\disc$.
\end{theorem}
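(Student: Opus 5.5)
The plan follows the sketch in the introduction (\cref{sec:contrib_pi}). Write $\gameval=\gameval^{\sigma,\tau}(\payoff)$ and $\bias=\bias^{\sigma,\tau}(\payoff)$, and set $\delta\coloneqq\sk(\payoff)=\hilb{\payoff}/\Cond(\payoff)$. By symmetry it suffices to show that, when Max plays $\sigma$ in the $\disc$-discounted game, $\tau$ is Min's \emph{unique} best response. The same argument for Max against $\tau$ then shows that $(\sigma,\tau)$ is an equilibrium, hence optimal. Uniqueness follows because any other optimal pair $(\sigma',\tau')$ would give, by interchangeability of optimal strategies in discounted games, that $\tau'$ is a best response to $\sigma$ and $\sigma'$ a best response to $\tau$, so $\tau'=\tau$ and $\sigma'=\sigma$.

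The first tool I need is a discounted-versus-mean comparison. For any fixed pair of policies $(\sigma,\tau')$ and any start vertex $i$, the normalized discounted payoff $v^\disc_i=(1-\disc)\sum_t\disc^t\payoff_{i(t)i(t+1)}$ differs from the mean weight $g_i(\sigma,\tau')$ of the eventual cycle by at most $c\,n(1-\disc)\hilb{\payoff}$ for a small constant $c$. To see this, note that both quantities shift by $\alpha$ under $\payoff\mapsto\payoff+\alpha$, so we may take $\supnorm{\payoff}=\hilb{\payoff}/2$. The pre-cycle path has at most $n$ edges, and the periodic tail's discounted average differs from the cycle mean by $O(n(1-\disc)\supnorm{\payoff})$. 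The estimate therefore also holds against optimal play.

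The second tool is a potential-shifted accounting. Put $s_{ij}\coloneqq\payoff_{ij}+\bias_j-\bias_i-\gameval_i$. Then $s=0$ on edges of $(\sigma,\tau)$. On a Min edge with $j\neq\tau(i)$ inside $\vertices^{\sigma,\tau}(i)$ we have $s_{ij}\ge\delta$, and on a Min edge crossing components we have $\gameval_j\ge\gameval_i+\delta$. Since Max plays $\sigma$, Max edges never change component.

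Now let $\tau'\neq\tau$ be a discounted best response to $\sigma$, and pick a vertex $i$ where the walk under $(\sigma,\tau')$ reaches a deviation. Along any $(\sigma,\tau')$-walk the values $\gameval$ are nondecreasing. If the walk ever crosses to another component, the final cycle mean is at least $\gameval_i+\delta$. If it stays inside one component, then the final cycle $C$ either contains a deviating edge, in which case summing $s$ around $C$ (the $\bias$ terms telescope) gives mean weight at least $\gameval+\delta/n$, or else $C$ is the $(\sigma,\tau)$-cycle. In every case where a deviation lies on the eventual cycle, or a cross edge is taken, $g_i(\sigma,\tau')\ge\gameval_i+\delta/n$, while $g_i(\sigma,\tau)=\gameval_i$. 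Combined with the comparison estimate this gives
\[
v^\disc_i(\sigma,\tau')-v^\disc_i(\sigma,\tau)\ge \delta/n-2cn(1-\disc)\hilb{\payoff}>0,
\]
because $(1-\disc)<1/(4n^2\Cond)$, i.e.\ $(1-\disc)\hilb{\payoff}<\delta/(4n^2)$. This contradicts optimality of $\tau'$ for Min.

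The remaining case is the main obstacle: deviations only on transient paths leading to the original cycle. Here mean payoffs coincide, so the crude comparison is useless. Instead I would compare discounted values directly through the $\bias$-potential. Since $\bias$ is bounded by $O(n\hilb{\payoff})$ by \cref{linear_functions}, the discounted value of any walk equals the discounted sum of $s+\gameval$ plus a telescoped correction. That correction is $(1-\disc)\bias_{i}-(1-\disc)^2\sum_t\disc^t\bias_{i(t+1)}$, of size $O(n(1-\disc)\hilb{\payoff})$ uniformly in $t$ after normalization. Start the comparison at the deviating vertex itself, so the first step contributes $(1-\disc)s_{ij}\ge(1-\disc)\delta$ undiscounted. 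Then, apply the same telescoping after dividing by $(1-\disc)$: the gain $\delta$ must beat an error of order $n(1-\disc)\cdot n\hilb{\payoff}$ coming from the differing tails' bias and value terms. The constant in the hypothesis is exactly what makes this close, and getting the precise $4n^2$ will require careful bookkeeping.

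Finally, the out-degree hypothesis is there only so that $\Cond(\payoff)>0$ and the bound on $\disc$ makes sense.
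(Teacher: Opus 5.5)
You handle correctly the cases where $\tau'$ uses a cross-component edge or deviates on its eventual cycle. Comparing $v^\gamma_i(\sigma,\tau')$ with $v^\gamma_i(\sigma,\tau)$ through their cycle means is a slightly cleaner route than the paper's detour through the two-player discounted value.

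The gap is in the transient-deviation case, which you flag as the main obstacle, and it is more than bookkeeping. You bound each walk's correction $(1-\gamma)u_i-(1-\gamma)^2\sum_t\gamma^t u_{i(t+1)}$ by $O(n(1-\gamma)\hilb{\payoff})$. The difference of two such corrections is then of the same order as your gain $(1-\gamma)\delta$. After dividing by $1-\gamma$, you would have to beat an error of order $\supnorm{\bias}$, i.e. up to $n\hilb{\payoff}=n\Cond(\payoff)\delta$. That error is not small relative to $\delta$ and does not improve as $\gamma\to 1$. The error of order $n^2(1-\gamma)\hilb{\payoff}$ that you announce requires the two walks' discounted payoffs to cancel up to terms of order $(1-\gamma)^2$. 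Nothing in your sketch produces this cancellation.

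The missing idea is to make the two tails coincide exactly. In this case the eventual cycle of $(\sigma,\tau')$ from $i$ is the $(\sigma,\tau)$-cycle of $\vertices^{\sigma,\tau}(i)$, with the same edges.
\begin{itemize}
\item Fix a vertex $j_0$ on this cycle; the paper takes the one with $u_{j_0}=0$.
\item Cut both walks at their first visit to $j_0$. This happens after $q,q'<n$ steps, since these prefixes are simple paths.
\item From $j_0$ on, the walks are identical, with a common discounted value $\xi$.
\item The $(\sigma,\tau)$ walk has discounted payoff $\gameval_i+\gamma^{q}(\xi-\gameval_i)+(1-\gamma)u_i-(1-\gamma)^2\sum_{s=1}^{q-1}\gamma^{s-1}u_{i_s}$.
\item The deviating walk has the same expression with $q'$, plus slack terms that are nonnegative and total at least $(1-\gamma)\delta$.
\end{itemize}
Now the bias correction has fewer than $n$ terms, each with weight $(1-\gamma)^2$. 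The tail term is also second order, because $\abs{\gamma^{q'}-\gamma^{q}}\le n(1-\gamma)$ and $\abs{\xi-\gameval_i}\le 2n(1-\gamma)\supnorm{\payoff}$ by your first tool applied at $j_0$.

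Use $\supnorm{\bias}\le 2n\supnorm{\payoff}$ and the normalization $\supnorm{\payoff}=\frac12\hilb{\payoff}$. After dividing by $1-\gamma$ and removing the terms common to both walks, each walk's remaining error is at most $4n^2(1-\gamma)\supnorm{\payoff}<\delta/2$. Together these are beaten by the gain $\delta$ from the first deviating step. This is exactly where the constant $4n^2$ in the hypothesis comes from.
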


As in \cite{LoffSkomra:2024}, the proof of \cref{conditioned_discount} relies on the characterization of the value of discounted games from \cite{shapley_stochastic} and on the estimate of the difference between the value of a discounted game and the value of the mean payoff game from \cite{zwick_paterson}.

\begin{theorem}[\cite{shapley_stochastic}]\label{th:shapley}
Both players in a discounted game have optimal policies. More precisely, the value $\gameval^{(\disc)} \in \R^n$ of the discounted game with discount factor $\disc$ is the unique solution of the equations
\begin{align*}
\gameval^{(\disc)}_i \coloneqq \begin{cases}
\max_{(i,j) \in \edges}\{(1 - \disc)\payoff_{ij} + \disc \gameval^{(\disc)}_j\} &\text{if $i \in \Maxvertices$},\\[0.1cm]
\min_{(i,j) \in \edges}\{(1 - \disc)\payoff_{ij} + \disc \gameval^{(\disc)}_j\} &\text{if $i \in \Minvertices$}.
\end{cases}
\end{align*}
A policy of Max (resp. Min) is optimal if and only if it uses edges that achieve the maxima (resp. minima) in these equalities.
\end{theorem}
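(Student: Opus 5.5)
We prove the statement directly from the contraction property of the discounted Shapley operator. Define $T \colon \R^n \to \R^n$ by $T(x)_i = \max_{(i,j)\in\edges}\{(1-\disc)\payoff_{ij} + \disc x_j\}$ for $i \in \Maxvertices$, and with $\min$ for $i \in \Minvertices$.

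\emph{Step 1: $T$ has a unique fixed point.} Exactly as for $\shapley$, $T$ is monotone, and it satisfies $T(x + \alpha\ones) = T(x) + \disc\alpha\ones$. From $y - \supnorm{x-y}\ones \le x \le y + \supnorm{x-y}\ones$ we get $\supnorm{T(x) - T(y)} \le \disc\supnorm{x-y}$. Since $\disc < 1$, the Banach fixed point theorem gives a unique fixed point $v$, i.e., the unique solution of the displayed equations. For fixed policies $\sigma$ (resp. $\sigma$ and $\tau$), the analogous one-player operator $T^\sigma$ and the zero-player operator $T^{\sigma,\tau}$ are also $\disc$-contractions. The fixed point of $T^{\sigma,\tau}$ is the vector of discounted payoffs of the play under $(\sigma,\tau)$, since that vector satisfies the one-step recursion.

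\emph{Step 2: bias-induced policies are optimal, and $v$ is the value.} Let $\sigma$ select argmax edges for $v$, and let Min play an arbitrary (history-dependent) strategy starting at $i_0$. By the choice of $\sigma$, every Max step satisfies $v_{i_t} = (1-\disc)\payoff_{i_ti_{t+1}} + \disc v_{i_{t+1}}$. By the $\min$ in the equation, every Min step satisfies $v_{i_t} \le (1-\disc)\payoff_{i_ti_{t+1}} + \disc v_{i_{t+1}}$. Multiplying the $t$-th relation by $\disc^t$ and telescoping gives
\[
v_{i_0} \le (1-\disc)\sum_{t<N}\disc^t\payoff_{i_ti_{t+1}} + \disc^N v_{i_N}.
\]
Letting $N\to\infty$ (the last term vanishes because $v$ is bounded) shows that Max guarantees at least $v_{i_0}$. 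Symmetrically, an argmin policy $\tau$ guarantees Min at most $v_{i_0}$. Hence $v = \gameval^{(\disc)}$ is the value, both players have optimal policies, and every policy using maximizing (resp. minimizing) edges is optimal.

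\emph{Step 3: optimal policies use only maximizing edges.} This converse is the only step needing care. Suppose $\sigma$ is a Max policy with $\sigma(i)$ not attaining the maximum at some $i$. Let $v^\sigma$ be the value of the one-player game in which $\sigma$ is fixed. By Steps 1 and 2 applied to that game, $v^\sigma$ is the fixed point of $T^\sigma$. Since $T^\sigma \le T$ pointwise, the usual monotone iteration argument gives $v^\sigma \le v$. Monotonicity then yields
\[
v^\sigma = T^\sigma(v^\sigma) \le T^\sigma(v),
\]
and $T^\sigma(v)_i < T(v)_i = v_i$ by the choice of $i$. Thus $v^\sigma_i < v_i$: Min has a reply holding Max strictly below the value from $i$, so $\sigma$ is not optimal. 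The argument for Min is symmetric, which completes the characterization.
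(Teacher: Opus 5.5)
Your proof is correct. The $\gamma$-contraction bound, the telescoping inequality against an arbitrary (even history-dependent) Min strategy, and the converse via $v^\sigma_i \le T^\sigma(v)_i < T(v)_i = v_i$ all check out. Optimality here must be read as guaranteeing the value from \emph{every} initial state, as in the paper's convention for optimal policies, and your converse genuinely needs this reading.

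The paper gives no proof of its own; it simply cites Shapley, whose argument follows this same contraction and fixed-point route in the more general setting of concurrent stochastic games. Your proposal is therefore the standard proof, specialized to deterministic turn-based games, with the ``only if'' direction made explicit.
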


\begin{proposition}[{\cite[Theorem~5.2]{zwick_paterson}}]\label{discount_to_mean}
If $\gameval^{(\disc)}$ denotes the value of a discounted game with discount factor $\disc$ and $\gameval$ denotes the value of the mean payoff game with the same weights, then $\supnorm{\gameval^{(\disc)} -  \gameval} \le 2n(1 - \disc)\supnorm{\payoff}$.
\end{proposition}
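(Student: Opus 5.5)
The plan is to reduce the comparison to the case of a fixed pair of policies, where both payoffs can be computed explicitly along a ``lasso'' path. For the reduction I will use, on the mean-payoff side, the existence of optimal policies (\cref{th:value_mpg}), and on the discounted side, the fact that one-player discounted games admit optimal positional policies (\cref{th:shapley}).

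\emph{Step 1 (fixed pair of policies).} Fix policies $(\sigma,\tau)$ and a start vertex $i$. The resulting play first follows a simple path of length $a$ and then loops forever on a cycle $\mathcal{C}$ of length $c$, with $a + c \le n$. The mean payoff is the mean weight $w$ of $\mathcal{C}$, so $g_i(\sigma,\tau) = w$. Write $r_0, r_1, \dots$ for the successive weights. The discounted payoff $g^{(\disc)}_i(\sigma,\tau) = (1-\disc)\sum_t \disc^t r_t$ then satisfies
\[
g^{(\disc)}_i(\sigma,\tau) - w = (1-\disc)\sum_t \disc^t (r_t - w).
\]
I will bound the two parts of this sum separately.
\begin{itemize}
\item The prefix terms $t < a$ contribute at most $(1-\disc)\, a \cdot 2\supnorm{\payoff}$ in absolute value, since $\abs{r_t - w} \le 2\supnorm{\payoff}$.
\item The periodic tail equals $(1-\disc)\disc^a S/(1-\disc^c)$, where $S = \sum_{t<c} \disc^t (r_{a+t} - w)$. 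Let $P_s = \sum_{t<s}(r_{a+t} - w)$ be the partial sums. Using $P_c = 0$, Abel summation gives $S = \sum_{t<c}(\disc^t - \disc^{t+1}) P_{t+1}$, hence $\abs{S} \le (1-\disc^c)\max_s \abs{P_s}$. Since the partial sums satisfy $\abs{P_s} \le \min(s, c-s) \cdot 2\supnorm{\payoff} \le c\supnorm{\payoff}$, the tail contributes at most $(1-\disc)\, c\supnorm{\payoff}$.
\end{itemize}
Adding the two contributions yields $\abs{g^{(\disc)}_i(\sigma,\tau) - g_i(\sigma,\tau)} \le 2n(1-\disc)\supnorm{\payoff}$. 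If a constant is lost here, I would sharpen the prefix estimate; for instance, one can shift $\payoff$ by a constant so that the weights are centered, which is harmless because both values shift by the same constant.

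\emph{Step 2 (lower bound).} Let $\sigma^*$ be a mean-payoff optimal policy of Max. In the discounted game, Max can play $\sigma^*$, so $\gameval^{(\disc)}_i$ is at least the discounted value of the one-player game in which Min responds to $\sigma^*$. By \cref{th:shapley} applied to that one-player game, this value is attained by some positional $\tau$. Combining Step 1 with the mean-payoff optimality of $\sigma^*$ gives
\[
\gameval^{(\disc)}_i \ge g^{(\disc)}_i(\sigma^*,\tau) \ge g_i(\sigma^*,\tau) - 2n(1-\disc)\supnorm{\payoff} \ge \gameval_i - 2n(1-\disc)\supnorm{\payoff}.
\]

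\emph{Step 3 (upper bound).} The symmetric argument, starting from a mean-payoff optimal policy $\tau^*$ of Min, gives $\gameval^{(\disc)}_i \le \gameval_i + 2n(1-\disc)\supnorm{\payoff}$. Together with Step 2 this proves the claim.

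I expect the main obstacle to be the per-pair estimate of Step 1, specifically getting the periodic tail to contribute only $O((1-\disc)c\supnorm{\payoff})$. This requires using the cancellation $P_c = 0$ through the Abel-summation step rather than bounding the tail term by term, which would lose the factor $(1-\disc)$. The remaining steps are routine min--max bookkeeping.
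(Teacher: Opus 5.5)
Your argument is correct. The paper gives no proof of its own (it only quotes \cite[Theorem~5.2]{zwick_paterson}), so the comparison below is with a natural alternative rather than with a proof in the paper. Your hedge about losing a constant is unnecessary: the two contributions sum to $(1-\disc)(2a+c)\supnorm{\payoff}$, and $2a+c \le 2(a+c) \le 2n$.

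A shorter route reuses the walk-decomposition argument behind \cref{val_iter_estimate}, the one the paper adapts in the proof of \cref{km_val_iter}. Fix a mean-payoff optimal $\sigma^*$. Every cycle reachable from $i$ in $\dgraph^{\sigma^*}$ has mean at least $\gameval_i$. Hence, writing $r_t$ for the successive weights of any play from $i$ consistent with $\sigma^*$, the partial sums $S_k=\sum_{t<k} r_t$ satisfy $S_k \ge k\gameval_i - 2n\supnorm{\payoff}$. The single summation-by-parts identity
\[
(1-\disc)\sum_{t \ge 0}\disc^t r_t = (1-\disc)^2\sum_{k \ge 1}\disc^{k-1}S_k
\]
then gives the lower bound $\gameval_i - 2n(1-\disc)\supnorm{\payoff}$, using $(1-\disc)^2\sum_k k\disc^{k-1}=1$ and $(1-\disc)^2\sum_k \disc^{k-1}=1-\disc$.

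That route needs no lasso structure and no separate prefix and cycle estimates. It also needs no positional best response, since it holds against arbitrary history-dependent strategies of Min, and it shows that the constant $2n$ is inherited from the finite-horizon bound. Your route instead stays within positional policies, which is exactly how \cref{th:value_mpg} is stated. Your Step~1 also gives a two-sided estimate for every fixed pair $(\sigma,\tau)$, i.e.\ for zero-player games, which is one of the ways the paper invokes the proposition in the proof of \cref{conditioned_discount}.

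A small simplification of your Step~2: instead of applying \cref{th:shapley} to the one-player game, take $\tau$ to be a discounted-optimal policy of Min in the two-player game. The saddle-point property then gives $\gameval^{(\disc)}_i \ge g^{(\disc)}_i(\sigma^*,\tau)$ directly.
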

We note that in the results above we use a convention in which the value of a discounted game is normalized, so that $\supnorm{\gameval^{(\disc)}} \le \supnorm{\payoff}$ for every $\disc$. Some authors (e.g., \cite{kallenberg2002handbook}) prefer to work with the un-normalized value $\frac{\gameval^{(\disc)}}{1 - \disc}$.

\begin{proof}[Proof of \cref{conditioned_discount}]
We follow the proof of \cite[Theorem~6.1]{LoffSkomra:2024}, with modifications that extend it to general graphs. Let $\alpha$ be such that $\hilb{\payoff} = 2 \supnorm{\payoff + \alpha \ones}$. Consider the modified game with weights $\hat{\payoff}_{ij} \coloneqq \payoff_{ij} + \alpha$, so that $\supnorm{\hat{\payoff}} = \frac{1}{2}\hilb{\payoff}$. Note that this modification does not change the optimal policies, neither in the mean payoff game nor in the discounted game. Moreover, if $\hat{\gameval}$ denotes the value of the mean payoff game with weights $\hat{\payoff}$ and $\gameval$ denotes the value of the mean payoff game with weights $\payoff$, then $\hat{\gameval}_i = \alpha + \gameval_i$ for all $i$. The same is true for the values $\hat{\gameval}^{(\disc)}$ and $\gameval^{(\disc)}$ of the discounted games with weights $\hat{\payoff}$ and $\payoff$. Furthermore, we have $\gameval^{\sigma,\tau}_i(\hat{\payoff}) = \alpha + \gameval^{\sigma,\tau}_i(\payoff)$ and $\bias^{\sigma,\tau}_i(\hat{\payoff}) = \bias^{\sigma,\tau}_i(\payoff)$ for all $i \in [n]$. In particular, $\hat{\payoff} \in \polyh^{\sigma,\tau}$ and $\sk(\hat{\payoff}) = \sk(\payoff)$. Hence, we have $\Cond(\payoff) = \frac{2\supnorm{\hat{\payoff}}}{\sk(\hat{\payoff})}$.

Suppose that player Max plays according to $\sigma$ in the discounted game. Let $\tau'$ be a policy of Min that is an optimal response to $\sigma$ in the discounted game. We will show that $\tau' = \tau$. To do so, suppose that $\tau'(i) \neq \tau(i)$ for some $i \in \Minvertices$ and consider three cases.

First case: we have $\vertices^{\sigma,\tau}(\tau'(i)) \neq \vertices^{\sigma,\tau}(i)$. By the definition of slack and \cref{cone_gives_bias}, we have $\hat{\gameval}_{\tau'(i)} \ge \sk(\hat{\payoff}) + \hat{\gameval}_{i}$. Let $\chi$ be the mean weight of the cycle reached from $i$ in $\dgraph^{\sigma,\tau'}$. Then, $\chi \ge \sk(\hat{\payoff}) + \hat{\gameval}_{i}$ because $\hat{\gameval}_{\tau'(i)} \ge \sk(\hat{\payoff}) + \hat{\gameval}_{i}$ and $\sigma$ is optimal in the mean payoff game. Let $\xi$ denote the value of the zero-player discounted game played on $\dgraph^{\sigma,\tau'}$ and starting from $i$. By applying \cref{discount_to_mean} to this zero-player game, we get $\chi \le 2n(1-\disc)\supnorm{\hat{\payoff}} + \xi$. Furthermore, we have $\xi \le \hat{\gameval}^{(\disc)}_i$ because $\tau'$ is an optimal response to some policy $\sigma$. Hence $\chi \le 2n(1-\disc)\supnorm{\hat{\payoff}} + \hat{\gameval}^{(\disc)}_i$. Moreover, by applying \cref{discount_to_mean} to the entire two-player game, we get $\hat{\gameval}^{(\disc)}_i \le 2n(1-\disc)\supnorm{\hat{\payoff}} + \hat{\gameval}_i$. Therefore $\sk(\hat{\payoff}) + \hat{\gameval}_{i} \le \chi \le 4n(1-\disc)\supnorm{\hat{\payoff}} + \hat{\gameval}_i$ and $1 - \disc \ge \frac{\sk(\hat{\payoff})}{4n\supnorm{\hat{\payoff}}} = \frac{1}{2n\Cond(\payoff)}$, which gives a contradiction.

Second case: the vertex $i$ is on a cycle in the graph $\dgraph^{\sigma,\tau'}$. Let $i = i_0, i_1, \dots, i_{l-1}$ denote the cycle in $\dgraph^{\sigma,\tau'}$ that contains $i$, with the convention that $i = i_{l}$. Let $\chi$ denote the mean weight of this cycle. By the first case, we have $\vertices^{\sigma,\tau}(i_{q+1}) = \vertices^{\sigma,\tau}(i)$ for every $q$. Therefore, for every $q$ such that $i_q \in \Maxvertices$ we have $i_{q+1} = \sigma(i_q)$ and $\gameval^{\sigma,\tau}_{i_q}(\hat{\payoff}) + \bias^{\sigma,\tau}_{i_q}(\hat{\payoff}) = \hat{\payoff}_{i_q i_{q+1}} + \bias^{\sigma,\tau}_{i_{q+1}}(\hat{\payoff})$. Likewise, for every $q$ such that $i_q \in \Minvertices$ and $i_{q+1} = \tau(i_q)$, we have $\gameval^{\sigma,\tau}_{i_q}(\hat{\payoff}) + \bias^{\sigma,\tau}_{i_q}(\hat{\payoff}) = \hat{\payoff}_{i_q i_{q+1}} + \bias^{\sigma,\tau}_{i_{q+1}}(\hat{\payoff})$. However, by the definition of slack, for every $i_q \in \Minvertices$ such that $i_{q+1} \neq \tau(i_q)$, we have $\gameval^{\sigma,\tau}_{i_q}(\hat{\payoff}) + \bias^{\sigma,\tau}_{i_q}(\hat{\payoff}) \le -\sk(\hat{\payoff}) + \hat{\payoff}_{i_q i_{q+1}} + \bias^{\sigma,\tau}_{i_{q+1}}(\hat{\payoff})$. Furthermore, we have $\gameval_{i_q} ^{\sigma,\tau}(\hat{\payoff}) = \gameval^{\sigma,\tau}_i(\hat{\payoff}) = \hat{\gameval}_i$ for all $q$. Since $i_0$ is such that $i_1 \neq \tau(i_0)$ by adding these inequalities for all $q$ we get $l \hat{\gameval}_i \le -\sk(\hat{\payoff}) + \sum_{q = 0}^{l-1}\hat{\payoff}_{i_qi_{q+1}}$. Therefore $\chi \ge \frac{1}{n}\sk(\hat{\payoff}) + \hat{\gameval}_{i}$. On the other hand, the same reasoning as in the first case gives the inequality $\chi \le 4n(1-\disc)\supnorm{\hat{\payoff}} + \hat{\gameval}_i$. Hence $1 - \disc \ge \frac{1}{2n^2\Cond(\payoff)}$, which gives a contradiction.

Third case: the vertex $i$ is not on a cycle in the graph $\dgraph^{\sigma,\tau'}$. Let $j_0, j_1, \dots, j_{l-1}$ denote the cycle in $\dgraph^{\sigma,\tau'}$ that is reached from $i$, where $j_0$ is the vertex with the smallest index of this cycle. By the second case, the cycle $j_0, j_1, \dots, j_{l-1}$ also appears in $\dgraph^{\sigma,\tau}$. Furthermore, if $i = i'_0, \dots, i'_{q'} = j_0$ denotes the path from $i$ to $j_0$ in $\dgraph^{\sigma,\tau'}$, then by applying the first case on the vertices along this path we get that $\vertices^{\sigma,\tau}(i) = \vertices^{\sigma,\tau}(j_0)$. Hence, the cycle $j_0, j_1, \dots, j_{l-1}$ is also reached from $i$ in the graph $\dgraph^{\sigma,\tau}$. Let $i = i_0, \dots, i_q = j_0$ be the path from $i$ to $j_0$ in $\dgraph^{\sigma,\tau}$. Let $\xi$ denote the value of $j_0$ in the discounted zero-player game played on $\dgraph^{\sigma,\tau'}$, which is the same as the value of $j_0$ in the discounted zero-player game played on $\dgraph^{\sigma,\tau}$. Note that all the vertices on the paths $i_0,\dots,i_q$ and $i'_0,\dots,i'_{q'}$ have value $\hat{\gameval}_i$ in the mean payoff game. Denote $\bias_j = \bias^{\sigma,\tau}_j(\hat{\payoff})$ for all $j \in [n]$. Note that, by our choice of $j_0$, we have $\bias_{j_0} = 0$. The definition of $\bias^{\sigma,\tau}(\hat{\payoff})$ and \cref{th:shapley} imply that the value of $i$ in the discounted game played on $\dgraph^{\sigma,\tau}$ is equal to
\begin{align*}
&(1 - \disc)\bigl(\hat{\payoff}_{i_0i_1} + \disc \hat{\payoff}_{i_1i_2} + \dots + \disc^{q-1}\hat{\payoff}_{i_{q-1}i_q}\bigr) + \disc^q \xi \\
&=(1 - \disc)\bigl((\hat{\gameval}_i + \bias_{i_0} - \bias_{i_1} ) + \dots + \disc^{q-1}(\hat{\gameval}_i + \bias_{i_{q-1}}- \bias_{i_q} ) \bigr) + \disc^q \xi \\
&= \hat{\gameval}_i + \disc^q (\xi - \hat{\gameval}_i)  + (1-\disc)\bias_i -(1 - \disc)^2\sum_{s=1}^{q-1}\disc^{s-1}\bias_{i_s} \, .
\end{align*}
Moreover, as in the second case, for every $s$ we have $\hat{\payoff}_{i'_s{i'_{s+1}}} \ge \hat{\gameval}_i + \bias_{i'_s} - \bias_{i'_{s+1}}$ and $\hat{\payoff}_{i'_0i'_1} \ge \sk(\hat{\payoff}) + \hat{\gameval}_i + \bias_{i'_0} - \bias_{i'_1}$. Hence, the value of $i$ in the discounted game played on $\dgraph^{\sigma,\tau'}$ is bounded from below by
\[
(1-\disc)\sk(\hat{\payoff}) + \hat{\gameval}_i + \disc^{q'}( \xi - \hat{\gameval}_i) + (1-\disc)\bias_i -(1 - \disc)^2\sum_{s=1}^{q'-1}\disc^{s-1}\bias_{i'_s} \, .
\]
Therefore (dividing by $1 - \disc$ and subtracting the term $\frac{1}{1-\disc} (\xi - \hat{\gameval}_i)$) to get a contradiction it is enough to prove that
\begin{equation}\label{eq:disc_on_path}
\sk(\hat{\payoff}) + \frac{\disc^{q'} -1}{1 - \disc} (\xi - \hat{\gameval}_i) - (1 - \disc)\sum_{s = 1}^{q' - 1}\disc^{s-1}\bias_{i'_{s}}  > \frac{\disc^q - 1}{1 - \disc} (\xi - \hat{\gameval}_i) - (1 - \disc)\sum_{s = 1}^{q - 1}\disc^{s-1}\bias_{i_{s}} \, .
\end{equation}
To do so, note that by applying \cref{discount_to_mean} to the zero-player game on $\dgraph^{\sigma,\tau'}$ we get $\abs{\xi - \hat{\gameval}_i} \le 2n(1-\disc)\supnorm{\hat{\payoff}}$. Furthermore, $\abs{\frac{\disc^q - 1}{1 - \disc}} = 1 + \disc + \dots + \disc^{q-1} < n$ and $\supnorm{\bias} \le 2n\supnorm{\hat{\payoff}}$ by \cref{one_cycle_graph}. Hence,
\begin{align*}
\left\lvert\frac{\disc^q - 1}{1 - \disc}(\xi - \hat{\gameval}_i) - (1 - \disc)\sum_{s= 1}^{q - 1}\disc^{s-1}\bias_{i_{s}}\right\rvert &\le 2n^2(1-\disc)\supnorm{\hat{\payoff}} + 2n^2(1 - \disc)\supnorm{\hat{\payoff}} \\
&= 4n^2(1-\disc)\supnorm{\hat{\payoff}} < \frac{1}{2}\sk(\hat{\payoff}) \, .
\end{align*}
The same is true if we replace $q$ by $q'$ and so \cref{eq:disc_on_path} holds, giving the desired contradiction.

Therefore, we have shown that $\tau$ is the unique best response policy to $\sigma$ in the discounted game. Analogously, $\sigma$ is the unique best response policy to $\tau$. This shows that $(\sigma,\tau)$ are optimal. Moreover, they are the only pair of optimal policies. Indeed, if we denote by $\zeta$ the value of the discounted game, then any optimal policy $\tau'$ of Min must guarantee a payoff at most $\zeta$ against $\sigma$, so $\tau'$ is not worse than $\tau$ against $\sigma$. This implies that $\tau' = \tau$ because $\tau$ is the unique best response to $\sigma$. The same argument holds for $\sigma$.
\end{proof}

As a corollary, the algorithms based on policy iteration with increasing discount factor discussed in \cite{LoffSkomra:2024} have complexity that is bounded by the condition number $\Cond(\payoff)$. We give slightly simplified versions of these algorithms in \cref{alg:increasing_discount_mpg} and \cref{alg:increasing_discount_disc}. The first one ($\IncDiscPI$) solves mean payoff games, while the second ($\IncDiscPItwo$) solves discounted games. In both cases, $\DiscPI(\sigma,\tau,\disc)$ is any procedure that takes as an input a discounted game with discount factor $\disc$ and a pair of policies $(\sigma,\tau)$ and outputs a pair of optimal policies. One way to implement $\DiscPI(\sigma,\tau,\disc)$ is to use policy iteration algorithm equipped with the Howard greedy all-switches rule. As discussed in \cite{LoffSkomra:2024}, this choice leads to a $\DiscPI(\sigma,\tau,\disc)$ method that performs $O\bigl((\frac{m}{1- \disc})^2 \log^2 \frac{1}{1 - \disc}\bigr)$ switches for both players (and each switch can be implemented in $O(m)$ complexity). This result follows from the analysis of the Howard rule presented in \cite{ye2011simplex,hansen2013strategy,AkianGaubert:2013}. Another choice could be to use the Dantzig rule analyzed in \cite{post_ye,HansenKaplanZwick:2014,JiaWenYe:2020}. The advantage of the Dantzig rule lies in the fact that this rule halts in $\poly(n,m)$ time on one-player deterministic games, independently of the discount factor. Hence, the overall complexity of the Dantzig rule is only linear in $\frac{1}{1 - \disc}$ and not quadratic. We summarize the two choices in the following theorems.

\begin{theorem}[\cite{AkianGaubert:2013}]\label{discount_iter_bound}
The policy iteration algorithm equipped with the Howard rule performs $O\bigl((\frac{m}{1- \disc})^2 \log^2 \frac{1}{1 - \disc}\bigr)$ switches to solve a deterministic discounted game.
\end{theorem}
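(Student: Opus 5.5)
The statement is quoted from \cite{AkianGaubert:2013}, building on \cite{ye2011simplex,hansen2013strategy}. My plan is to reconstruct it as a nested application of the ``action elimination'' argument for Howard's rule. I take the Howard-rule algorithm for two-player games in its standard form. The outer loop iterates over policies $\sigma$ of Max and switches greedily (all switches). At each outer step, Min's best response $\tau(\sigma)$ to the current $\sigma$ is computed by an inner Howard policy iteration on the one-player discounted game (a DMDP) obtained by fixing $\sigma$.

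The bound then follows from two estimates, each of order $O\bigl(\frac{m}{1-\disc}\log\frac{1}{1-\disc}\bigr)$: one for the number of outer iterations and one for the number of inner iterations per outer step. Their product gives the squared bound stated in the theorem. The $\log n$ factors are absorbed by $m \ge n$ up to the stated form, or by the normalization $\supnorm{\gameval^{(\disc)}}\le\supnorm{\payoff}$.

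\emph{Inner estimate (one-player case).} Fix $\sigma$ and run Howard on Min's DMDP with value operator $T(x)_i=\min_{(i,j)}\{(1-\disc)\payoff_{ij}+\disc x_j\}$ from \cref{th:shapley}.
\begin{enumerate}
\item Define the reduced cost of an action $(i,j)$ against the optimal value $v^*$ as $c_{ij}=(1-\disc)\payoff_{ij}+\disc v^*_j-v^*_i\ge 0$.
\item Show that the value of the current policy $\tau_t$ satisfies $v^{\tau_t}-v^*=\sum_{s\ge0}\disc^s P_{\tau_t}^s c_{\tau_t}$. In particular $\supnorm{v^{\tau_t}-v^*}\le \frac{1}{1-\disc}\max_i c_{i\tau_t(i)}$.
\item Show the Howard contraction: the all-switches step dominates one application of $T$. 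Since $T$ is a $\disc$-contraction, this gives $\supnorm{v^{\tau_{t+1}}-v^*}\le\disc\supnorm{v^{\tau_t}-v^*}$. After $L=O\bigl(\frac{1}{1-\disc}\log\frac{1}{1-\disc}\bigr)$ steps the error has shrunk by a factor of $(1-\disc)/2$.
\item Combine with the lower bound $\supnorm{v^{\tau_t}-v^*}\ge \max_i c_{i\tau_t(i)}$. Every action used at time $t$ whose reduced cost equals the maximum then has reduced cost exceeding the error at time $t+L$, so it can never be used again.
\end{enumerate}
Each such block of $L$ iterations eliminates one of the $m$ actions, which gives $O(mL)$ inner iterations.

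\emph{Outer estimate (two-player case).} I would repeat the same argument at the level of Max's policies, as in \cite{hansen2013strategy}. The value $v^{\sigma}$, the value of $\sigma$ against its best response, is monotone increasing along the iterations. The Howard switch for Max dominates one step of the two-player Shapley operator, which is again a $\disc$-contraction. Reduced costs for Max's actions are defined with respect to the optimal two-player value $\gameval^{(\disc)}$.

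The step I expect to be the main obstacle is the two-player analogue of item 2. Here $v^\sigma-\gameval^{(\disc)}$ is controlled by Max's reduced costs along the play of $(\sigma,\tau(\sigma))$. One needs both an upper bound by $\frac{1}{1-\disc}$ times the largest reduced cost of $\sigma$, and a lower bound by that largest reduced cost itself. The difficulty is that Min's best response changes with $\sigma$. I would handle it as in Hansen--Miltersen--Zwick: compare against a fixed optimal Min policy $\tau^*$, where one inequality holds by optimality, and against $\tau(\sigma)$, where the other holds because it is a best response. The Min-side reduced costs are nonnegative, so they only help in the needed direction.

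With this in place, the elimination argument bounds the number of outer iterations by $O(mL)$. Multiplying by the inner bound, and counting switches per iteration, which is at most the number of changed vertices and is absorbed, yields $O\bigl((\frac{m}{1-\disc})^2\log^2\frac{1}{1-\disc}\bigr)$.
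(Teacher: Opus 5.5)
Your reconstruction is correct, and it is essentially the argument of the sources the paper cites without giving a proof \cite{ye2011simplex,hansen2013strategy,AkianGaubert:2013}: nested Howard iteration, the $\disc$-contraction $\supnorm{\gameval^{(\disc)}-v^{\sigma_{t+1}}}\le\disc\supnorm{\gameval^{(\disc)}-v^{\sigma_t}}$ of the greedy step, and the sandwich $\max_i c_{i\sigma(i)}\le\supnorm{\gameval^{(\disc)}-v^{\sigma}}\le\frac{1}{1-\disc}\max_i c_{i\sigma(i)}$ together give $O\bigl(\frac{m}{1-\disc}\log\frac{1}{1-\disc}\bigr)$ iterations at each level, and multiplying the two bounds gives the claim (the lower bound in the sandwich comes from $v^\sigma\le v^{\sigma,\tau^*}$ with $\tau^*$ having zero reduced costs, the upper bound from $v^\sigma=v^{\sigma,\tau(\sigma)}$ and nonnegative Min reduced costs). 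Two side remarks are wrong but not load-bearing: your sup-norm elimination produces no $\log n$ at all, whereas a $\log\frac{n}{1-\disc}$ factor could not be absorbed using $m\ge n$; and your bound counts policy updates, which is what the paper means by a switch (each costing $O(m)$), while counting individual vertex changes could add a factor of up to $n$ that is not absorbed.
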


\begin{theorem}[{Combine \cite[Theorem~4.2]{JiaWenYe:2020} and \cite[Theorem~4.1]{HansenKaplanZwick:2014}}]\label{dantzig_iter_bound}
The policy iteration algorithm equipped with the Dantzig rule performs $O\bigl(\frac{m^3n^3 \log^2 n}{1- \disc} \log \frac{n}{1 - \disc}\bigr)$ switches to solve a deterministic discounted game.
\end{theorem}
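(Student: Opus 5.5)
The plan is to obtain the bound as a product of two counts, using the standard two-level structure of policy iteration for two-player games. In the outer loop, player Max performs policy-improvement switches. In the inner loop, for each fixed policy $\sigma$ of Max, the one-player discounted game, which is a deterministic discounted MDP for Min, is solved to optimality to produce Min's best response. We take $\DiscPI$ to be this nested scheme, with Dantzig's rule used in the inner loop. The total number of switches is then at most (number of outer switches) $\times$ (maximal number of inner switches).

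For the outer count I would invoke \cite[Theorem~4.1]{HansenKaplanZwick:2014}. That theorem treats the strategy-iteration scheme in which one player improves while the opponent always plays a best response. Its conclusion is that the number of outer iterations is strongly polynomial for fixed discount, namely $O\bigl(\frac{m}{1-\disc}\log\frac{n}{1-\disc}\bigr)$. This holds because each switch either improves the potential by a constant factor related to $1-\disc$, or permanently eliminates an action after $O(\frac{1}{1-\disc}\log\frac{n}{1-\disc})$ steps. The statement there is formulated for turn-based stochastic games, so it applies a fortiori to deterministic ones. I would check in particular that it allows the inner best-response computation to be done by an arbitrary exact MDP solver, since only the best-response property is used.

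For the inner count I would invoke \cite[Theorem~4.2]{JiaWenYe:2020}. It states that the simplex method with Dantzig's rule, equivalently policy iteration with single Dantzig switches, solves a deterministic discounted MDP with $n$ states and $m$ actions in $O(m^2 n^3\log^2 n)$ iterations, independently of $\disc$. Each inner call is on such a deterministic MDP, with at most $m$ actions and $n$ states, because fixing $\sigma$ leaves a subgraph of $\dgraph$. Multiplying the two counts gives
\[
O\Bigl(\frac{m}{1-\disc}\log\frac{n}{1-\disc}\Bigr)\cdot O(m^2n^3\log^2 n)=O\Bigl(\frac{m^3n^3\log^2 n}{1-\disc}\log\frac{n}{1-\disc}\Bigr),
\]
which is the claimed bound.

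The main obstacle I expect is reconciling conventions between the two sources and the present paper. Specifically, three points need checking:

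1. The normalization of the discounted value. Here it is normalized by $(1-\disc)$; this does not affect optimal policies.
2. Whether \cite{JiaWenYe:2020} counts pivots for a deterministic MDP in the same $n,m$ parameters, and whether it allows self-loops.
3. Whether the $\log$ factor in \cite{HansenKaplanZwick:2014} is $\log\frac{n}{1-\disc}$ rather than $\log\frac{1}{1-\disc}$.

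If the inner theorem is stated only for the primal LP formulation, I would note the standard correspondence between Dantzig simplex pivots on the dual LP of a discounted MDP and single-switch policy iteration with the Dantzig rule. I would also note that warm-starting the inner solver does not hurt, since the bound is worst-case over initial policies.
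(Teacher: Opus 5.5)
Your product structure is what the paper's citation amounts to: an outer count for two-player strategy iteration times a $\gamma$-independent count for Dantzig's rule on the one-player deterministic subgames. However, you have the two sources in reversed roles, and this hides a real gap.

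\cite{HansenKaplanZwick:2014} is about Dantzig's rule on one-player problems (shortest paths, deterministic MDPs, cost-to-time ratio cycles). It is the source of the $\gamma$-independent \emph{inner} bound, sharpening \cite{post_ye}. \cite{JiaWenYe:2020} is about two-player turn-based stochastic games. It analyzes a simplex strategy iteration in which one player makes a single Dantzig simplex step and the opponent then best-responds, and it is the source of the \emph{outer} bound. What you attribute to these two papers is really the Hansen--Miltersen--Zwick theorem \cite{hansen2013strategy} for the outer loop and the Post--Ye bound $O(m^2n^3\log^2 n)$ for the inner loop.

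The substantive gap is the outer count. The bound $O\bigl(\frac{m}{1-\gamma}\log\frac{n}{1-\gamma}\bigr)$ you quote is for Howard's all-switches rule. Its proof uses the fact that one Howard iteration (switch every improvable vertex, then best-respond) is at least as good as a value-iteration step, and so contracts the distance to the optimal values by a factor $\gamma$. A single Dantzig switch has no such property. The reduced-cost argument of \cite{ye2011simplex} only gives a factor $1-\frac{1-\gamma}{n}$ per switch, so the number of steps before an action is permanently eliminated grows by a factor $n$. The Howard count therefore cannot be reused.

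Because you impose Dantzig's rule only in the inner loop, what you have actually bounded is a hybrid scheme: Howard for Max, Dantzig for Min's best responses. With the correct references (Hansen--Miltersen--Zwick for the outer loop, Post--Ye for the inner loop), that hybrid argument is valid and would serve equally well in \cref{increasing_discount_condition}, since $\DiscPI$ may be any solver. It is not, however, a proof of this statement. The statement concerns policy iteration with Dantzig's rule at both levels, as the paper specifies. To prove it, take the outer count for Dantzig-type simplex strategy iteration from \cite[Theorem~4.2]{JiaWenYe:2020} and the inner count from \cite[Theorem~4.1]{HansenKaplanZwick:2014}, and multiply them as you did.
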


We note that policy iteration algorithms require access to a subroutine that solves one-player games. In the theorems above we suppose that these one-player games are solved by the same policy iteration rule as the rule used for the two-player game. By relaxing this assumption, one can propose algorithms that solve deterministic discounted games with better complexity bounds. For instance, one can combine the fact that one-player discounted games can be solved in $O(mn)$ complexity \cite{madani2010discounted} with the Howard rule to obtain a policy iteration algorithm that performs switches only on one player's policy and solves such games in $O\bigl(\frac{m}{1- \disc} \log \frac{1}{1 - \disc}\bigr)$ switches, leading to an algorithm whose overall complexity is $O\bigl(\frac{m^2n}{1- \disc} \log \frac{1}{1 - \disc}\bigr)$.

\begin{figure}[t]
\centering
\begin{footnotesize}
  \begin{algorithmic}[1]
    \Procedure{$\IncDiscPI$}{$\sigma,\tau$} \Comment{$(\sigma,\tau)$ are any initial policies}
    \State $\disc \gets 0$
    \Repeat
         \State $\disc \gets (1 + \disc)/2$
         \State $(\sigma, \tau) \gets \DiscPI(\sigma,\tau,\disc)$
      \Until $(\sigma, \tau)$ are optimal in the mean payoff game
      \State \Return $(\sigma, \tau)$
\EndProcedure
\end{algorithmic}
\end{footnotesize}
\caption{Policy iteration algorithm for mean-payoff games with polynomial smoothed complexity.}
\label{alg:increasing_discount_mpg}
\end{figure}

\begin{figure}[t]
\centering
\begin{footnotesize}
  \begin{algorithmic}[1]
    \Procedure{$\IncDiscPItwo$}{$\sigma,\tau,\bar{\disc}$} \Comment{$(\sigma,\tau)$ are any initial policies, $\bar{\disc}$ is the target discount}
    \State $\disc \gets 0$
    \Repeat
         \State $\disc \gets \min\{(1 + \disc)/2, \bar{\disc}\}$
         \State $(\sigma, \tau) \gets \DiscPI(\sigma,\tau,\disc)$
      \Until $(\sigma, \tau)$ are optimal in the discounted game with discount factor $\bar{\disc}$
      \State \Return $(\sigma, \tau)$
\EndProcedure
\end{algorithmic}
\end{footnotesize}
\caption{Policy iteration algorithm for discounted games with polynomial smoothed complexity.}
\label{alg:increasing_discount_disc}
\end{figure}

The following theorem bounds the complexity of both algorithms using our condition number. In the estimates, we count only the number of switches that the algorithms perform. As discussed above, a switch for both the Howard and Dantzig pivoting rules can be performed in $O(m)$ complexity. Furthermore, we note that checking if a given pair of policies is optimal in the mean payoff game can be done in $O(mn)$ complexity by \cite[Theorem~4.1]{zwick_paterson}, while doing the same for a discounted game can be done in $O(m)$ complexity by checking that none of the players has an improving switch.

\begin{theorem}\label{increasing_discount_condition}
The algorithm $\IncDiscPI$ halts for every input and is correct. Furthermore, if $\payoff \in \Generic$, then, putting $\Cond^{+}(\payoff) \coloneqq \max\{2,\Cond(\payoff)\}$, we have the following complexity estimates:
\begin{itemize}
\item if $\DiscPI$ is implemented by policy iteration using Howard's rule, then the algorithm $\IncDiscPI$ performs
\[
O\Bigl(m^2 + m^2n^4 \Cond(\payoff)^2 \log^3\bigl(n\Cond^{+}(\payoff)\bigr) \Bigr)
\]
switches for both players.
\item if $\DiscPI$ is implemented by policy iteration using Dantzig's rule, then the algorithm $\IncDiscPI$ performs
\[
O\Bigl(m^3n^3\log^2n + m^3n^5 \Cond(\payoff) \cdot \log^2 n \cdot \log^2\bigl(n\Cond^{+}(\payoff)\bigr) \Bigr)
\]
switches for both players.
\end{itemize}
The same statement holds for the algorithm $\IncDiscPItwo$.
\end{theorem}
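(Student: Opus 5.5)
The plan is to combine \cref{conditioned_discount} with the iteration bounds of \cref{discount_iter_bound,dantzig_iter_bound}, summing the cost over the rounds of the doubling schedule for $\disc$.

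\emph{Correctness and termination.} The algorithm $\IncDiscPI$ returns only after an explicit check that $(\sigma,\tau)$ is optimal in the mean payoff game. Hence it is correct whenever it halts. For termination we may restrict to generic $\payoff$: a generic perturbation argument could be used otherwise, but the cleaner route is the Blackwell-type fact that for all $\disc$ close enough to $1$ the discounted optimal policies are mean-payoff optimal. For $\payoff \in \Generic$ this fact is exactly \cref{conditioned_discount}. For arbitrary $\payoff$ I would invoke \cref{discount_to_mean}. Since there are finitely many policies, some $\varepsilon>0$ makes every $\varepsilon$-optimal policy optimal, as in the proof of \cref{halpern_for_bias}. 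A discounted-optimal policy $\sigma$ guarantees $\gameval^{(\disc)}$ in the discounted game against every $\tau$. Applying \cref{discount_to_mean} both to $\dgraph^{\sigma,\tau}$ and to the whole game then shows that $\sigma$ is $4n(1-\disc)\supnorm{\payoff}$-optimal in the mean payoff game. So once $4n(1-\disc)\supnorm{\payoff}<\varepsilon$, the check succeeds, and the algorithm halts.

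\emph{Complexity.} In round $t$ the discount is $\disc_t = 1-2^{-t}$. If every vertex has out-degree $1$ the claim is trivial, so assume $\Cond(\payoff)>0$ is defined. Let $T$ be the first $t$ with $2^{-t} < \frac{1}{4n^2\Cond(\payoff)}$. Then $2^{T} \le 8n^2\Cond(\payoff)$ (plus a constant), and $T = O(\log(n\Cond^{+}(\payoff)))$. By \cref{conditioned_discount}, the output of $\DiscPI(\cdot,\cdot,\disc_T)$ is the unique discounted-optimal pair, which is $(\sigma,\tau)$. By \cref{cone_gives_bias} this pair is mean-payoff optimal, so the loop stops by round $T$.

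The total number of switches is $\sum_{t=1}^{T}$ of the bound for $\disc_t$. With Howard's rule, \cref{discount_iter_bound} gives $\sum_{t\le T} O(m^2 4^{t} t^2) = O(m^2 4^{T} T^2)$. This equals $O\bigl(m^2 + m^2 n^4\Cond(\payoff)^2\log^2(n\Cond^+(\payoff))\bigr)$, which is within the stated bound with its extra logarithm. With Dantzig's rule, \cref{dantzig_iter_bound} gives $\sum_t O(m^3n^3\log^2 n\cdot 2^{t}\log(n2^{t}))$. Since $\log(n2^t) = O(\log n + t)$, the geometric sum yields $O\bigl(m^3n^3\log^2 n\cdot 2^T\log(n2^T)\bigr) = O\bigl(m^3n^5\Cond(\payoff)\log^2 n\cdot\log(n\Cond^+(\payoff))\bigr)$. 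The additive term $m^3n^3\log^2 n$ covers the case $T=O(1)$, i.e.\ $\Cond(\payoff)$ small, and is absorbed by the stated bound.

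\emph{The second algorithm.} For $\IncDiscPItwo$ the same argument applies with target $\bar\disc$. If $\bar\disc \le 1-\frac{1}{4n^2\Cond(\payoff)}$, the schedule reaches $\bar\disc$ within $O(\log\frac{1}{1-\bar\disc})$ rounds. Each round costs at most the bound at $\bar\disc$, which is $\le$ the bound at discount $1-\frac{1}{8n^2\Cond(\payoff)}$; the round with $\disc=\bar\disc$ solves the target game exactly. Otherwise, at round $T$ we have $\disc_T > 1-\frac{1}{4n^2\Cond(\payoff)}$, and \cref{conditioned_discount} applies to every discount above this threshold, including $\bar\disc$ itself. Hence the unique optimal pair for $\disc_T$ coincides with that for $\bar\disc$, and the stopping check succeeds at round $T$.

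The main obstacle is the accounting when $\bar\disc$ is small, where the number of rounds is not bounded by $T$ directly. It is handled by noting that reaching $\min\{\cdot,\bar\disc\}$ takes at most $T+1$ rounds before either $\bar\disc$ is hit or the threshold is crossed.
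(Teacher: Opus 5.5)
Your proof is correct and follows the paper's skeleton. Termination comes from \cref{discount_to_mean} plus finiteness of the set of policies. You apply it to the zero-player games $\dgraph^{\sigma,\tau}$ where the paper uses the one-player game with $\sigma$ fixed, which makes no difference. Complexity comes from the threshold $1-\disc<\frac{1}{4n^2\Cond(\payoff)}$ of \cref{conditioned_discount} combined with \cref{discount_iter_bound,dantzig_iter_bound}.

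The real difference is the accounting. The paper multiplies the $O(\log(n\Cond^{+}(\payoff)))$ rounds by the worst per-call cost at $1-\disc\ge\frac{1}{8n^2\Cond(\payoff)}$. You instead sum the per-call costs geometrically over the schedule $\disc_t=1-2^{-t}$. This removes one logarithmic factor from each estimate, so you prove slightly more than is stated.

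One small correction in your Dantzig case. For bounded $T$ your sum is $O(m^3n^3\log^3 n)$, which the additive term $m^3n^3\log^2 n$ does not cover by itself. The fix is as follows:
\begin{itemize}
\item Whenever some vertex has out-degree at least $2$, we have $\sk(\payoff)\le(2n-1)\hilb{\payoff}$. This uses $\abs{\bias^{\sigma,\tau}_l(\payoff)}\le(n-1)\hilb{\payoff}$ from \cref{one_cycle_graph}.
\item Hence $\Cond(\payoff)\ge\frac{1}{2n-1}$, so $2^T>4n^2\Cond(\payoff)\ge 2n$.
\item Therefore $T\ge 2$ and $2^T\le 8n^2\Cond(\payoff)$ always, and the main term absorbs everything.
\end{itemize}
The paper's bound for its first call at $\disc=\frac12$ implicitly relies on the same absorption.
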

\begin{proof}
We first prove the statement for $\IncDiscPI$. The fact that the algorithm halts for any input and is correct follows from \cref{discount_to_mean}. Indeed, denote by $\gameval^{(\disc)}$ the value of the discounted game with discount factor $\disc$, and by $\gameval$ the value of the mean payoff game. Suppose that $\sigma$ is an optimal policy of player Max in the discounted game and denote by $\gameval^\sigma$ the value of the one-player mean payoff game obtained by fixing $\sigma$. Then, by applying \cref{discount_to_mean} twice, to the two-player mean payoff game and to the one-player game with fixed $\sigma$, we get
\[
\supnorm{\gameval - \gameval^{\sigma}} \le \supnorm{\gameval - \gameval^{(\disc)}} + \supnorm{\gameval^{(\disc)} - \gameval^{\sigma}} \le 4n(1 - \disc)\supnorm{\payoff} \, .
\]
Hence, for any $\varepsilon > 0$ and $\disc$ sufficiently close to $1$, the policy $\sigma$ obtained in this way is $\varepsilon$-optimal in the mean payoff game. Since the number of policies is finite, there exists $\varepsilon > 0$ such that any $\varepsilon$-optimal policy in the mean payoff game is in fact optimal. Hence, for $\disc$ sufficiently close to $1$, the policy $\sigma$ obtained from solving the discounted game is optimal in the mean payoff game. The same holds for a policy of player Min. Therefore, the algorithm will halt on any input and outputs a pair of optimal policies in the mean payoff game.

To bound the number of switches performed by the algorithm, suppose that $\payoff \in \polyh^{\bar{\sigma},\bar{\tau}}$ for some pair of policies $(\bar{\sigma},\bar{\tau})$. If all vertices of the game graph have out-degree $1$, then the algorithm does not perform a single switch. Otherwise, by \cref{conditioned_discount}, if $1 - \disc < \frac{1}{4n^2\Cond(\payoff)}$, then $\DiscPI$ outputs $(\bar{\sigma},\bar{\tau})$, which are optimal policies by \cref{cone_gives_bias}. Since the quantity $1 - \disc$ is halved whenever $\IncDiscPI$ updates the discount factor, we get that the algorithm performs $O\bigl(\log(n\Cond^{+}(\payoff)) \bigr)$ updates of the discount factor. Furthermore, the procedure $\DiscPI$ is always called with a discount factor that satisfies $1 - \disc \ge \frac{1}{8n^2\Cond(\payoff)}$ (except, possibly, for the first call at which $\disc = \frac{1}{2}$).  Hence, the estimates on the number of switches follow from \cref{discount_iter_bound} and \cref{dantzig_iter_bound}.

The claim for $\IncDiscPItwo$ follows in an analogous way. This algorithm halts for every input and is correct because it may eventually reach the discount factor equal to $\bar{\disc}$. Moreover, if $\payoff \in \polyh^{\bar{\sigma},\bar{\tau}}$ for some pair of policies $(\bar{\sigma},\bar{\tau})$, then we have three cases. If $1 - \bar{\disc} \ge \frac{1}{2}$, then the algorithm simply solves the discounted game with discount factor $\bar{\disc}$. If $\frac{1}{2} > 1 - \bar{\disc} \ge \frac{1}{4n^2\Cond(\payoff)}$, then the algorithm performs $O\bigl(\log(n\Cond^{+}(\payoff)) \bigr)$ updates of the discount factor before reaching the discount factor equal to $\bar{\disc}$. Moreover, each call to $\DiscPI$ uses a discount factor that satisfies $1 - \disc \ge \frac{1}{4n^2\Cond(\payoff)}$. If $1 - \bar{\disc} < \frac{1}{4n^2\Cond(\payoff)}$, then $(\bar{\sigma},\bar{\tau})$ are optimal for the discounted game with discount factor $\bar{\disc}$ by  \cref{conditioned_discount}. Moreover, $\DiscPI$ will output $(\bar{\sigma},\bar{\tau})$ as soon as the discount factor satisfies $1 - \disc < \frac{1}{4n^2\Cond(\payoff)}$. Hence, as above, the algorithm performs $O\bigl(\log(n\Cond^{+}(\payoff)) \bigr)$ updates of the discount factor and each call to $\DiscPI$ uses a discount factor that satisfies $1 - \disc \ge \frac{1}{8n^2\Cond(\payoff)}$. Therefore, the claim follows from \cref{discount_iter_bound} and \cref{dantzig_iter_bound}.
\end{proof}

\section{Probabilistic analysis of the condition number}\label{sec:probabilistic_condition}

In this section, we provide a smoothed analysis of the condition number $\Cond(\payoff)$. Our probabilistic analysis follows the approach that is used in the literature on the smoothed analysis of the simplex method \cite{dadush2018friendly,HuibertsLeeZhang:2025,BachHuiberts:2025}. More precisely, we first analyze what happens if the weights of the game are drawn from a log-Lipschitz distribution. Subsequently, we generalize the result to the Gaussian distribution using the fact that Gaussian density is approximately log-Lipschitz. We start by defining log-Lipschitz densities. In the definition, $\| \cdot \|$ denotes any norm.

\begin{definition}[$L$-log-Lipschitz density]
Suppose that $\dens \colon \R^n \to \R$ is a strictly positive density function. Then, we say that $\dens$ is \emph{$L$-log-Lipschitz (with respect to the norm $\|\cdot\|$)} if the natural logarithm of $\dens$ is an $L$-Lipschitz function with respect to $\|\cdot\|$. In other words, $\dens$ is $L$-log-Lipschitz if for every $x,y \in \R^n$ we have the inequality
\[
e^{-L\|x - y\|} \le \frac{\dens(x)}{\dens(y)} \le e^{L\|x - y\|} \, .
\]
\end{definition}

The simplest example of a log-Lipschitz density is the exponential density with respect to the norm $\| \cdot \|$, i.e., the density function that is proportional to $e^{-\|x\|}$. This density is $1$-log-Lipschitz. The usefulness of log-Lipschitz densities for our purposes comes from the following simple but important observation.

\begin{lemma}\label{shifted_event}
Let $X$ be a random variable whose density function is $L$-log-Lipschitz. Furthermore, let $A \subseteq \R^n$ be any event and let $v \in \R^n$ be any vector. Then, we have the inequality
\[
e^{-L\|v\|}\Prob(X \in A) \le \Prob(X \in v + A) \le e^{L\|v\|}\Prob(X \in A) \, .
\]
\end{lemma}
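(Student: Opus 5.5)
The plan is to write both probabilities as integrals of the density $\dens$ of $X$ and compare the integrands pointwise. This is a change of variables followed by one application of the log-Lipschitz inequality.

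First, I would express the shifted event as
\[
\Prob(X \in v + A) = \int_{v+A} \dens(x)\,dx = \int_A \dens(y+v)\,dy,
\]
using the substitution $x = y + v$. This substitution has Jacobian $1$, because translation preserves Lebesgue measure.

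Next, for every $y \in \R^n$ the log-Lipschitz property with the pair $(y+v, y)$ gives
\[
e^{-L\|v\|}\dens(y) \le \dens(y+v) \le e^{L\|v\|}\dens(y),
\]
since $\|(y+v)-y\| = \|v\|$.

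Finally, I would integrate this pointwise inequality over $A$. The integrals are monotone and all integrands are nonnegative, so this yields
\[
e^{-L\|v\|}\int_A \dens(y)\,dy \le \int_A \dens(y+v)\,dy \le e^{L\|v\|}\int_A \dens(y)\,dy.
\]
Since $\int_A \dens = \Prob(X\in A)$, this is exactly the claim.

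There is no real obstacle. The only point needing care is measurability: "any event" should be read as any Borel (or Lebesgue measurable) set $A$. Then $v+A$ is measurable as well, and the change of variables is valid.
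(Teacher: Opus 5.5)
Your proof is correct and follows the same argument as the paper: translate the integral over $v+A$ back to an integral over $A$, then integrate the pointwise log-Lipschitz bound. Your integrand $\int_A f(y+v)\,dy$ is in fact the right one; the paper writes $f(y-v)$, a harmless sign slip, since the bound depends only on $\|v\|$.
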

\begin{proof}
Let $\dens$ be the density function of $X$. By change of variables $y = v +x$ we have
\[
\Prob(X \in v + A) = \int_{v + A} \dens(x) dx = \int_{A} \dens(y - v) dy \le e^{L\|v\|}\int_{A} \dens(y) dy = e^{L\|v\|}\Prob(X \in A) \, .
\]
The other inequality follows in the same way.
\end{proof}

The lemma above has the following corollary, which serves as a basis for our approach.

\begin{proposition}\label{shifted_decomposition}
Let $X$ be a random variable whose density function is $L$-log-Lipschitz. Furthermore, let $A_1, \dots, A_N \subseteq \R^n$ be disjoint events such that $\R^{n} \setminus (\cup_i A_i)$ has Lebesgue measure zero. For every $i \in [N]$ let $v_i \in \R^n$ be a vector such that $v_i + A_i \subseteq A_i$. Furthermore, let $\delta > 0$ be such that $\|v_i\| \le \delta$ for all $i \in [N]$. Then, we have
\[
\Prob\bigl(X \notin \bigcup_{i = 1}^{N} (v_i + A_i)\bigr) \le \delta L \, .
\]
\end{proposition}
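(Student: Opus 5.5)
Since the sets $A_i$ are disjoint and cover $\R^n$ up to a null set, and $X$ has a density, we have $\sum_i \Prob(X \in A_i) = 1$. The shifted sets $v_i + A_i$ are contained in $A_i$, so they are also disjoint. Hence
\[
\Prob\Bigl(X \notin \bigcup_{i=1}^N (v_i + A_i)\Bigr) = 1 - \sum_{i=1}^N \Prob(X \in v_i + A_i) = \sum_{i=1}^N \bigl(\Prob(X \in A_i) - \Prob(X \in v_i + A_i)\bigr).
\]
The plan is to bound each summand separately and then sum.

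For the summands, apply \cref{shifted_event} with $A = A_i$ and $v = v_i$. This gives $\Prob(X \in v_i + A_i) \ge e^{-L\|v_i\|}\Prob(X \in A_i) \ge e^{-L\delta}\Prob(X \in A_i)$. Therefore each summand is at most $(1 - e^{-L\delta})\Prob(X \in A_i)$.

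Summing over $i$ and using $\sum_i \Prob(X \in A_i) = 1$, the total is at most $1 - e^{-L\delta}$. The elementary inequality $1 - e^{-t} \le t$ for $t \ge 0$ then gives the claimed bound $\delta L$.

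No step here is a real obstacle. The one point needing care is the measure-theoretic bookkeeping. We need disjointness of the shifted sets, which follows from $v_i + A_i \subseteq A_i$, to turn the probability of the union into a sum. We need the null complement to get total mass $1$, which relies on $X$ having a density. Measurability of the $A_i$ is implicit in calling them events, and translates of measurable sets are measurable.
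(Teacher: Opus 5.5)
Your proof is correct and is essentially the paper's argument: the shifted sets are disjoint because $v_i + A_i \subseteq A_i$; \cref{shifted_event} gives $\Prob(X \in v_i + A_i) \ge e^{-\delta L}\Prob(X \in A_i)$; summing with total mass one and applying $1 - e^{-\delta L} \le \delta L$ gives the bound. You also note explicitly that the total mass being one relies on $X$ having a density, which the paper leaves implicit.
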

\begin{proof}
By our assumptions, the events $v_i + A_i$ are disjoint. Hence, \cref{shifted_event} gives
\[
\Prob\bigl(X \in \bigcup_{i = 1}^{N} (v_i + A_i)\bigr) = \sum_{i = 1}^{N} \Prob(X \in v_i + A_i) \ge \sum_{i = 1}^{N} e^{-\delta L}\Prob(X \in A_i) = e^{-\delta L}.
\]
Therefore $\Prob\bigl(X \notin \bigcup_{i = 1}^{N} (v_i + A_i)\bigr) \le 1 - e^{-\delta L} \le \delta L$.
\end{proof}

Intuitively, \cref{shifted_decomposition} can be used as follows in the context of games: we partition the space of possible weights into a disjoint union of some sets, and we find shifts such that the games belonging to the shifted version of this partition are well-conditioned. The cones $\polyh^{\sigma,\tau}$ have exactly these properties. This is formalized in the next lemmas.

\begin{lemma}\label{slack_log_lip}
Suppose that the weights $\payoff \in \R^m$ of a mean payoff game are drawn from a distribution that has $L$-log-Lipschitz density with respect to the supremum norm $\supnorm{\cdot}$. Then, for any $\alpha > 0$ we have
\[
\Prob\bigl(\sk(\payoff) < \alpha \bigr) \le \alpha nL \, .
\]
\end{lemma}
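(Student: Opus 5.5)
We will apply \cref{shifted_decomposition} to the cones $\polyh^{\sigma,\tau}$. By \cref{policy_cone}, the nonempty cones $\polyh^{\sigma,\tau}$ are disjoint, and their union $\Generic$ has complement of Lebesgue measure zero. For each nonempty cone we take the shift $v_{\sigma,\tau} \coloneqq \alpha n\, \payoff^*$, where $\payoff^* \in \polyh^{\sigma,\tau}$ is the point given by \cref{wide_cone}, with $\supnorm{\payoff^*} = 1$ and $\sk(\payoff^*) \ge \frac{1}{n}$. Then $\supnorm{v_{\sigma,\tau}} = \alpha n$. If \cref{shifted_decomposition} applies with $\delta = \alpha n$, it gives $\Prob\bigl(\payoff \notin \bigcup (v_{\sigma,\tau} + \polyh^{\sigma,\tau})\bigr) \le \alpha n L$. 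It would then suffice to show that every point of $v_{\sigma,\tau} + \polyh^{\sigma,\tau}$ has slack at least $\alpha$.

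The key observation is that the slack is superadditive on a fixed cone. For $(\sigma,\tau)$ fixed, the functions $\gameval^{\sigma,\tau}$ and $\bias^{\sigma,\tau}$ are linear by \cref{linear_functions}. Hence each defining expression in \cref{def:condition-number_general} is a linear form in $\payoff$. These forms are $\ell(\payoff) = \gameval_i^{\sigma,\tau}(\payoff) - \gameval_j^{\sigma,\tau}(\payoff)$, or $\gameval_i^{\sigma,\tau}(\payoff) + \bias_i^{\sigma,\tau}(\payoff) - \payoff_{ij} - \bias_j^{\sigma,\tau}(\payoff)$, or the negatives of these for Min vertices. The cone $\polyh^{\sigma,\tau}$ is exactly $\{\ell > 0\}$ for all these forms, and $\sk(\payoff) = \min \ell(\payoff)$ over the forms. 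Since the partition $\vertices^{\sigma,\tau}_\ell$ depends only on $(\sigma,\tau)$ and not on $\payoff$, this is the same finite family of forms throughout the cone. Consequently, for $x, y \in \polyh^{\sigma,\tau}$ we get $x + y \in \polyh^{\sigma,\tau}$ and $\sk(x+y) \ge \sk(x) + \sk(y)$. The slack is also positively homogeneous, so $\sk(\alpha n \payoff^*) = \alpha n\, \sk(\payoff^*) \ge \alpha$.

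Combining these facts: $v_{\sigma,\tau} \in \polyh^{\sigma,\tau}$ and the cone is closed under addition, so $v_{\sigma,\tau} + \polyh^{\sigma,\tau} \subseteq \polyh^{\sigma,\tau}$, which is the hypothesis needed in \cref{shifted_decomposition}. Moreover, every $x \in v_{\sigma,\tau} + \polyh^{\sigma,\tau}$ has $\sk(x) \ge \sk(v_{\sigma,\tau}) + \sk(x - v_{\sigma,\tau}) > \alpha$. Thus $\{\sk(\payoff) < \alpha\}$ lies, up to a null set, in the complement of $\bigcup (v_{\sigma,\tau} + \polyh^{\sigma,\tau})$. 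The density is with respect to $\supnorm{\cdot}$, matching the norm in which $\supnorm{v_{\sigma,\tau}} \le \alpha n$, and \cref{shifted_decomposition} yields the bound $\alpha n L$.

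The main point needing care is the superadditivity step. We must check that the slack on a cone is the minimum of linear forms that depend only on $(\sigma,\tau)$, and in particular that the normalization of $\bias^{\sigma,\tau}$ (zero at the minimal-index cycle vertex) is independent of $\payoff$. This is immediate from \cref{one_cycle_bias_def}. We also need that $\sk$ is well defined through \cref{policy_cone}, since each point lies in a unique cone. Null sets, where the slack is undefined, do not affect the probability because the density is absolutely continuous.
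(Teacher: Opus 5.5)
Your proposal is correct and is essentially the paper's proof. The paper uses the same cones, the same shift points from \cref{wide_cone}, and the same appeal to \cref{shifted_decomposition}. The only differences are presentational: you scale the shift by $\alpha n$ at the outset, whereas the paper proves $\Prob(\sk(\payoff) < \alpha/n) \le \alpha L$ and rescales. You also spell out the superadditivity of the slack as a minimum of fixed linear forms, which the paper summarizes as ``by linearity of the functions $\gameval^{\sigma,\tau},\bias^{\sigma,\tau}$.''
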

\begin{proof}
Consider the collection of all nonempty cones $(\polyh^{\sigma,\tau})$, so that the set $\Generic$ is their union. By \cref{policy_cone} we have that $\polyh^{\sigma,\tau}$ are disjoint and that $\R^m \setminus \Generic$ has Lebesgue measure zero. Furthermore, \cref{wide_cone} shows that every nonempty cone $\polyh^{\sigma,\tau}$ contains a point $\payoff^{\sigma,\tau}$ such that $\supnorm{\payoff^{\sigma,\tau}} = 1$ and $\sk(\payoff^{\sigma,\tau}) \ge \frac{1}{n}$. Furthermore, if $\payoff \in \alpha \payoff^{\sigma,\tau} + \polyh^{\sigma,\tau}$, then $\payoff \in \polyh^{\sigma,\tau}$ (because $\polyh^{\sigma,\tau}$ is a convex cone) and $\sk(\payoff) \ge \frac{\alpha}{n}$ (by linearity of the functions $\gameval^{\sigma,\tau},\bias^{\sigma,\tau}$). Hence, by \cref{shifted_decomposition} we get
\[
\Prob\bigl(\sk(\payoff) < \frac{\alpha}{n} \bigr) \le \Prob\bigl(\payoff \notin \bigcup_{\sigma,\tau} (\alpha \payoff^{\sigma,\tau} + \polyh^{\sigma,\tau})\bigr) \le \alpha L \, .
\]
\end{proof}

To bound the entire condition number, and not just the slack, we need a tail bound on the log-Lipschitz distribution. For the sake of simplicity, we will only consider one log-Lipschitz perturbation of the weights: we assume that the every weight $\payoff_{ij}$ is drawn independently from the Laplace distribution, i.e., that  we have $\payoff_{ij} = \bar{\payoff}_{ij} + \delta_{ij}$, where the density function of $\delta_{ij}$ is given by $\frac{1}{2\rho}e^{-|x|/\rho}$. A similar bound can be obtained for other log-Lipschitz distributions. In all our estimates, $\log$ denotes the natural logarithm.

\begin{theorem}\label{condition_laplace_noise}
Fix $\bar{\payoff} \in [-1,1]^m$ and suppose that the weights of a mean payoff game are drawn randomly by perturbing each $\bar{\payoff}_{ij}$ with independent Laplace noise, i.e., that we have $\payoff_{ij} = \bar{\payoff}_{ij} + \delta_{ij}$, where the variables $\delta_{ij}$ are i.i.d. with density $\frac{1}{2\rho}e^{-|x|/\rho}$. Then for any $0 < \varepsilon < 1$ we have
\[
\Prob\bigl( \Cond(\payoff) > \frac{4mn \log \frac{2m}{\varepsilon}}{\varepsilon} + \frac{4mn}{\varepsilon \rho}) \le \varepsilon \, .
\]
\end{theorem}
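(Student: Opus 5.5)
We split the condition number into its numerator and denominator and bound each by a separate tail estimate, spending probability $\varepsilon/2$ on each. Recall $\Cond(\payoff) = \hilb{\payoff}/\sk(\payoff)$; if every vertex has out-degree $1$ there is nothing to prove, and $\payoff \in \Generic$ almost surely by \cref{policy_cone}, so $\sk(\payoff)$ is well defined with probability one. We aim for an event where $\hilb{\payoff} \le B$ and $\sk(\payoff) \ge \alpha$, which gives $\Cond(\payoff) \le B/\alpha$.

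For the denominator we use \cref{slack_log_lip}. The joint density of $\payoff$ is $\prod_{ij}\frac{1}{2\rho}e^{-|x_{ij}-\bar{\payoff}_{ij}|/\rho}$. Its logarithm is $-\frac{1}{\rho}\|x-\bar{\payoff}\|_1$ plus a constant, which is $\frac{1}{\rho}$-Lipschitz for the $\ell_1$ norm and hence $\frac{m}{\rho}$-Lipschitz for the supremum norm, since $\|x-y\|_1 \le m\supnorm{x-y}$. So the density is $L$-log-Lipschitz with respect to $\supnorm{\cdot}$ with $L = m/\rho$. Applying \cref{slack_log_lip} with $\alpha = \frac{\varepsilon\rho}{2mn}$ gives $\Prob(\sk(\payoff) < \alpha) \le \varepsilon/2$.

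For the numerator we use the triangle inequality: $\hilb{\payoff} \le 2\supnorm{\payoff} \le 2 + 2\max_{ij}|\delta_{ij}|$. Each Laplace variable satisfies $\Prob(|\delta_{ij}| > t) = e^{-t/\rho}$. A union bound over the $m$ edges with $t = \rho\log\frac{2m}{\varepsilon}$ gives $\Prob(\max|\delta_{ij}| > t) \le \varepsilon/2$. Hence, with probability at least $1-\varepsilon/2$,
\[
\hilb{\payoff} \le 2 + 2\rho\log\tfrac{2m}{\varepsilon}.
\]

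On the intersection of the two good events, which has probability at least $1-\varepsilon$,
\[
\Cond(\payoff) \le \frac{2mn}{\varepsilon\rho}\Bigl(2 + 2\rho\log\tfrac{2m}{\varepsilon}\Bigr) = \frac{4mn}{\varepsilon\rho} + \frac{4mn\log\frac{2m}{\varepsilon}}{\varepsilon},
\]
which is exactly the claimed bound.

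The only step needing care is the log-Lipschitz constant. \cref{slack_log_lip} requires log-Lipschitzness with respect to the supremum norm, and it is the conversion from $\ell_1$ to $\ell_\infty$ that produces the factor $m$. Everything else is a direct application of results already stated in the paper.
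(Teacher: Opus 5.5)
Your proof is correct and follows the paper's argument essentially verbatim. Both split $\Prob(\Cond(\payoff)>\alpha)$ by a union bound into a per-edge Laplace tail bound on $\supnorm{\payoff}$ and a slack bound via \cref{slack_log_lip}, using the $\frac{m}{\rho}$-log-Lipschitz constant that comes from $\|x-y\|_1 \le m\supnorm{x-y}$. The only difference is presentational: the paper keeps the thresholds as parameters $\alpha,\beta$ and then sets $\beta = 2+2\rho\log\frac{2m}{\varepsilon}$ and $\beta/\alpha = \frac{\varepsilon\rho}{2mn}$, which are exactly your choices.
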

\begin{proof}
Fix $\alpha > 0$ and $\beta > 2$. Using $\hilb{\payoff} \le 2\supnorm{\payoff}$ and the union bound we have
\[
\Prob(\Cond(\payoff) > \alpha) \le \Prob(\hilb{\payoff} > \beta) +  \Prob(\sk(\payoff) < \frac{\beta}{\alpha}) \le \Prob(\supnorm{\payoff} > \frac{\beta}{2}) + \Prob(\sk(\payoff) < \frac{\beta}{\alpha}) \, .
\]
Furthermore, note that the density of the vector $\payoff$ is proportional to $e^{-\|x - \bar{\payoff}\|_1 /\rho}$ and that for all $x,y \in \R^m$ we have
\[
\big | \sum_{ij} \abs{x_{ij} - \bar{\payoff}_{ij}} - \sum_{ij} \abs{y_{ij} - \bar{\payoff}_{ij}} \big | \le \sum_{ij} \big | \abs{x_{ij} - \bar{\payoff}_{ij}} -  \abs{y_{ij} - \bar{\payoff}_{ij}} \big | \le \sum_{ij} \abs{x_{ij} - y_{ij}} \le m \supnorm{x - y} \, .
\]
Therefore, the density of $\payoff$ is $\frac{m}{\rho}$-log-Lipschitz with respect to the supremum norm. Hence, by \cref{slack_log_lip} we have $\Prob\bigl(\sk(\payoff) < \frac{\beta}{\alpha} \bigr) \le \frac{\beta mn}{\alpha \rho}$. Furthermore, since we assume that $\beta > 2$, for every $(i,j) \in \edges$ we have
\[
\Prob(\abs{\payoff_{ij}} > \frac{\beta}{2}) \le \Prob(\abs{\delta_{ij}} > \frac{\beta}{2} - 1) = \frac{1}{\rho}\int_{\frac{\beta}{2} -1}^{+\infty} e^{-s/\rho} ds = e^{-(\beta-2)/(2\rho)} \, .
\]
Hence, we get the inequality
\[
\Prob(\Cond(\payoff) > \alpha) \le m e^{-(\beta-2)/(2\rho)} + \frac{\beta m n}{\alpha \rho} \, .
\]
If we now take $\alpha \coloneqq \frac{4mn \log \frac{2m}{\varepsilon}}{\varepsilon} + \frac{4mn}{\varepsilon \rho}$ and $\beta \coloneqq 2 + 2\rho \log \frac{2m}{\varepsilon}$, then we have $\alpha = \frac{2mn \beta}{\varepsilon \rho}$, so $\frac{\beta mn}{\alpha \rho} = \frac{\varepsilon}{2}$. Moreover, $m e^{-(\beta-2)/(2\rho)} = \frac{\varepsilon}{2}$, so $\Prob(\Cond(\payoff) > \alpha) \le \varepsilon$.
\end{proof}

We now show how the proofs above can be extended to analyze the condition number of games whose weights are drawn from the Gaussian distribution. This is based on the fact that the Gaussian distribution is ``approximately'' log-Lipschitz. To state this formally, we need a few preliminary results about the Gaussian distribution. We use the following notation. We denote by $\Phi \colon \R \to \R$,
\[
\Phi(x) \coloneqq \frac{1}{\sqrt{2\pi}}\int_{-\infty}^{x} e^{-t^2/2} dt
\]
the cumulative distribution of the standard univariate normal distribution, with the convention that $\Phi(-\infty) = 0, \Phi(+\infty) = 1$. Furthermore, we denote by $h \colon \R \to \R$,
\[
h(x) \coloneqq e^{x^2/2}\int_{x}^{+\infty} e^{-t^2/2} dt
\]
the \emph{Mills' ratio} of the Gaussian distribution. We note that $\Phi(-x) = 1 - \Phi(x)$, which implies that
\begin{equation}\label{eq:gauss_mill}
\Phi(x) = \frac{1}{\sqrt{2\pi}}h(-x)e^{-x^2/2} \,
\end{equation}
for all $x \in \R$. We will use the following result about Mills' ratio.

\begin{lemma}[see, e.g., \cite{Baricz:2008}]\label{mills_ratio}
The function $h(x)$ is strictly deceasing. Furthermore, we have
\[
\frac{2}{\sqrt{4 + x^2} + x} < h(x) < \frac{4}{\sqrt{8 + x^2} + 3x} < \frac{1}{x}
\]
for all $x > 0$.
\end{lemma}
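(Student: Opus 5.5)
The plan is to reduce every claim to the first-order linear ODE satisfied by $h$ and then use a comparison argument. Differentiating $h(x) = e^{x^2/2}\int_x^{+\infty} e^{-t^2/2}\,dt$ gives
\[
h'(x) = x\,h(x) - 1 \, .
\]
Since $h > 0$, this shows $h'(x) < 0$ for $x \le 0$. For $x > 0$, strict monotonicity is equivalent to $h(x) < \frac{1}{x}$. That bound follows directly from
\[
\int_x^{+\infty} e^{-t^2/2}\,dt < \int_x^{+\infty} \frac{t}{x} e^{-t^2/2}\,dt = \frac{e^{-x^2/2}}{x} \, .
\]
So the monotonicity claim reduces to the last inequality of the chain, which I will in any case obtain independently.

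For the two-sided bounds I would prove a comparison principle. Let $g$ be a $C^1$ function on $(0,+\infty)$ with $g(t)e^{-t^2/2} \to 0$ as $t \to +\infty$, and define its residual $R_g \coloneqq g' - xg + 1$. Put $d \coloneqq h - g$. Then $d' = x d - R_g$, hence $(d\,e^{-x^2/2})' = -R_g\,e^{-x^2/2}$. Both $h$ and $g$ are bounded at infinity, so $d\,e^{-x^2/2} \to 0$, and integrating from $x$ to $+\infty$ gives
\[
d(x) = e^{x^2/2}\int_x^{+\infty} R_g(t)\,e^{-t^2/2}\,dt \, .
\]
Consequently, if $R_g > 0$ on $(0,+\infty)$ then $h > g$ there, and if $R_g < 0$ then $h < g$.

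I then apply this to the two candidate functions, after rationalizing them to make differentiation easy.
\begin{itemize}
\item For the lower bound, take $g_1(x) = \frac{2}{\sqrt{4+x^2}+x} = \frac{\sqrt{4+x^2}-x}{2}$ and write $s = \sqrt{4+x^2}$. A direct computation gives $2R_{g_1} = 1 + x^2 + \frac{x}{s} - xs$. Using $xs - x^2 = \frac{4x}{s+x}$, positivity of $R_{g_1}$ reduces to $1 + \frac{x}{s} > \frac{4x}{s+x}$. Clearing denominators turns this into a polynomial inequality in $x$ and $s$, which I would check with $s^2 = 4+x^2$.
\item For the upper bound, take $g_2(x) = \frac{4}{\sqrt{8+x^2}+3x} = \frac{\sqrt{8+x^2}-3x}{2-2x^2}$, interpreted by continuity at $x=1$; it is safer to work with the original form. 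Here I need to show $R_{g_2} < 0$, which after multiplying by $(\sqrt{8+x^2}+3x)^2\sqrt{8+x^2}$ becomes a polynomial-in-$(x,\sqrt{8+x^2})$ inequality.
\end{itemize}

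The final inequality $\frac{4}{\sqrt{8+x^2}+3x} < \frac{1}{x}$ is equivalent to $x < \sqrt{8+x^2}$, which is trivial.

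I expect the main obstacle to be the sign verification of $R_{g_2}$. The algebra is messy, and the bound is tight at $x = 0$ and asymptotically as $x \to \infty$, so there is little slack. My first attempt would be to isolate the square root on one side and square. If that becomes unwieldy, I would fall back on the substitution $x = \sqrt{8}\sinh\theta$, which rationalizes $\sqrt{8+x^2}$.
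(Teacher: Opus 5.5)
Your proposal is correct. It is also more than the paper provides: the paper does not prove this lemma at all, it only cites Baricz. The two bounds are the classical inequalities of Birnbaum (lower) and Sampford (upper).

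Your route gives a short, self-contained proof. It uses the linear ODE $h' = xh - 1$, the representation $h - g = e^{x^2/2}\int_x^{+\infty} R_g(t)\,e^{-t^2/2}\,dt$ for the residual $R_g = g' - xg + 1$, and a sign check of $R_g$. It also proves strict monotonicity on all of $\R$, not only on $(0,+\infty)$; your case $x \le 0$ uses $h' = xh - 1 \le -1$. That is exactly what is needed later in the proof of \cref{taylor_cdf}, where $g'(x) = 1/h(-x)$ is evaluated at arbitrary real arguments. The citation is shorter; your argument makes the paper independent of the literature at the cost of a few lines.

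The step you flagged as the main obstacle is as easy as the lower bound, so you need neither squaring nor the substitution $x = \sqrt{8}\sinh\theta$. Put $s = \sqrt{8+x^2}$ and $D = s + 3x$. Then $D^2 = 8 + 10x^2 + 6xs$, and using $s - x = 8/(s+x)$ you get
\[
D^2 R_{g_2} = D^2 - 4xD - 4\Bigl(\frac{x}{s} + 3\Bigr) = 2x(s-x) - 4 - \frac{4x}{s} = \frac{16x}{s+x} - \frac{4(s+x)}{s} = -\frac{4(s-x)^2}{s(s+x)} < 0 .
\]
Likewise, for the lower bound with $s = \sqrt{4+x^2}$, your expression simplifies to $2R_{g_1} = (s-x)^2/\bigl(s(s+x)\bigr) > 0$. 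Both residual signs therefore reduce to $(s-x)^2 > 0$.

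One small correction to your heuristic: the bounds are not tight at $x = 0$, since $g_1(0) = 1 < h(0) = \sqrt{\pi/2} < g_2(0) = \sqrt{2}$. The sharpness is only asymptotic as $x \to +\infty$, and the exact factorization above makes slack considerations irrelevant.
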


We use the lemma above to prove the following technical estimates on $\Phi(x)$.

\begin{lemma}\label{inverse_cdf}
Fix $0.9 \ge \varepsilon > 0$. Then, for every $1 > x \ge \varepsilon$ we have $\Phi^{-1}(x) > -\sqrt{2\log \frac{1}{\varepsilon}}$.
\end{lemma}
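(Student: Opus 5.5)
Since $\Phi$ is strictly increasing and continuous, the claim $\Phi^{-1}(x) > -\sqrt{2\log\frac{1}{\varepsilon}}$ for all $x \ge \varepsilon$ is equivalent to the single inequality $\Phi\bigl(-\sqrt{2\log\frac{1}{\varepsilon}}\bigr) < \varepsilon$. Indeed, if this holds, then $\Phi(\Phi^{-1}(x)) = x \ge \varepsilon > \Phi(-\sqrt{2\log(1/\varepsilon)})$, and monotonicity gives the result. The condition $x < 1$ only guarantees that $\Phi^{-1}(x)$ is finite. So the plan is to set $y \coloneqq \sqrt{2\log\frac{1}{\varepsilon}} > 0$, which is positive because $\varepsilon \le 0.9 < 1$, and to prove $\Phi(-y) < \varepsilon = e^{-y^2/2}$.

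To do this, I would use \cref{eq:gauss_mill} in the form $\Phi(-y) = \frac{1}{\sqrt{2\pi}} h(y) e^{-y^2/2}$. The claim then reduces to $h(y) < \sqrt{2\pi}$. By \cref{mills_ratio}, $h$ is strictly decreasing, so $h(y) < h(0) = \int_0^{\infty} e^{-t^2/2}\,dt = \sqrt{\pi/2}$ for $y > 0$. Since $\sqrt{\pi/2} < \sqrt{2\pi}$, this gives $\Phi(-y) < \frac{1}{2}e^{-y^2/2} < \varepsilon$, which is what we need. Alternatively, we can use the bound $h(y) < \frac{4}{\sqrt{8+y^2}+3y} < \frac{4}{\sqrt 8} = \sqrt 2 < \sqrt{2\pi}$ from \cref{mills_ratio}, which holds for $y>0$ and avoids computing $h(0)$.

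There is no real obstacle here. The only point needing care is confirming that $y>0$, so that the strict inequalities of \cref{mills_ratio} apply, and noting that $\Phi^{-1}$ is well defined on $(0,1)$. The hypothesis $\varepsilon \le 0.9$ is only used to ensure $\log\frac{1}{\varepsilon} > 0$; the argument does not use the specific constant $0.9$.
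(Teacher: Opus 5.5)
Your proof is correct and follows the same route as the paper: use monotonicity to reduce to $\Phi\bigl(-\sqrt{2\log\frac{1}{\varepsilon}}\bigr) < \varepsilon$, then rewrite the left-hand side via \cref{eq:gauss_mill} and bound Mills' ratio using \cref{mills_ratio}. The only difference is that the paper uses $h(y) < 1/y$, which is why it needs $\varepsilon \le 0.9$ (to get $\log\frac{1}{\varepsilon} \ge \log\frac{10}{9}$); your uniform bound $h(y) < h(0) = \sqrt{\pi/2}$ gives $\Phi(-y) < \varepsilon/2$ and works for every $\varepsilon \in (0,1)$.
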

\begin{proof}
Since $\Phi$ is strictly increasing, $\Phi^{-1}$ is also strictly increasing. Therefore, it is enough to show that $\varepsilon > \Phi\bigl(-\sqrt{2\log \frac{1}{\varepsilon}}\bigr)$. By \cref{eq:gauss_mill} and \cref{mills_ratio} we get
\[
\Phi\bigl(-\sqrt{2\log \frac{1}{\varepsilon}}\bigr) = \frac{1}{\sqrt{2\pi}}h(\sqrt{2\log \frac{1}{\varepsilon}})\varepsilon \le \frac{1}{2\sqrt{\pi \log\frac{1}{\varepsilon}}} \varepsilon \le \frac{1}{2\sqrt{\pi \log \frac{10}{9} }} \varepsilon < \varepsilon \, .
\]
\end{proof}

\begin{lemma}\label{taylor_cdf}
Fix $0.6 \ge \varepsilon > 0$. Then, for every $x \in \R$ and $\delta \ge 0$ such that $x - \delta \ge -2\sqrt{2 \log \frac{1}{\varepsilon}}$ we have $\Phi(x - \delta) \ge \exp(-4\delta \sqrt{\log \frac{1}{\varepsilon}}) \Phi(x)$.
\end{lemma}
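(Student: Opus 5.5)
The approach is to work with $g(y) \coloneqq \log \Phi(y)$, which is concave, and bound it by its tangent at the left endpoint. Concavity of $\log\Phi$ is standard: its derivative is $\phi/\Phi$, which is decreasing by the Mills' ratio monotonicity in \cref{mills_ratio}. Indeed, by \cref{eq:gauss_mill}, $\phi(y)/\Phi(y) = 1/h(-y)$, and $h$ is strictly decreasing, so $1/h(-y)$ is decreasing in $y$. Hence
\[
g(x) - g(x-\delta) = \int_{x-\delta}^{x} \frac{\phi(t)}{\Phi(t)}\,dt \le \delta \cdot \frac{\phi(x-\delta)}{\Phi(x-\delta)} = \frac{\delta}{h(-(x-\delta))}.
\]
It therefore suffices to show that $1/h(-y) \le 4\sqrt{\log\frac{1}{\varepsilon}}$ for every $y \ge -2\sqrt{2\log\frac{1}{\varepsilon}}$.

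Set $c \coloneqq \sqrt{\log\frac{1}{\varepsilon}}$, so that $y \ge -2\sqrt2\, c$. Since $h$ is decreasing, $1/h(-y)$ is decreasing in $y$ as well, so its worst value on this range is attained at $y = -2\sqrt2\, c$, i.e.\ we need $1/h(2\sqrt2 c) \le 4c$. By the lower bound in \cref{mills_ratio}, for $z = 2\sqrt2 c > 0$,
\[
\frac{1}{h(z)} < \frac{\sqrt{4+z^2}+z}{2} = \frac{\sqrt{4+8c^2} + 2\sqrt2 c}{2}.
\]
We then want $\sqrt{4+8c^2} + 2\sqrt2 c \le 8c$, i.e.\ $\sqrt{4+8c^2} \le (8-2\sqrt2)c$. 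Squaring gives $4 + 8c^2 \le (8-2\sqrt2)^2 c^2 \approx 26.7\,c^2$, which holds as soon as $c^2 \ge 4/18.7 \approx 0.214$. For $\varepsilon \le 0.6$ we have $c^2 = \log\frac{1}{0.6} \approx 0.51$ at the worst case, so the inequality holds.

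Combining the two steps gives $g(x) - g(x-\delta) \le 4\delta c$, which is exactly $\Phi(x-\delta) \ge \exp(-4\delta\sqrt{\log\frac1\varepsilon})\,\Phi(x)$. The case $x-\delta \ge 0$ is already covered, since $1/h(-y)$ is then even smaller. The only delicate point is the numerical check in the last step, which uses the hypothesis $\varepsilon \le 0.6$ to ensure $c$ is not too small. The concavity step also needs care with the direction of monotonicity, which follows from \cref{mills_ratio} together with \cref{eq:gauss_mill}.
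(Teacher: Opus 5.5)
Your proof is correct and is essentially the paper's argument: bound $\log\Phi(x)-\log\Phi(x-\delta)$ by $\delta/h(-(x-\delta))$ using the monotonicity of $1/h(-\cdot)$, push to the endpoint $-2\sqrt{2\log\frac{1}{\varepsilon}}$, and apply the lower Mills'-ratio bound from \cref{mills_ratio}. The only differences are cosmetic: you write the mean-value step as an integral, and you close the final numerical check by squaring, which needs only $\log\frac{1}{\varepsilon}\gtrsim 0.214$, whereas the paper uses $\log\frac{1}{\varepsilon}>\frac12$ to bound $\sqrt{4+8\log\frac{1}{\varepsilon}}<4\sqrt{\log\frac{1}{\varepsilon}}$.
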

\begin{proof}
Denote $g(x) = \log \Phi(x)$. Then, by \cref{eq:gauss_mill}, the derivative of $g$ is given by
\[
g'(x) = \frac{1}{\sqrt{2\pi}}\frac{e^{-x^2/2}}{\Phi(x)} = \frac{1}{h(-x)} \, .
\]
Hence, \cref{mills_ratio} shows that $g'$ is strictly decreasing and positive. Since $\Phi$ is strictly increasing, the mean value theorem gives that, for all $x \in \R$ and $\delta \ge 0$,
\[
0 \le \log \Phi(x) - \log \Phi(x - \delta) \le \delta g'(x - \delta)\, .
\]
If $x - \delta \ge -2\sqrt{2 \log \frac{1}{\varepsilon}}$, the monotonicity of $g'$ gives
\[
0 \le \log \Phi(x) - \log \Phi(x - \delta) \le \delta g'(x - \delta) \le \delta g'(-2\sqrt{2 \log \frac{1}{\varepsilon}}) = \frac{\delta}{h(2\sqrt{2 \log \frac{1}{\varepsilon}})} \, .
\]
If we further assume that $\varepsilon \le 0.6$, then $\log \frac{1}{\varepsilon} \ge \log \frac{10}{6} > \frac{1}{2}$ and \cref{mills_ratio} gives
\begin{align*}
\frac{1}{h(2\sqrt{2 \log \frac{1}{\varepsilon}})} &< \frac{1}{2}(\sqrt{4 + 8\log \frac{1}{\varepsilon}} + 2\sqrt{2 \log \frac{1}{\varepsilon}}) < \frac{1}{2}(\sqrt{16\log \frac{1}{\varepsilon}} + 2\sqrt{2 \log \frac{1}{\varepsilon}}) \\
&= (2 + \sqrt{2})\sqrt{\log \frac{1}{\varepsilon}} < 4\sqrt{\log \frac{1}{\varepsilon}} \, .
\end{align*}
Hence, $\log \Phi(x) - \log \Phi(x - \delta) < 4\delta \sqrt{\log \frac{1}{\varepsilon}}$ and $\Phi(x - \delta) \ge \exp(-4\delta \sqrt{\log \frac{1}{\varepsilon}})\Phi(x)$.
\end{proof}

Given the estimates above, we can now state the analogue of \cref{shifted_event} for the Gaussian distribution. We denote by
\[
\gauss_n(A) = \frac{1}{\sqrt{(2\pi)^n}}\int_{A} e^{-\|x\|^2/2} dx
\]
the standard Gaussian measure on $\R^n$. We have the following inequality about the Gaussian measure of a shifted set, see \cite{LiKuelbs:1998,Bobkov:1999,BartheCorderoFradelizi:2001} for more information.

\begin{theorem}[\cite{LiKuelbs:1998}]\label{gaussian_shift}
Suppose that $A \subseteq \R^n$ is any Borel set and that $v \in \R^n$ is any vector. Let $x \coloneqq \gauss_n(A)$. Then, we have the inequality
\[
\Phi\bigl( \Phi^{-1}(x) - \|v\|_2\bigr) \le \gauss_n(v + A) \le \Phi\bigl( \Phi^{-1}(x) + \|v\|_2\bigr) \, .
\]
\end{theorem}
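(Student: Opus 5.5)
The plan is to first reduce to half-spaces by Ehrhard symmetrization, and then handle the half-space case with a one-dimensional computation. For a Borel set $A \subseteq \R^n$ with $\gauss_n(A) = x$, let $H$ be the half-space $\{y \colon \langle y, e\rangle \le \Phi^{-1}(x)\}$, where $e = -v/\|v\|_2$ is the unit vector pointing against the shift. By rotation invariance of $\gauss_n$, $H$ has measure exactly $x$. A direct computation gives $\gauss_n(v + H) = \Phi\bigl(\Phi^{-1}(x) - \|v\|_2\bigr)$: translating by $v$ moves the boundary hyperplane a distance $\|v\|_2$ against the direction $e$, so the half-space shrinks. If instead $e = v/\|v\|_2$, the same computation gives $\gauss_n(v+H) = \Phi\bigl(\Phi^{-1}(x) + \|v\|_2\bigr)$. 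So half-spaces achieve both bounds. The claim is that half-spaces are extremal, i.e.\ the Gaussian measure of $v + A$ lies between these two values whenever $\gauss_n(A) = x$. The degenerate cases $x \in \{0,1\}$ are handled by the conventions $\Phi(\pm\infty)$.

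The key step is a comparison between $A$ and $H$ under translation. I would derive it from the Gaussian isoperimetric inequality in the form of Borell and Sudakov--Tsirelson. That inequality states that $\gauss_n(A + tB_2) \ge \Phi(\Phi^{-1}(x) + t)$, where $B_2$ is the Euclidean unit ball. Since $v + A$ is not an enlargement of $A$, this alone is insufficient. Instead I would use Ehrhard's inequality: for every $\lambda \in [0,1]$ and Borel sets $A, B$,
\[
\Phi^{-1}\bigl(\gauss_n(\lambda A + (1-\lambda)B)\bigr) \ge \lambda \Phi^{-1}(\gauss_n(A)) + (1-\lambda)\Phi^{-1}(\gauss_n(B)).
\]
Equivalently, I would use the concavity of $t \mapsto \Phi^{-1}(\gauss_n(A + tv))$, which is Ehrhard's inequality applied to translates, valid for Borel sets by Borell's extension. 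Call this function $\psi$.

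Along the line $t \in \R$, I would use that $\psi$ is concave, together with its behaviour as $t \to \pm\infty$. Writing $\gauss_n(A + tv) \le \gauss_1\bigl(\{s \colon s \le \langle a, v\rangle/\|v\|_2 + t\|v\|_2\}\bigr)$ averaged over $a$, the slope of $\psi$ in $t$ is at most $\|v\|_2$ in absolute value. Then concavity combined with this asymptotic slope bound forces $|\psi(1) - \psi(0)| \le \|v\|_2$. Applying $\Phi$ yields both inequalities of the theorem.

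The main obstacle is justifying the slope bound, i.e.\ that the concave function $\psi$ is $\|v\|_2$-Lipschitz. My first attempt would be to note that a concave function on $\R$ with values in $[-\infty, +\infty]$ that is finite at one point must be finite everywhere, unless it is degenerate. Its slopes are then monotone, so a bound at infinity transfers to all $t$. That bound at infinity I would obtain by comparing $\gauss_n(A+tv)$ with the Gaussian tail $\Phi$ in the direction of $v$, using $\Phi^{-1}(\Phi(s)) = s$ and the Mills' ratio asymptotics from \cref{mills_ratio}, which show that $\Phi^{-1}$ of a tail probability decays at most linearly with slope one. If this route proves too delicate, I would fall back on the cited result of \cite{LiKuelbs:1998}, whose proof uses exactly the half-space extremality.
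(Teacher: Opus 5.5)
Your two half-space computations are correct and show that both bounds are sharp. The step meant to prove that half-spaces are extremal, however, fails for general Borel sets.

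\textbf{The concavity step.} Ehrhard's inequality applied to $A+t_1v$ and $A+t_2v$ gives a lower bound on $\gauss_n\bigl(\lambda A+(1-\lambda)A+(\lambda t_1+(1-\lambda)t_2)v\bigr)$. For non-convex $A$ the set $\lambda A+(1-\lambda)A$ is strictly larger than $A$, so this says nothing about $\gauss_n\bigl(A+(\lambda t_1+(1-\lambda)t_2)v\bigr)$. Borell's extension to Borel sets does not help: it still concerns the Minkowski combination.

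In fact $\psi(t)=\Phi^{-1}(\gauss_n(A+tv))$ is not concave in general. Take $n=1$, $v=1$, $A=\R\setminus(-1,1)$. Then $\gauss_1(A+t)=1-\Phi(t+1)+\Phi(t-1)$, which is even in $t$ with a strict global minimum at $t=0$. A concave even function would instead be maximized at $t=0$. So your argument covers at most convex $A$, and there is no reduction from Borel sets to convex ones.

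Two smaller points. The global $\|v\|_2$-Lipschitz property of $\psi$ is just the theorem applied to all translates $A+tv$, so without concavity nothing is left. Also, even for convex $A$, the asymptotic slope control you need consists of lower bounds on $\gauss_n(A+tv)$ as $t\to\pm\infty$, whereas your sketch writes an upper bound.

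\textbf{The missing idea.} Half-space extremality for arbitrary Borel sets follows from the Cameron--Martin change of density plus a bathtub (Neyman--Pearson) argument. Substituting $z=y+v$ in the integral of the density $(2\pi)^{-n/2}e^{-\|z\|_2^2/2}$ gives
\[
\gauss_n(v+A)=\int_A g\,d\gauss_n, \qquad g(y)\coloneqq e^{-\langle y,v\rangle-\|v\|_2^2/2},
\]
and $g$ is a decreasing function of $\langle y,v\rangle$. For $v\neq 0$ and $0<x<1$, let $H_-\coloneqq\{y\colon \langle y,v\rangle\ge -\|v\|_2\Phi^{-1}(x)\}$, so that $\gauss_n(H_-)=x$. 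Let $t_0\coloneqq e^{\|v\|_2\Phi^{-1}(x)-\|v\|_2^2/2}$, the value of $g$ on $\partial H_-$. Then $g\le t_0$ on $H_-$ and $g\ge t_0$ off $H_-$. Since $\gauss_n(A\setminus H_-)=\gauss_n(H_-\setminus A)$, we get
\[
\int_A g\,d\gauss_n-\int_{H_-}g\,d\gauss_n\ge t_0\bigl(\gauss_n(A\setminus H_-)-\gauss_n(H_-\setminus A)\bigr)=0 .
\]
Hence $\gauss_n(v+A)\ge\gauss_n(v+H_-)=\Phi\bigl(\Phi^{-1}(x)-\|v\|_2\bigr)$, the last equality being your own half-space computation. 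The upper bound follows by applying this to $\R^n\setminus A$, using $v+(\R^n\setminus A)=\R^n\setminus(v+A)$ and $\Phi^{-1}(1-x)=-\Phi^{-1}(x)$.

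The paper gives no proof of this statement and only cites \cite{LiKuelbs:1998}. Your fallback therefore matches the paper, but the argument you propose in its place does not go through as written.
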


Combining this inequality with the estimates above, we get the following corollary, which is an analogue of \cref{shifted_event}.

\begin{corollary}\label{relative_gaussian_shift}
Fix a Borel set $A \subseteq \R^n$ and let $0.6 \ge \varepsilon > 0$ be such that $\gauss_n(A) \ge \varepsilon$. Then, for any vector $v \in \R^n$ that satisfies $\|v\|_2 \le \sqrt{2\log\frac{1}{\varepsilon}}$ we have $\gauss_n(v + A) \ge \exp\bigl(-4\|v\|_2 \sqrt{\log \frac{1}{\varepsilon}}\bigr) \gauss_n(A)$.
\end{corollary}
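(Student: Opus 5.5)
The plan is to chain \cref{gaussian_shift} with the two technical estimates \cref{inverse_cdf} and \cref{taylor_cdf}. Write $x \coloneqq \gauss_n(A) \ge \varepsilon$ and $\delta \coloneqq \|v\|_2$. The lower bound in \cref{gaussian_shift} gives
\[
\gauss_n(v + A) \ge \Phi\bigl(\Phi^{-1}(x) - \delta\bigr),
\]
so it is enough to show $\Phi(y - \delta) \ge \exp(-4\delta\sqrt{\log \frac{1}{\varepsilon}})\,\Phi(y)$ with $y \coloneqq \Phi^{-1}(x)$, because $\Phi(y) = x = \gauss_n(A)$.

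First dispose of the degenerate case $x = 1$. Here $\Phi^{-1}(1) = +\infty$, and the lower bound of \cref{gaussian_shift} reads $\gauss_n(v+A) \ge 1$. The claim is then immediate, since the right-hand side of the desired inequality is at most $1$. Alternatively, one can argue directly that a translate of a full-measure set has full measure.

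For $\varepsilon \le x < 1$, apply \cref{inverse_cdf}; this is allowed since $\varepsilon \le 0.6 \le 0.9$. It gives $y > -\sqrt{2\log\frac{1}{\varepsilon}}$. Combined with $\delta \le \sqrt{2\log\frac{1}{\varepsilon}}$, this yields
\[
y - \delta > -2\sqrt{2\log\tfrac{1}{\varepsilon}},
\]
which is exactly the hypothesis of \cref{taylor_cdf}, whose own assumption $\varepsilon \le 0.6$ also holds. That lemma then gives $\Phi(y-\delta) \ge \exp(-4\delta\sqrt{\log\frac{1}{\varepsilon}})\Phi(y)$, and chaining this with the first display finishes the proof.

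No step is a genuine obstacle. The only care needed is checking that the hypotheses of the two lemmas line up: the constants $0.6$ and $0.9$, and the factor $2$ in $-2\sqrt{2\log(1/\varepsilon)}$, which comes from adding the bound on $\|v\|_2$ to the bound on $\Phi^{-1}(x)$. The boundary case $x=1$ also needs separate handling, as above.
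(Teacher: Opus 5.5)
Your proposal is correct and follows the paper's proof: combine the lower bound in \cref{gaussian_shift} with \cref{inverse_cdf} to get $\Phi^{-1}(x)-\|v\|_2 \ge -2\sqrt{2\log\frac{1}{\varepsilon}}$, then apply \cref{taylor_cdf}. Your separate treatment of the case $\gauss_n(A)=1$ is a small improvement, since \cref{inverse_cdf} is stated only for $x<1$ and the paper does not address this boundary case.
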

\begin{proof}
Let $x \coloneqq \gauss_n(A)$. By \cref{inverse_cdf} we have $\Phi^{-1}(x) > -\sqrt{2 \log \frac{1}{ \varepsilon}}$, so $\Phi^{-1}(x) - \|v\|_2 \ge -2\sqrt{2 \log \frac{1}{ \varepsilon}}$. Hence, \cref{taylor_cdf} gives $\Phi(\Phi^{-1}(x) - \|v\|_2) \ge \exp(-4\|v\|_2 \sqrt{\log \frac{1}{\varepsilon}}) \gauss_n(A)$. Combining this with \cref{gaussian_shift} we get $\gauss_n(v + A) \ge \Phi(\Phi^{-1}(x) - \|v\|_2) \ge \exp(-4\|v\|_2 \sqrt{\log \frac{1}{\varepsilon}}) \gauss_n(A)$.
\end{proof}

The corollary above can be used to show an analogue of \cref{shifted_decomposition} for the Gaussian distribution.

\begin{proposition}\label{shifted_decomposition_gaussian}
Suppose that $X$ is a Gaussian random vector with distribution $\mathcal{N}(\theta, \rho^2 I)$ for some $\theta \in \R^n$ and $\rho > 0$. Let $A_1, \dots, A_N \subseteq \R^n$ for $N \ge 1$ be disjoint events such that $\R^{n} \setminus (\cup_i A_i)$ has Lebesgue measure zero. For every $i \in [N]$ let $v_i \in \R^n$ be a vector such that $v_i + A_i \subseteq A_i$. Furthermore, let $\delta > 0$ be such that $\|v_i\|_2 \le \delta\le \frac{\rho}{4}$ for all $i \in [N]$. Then, we have
\[
\Prob\bigl(X \notin \bigcup_{i = 1}^{N} (v_i + A_i)\bigr) < 5 \frac{\delta}{\rho}\sqrt{\log \frac{N \rho}{\delta}} \, .
\]
\end{proposition}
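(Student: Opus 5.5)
We first normalize: write $X = \theta + \rho Z$ with $Z$ standard Gaussian. Put $B_i \coloneqq (A_i - \theta)/\rho$ and $w_i \coloneqq v_i/\rho$, so that $X \in v_i + A_i$ if and only if $Z \in w_i + B_i$. The sets $B_i$ are still disjoint and cover $\R^n$ up to a null set. The inclusions $w_i + B_i \subseteq B_i$ still hold, and $\|w_i\|_2 \le \delta/\rho \le 1/4$. Hence it suffices to prove the claim for the standard Gaussian measure $\gauss_n$ with shifts of norm at most $\eta \coloneqq \delta/\rho$, and to show
\[
1 - \sum_i \gauss_n(w_i + B_i) < 5\eta\sqrt{\log(N/\eta)} .
\]
Disjointness of the shifted sets follows as in \cref{shifted_decomposition}, so the left-hand side is exactly the probability in question.

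Next, mimicking \cref{shifted_decomposition}, we split the index set according to a threshold $\varepsilon$, to be chosen below with $\varepsilon \le 0.6$. Call $B_i$ \emph{large} if $\gauss_n(B_i) \ge \varepsilon$ and \emph{small} otherwise. The small sets together carry mass less than $N\varepsilon$, and we simply discard them. For a large $B_i$ we apply \cref{relative_gaussian_shift}. Its hypothesis $\|w_i\|_2 \le \sqrt{2\log(1/\varepsilon)}$ holds because $\eta \le 1/4$ and $\varepsilon \le 0.6$. It yields
\[
\gauss_n(w_i + B_i) \ge e^{-4\eta\sqrt{\log(1/\varepsilon)}}\gauss_n(B_i) .
\]
Summing over the large sets gives
\[
\sum_i \gauss_n(w_i+B_i) \ge e^{-4\eta\sqrt{\log(1/\varepsilon)}}(1 - N\varepsilon) .
\]
Using $1 - e^{-t} \le t$, the failure probability is therefore at most
\[
4\eta\sqrt{\log(1/\varepsilon)} + N\varepsilon .
\]

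Finally we choose $\varepsilon \coloneqq \eta/N$. Then $N\varepsilon = \eta$ and $\log(1/\varepsilon) = \log(N/\eta)$, so the bound becomes
\[
4\eta\sqrt{\log(N/\eta)} + \eta .
\]
The constraint $\varepsilon \le 0.6$ is automatic, since $\eta \le 1/4$ and $N \ge 1$. The remaining step is to absorb the additive $\eta$ into the leading term, that is, to show $\eta \le \eta\sqrt{\log(N/\eta)}$. This holds because $\log(N/\eta) \ge \log 4 > 1$. Hence the total is strictly less than $5\eta\sqrt{\log(N/\eta)} = 5\frac{\delta}{\rho}\sqrt{\log\frac{N\rho}{\delta}}$.

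The only delicate point is the bookkeeping of constants in these last two steps. The conditions $\varepsilon \le 0.6$ and $\log(N/\eta) \ge 1$ must both be secured, and the assumption $\delta \le \rho/4$ is exactly what provides them. The rest reduces to the already established \cref{relative_gaussian_shift}.
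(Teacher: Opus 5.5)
Your proof is correct and follows essentially the same route as the paper: threshold the cells at $\varepsilon = \eta/N$, discard those of Gaussian mass below $\varepsilon$, apply \cref{relative_gaussian_shift} to the rest, and absorb the additive $\eta$ using $\log(N/\eta) \ge \log 4 > 1$. The differences are cosmetic. You normalize to the standard Gaussian at the start rather than at the end, and you bound $1 - e^{-a}(1-b) \le a+b$ directly instead of going through $(1-a)(1-b) > 1-a-b$, so your strict inequality comes from the absorption step instead.
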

\begin{proof}
To start, suppose that $\theta = 0$ and $\rho = 1$. Let $\varepsilon \coloneqq \frac{\delta}{N}$. Since $\delta \le \frac{1}{4}$ and $N \ge 1$, we have $\varepsilon \le \frac{1}{4}$ and $\sqrt{\log \frac{1}{\varepsilon}} > 1$. By permuting indices, we can suppose there exists $N' \le N$ such that $\Prob(X \in A_i) \ge \varepsilon$ for all $i \le N'$ and $\Prob(X \in A_i) < \varepsilon$ for all $i > N'$. Then, by \cref{relative_gaussian_shift} we get
\begin{align*}
\Prob \bigl(X \in \bigcup_{i = 1}^{N} (v_i + A_i)\bigr) &\ge \sum_{i = 1}^{N'}\Prob(X \in v_i + A_i) \ge \exp(-4 \delta \sqrt{ \log \frac{1}{\varepsilon}})\sum_{i = 1}^{N'}\Prob(X \in A_i)  \\
&= \exp(-4 \delta \sqrt{ \log \frac{1}{\varepsilon}})\bigl( 1 -  \sum_{i = N'+1}^{N}\Prob(X \in A_i) \bigr) \ge \exp(-4 \delta \sqrt{ \log \frac{1}{\varepsilon}})(1- N\varepsilon) \\
&\ge (1 - 4 \delta \sqrt{\log \frac{1}{\varepsilon}})(1 - N \varepsilon) > 1 - N\varepsilon - 4 \delta \sqrt{\log \frac{1}{\varepsilon}}  = 1- \delta - 4 \delta \sqrt{\log \frac{1}{\varepsilon}} \\
&> 1 - 5 \delta \sqrt{\log \frac{1}{\varepsilon}} \, .
\end{align*}
Hence, $\Prob\bigl(X \notin \bigcup_{i = 1}^{N} (v_i + A_i)\bigr) < 5 \delta \sqrt{\log \frac{1}{\varepsilon}} = 5 \delta \sqrt{\log \frac{N}{\delta}}$.

For arbitrary $\theta$ and $\rho$ we consider the change of variables $Y = \frac{1}{\rho}(X - \theta)$. Then, $Y$ has the distribution $\mathcal{N}(0,I)$. Moreover, if we denote $A'_i = \frac{1}{\rho}(-\theta + A_i)$ and $v_i' = \frac{1}{\rho} v_i$ for all $i \in [N]$, then $\Prob(X \in v_i + A_i) = \Prob(Y \in v'_i + A'_i)$ for all $i$. Furthermore, the events $A'_i$ are disjoint and we have $v_i' + A_i' \subseteq A_i'$. Hence, if we set $\delta' = \frac{\delta}{\rho}$, then $\Prob\bigl(X \notin \bigcup_{i = 1}^{N} (v_i + A_i)\bigr) = \Prob\bigl(Y \notin \bigcup_{i = 1}^{N} (v'_i + A'_i)\bigr) < 5 \delta' \sqrt{\log \frac{N}{\delta'}}$.
\end{proof}

In order to apply this proposition to mean payoff games, we suppose that each weight $\payoff_{ij}$ of the game is drawn independently from the distribution $\mathcal{N}(\bar{\payoff}_{ij}, \rho^2)$, where $\bar{\payoff}_{ij} \in \interval{-1}{1}$ for all $(i,j)$. Then, we have the following estimate.

\begin{theorem}\label{condition_gaussian_noise}
Fix $\bar{\payoff} \in [-1,1]^m$ and suppose that the weights of a mean payoff game are drawn randomly by perturbing each $\bar{\payoff}_{ij}$ with independent Gaussian noise, i.e., that we have $\payoff \sim \mathcal{N}(\bar{\payoff}, \rho^2I)$ for some $\rho > 0$. Then for any $0 < \varepsilon < 0.3$ we have
\[
\Prob\Bigl( \Cond(\payoff) > 40n^{3/2}\sqrt{m \log n} \cdot \frac{\log\frac{1}{\varepsilon}}{\varepsilon} \cdot \bigl( 3\sqrt{\log \frac{2m}{\varepsilon}} +  \frac{2}{\rho} \bigr) \Bigr) \le \varepsilon \, .
\]
\end{theorem}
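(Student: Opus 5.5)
The proof repeats the argument of \cref{condition_laplace_noise}, with \cref{shifted_decomposition_gaussian} in place of \cref{shifted_decomposition}. Fix $\alpha>0$ and $\beta>2$. As before, $\hilb{\payoff}\le 2\supnorm{\payoff}$ and a union bound give
\[
\Prob(\Cond(\payoff)>\alpha)\le \Prob\bigl(\supnorm{\payoff}>\tfrac{\beta}{2}\bigr)+\Prob\bigl(\sk(\payoff)<\tfrac{\beta}{\alpha}\bigr).
\]
So two estimates are needed: a Gaussian tail bound for the first term, and a Gaussian version of \cref{slack_log_lip} for the second.

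\emph{Slack term.} Take the events $A_i$ to be the nonempty cones $\polyh^{\sigma,\tau}$. By \cref{policy_cone} they are disjoint and cover $\R^m$ up to a null set. Their number $N$ is at most the number of policy pairs, which is at most $\prod_i \deg^+(i)\le n^n$, so $\log N\le n\log n$. For a parameter $s>0$, choose the shifts $v=s\,\payoff^{\sigma,\tau}$, where $\payoff^{\sigma,\tau}$ is the point from \cref{wide_cone} with $\supnorm{\payoff^{\sigma,\tau}}=1$ and $\sk(\payoff^{\sigma,\tau})\ge 1/n$. Exactly as in \cref{slack_log_lip}, convexity of the cone and linearity of $\gameval^{\sigma,\tau},\bias^{\sigma,\tau}$ give two facts: $v+\polyh^{\sigma,\tau}\subseteq\polyh^{\sigma,\tau}$, and every point of the shifted cone has slack at least $s/n$. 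The sup-norm bound converts to $\|v\|_2\le s\sqrt m=:\delta$. Provided $\delta\le\rho/4$, \cref{shifted_decomposition_gaussian} then yields
\[
\Prob\bigl(\sk(\payoff)<\tfrac{s}{n}\bigr)<\frac{5s\sqrt m}{\rho}\sqrt{n\log n+\log\tfrac{\rho}{s\sqrt m}}.
\]
The factor $\sqrt{n\log n}$ is the source of the extra $\sqrt{n\log n}$ in the statement, compared with the Laplace case. We will set $s=n\beta/\alpha$.

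\emph{Tail term.} Each coordinate satisfies $\abs{\payoff_{ij}}\le 1+\abs{\delta_{ij}}$ with $\delta_{ij}\sim\mathcal N(0,\rho^2)$. A union bound together with \cref{mills_ratio} (or the crude bound $\Prob(\abs{Z}>t)\le e^{-t^2/2}$ for a standard normal $Z$) gives
\[
\Prob\bigl(\supnorm{\payoff}>\tfrac\beta2\bigr)\le m\,e^{-(\beta/2-1)^2/(2\rho^2)}.
\]
Choosing $\beta=2+2\rho\sqrt{2\log\frac{2m}{\varepsilon}}$ makes this term at most $\varepsilon/2$. This is why the statement contains the term $\sqrt{\log\frac{2m}{\varepsilon}}$ in place of the Laplace bound's $\log\frac{2m}{\varepsilon}$.

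\emph{Choice of $\alpha$.} Finally, choose $\alpha$ so that the slack term is at most $\varepsilon/2$. Write $\delta'=s\sqrt m/\rho=n\beta\sqrt m/(\alpha\rho)$. The requirement is $5\delta'\sqrt{n\log n+\log(1/\delta')}\le\varepsilon/2$, which holds with $\delta'\approx \varepsilon/\bigl(c\sqrt{n\log n}\,\log\frac1\varepsilon\bigr)$ for a suitable constant $c$. Indeed, $\log(1/\delta')$ is then $O(\log\frac1\varepsilon+\log n)$, and $\sqrt{a+b}\le\sqrt a\sqrt{1+b}$ absorbs it. This choice also ensures $\delta'\le 1/4$ because $\varepsilon<0.3$. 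Solving for $\alpha$ gives
\[
\alpha\asymp n^{3/2}\sqrt{m\log n}\,\frac{\log\frac1\varepsilon}{\varepsilon}\cdot\frac{\beta}{\rho}
= n^{3/2}\sqrt{m\log n}\,\frac{\log\frac1\varepsilon}{\varepsilon}\Bigl(2\sqrt{2\log\tfrac{2m}{\varepsilon}}+\tfrac2\rho\Bigr),
\]
which matches the claimed form. The main obstacle is this final step: the bound from \cref{shifted_decomposition_gaussian} has $\delta'$ on both sides through $\log(N/\delta')$, so one must check that the explicit constant $40$ (and $3\ge 2\sqrt2$) really absorbs the $\sqrt{\log n}$, $\log\frac1\varepsilon$ and $\log\frac1{\delta'}$ factors uniformly for all $0<\varepsilon<0.3$. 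The edge case $n=1$, where $\log n=0$, must be handled separately; there every vertex has out-degree $1$ and $\Cond=0$.
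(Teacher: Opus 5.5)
Your proposal is essentially the paper's proof. It uses the same union bound and the same shifted cones $\frac{n\beta}{\alpha}\payoff^{\sigma,\tau}+\polyh^{\sigma,\tau}$ fed into \cref{shifted_decomposition_gaussian} with $N\le n^n$. Your $\delta'=n\sqrt m\,\beta/(\alpha\rho)$ is exactly the paper's $\kappa$, which it fixes as $\kappa=\omega\varepsilon/\log\frac1\varepsilon$ with $\omega=1/(40\sqrt{n\log n})$, i.e.\ your $c=40$. Your tail estimate is a valid variant of the paper's, which takes $\beta=2+3\rho\sqrt{\log\frac{2m}{\varepsilon}}$ and uses $h(x)<1/x$. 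Note, though, that $\Prob(\abs{Z}>t)\le e^{-t^2/2}$ is not plain Chernoff, which gives $2e^{-t^2/2}$ two-sided. It follows from \cref{mills_ratio} because $h$ is decreasing with $h(0)=\sqrt{\pi/2}$, and you need this sharper form to get $\varepsilon/2$ with your $\beta$. Your remark on $n=1$ is correct; the paper leaves it implicit.

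The step you leave open is where your sketch goes wrong. With $a=n\log n$ and $b=\log\frac1{\delta'}$, the inequality $\sqrt{a+b}\le\sqrt a\sqrt{1+b}$ only gives
\[
5\delta'\sqrt{a+b}\le\frac{5\varepsilon}{c\log\frac1\varepsilon}\sqrt{1+\log c+\tfrac12\log(n\log n)+\log\log\tfrac1\varepsilon+\log\tfrac1\varepsilon}\,.
\]
The $\frac12\log(n\log n)$ under the root grows with $n$, so no absolute constant $c$ makes this at most $\varepsilon/2$ for all $n$.

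The fix, which is what the paper does, is to split $\sqrt{a+b}\le\sqrt a+\sqrt b$. Then $\sqrt{\log(1/\delta')}$ is multiplied by $\delta'$ itself, which still carries the factor $1/\sqrt{n\log n}$. Write $\log\frac1{\delta'}=\log\frac1\varepsilon+\log t$ with $t=\log\frac1\varepsilon/\omega$ and split once more. The three pieces are:
\begin{itemize}
\item $5\delta'\sqrt{n\log n}=\frac{\varepsilon}{8\log(1/\varepsilon)}\le\frac\varepsilon8$;
\item $5\omega\varepsilon/\sqrt{\log(1/\varepsilon)}\le\frac\varepsilon8$;
\item $5\varepsilon\sqrt{\log t}/t\le\frac\varepsilon4$, because $t\ge 40$ (from $\log\frac1\varepsilon>1$ and $n\log n\ge 1$) and $\sqrt{\log t}/t\le\frac1{20}$ there.
\end{itemize}
They sum to $\varepsilon/2$. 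Finally, $\delta'<1/40$ gives the hypothesis $\delta\le\rho/4$ of \cref{shifted_decomposition_gaussian}.
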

\begin{proof}
We adapt the proof of \cref{condition_laplace_noise}. Fix $\alpha \coloneqq 40n^{3/2}\sqrt{m \log n} \cdot \frac{\log\frac{1}{\varepsilon}}{\varepsilon} \cdot \bigl( 3\sqrt{\log \frac{2m}{\varepsilon}} +  \frac{2}{\rho} \bigr)$, $\beta \coloneqq 2 + 3\rho \sqrt{\log \frac{2m}{\varepsilon}}$, $\kappa \coloneqq n\sqrt{m} \frac{\beta}{\alpha \rho}$, and $\omega \coloneqq \frac{1}{40\sqrt{n \log n}}$. Under this notation, we have $\kappa = \frac{\omega \varepsilon}{\log \frac{1}{\varepsilon}}$. Moreover, the assumption $\varepsilon < 0.3$ implies that $\log \frac{1}{\varepsilon} > 1$.
As in the proof of \cref{condition_laplace_noise}, we have
\[
\Prob(\Cond(\payoff) > \alpha) \le \Prob(\supnorm{\payoff} > \frac{\beta}{2}) + \Prob(\sk(\payoff) < \frac{\beta}{\alpha}) \, .
\]
To bound the second term, we proceed as in the proof of \cref{slack_log_lip} and consider the collection of all nonempty cones $(\polyh^{\sigma,\tau})$ with points $\payoff^{\sigma,\tau} \in \polyh^{\sigma,\tau}$ such that $\supnorm{\payoff^{\sigma,\tau}} = 1$ and $\sk(\payoff^{\sigma,\tau}) \ge \frac{1}{n}$. As in \cref{slack_log_lip} we have
\[
\Prob\bigl(\sk(\payoff) < \frac{\beta}{\alpha} \bigr) \le \Prob\bigl(\payoff \notin \bigcup_{\sigma,\tau} (\frac{n\beta}{\alpha} \payoff^{\sigma,\tau} + \polyh^{\sigma,\tau})\bigr) \, .
\]
Since $\supnorm{\payoff^{\sigma,\tau}} = 1$ we get $\|\payoff^{\sigma,\tau}\|_2 \le \sqrt{m}$ and therefore $\| \frac{n\beta}{\alpha} \payoff^{\sigma,\tau}\|_2 \le n\sqrt{m}\frac{\beta}{\alpha} = \rho \kappa = \rho \frac{\omega \varepsilon}{\log\frac{1}{\varepsilon}} < \frac{\rho}{40}$. Moreover, note that the collection $(\polyh^{\sigma,\tau})$ consists of at most $n^n$ cones (for each vertex of the graph we choose a vertex to which the policies $(\sigma,\tau)$ lead). Hence, \cref{shifted_decomposition_gaussian} gives
\[
\Prob\bigl(\sk(\payoff) < \frac{\beta}{\alpha} \bigr) \le 5\kappa \sqrt{\log \frac{n^n}{\kappa}} = 5 \kappa \sqrt{n\log n + \log \frac{1}{\kappa}} \le 5 \kappa \sqrt{n \log n}  + 5 \kappa \sqrt{\log \frac{1}{\kappa}} \, .
\]
The first term of the right-had side can be bounded by $5 \kappa \sqrt{n \log n} = \frac{\kappa}{8\omega} = \frac{\varepsilon}{8 \log \frac{1}{\varepsilon}} \le \frac{\varepsilon}{8}$. The second term can be bounded by
\begin{align*}
5 \kappa \sqrt{\log \frac{1}{\kappa}} = 5\frac{\omega\varepsilon}{\log \frac{1}{\varepsilon}}\sqrt{\log \frac{\log \frac{1}{\varepsilon}}{\omega \varepsilon}} \le 5\frac{\omega \varepsilon}{\log \frac{1}{\varepsilon}} \sqrt{\log \frac{1}{\varepsilon}} + 5\frac{\omega \varepsilon}{\log \frac{1}{\varepsilon}}\sqrt{\log \frac{\log \frac{1}{\varepsilon}}{\omega}} \, .
\end{align*}
Since $\omega < \frac{1}{40}$, we get $5\frac{\omega \varepsilon}{\sqrt{\log \frac{1}{\varepsilon}}} \le \frac{\varepsilon}{8}$. Moreover, the function $g(t) \coloneqq \frac{\sqrt{\log t}}{t}$ satisfies $g(t) \le \frac{1}{20}$ for all $t \ge 40$. By taking $t \coloneqq \frac{\log \frac{1}{\varepsilon}}{\omega}$ we get $5\frac{\omega \varepsilon}{\log \frac{1}{\varepsilon}}\sqrt{\log \frac{\log \frac{1}{\varepsilon}}{\omega}} \le \frac{\varepsilon}{4}$. Hence, we have $\Prob\bigl(\sk(\payoff) < \frac{\beta}{\alpha} \bigr) \le \frac{\varepsilon}{2}$.

To bound $\Prob(\supnorm{\payoff} > \frac{\beta}{2})$, note that $\beta > 2 + 2\rho$. Hence, by \cref{eq:gauss_mill} and \cref{mills_ratio}, for every $(i,j) \in \edges$ we have
\begin{align*}
\Prob(\abs{\payoff_{ij}} > \frac{\beta}{2}) &\le \Prob(\abs{\payoff_{ij} - \bar{\payoff}_{ij}} > \frac{\beta}{2} - 1) = \Prob(\abs{\frac{\payoff_{ij} - \bar{\payoff}_{ij}}{\rho}} > \frac{\beta -2}{2\rho}) \\
&= 2\Phi\bigl(\frac{2-\beta}{2\rho}\bigr) = \sqrt{\frac{2}{\pi}}h\bigl(\frac{\beta-2}{2\rho}\bigr)\exp \bigl(-\frac{(\beta-2)^2}{8\rho^2} \bigr) \\
&< \sqrt{\frac{2}{\pi}} \frac{2\rho}{\beta - 2} \exp(- \frac{(\beta-2)^2}{8\rho^2}) < \exp(- \frac{(\beta-2)^2}{8\rho^2}) \, .
\end{align*}
Therefore, by the definition of $\beta$ we have
\begin{align*}
\Prob(\supnorm{\payoff} > \frac{\beta}{2}) \le m\exp(- \frac{(\beta-2)^2}{8\rho^2}) = m\exp\bigl(- \frac{9}{8} \log\frac{2m}{\varepsilon}\bigr) < \frac{\varepsilon}{2} \, .
\end{align*}
By combining the estimates, we get $\Prob(\Cond(\payoff) > \alpha) \le \varepsilon$, as claimed.
\end{proof}

To finish, we combine the estimates of the condition number with the complexity estimates from \Cref{sec:complexity_condition} to get our main theorem.

\begin{theorem}\label{th:main_thm}
Fix $\bar{\payoff} \in [-1,1]^m$ and suppose that the weights of a mean payoff game are drawn randomly by perturbing each $\bar{\payoff}_{ij}$ with independent Gaussian noise, i.e., that we have $\payoff \sim \mathcal{N}(\bar{\payoff}, \rho^2I)$ for some $0< \rho < 0.3$. Then, for any $0 < \varepsilon < 0.3$, with probability at least $1 - \varepsilon$ we have:
\begin{enumerate}
\item The \KM iteration finds a pair of optimal policies in $O(n^7 m \log n \log m \frac{(\log \frac{1}{\varepsilon})^3}{\varepsilon^2 \rho^2})$ iterations.
\item The Halpern iteration finds a pair of optimal policies in $O(n^6 m \log n \log m \frac{(\log \frac{1}{\varepsilon})^3}{\varepsilon^2 \rho^2})$ iterations.
\item The combined Zwick--Paterson/Halpern iteration (algorithm $\ZPHVI$) finds a pair of optimal policies in $O(n^{7/2} \sqrt{m \log n \log m} \frac{(\log \frac{1}{\varepsilon})^{3/2}}{\varepsilon \rho})$ iterations of the main loop.
\item The algorithm $\IncDiscPI$ equipped with Howard's policy iteration rule finds a pair of optimal policies by performing $O(n^7 m^3 (\log n \log m)^4 \frac{(\log \frac{1}{\varepsilon})^6 (\log \frac{1}{\rho})^3}{\varepsilon^2 \rho^2})$ switches for both players.
\item The algorithm $\IncDiscPI$ equipped with Dantzig's policy iteration rule finds a pair of optimal policies by performing $O(n^{13/2} m^{7/2} (\log n)^{9/2} (\log m)^{5/2} \frac{(\log \frac{1}{\varepsilon})^{7/2}(\log \frac{1}{\rho})^2}{\varepsilon \rho})$ switches for both players.
\end{enumerate}
The last two claims also hold for the discounted games and the algorithm $\IncDiscPItwo$.
\end{theorem}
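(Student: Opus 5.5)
The proof is a direct combination of the probabilistic bound of \cref{condition_gaussian_noise} with the deterministic complexity bounds of \cref{sec:complexity_condition}. First I intersect two probability-one and high-probability events. Since $\R^m \setminus \Generic$ has Lebesgue measure zero (\cref{policy_cone}) and the Gaussian is absolutely continuous, $\payoff \in \Generic$ almost surely. By \cref{condition_gaussian_noise}, with probability at least $1-\varepsilon$ we also have $\Cond(\payoff) \le \alpha$, where
\[
\alpha = 40n^{3/2}\sqrt{m\log n}\,\frac{\log\frac1\varepsilon}{\varepsilon}\Bigl(3\sqrt{\log\tfrac{2m}{\varepsilon}}+\tfrac{2}{\rho}\Bigr).
\]
On this event every deterministic bound applies with $\Cond(\payoff)$ replaced by $\alpha$.

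Next I simplify $\alpha$ using $\rho<0.3$ and $\varepsilon<0.3$. The factor $\sqrt{\log(2m/\varepsilon)}$ is at most $O(\sqrt{\log m}+\sqrt{\log(1/\varepsilon)})$, which is $O(\sqrt{\log m \log(1/\varepsilon)})$ because $\log\frac1\varepsilon>1$. Since $1/\rho>3$, the additive $2/\rho$ can be absorbed multiplicatively, giving
\[
\alpha = O\Bigl(n^{3/2}\sqrt{m\log n\log m}\,\frac{(\log\frac1\varepsilon)^{3/2}}{\varepsilon\rho}\Bigr).
\]
Also $\log(n\alpha) = O(\log n+\log m+\log\frac1\varepsilon+\log\frac1\rho)$. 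For a crude product bound, each summand is at most a constant times $\log n\log m\log\frac1\varepsilon\log\frac1\rho$, provided $n,m\ge 3$ (small cases are trivial).

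Then I substitute into each item.
\begin{enumerate}
\item For \KM, \cref{overall_complexity_KM} gives $O(n^4\Cond^2)$ iterations, and $n^4\alpha^2 = O\bigl(n^7 m\log n\log m(\log\frac1\varepsilon)^3/(\varepsilon^2\rho^2)\bigr)$.
\item For Halpern, \cref{overall_complexity_Halpern} gives $O(n^3\Cond^2)$, yielding $n^6$.
\item For $\ZPHVI$, \cref{overall_complexity_ZPH} gives $O(n+n^2\Cond)$, yielding $n^{7/2}$.
\item For $\IncDiscPI$, \cref{increasing_discount_condition} gives switch counts $O(m^2n^4\Cond^2\log^3(n\Cond^+))$ for Howard and $O(m^3n^5\log^2 n\,\Cond\log^2(n\Cond^+))$ for Dantzig. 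Inserting $\alpha$ and the logarithmic bound yields the stated exponents on $n$, $m$, $\varepsilon$ and $\rho$. The same substitution handles $\IncDiscPItwo$, by the last sentence of \cref{increasing_discount_condition}.
\end{enumerate}

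The only delicate step is the bookkeeping of the polylogarithmic factors in the policy-iteration bounds. Bounding $\log(n\Cond^+)$ by a product rather than a sum of logarithms is where sloppiness could cost the claimed powers. I would fix the inequality $a+b+c+d\le 4abcd$ for $a,b,c,d\ge1$ and apply it uniformly. This multiplies the $(\log\frac1\varepsilon)^3$ from $\alpha^2$ by the $(\log\frac1\varepsilon)^3$ from $\log^3$, giving the stated $(\log\frac1\varepsilon)^6$. The $(\log n\log m)^4$ and $(\log\frac1\rho)^3$ factors arise the same way in the Howard case, and analogously with exponents $9/2$, $5/2$, $7/2$, $2$ in the Dantzig case.
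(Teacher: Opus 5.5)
Your proposal is correct and takes the same route as the paper, whose proof simply combines \cref{condition_gaussian_noise} with \cref{overall_complexity_KM}, \cref{overall_complexity_Halpern}, \cref{overall_complexity_ZPH}, and \cref{increasing_discount_condition}. Your added bookkeeping (almost-sure genericity, monotonicity of each bound in $\Cond$, and the product bound on $\log(n\Cond^{+})$) correctly reproduces every stated exponent.
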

\begin{proof}
The proof follows by combining the estimate for the condition number \cref{condition_gaussian_noise} with the complexity estimates given in \cref{sec:complexity_condition} (\cref{overall_complexity_KM}, \cref{overall_complexity_Halpern}, \cref{overall_complexity_ZPH}, and \cref{increasing_discount_condition}).
\end{proof}

\section{Conclusion and future work}

\newcounter{nparcounter}
\setcounter{nparcounter}{0}

\NewDocumentCommand{\npar}{o}{%
  \IfValueTF{#1}{%
    \medskip\noindent\refstepcounter{nparcounter}\textbf{\thenparcounter}\quad \textit{#1}.
  }{%
    \medskip\noindent\refstepcounter{nparcounter}\textbf{\thenparcounter}.
  }%
}

We gave a number of algorithms that have smoothed polynomial complexity for deterministic discounted and mean payoff games played on arbitrary graphs. This answers some of the questions that were left as open in \cite{LoffSkomra:2024}, but not all of them. Furthermore, our analysis of the value iteration algorithms leads to a few more open questions, which we discuss below.

\npar As in \cite{LoffSkomra:2024}, our policy-iteration rule is not one of the standard rules (Howard, lexicographic, RandomFacet, \emph{etc}). Therefore, we can raise the same questions about policy iteration: do these standard rules also have polynomial smoothed complexity on deterministic two-player games? How about other ``combinatorial'' algorithms? For value iteration, one can ask if the different versions of value iteration discussed in \cite{cadilhac_2025_fast} have smoothed polynomial complexity.

\npar Another natural question is to ask if we can extend our results to \emph{stochastic} two-player games. For policy iteration rules, a counter-example of Christ and Yannakakis \cite{christ2023smoothed} shows that the Howard all-switches rule does not have polynomial smoothed complexity, even on stochastic MDPs. Therefore, one needs to study different rules in order to prove a smoothed polynomial complexity result. For value iteration methods, we note that our analysis of the correctness of \KM and Halpern iteration uses the fact that the underlying games are deterministic. It is not difficult to show that these methods will eventually produce optimal policies for \emph{constant-value} stochastic games, but it is not immediately clear if they produce optimal policies for arbitrary stochastic games. The analysis from \cite{LeeRyu:2025} shows that this is true in the one-player setting, and one would need to extend their analysis to the two-player case to even show that these iterative methods output optimal policies. Afterwards, one can study if these iterative methods have smoothed polynomial complexity in the stochastic setting, or if there is a counter-example showing that this is not the case.

\npar Another question concerns the convergence rate of the \KM iteration for deterministic games. In our work, we have used general results showing that the error bound behaves like $O(1/\sqrt{k})$, but we do not know if this rate is optimal in the setting of deterministic games. In this context, we note that a lower bound $\Omega(1/k)$ was shown in \cite{LeeRyu:2025} for more general iterative methods in the setting of MDPs.

\npar The next question concerns the dependence of our smoothed polynomial algorithms on the amount of noise $\rho$ in the Gaussian perturbation $ \mathcal{N}(\bar{\payoff}, \rho^2I)$. As discussed in \cref{th:main_thm}, the best dependence on $\rho$ is achieved by the combined Zwick--Paterson/Halpern iteration and is linear in $\frac{1}{\rho}$. In the smoothed analysis of the simplex method, an optimal sub-linear dependence on $\frac{1}{\rho}$ has recently been established \cite{BachHuiberts:2025}. One can ask if such sub-linear dependence can also be achieved in the setting of deterministic games.

\npar As in \cite{LoffSkomra:2024}, one can ask if other problems in $\mathsf{UEOPL}$, such as P-Matrix Linear Complementarity Problem are amenable to smoothed analysis.

\end{document}